\let\set\mathbb
\def\vect#1{\mathbf{#1}}
\def\lcm{\operatorname{lcm}}
\def\mspan{\operatorname{span}}
\def\ldeg{\operatorname{ldeg}}
\def\ord{\operatorname{ord}}
\def\ford{\operatorname{ford}}
\def\per{\operatorname{per}}
\def\rE{$R$}
\def\piE{$\Pi$}
\def\pisiK{$\Pi\Sigma$}
\def\rpisiSE{$R\Pi\Sigma^*$}
\def\pisiSE{$\Pi\Sigma^*$}
\def\sigmaSE{$\Sigma^*$}
\def\KK{\set K}
\def\NN{\set N}
\def\ZZ{\set Z}
\def\GG{\set G}
\def\HH{\set H}
\def\FF{\set F}
\def\EE{\set E}
\def\QQ{\set Q}
\def\AA{\set A}
\newenvironment{theindex}
  {\renewcommand\item{\par\hangindent 40pt}
   \newcommand\subitem{\item\hspace*{20pt}}
   
   \newcommand\indexspace{\par \vskip 10pt plus 5pt minus 3pt\relax}}
  {}
\newcommand{\ExternalProof}[2]{\begin{proofstep}\noindent\textbf{#1. }#2\hfill\ensuremath{
\Box}\end{proofstep}}
\newcommand{\fct}[3]{#1\colon #2 \to #3}
\newcommand{\dfield}[2]{({#1},{#2})}
\newcommand{\constF}[2]{{\rm const}{(#1,#2)}}
\newcommand{\const}[2]{{\rm const}{\:\!#1}}
\newcommand{\sconstF}[3]{{\rm sconst}_{#1}{(#2,#3)}}
\newcommand{\sconst}[3]{{\rm sconst}_{#1}{#2}}
\newcommand{\gsconstF}[2]{{\rm sconst}{(#1,#2)}}
\newcommand{\gsconst}[2]{{\rm sconst}\:\!#1}
\newcommand{\sigmaFac}[2]{{#1}_{(#2)}}
\newcommand{\dgroup}[3]{{#1}^{#2}_{#3}}
\newcommand{\lr}[1]{\langle #1\rangle}
\newcommand{\ltr}[1]{[#1,\tfrac{1}{#1}]}
\newdimen\listablecorrection
\newtheorem{theorem}{Theorem}[section]
\newtheorem{proofstep}[theorem]{Proof}
\newtheorem{proposition}[theorem]{Proposition}
\newtheorem{corollary}[theorem]{Corollary}
\newtheorem{lemma}[theorem]{Lemma}
\newtheorem{remark}[theorem]{Remark}
\newtheorem{definition}[theorem]{Definition}
\newtheorem{example}[theorem]{Example}
\begin{document}

\begin{frontmatter}

\title{A Difference Ring Theory for\\ Symbolic Summation}

\thanks{Supported by the Austrian Science Fund (FWF) grant SFB F50 (F5009-N15) and the European
Commission through contract PITN-GA-2010-264564 ({LHCPhenoNet}).}

\author{Carsten Schneider}
\address{Research Institute for Symbolic Computation (RISC)\\ 
Johannes Kepler University\\
Altenbergerstra{\ss}e 69, 4040 Linz, Austria}
\ead{Carsten.Schneider@risc.jku.at}

\begin{abstract}
A summation framework is developed that enhances Karr's difference field approach. It covers not only indefinite nested sums and products in terms of transcendental extensions, but it can treat, e.g., nested products defined over roots of unity. The theory of the so-called \rpisiSE-extensions is supplemented by algorithms that support the construction of such 
difference rings automatically and that assist in the task to tackle symbolic summation problems. Algorithms are presented that solve parameterized telescoping equations, and more generally parameterized first-order difference equations, in the given difference ring. As a consequence, one obtains algorithms for the summation paradigms of telescoping and Zeilberger's creative telescoping.
With this difference ring theory one gets a rigorous summation machinery that has been applied to numerous challenging problems coming, e.g., from combinatorics and particle physics.

\end{abstract}

\begin{keyword}
difference ring extensions\sep roots of unity\sep indefinite nested sums and products\sep parameterized telescoping (telescoping\sep creative telescoping)\sep semi-constants \sep semi-invariants
\end{keyword}

\end{frontmatter}

\section{Introduction}

In his pioneering work~\cite{Karr:81,Karr:85} M.~Karr introduced a very general class of difference fields, the so-called \pisiK-fields, in which expressions in terms of indefinite nested sums and products can be represented. 
In particular, he 
developed an algorithm that decides constructively if for a given expression
$f(k)$ represented in a \pisiK-field $\FF$ there is an expression $g(k)$
represented in the field $\FF$ such that the telescoping equation (anti-difference)
\begin{equation}\label{Equ:TeleProblem}
f(k)=g(k+1)-g(k)
\end{equation}
holds. If such a solution exists, one obtains for an appropriately chosen $a\in\NN$ the identity
\begin{equation}\label{Equ:SumTele}
\sum_{k=a}^b f(k)=g(b+1)-g(a).
 \end{equation}
\noindent His algorithms can be viewed as the discrete version of Risch's integration algorithm; see~\cite{Risch:69,Bron:97}.
In the last years the \pisiK-field theory has been pushed forward. It is now possible to obtain sum representations, i.e., right hand sides in~\eqref{Equ:SumTele} with certain optimality criteria such as
minimal nesting depth~\cite{Schneider:08c,Schneider:10b}, minimal number of
generators in the summands~\cite{Schneider:04a} or minimal degrees in the
denominators~\cite{Schneider:07d}. For the simplification of products see~\cite{Schneider:05c,Petkov:10}. We emphasize that exactly such refined representations give rise to more efficient telescoping algorithms worked out in~\cite{Schneider:10a,Schneider:14}.

A striking application is that Karr's algorithm and all the enhanced versions can be used to solve the parameterized telescoping problem~\cite{Schneider:00,Schneider:10c}:
for given indefinite nested product-sum expressions $f_1(k),\dots,f_n(k)$ represented in $\FF$, find constants $c_1,\dots,c_n$, free of $k$ and not all zero,
and find $g(k)$ represented in $\FF$  such that
\begin{equation}\label{Equ:ProblemPT}
g(k+1)-g(k)=c_1\,f_1(k)+\dots+c_n\,f_n(k)
\end{equation} holds. In particular, this problem covers Zeilberger's creative telescoping paradigm~\cite{Zeilberger:91} for a bivariate function $F(m,k)$ by setting $f_i(k)=F(m+i-1,k)$ with $i\in\{1,\dots,n\}$ and representing these $f_i(k)$ in $\FF$. Namely, if one finds such a solution,
one ends up at the recurrence
$$g(m,b+1)-g(m,a)=c_1\,\sum_{k=a}^bf(m,k) +\dots+c_n\,\sum_{k=a}^bf(m+n-1,k).$$
In a nutshell, one cannot only treat indefinite summation but also
definite summation problems. In this regard, also recurrence solvers have been
developed where the coefficients of the recurrence and the inhomogeneous part
can be elements from a \pisiK-field~\cite{Bron:00,Schneider:05a,ABPS:14}. All these algorithms generalize and enhance substantially the ($q$--)hyper\-geometric and holonomic
toolbox~\cite{Abramov:71,Gosper:78,Zeilberger:90a,Zeilberger:91,Petkov:92,Paule:95,AequalB,PauleRiese:97,Bauer:99,Chyzak:00,KP:11,Koutschan:13} in order to rewrite
definite sums to indefinite nested sums. For details on these aspects we refer
to~\cite{Schneider:13b}.

Besides all these sophisticated developments, e.g., within the summation package \texttt{Sigma}~\cite{Schneider:07a}, there is one critical gap which concerns all the developed tools in the setting of difference fields: Algebraic products, like
\begin{equation}\label{Equ:Equ:AlgebraicObjects}
(-1)^k=\prod_{i=1}^k(-1),\quad(-1)^{\binom{k+1}{2}}=\prod_{i=1}^k\prod_{j=1}^{i}(-1),\quad(-1)^{\binom{k+2}{3}}=\prod_{i=1}^k\prod_{j=1}^{i}\prod_{k=1}^j(-1),\dots
\end{equation}
cannot be expressed in \pisiK-fields, which are built by a tower of transcendental field extensions. Even worse, the objects given in~\eqref{Equ:Equ:AlgebraicObjects} introduce zero-divisors,
like
\begin{equation}\label{Equ:ZeroDivisorRel}
(1-(-1)^k)(1+(-1)^k)=0
\end{equation}
which cannot be treated in a field or in an integral domain. In applications these
objects occur rather frequently as standalone objects or in nested sums~\cite{ABS:11,ABS:13}. It is thus a fundamental
challenge to include such objects in an enhanced summation theory.

With the elegant theory of~\cite{Singer:97,Singer:08} one can handle such objects by several
copies of the underlying difference field, i.e., by implementing the concept of interlacing in an algebraic way. First steps to combine these techniques with \pisiK-fields have been made in~\cite{Erocal:11}. 

Within the package \texttt{Sigma} a different approach~\cite{Schneider:01} has been implemented. Summation objects like $(-1)^k$ and sums over such objects are introduced by a tower of generators subject to the relations such
as~\eqref{Equ:ZeroDivisorRel}. In
this way one obtains a direct translation between the summation objects
and the generators of the corresponding difference rings. This enhancement has been applied non-trivially, e.g., to combinatorial problems~\cite{Schneider:04c,PSW:11}, number theory~\cite{Schneider:07b,Schneider:09a} or to problems from particle physics~\cite{BKSF:12}; for the most recent evaluations of Feynman integrals~\cite{Physics1,Physics2,Physics3} up to 300 generators were used to model the summation objects in difference rings. But so far, this successful and very efficient machinery of \texttt{Sigma} was built, at least partially, on heuristic considerations.

In this article we shall develop the underlying difference ring theory and supplement it with the missing algorithmic building blocks in order to obtain a rigorous summation machinery. More precisely, we will enhance
the difference field theory of~\cite{Karr:81,Karr:85} to a difference ring theory by introducing
besides \piE-extensions (for transcendental product extensions) and
\sigmaSE-extensions (for transcendental sum extensions) also \rE-extensions
which enables one to represent objects such
as~\eqref{Equ:Equ:AlgebraicObjects}. 
An important ingredient of this theory is the exploration of the so-called semi-constants (resp.\ semi-invariants) and the formulation of the symbolic summation problems within these notions. 
In particular, we obtain algorithms that can solve certain classes of parameterized first-order linear difference equations. As special instances we obtain algorithms for the parameterized telescoping problem, in particular for the summation
paradigms of telescoping and creative telescoping. 
In addition, we
provide an algorithmic toolbox that supports the construction of the so-called  simple \rpisiSE-extensions automatically. As a special case we demonstrate, how d'Alembertian solutions~\cite{Abramov:94} of a recurrence, a subclass of Liouvillian solutions~\cite{Singer:99,Petkov:2013}, can be represented in such \rpisiSE-extensions. In particular, 
we will illustrate the underlying problems and their solutions by discovering the following identities
\begin{align}\label{SumId}
 \sum_{k=1}^b (-1)^{\binom{k+1}{2}} k^2 
\sum_{j=1}^k &\frac{(-1)^j}{j}=
\frac{1}{2}
\sum_{j=1}^b \frac{(-1)^{\binom{j+1}{2}}}{j}-\frac{1}{4} (-1)^{\binom{b+1}{2}} \big(-1+(-1)^b+2 b\big)\nonumber\\[-0.3cm]
&
+(-1)^{\binom{b+1}{2}} \frac{1}{2} \big(b (b+2)+(-1)^b \big(b^2-1\big)
\big)\sum_{j=1}^b \frac{(-1)^j}{j},\\
\label{Equ:ProductId}
\prod_{k=1}^b\frac{-(\iota^k)}{1+k}=&
\big(-\frac{\iota}{2}-\frac{1}{2}
\big)\frac{-(-1)^b+\iota}{b (b+1)}\Big(\prod_{j=1}^{b-1}\frac{\iota^j}{j}\Big);
\end{align}
here the imaginary unit is denoted by $\iota$, i.e., $\iota^2=-1$. 

The outline is as follows. In Section~\ref{Sec:MainResults} we will introduce the basic
notations of difference rings (resp.\ fields) and define \rpisiSE-ring extensions. Furthermore, we
will work out the underlying problems in the setting of difference rings and
motivate the different challenges that will
be treated in this article. In addition, we give an overview of the main results and show how they can be applied for symbolic summation. In the remaining sections these results will be worked out in details. In Section~\ref{Sec:SingleExt} we present the crucial properties of single nested \rpisiSE-extensions. Special emphasis will be put on the properties of the underlying ring. In Section~\ref{Sec:NestedExt} we will consider a tower of such extensions
and explore the set of semi-constants. In Section~\ref{Sec:Period} we present algorithms that calculate the order, period and factorial order of the generators of \rE-extensions. Finally, in Section~\ref{Sec:M} and Section~\ref{Sec:PFLDE} we elaborate algorithms that are needed to construct \rpisiSE-extensions and that solve as a special case the (parameterized) telescoping problem. A conclusion is given in Section~\ref{Sec:Conclusion}.

\section{Basic definitions, the outline of the problems, and the main results}\label{Sec:MainResults}

In this article all rings are commutative with 1 and all rings (resp.\
fields) have characteristic $0$; in particular, they contain
the rational numbers $\QQ$ as a subring (resp.\ subfield). A ring (resp.\ field) is called computable if there are algorithms available that can perform the standard operations (including zero recognition and deciding constructively if an element is invertible). The multiplicative group
of units (invertible elements) of a ring $\AA$ is denoted by $\AA^*$. The ideal generated by $S\subseteq\AA$ is denoted by $\langle S\rangle$\index{$\langle S\rangle$}. If $\AA$ is a subring (resp. subfield/multiplicative subgroup)
of $\tilde{\AA}$ we also write $\AA\leq\tilde{\AA}$. The non-negative integers are denoted by $\NN=\{0,1,2,\dots\}$. 

In this section we will present a general framework in which our symbolic summation problems can be formulated and tackled in the setting of difference rings.
Here an indefinite nested product-sum expression $f(k)$ (like  in~\eqref{Equ:TeleProblem} or~\eqref{Equ:ProblemPT})
is described in a ring (resp.\ field) $\AA$ and the shift behaviour of such an expression is reflected by a ring automorphism (resp.\ field automorphism) $\fct{\sigma}{\AA}{\AA}$, i.e., $\sigma^i(f)$ with $i\in\ZZ$ represents the expression $f(k+i)$. 
In the following we call such a ring $\AA$ (resp.\ field) equipped with a ring automorphism (resp. field
automorphism) $\sigma$ a difference ring\index{difference!ring/field} (resp. difference field)~\cite{Cohn:65,Levin:08} and denote it by
$\dfield{\AA}{\sigma}$.
We remark that any difference field is also a difference ring. Conversely, any difference ring $\dfield{\AA}{\sigma}$ with $\AA$ being a field is automatically
a difference field. A difference ring (resp.\ field) $\dfield{\AA}{\sigma}$ is called computable if both, $\AA$ and the function $\sigma$ are computable; note that in such rings one can decide if an element is a constant, i.e., if $\sigma(c)=c$.
The set of constants is also denoted by\index{constant field/ring}
$\constF{\AA}{\sigma}=\{c\in\AA|\,\sigma(c)=c\}$, and
if it is clear from the context, we also write
$\const{\AA}{\sigma}=\constF{\AA}{\sigma}$. It is easy to check that
$\const{\AA}{\sigma}$ is a subring (resp. a subfield) of $\AA$ which
contains as subring (resp.\ subfield) the rational numbers $\QQ$. 
Throughout this article we will take care that $\const{\AA}{\sigma}$ is always a field (and not just a ring), called the constant field and denoted by $\KK$.

In the first subsection we introduce the class of difference rings in which we will model indefinite nested sums and products. They will be introduced by a tower of ring extensions, the so-called \rpisiSE-ring extensions. 

In Subsection~\ref{Subsec:AlgProblems}  we will focus on two tasks:\\[0.05cm]
\noindent(1) Introduce techniques that enable one to test if the given tower of extensions is an \rpisiSE-extension; even more, derive tactics that enable one to represent sums and products automatically in \rpisiSE-extensions.\\
(2) Work out the underlying subproblems in order to solve two central problems of symbolic summation: telescoping  (compare~\eqref{Equ:TeleProblem}) and parameterized telescoping (compare~\eqref{Equ:ProblemPT}). In their simplest form they can be specified as follows.

\medskip

\noindent\fbox{\begin{minipage}{12.8cm}
\noindent\index{Problem!T}\textbf{Problem~T for $\dfield{\AA}{\sigma}$.} \textit{Given} a difference ring $\dfield{\AA}{\sigma}$ and given $f\in\AA$.
\textit{Find}, if possible, a $g\in\AA$ such that the telescoping (T) equation holds: 
\begin{equation}\label{Equ:TeleDR}
\sigma(g)-g=f.
\end{equation}
\noindent\index{Problem!PT}\textbf{Problem~PT for $\dfield{\AA}{\sigma}$.} \textit{Given} a difference ring $\dfield{\AA}{\sigma}$ with constant field $\KK$ and given $f_1,\dots,f_n\in\AA$.
\textit{Find}, if possible, $c_1,\dots,c_n\in\KK$ (not all $c_i$ being zero)
and a $g\in\AA$ such that the parameterized telescoping (PT) holds: 
\begin{equation}\label{Equ:ParaTeleDR}
\sigma(g)-g=c_1\,f_1+\dots+c_n\,f_n.
\end{equation}
\end{minipage}}

\medskip

In Subsection~\ref{Subsec:MainResults} we will present the main results of theoretical and algorithmic nature to handle these problems, and in Subsection~\ref{Subsec:dAlembert} we demonstrate how the new summation theory can be used to represent d'Alembertian solutions in \rpisiSE-extensions.

\subsection{The definition of \rpisiSE-extensions}\label{Subsec:RPiSi}

A difference ring $\dfield{\tilde{\AA}}{\tilde{\sigma}}$ is a difference ring extension\index{difference!ring/field extension} of a difference ring $\dfield{\AA}{\sigma}$ if $\AA\leq\tilde{\AA}$ and
$\tilde{\sigma}|_{\AA}=\sigma$, i.e., $\AA$ is a subring of $\tilde{\AA}$ and $\tilde{\sigma}(a)=\sigma(a)$ for all $a\in\AA$. The definition of difference field extensions is the same by replacing the word ring with field.
In short (for the ring and field version) we also write
$\dfield{\AA}{\sigma}\leq\dfield{\tilde{A}}{\tilde{\sigma}}$\index{$\dfield{\AA}{\sigma}\leq\dfield{\tilde{A}}{\tilde{\sigma}}$}. If it is clear
from the context, we do not distinguish anymore between $\sigma$ and
$\tilde{\sigma}$.

\noindent For the construction of \rpisiSE-extensions, we start with the following basic properties.

\begin{lemma}
Let $\AA$ be a ring with $\alpha\in\AA^*$ and $\beta\in\AA$ equipped with a ring
automorphism $\fct{\sigma}{\AA}{\AA}$. Let
$\AA[t]$ be a polynomial ring and $\AA\ltr{t}$ be a ring of Laurent
polynomials.
\begin{enumerate}
\item There is a unique
automorphism $\fct{\sigma'}{\AA[t]}{\AA[t]}$ with $\sigma'|_{\AA}=\sigma$ and
$\sigma'(t)=\alpha\,t+\beta$.
\item There is a unique
automorphism $\fct{\sigma''}{\AA\ltr{t}}{\AA\ltr{t}}$ with
$\sigma''|_{\AA}=\sigma$ and
$\sigma''(t)=\alpha\,t$ (where $\sigma''(\frac{1}{t})=\alpha^{-1}\,\frac{1}{t}$). In particular, if $\beta=0$,
$\sigma''|_{\AA[t]}=\sigma'$.
\item If $\AA$ is field and $\AA(t)$ is a rational function field, there is a
unique field automorphism
$\fct{\sigma'''}{\AA(t)}{\AA(t)}$ with $\sigma'''|_{\AA}=\sigma$ and
$\sigma'''(t)=\alpha\,t+\beta$. In particular, $\sigma'''|_{\AA[t]}=\sigma'$;
moreover, $\sigma'''|_{\AA[t,1/t]}=\sigma''$ if $\beta=0$.
\end{enumerate}
\end{lemma}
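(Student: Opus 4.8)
The plan is to treat the three parts uniformly: each asserts that the automorphism $\sigma$ of the base ring $\AA$ can be transported to a larger ring in which the transcendental generator $t$ is replaced by $\alpha t+\beta$ (respectively $\alpha t$), and the single algebraic fact that makes all of this work is that $\alpha\in\AA^*$. In every case I would first produce the extension \emph{as a ring endomorphism} (existence and uniqueness) from a universal property, and then upgrade it to an automorphism.

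For (1): $\AA[t]$ is the free commutative $\AA$-algebra on one generator, so by its universal property there is exactly one ring homomorphism $\sigma'\colon\AA[t]\to\AA[t]$ with $\sigma'|_\AA=\sigma$ and $\sigma'(t)=\alpha t+\beta$; uniqueness as an endomorphism (hence a fortiori as an automorphism) is automatic since $\AA$ and $t$ generate $\AA[t]$ as a ring. To see that $\sigma'$ is bijective I would exhibit a two-sided inverse $\tau$, again via the universal property: set $\tau|_\AA=\sigma^{-1}$ and $\tau(t)=\sigma^{-1}(\alpha^{-1})\,t-\sigma^{-1}(\alpha^{-1}\beta)$, which is exactly where invertibility of $\alpha$ is used. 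A direct check on the generators $\AA\cup\{t\}$ gives $\sigma'\circ\tau=\tau\circ\sigma'=\mathrm{id}$, so $\sigma'$ is an automorphism. (Alternatively: $\sigma'$ preserves degrees, since the leading coefficient of $\sigma'(p)$ is $\sigma(\mathrm{lc}(p))\,\alpha^{\deg p}\neq 0$, which gives injectivity; and its image is a subring containing $\sigma(\AA)=\AA$ and $\alpha t+\beta$, whence it contains $t=\alpha^{-1}\bigl((\alpha t+\beta)-\beta\bigr)$ and is therefore all of $\AA[t]$, which gives surjectivity.)

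For (2): I would invoke that $\AA\ltr{t}$ is the localization of $\AA[t]$ at $t$, equivalently $\AA[t,s]/(ts-1)$, so that a ring homomorphism out of it amounts to a ring homomorphism out of $\AA[t]$ sending $t$ to a unit. Since $\alpha t$ is a unit (with inverse $\alpha^{-1}t^{-1}$), there is a unique endomorphism $\sigma''$ with $\sigma''|_\AA=\sigma$ and $\sigma''(t)=\alpha t$, forcing $\sigma''(1/t)=\alpha^{-1}\,(1/t)$; bijectivity is witnessed, as before, by the inverse built from $\sigma^{-1}$ and $t\mapsto\sigma^{-1}(\alpha^{-1})\,t$. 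For (3), with $\AA$ a field, the map $\sigma'$ from (1) is an injective homomorphism from the integral domain $\AA[t]$ into the field $\AA(t)$, so by the universal property of the fraction field it extends uniquely to a field endomorphism $\sigma'''$ of $\AA(t)$ with $\sigma'''(p/q)=\sigma'(p)/\sigma'(q)$; surjectivity follows since the image of $\sigma'''$ is a subfield containing $\AA$ and $\alpha t+\beta$, and $\AA(t)=\AA(\alpha t+\beta)$ because $\alpha\neq 0$.

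Finally, every "in particular" clause is a pure consequence of the uniqueness just established: the restriction of the larger automorphism to the smaller ring is an automorphism extending $\sigma$ and sending $t$ to the prescribed image, hence coincides with the smaller automorphism. The only points needing a word of care are that these restrictions genuinely map the smaller ring into itself (clear for $\sigma'''|_{\AA[t]}$; for $\sigma'''|_{\AA\ltr{t}}$ one observes $\sigma'''(1/t)=(\alpha t)^{-1}=\alpha^{-1}t^{-1}\in\AA\ltr{t}$), and that the hypothesis $\beta=0$ is precisely what makes $\sigma''(t)=\alpha t$ agree with $\sigma'(t)=\alpha t+\beta$. I do not expect a real obstacle: the lemma is foundational, and the only thing to watch is keeping the inverse-map formulas consistent and recording at each step that $\alpha\in\AA^*$ is exactly the hypothesis that makes the constructions legitimate.
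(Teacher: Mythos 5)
Your argument is correct and complete: the universal property of $\AA[t]$ (resp.\ of the localization $\AA\ltr{t}=\AA[t]_t$, resp.\ of the fraction field) gives existence and uniqueness at the level of endomorphisms, the explicit inverse built from $\sigma^{-1}$ and $t\mapsto\sigma^{-1}(\alpha^{-1})\,t-\sigma^{-1}(\alpha^{-1}\beta)$ checks out on the generators, and the ``in particular'' clauses do follow from uniqueness once you verify, as you do, that the restrictions land in the smaller rings. For the record, the paper states this lemma without any proof, treating it as a foundational fact, so there is no argument of the author's to compare against; your write-up is the standard one and supplies exactly what is omitted, with the role of $\alpha\in\AA^*$ (making $\sigma'$ surjective, making $\alpha t$ a unit in $\AA\ltr{t}$, and giving $\AA(t)=\AA(\alpha t+\beta)$) correctly isolated at each step.
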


\noindent In summary, let $\dfield{\AA}{\sigma}$ be a difference ring and $t$ be transcendental over $\AA$. Then we obtain the uniquely determined difference ring extension 
$\dfield{\AA[t]}{\sigma}$ of $\dfield{\AA}{\sigma}$
with $\sigma(t)=\alpha\,t+\beta$ where $\alpha\in\AA^*$ and $\beta\in\AA$. In
particular, we get the uniquely determined difference ring
extension $\dfield{\AA\ltr{t}}{\sigma}$ of $\dfield{\AA}{\sigma}$ with $\sigma(t)=\alpha\,t$. Thus for
$\beta=0$, we have the chain of extensions
$\dfield{\AA}{\sigma}\leq\dfield{\AA[t]}{\sigma}\leq\dfield{\AA\ltr{t}}{\sigma}
.$
Moreover, if $\AA$ is a
field,
we obtain the uniquely determined difference field extension 
$\dfield{\AA(t)}{\sigma}$ of $\dfield{\AA}{\sigma}$ with
$\sigma(t)=\alpha\,t+\beta$. 
Following the notions of~\cite{Bron:00} each of the extensions, i.e.,
$\dfield{\AA}{\sigma}\leq\dfield{\AA[t]}{\sigma}$,
$\dfield{\AA}{\sigma}\leq\dfield{\AA\ltr{t}}{\sigma}$ or
$\dfield{\AA}{\sigma}\leq\dfield{\AA(t)}{\sigma}$ are called unimonomial\index{extension!unimonomial}
extensions (of polynomial, Laurent polynomial or of rational function type, respectively).

\begin{example}\label{Exp:FactorialUni}
(0) Take the difference field $\dfield{\QQ}{\sigma}$ with $\sigma(c)=c$ for all $c\in\QQ$.\\ 
(1) Take the unimonomial field extension $\dfield{\QQ(k)}{\sigma}$ of $\dfield{\QQ}{\sigma}$ with $\sigma(k)=k+1$: $\QQ(k)$ is a rational function field and $\sigma$ is extended from $\QQ$ to $\QQ(k)$ with $\sigma(k)=k+1$.\\
(2) Take the unimonomial ring extension $\dfield{\QQ(k)\ltr{t}}{\sigma}$ of $\dfield{\QQ(k)}{\sigma}$ with $\sigma(t)=(k+1)\,t$: $\QQ(k)\ltr{t}$ is a ring of Laurent polynomials with coefficients from $\QQ(k)$ and the automorphism is extended from $\QQ(k)$ to $\QQ(k)\ltr{t}$ with $\sigma(t)=(k+1)\,t$. 
\end{example}

\noindent Finally, we consider those extensions where the constants remain unchanged.

\begin{definition}\label{Def:PiSigmaSingle}
Let $\dfield{\AA}{\sigma}$ be a difference ring.\index{extension!\piE}\index{extension!\sigmaSE}\index{extension!\piE}
\begin{itemize}
\item A unimonomial ring extension $\dfield{\AA[t]}{\sigma}$ of
$\dfield{\AA}{\sigma}$ with $\sigma(t)-t\in\AA$
and $\const{\AA[t]}{\sigma}=\const{\AA}{\sigma}$ is called \sigmaSE-ring extension (in short \sigmaSE-extension). 
\item If $\AA$ is a field, a unimonomial field extension $\dfield{\AA(t)}{\sigma}$ of
$\dfield{\AA}{\sigma}$ with $\sigma(t)-t\in\AA$
and $\const{\AA(t)}{\sigma}=\const{\AA}{\sigma}$ is called \sigmaSE-field\footnote{We restrict Karr's $\Sigma$-field extensions to \sigmaSE-field extensions being slightly less general but covering all sums treated explicitly in Karr's work~\cite{Karr:81}.} extension. 
\item A unimonomial ring extension $\dfield{\AA\ltr{t}}{\sigma}$ of
$\dfield{\AA}{\sigma}$ with $\frac{\sigma(t)}{t}\in\AA^*$
and $\const{\AA\ltr{t}}{\sigma}=\const{\AA}{\sigma}$ is called \piE-ring extension (in short \piE-extension). 
\item If $\AA$ is a field, a unimonomial field extension $\dfield{\AA(t)}{\sigma}$ of
$\dfield{\AA}{\sigma}$ with $\frac{\sigma(t)}{t}\in\AA^*=\AA(t)\setminus\{0\}$
and $\const{\AA(t)}{\sigma}=\const{\AA}{\sigma}$ is called \piE-field extension. 
\end{itemize}
The generators of a \sigmaSE-extension (in the ring or field version) and a
\piE-extension (in the ring or field version) are called \sigmaSE-monomial and
\piE-monomial, respectively.
\end{definition}

\begin{remark}\label{Remark:StableConstants}
Keeping the constants unchanged is a central property to tackle the (parameterized) telescoping problem. E.g., if the constants are extended, there do not exist bounds on the degrees as utilized in Subsection~\ref{Subsection:DegreeBounds}. Additionally, introducing no extra constants is \textit{the} essential property to embed the derived difference rings into the ring of sequences; this fact has been worked out, e.g., in~\cite{Schneider:10c} which is related to~\cite{Singer:08}.
\end{remark}

\begin{example}[Cont.\ Ex.~\ref{Exp:FactorialUni}]\label{Exp:FactorialPi}
For $\dfield{\QQ}{\sigma}\leq\dfield{\QQ(k)}{\sigma}\leq\dfield{\QQ(k)\ltr{t}}{\sigma}$ from Example~\ref{Exp:FactorialUni} we have that 
$\const{\QQ(k)\ltr{t}}{\sigma}=\const{\QQ(k)}{\sigma}=\const{\QQ}{\sigma}=\QQ$, which can be checked easily. 
Thus $\dfield{\QQ(k)}{\sigma}$ is a \sigmaSE-field extension of $\dfield{\QQ}{\sigma}$ and $\dfield{\QQ(k)\ltr{t}}{\sigma}$ is a \piE-extension of $\dfield{\QQ(k)}{\sigma}$. The generator $k$ is a \sigmaSE-monomial and the generator $t$ is a \piE-monomial.
\end{example}

\noindent For more complicated extensions it is rather demanding to check if the constants remain unchanged. In this regard, we refer to the field-algorithms given in~\cite{Karr:81} or to our enhanced ring-algorithms given below which can perform these checks automatically.

\medskip

For further considerations we introduce the order function $\fct{\ord}{\AA}{\NN}$ with\index{function!order}\index{$\ord(f)$}
\begin{equation}\label{equ:OrderDef}
\ord(h)=\begin{cases}
0&\text{ if }\nexists n>0\text{ s.t.\ } h^n=1\\
\min\{n>0|\,h^n=1\}&\text{ otherwise}.
\end{cases}
\end{equation}
The third type of extensions is concerned with algebraic objects like~\eqref{Equ:Equ:AlgebraicObjects}. 
Let $\lambda\in\NN$ with $\lambda>1$, take a root of
unity $\alpha\in\AA^*$ with $\alpha^{\lambda}=1$ and construct the unimonomial extension
$\dfield{\AA[y]}{\sigma}$ of $\dfield{\AA}{\sigma}$ with $\sigma(y)=\alpha\,y$.
Now
take the ideal $I:=\lr{y^{\lambda}-1}$ and consider the quotient ring $\EE=\AA[y]/I$.
Since $I$ is closed under $\sigma$, i.e., $I$ is a reflexive difference ideal~\cite[page~71]{Cohn:65}, one can
verify that $\fct{\sigma}{\EE}{\EE}$ with
$\sigma(f+I)=\sigma(f)+I$
forms a ring automorphism. In other words, $\dfield{\EE}{\sigma}$ is a
difference ring. Moreover, there is the natural embedding of $\AA$ into $\EE$
with
$a\to a+I$.
By identifying $a$ with $a+I$, $\dfield{\EE}{\sigma}$ is a difference ring extension of $\dfield{\AA}{\sigma}$.

\begin{lemma}\label{Lemma:RExistence}
Let $\dfield{\AA}{\sigma}$ be a difference ring and $\alpha\in\AA^*$ with
$\alpha^{\lambda}=1$ for some ${\lambda}>1$. Then there is (up to a difference ring isomorphism)
a unique difference ring extension $\dfield{\AA[x]}{\sigma}$ of
$\dfield{\AA}{\sigma}$ with $x\notin\AA$ subject to the relations $x^{\lambda}=1$ and
$\sigma(x)=\alpha\,x$.
\end{lemma}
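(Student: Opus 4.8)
The plan is to construct the extension explicitly as the quotient ring $\EE = \AA[y]/\lr{y^\lambda-1}$ described in the paragraph just before the lemma, verify it has all the claimed properties, and then prove uniqueness up to difference ring isomorphism. For existence, the preceding discussion already shows that with $\sigma(y) = \alpha y$ on the polynomial ring $\AA[y]$, the ideal $I = \lr{y^\lambda - 1}$ is a reflexive difference ideal (since $\sigma(y^\lambda - 1) = \alpha^\lambda y^\lambda - 1 = y^\lambda - 1 \in I$, using $\alpha^\lambda = 1$), so $\sigma$ descends to a ring endomorphism of $\EE$. I would first check that this induced map is actually an \emph{automorphism}: surjectivity and injectivity on $\EE$ follow because $\sigma$ is an automorphism of $\AA[y]$ mapping $I$ onto $I$ (the inverse $\sigma^{-1}$ satisfies $\sigma^{-1}(y) = \alpha^{-1} y$ and also preserves $I$). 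Writing $x$ for the image $y + I$, we get $x^\lambda = 1$ and $\sigma(x) = \alpha x$, and the natural map $\AA \to \EE$, $a \mapsto a+I$, is injective because $\AA \cap I = \{0\}$ in $\AA[y]$ (an element of $\AA$ is a degree-$0$ polynomial, and nonzero multiples of $y^\lambda - 1$ have degree $\geq \lambda > 1$). Identifying $\AA$ with its image makes $\dfield{\EE}{\sigma}$ a difference ring extension of $\dfield{\AA}{\sigma}$ with $x \notin \AA$ (again a degree argument: if $y + I = a + I$ for some $a \in \AA$ then $y - a \in I$, impossible since $\deg(y-a) = 1 < \lambda$). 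So $\dfield{\AA[x]}{\sigma} := \dfield{\EE}{\sigma}$ witnesses existence, and note $\{1, x, \dots, x^{\lambda-1}\}$ is a free $\AA$-module basis of $\EE$.

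\textbf{For uniqueness,} suppose $\dfield{\AA[x']}{\sigma'}$ is another difference ring extension of $\dfield{\AA}{\sigma}$ with $x' \notin \AA$, $(x')^\lambda = 1$, and $\sigma'(x') = \alpha x'$. The relation $(x')^\lambda = 1$ forces $\AA[x']$ to be a homomorphic image of $\AA[y]/\lr{y^\lambda-1} = \EE$ via the $\AA$-algebra map $\phi$ sending $x \mapsto x'$. This $\phi$ is surjective by construction. To see it is injective, I would argue that the only way $\phi$ could have a nonzero kernel is if $\{1, x', \dots, (x')^{\lambda-1}\}$ satisfies a nontrivial $\AA$-linear relation; but any such relation $\sum_{i=0}^{\lambda-1} a_i (x')^i = 0$ of minimal length could be exploited — applying $\sigma'$ gives $\sum_i \sigma(a_i)\alpha^i (x')^i = 0$, and comparing with $\alpha^{-j}$ times the original (where $j$ is an index with $a_j \neq 0$) produces a shorter relation unless all $\alpha^i = \alpha^j$ on the support, which would collapse the relation to at most one term, forcing $x' \in \AA$ or a zero coefficient, a contradiction. (Alternatively, and perhaps more cleanly, one invokes that $x'$ being a root of $y^\lambda - 1 \in \AA[y]$ with $x' \notin \AA$ and the extension generated by $x'$ means $\AA[x'] \cong \AA[y]/J$ for some ideal $J \supseteq \lr{y^\lambda-1}$ with $J \cap \AA = \{0\}$; one then shows $J = \lr{y^\lambda-1}$ using that $y^\lambda - 1$ divides any element of $J$ after clearing — this needs a little care since $\AA$ need not be a domain.) Finally $\phi$ is a difference ring homomorphism because it fixes $\AA$ pointwise and $\phi(\sigma(x)) = \phi(\alpha x) = \alpha x' = \sigma'(x') = \sigma'(\phi(x))$, and $\sigma, \sigma'$ are determined by their values on generators; hence $\phi$ is a difference ring isomorphism fixing $\AA$.

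\textbf{The main obstacle} I anticipate is the injectivity step in the uniqueness argument, specifically handling the possibility that $\AA$ has zero divisors (which is precisely the setting the paper cares about, cf.\ \eqref{Equ:ZeroDivisorRel}). In that generality one cannot simply say "$y^\lambda - 1$ is the minimal polynomial of $x'$," because $y^\lambda - 1$ may factor and $\AA[y]/\lr{y^\lambda-1}$ is typically not a domain. The cleanest route is likely the structural one: characterize $\AA[x']$ as $\bigoplus_{i=0}^{\lambda-1}\AA\,(x')^i$ as an $\AA$-module — the hypothesis "$x' \notin \AA$ and the relations are exactly $x^\lambda=1$, $\sigma(x)=\alpha x$" should be read as saying no \emph{further} relations hold, i.e., the powers $1, x', \dots, (x')^{\lambda-1}$ are $\AA$-linearly independent — and then $\phi$ is an isomorphism of free $\AA$-modules of the same rank, sending basis to basis. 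I would state this freeness as the precise meaning of "subject to the relations $x^\lambda = 1$ and $\sigma(x) = \alpha x$" (with no others), which makes uniqueness essentially immediate; the substantive content of the lemma is then the existence half, already handled by the quotient construction above.
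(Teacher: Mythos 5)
Your proposal is correct and follows essentially the same route as the paper: existence via the quotient $\AA[y]/\lr{y^{\lambda}-1}$ with the induced automorphism, and uniqueness via the first isomorphism theorem, reading ``subject to the relations'' as a presentation, i.e.\ $1,x',\dots,(x')^{\lambda-1}$ form a free $\AA$-module basis, so that the basis-to-basis map is an isomorphism commuting with the shifts. One remark: the first injectivity argument you sketch (shortening a minimal $\AA$-linear relation by applying $\sigma'$ and comparing) cannot be repaired. Take $\AA=\QQ$ with $\sigma=\mathrm{id}$, $\alpha=-1$ and $\lambda=4$; the order-two extension with $(x')^2=1$ and $\sigma'(x')=-x'$ satisfies $x'\notin\AA$, $(x')^{4}=1$ and $\sigma'(x')=\alpha\,x'$, yet carries the nontrivial relation $(x')^2-1=0$ (and your shortening step stalls precisely because $\alpha^i$ is constant on the support, without collapsing the relation). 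So the freeness/presentation reading you adopt in your final paragraph is not merely the cleaner option but indispensable, and it is exactly what the paper's terse appeal to the first isomorphism theorem tacitly assumes; with that reading fixed, your argument and the paper's coincide.
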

\begin{proof}
Consider the difference ring extension $\dfield{\EE}{\sigma}$ of
$\dfield{\AA}{\sigma}$ constructed above. Define $x:=y+I$. Then
$\sigma(x)=\alpha\,x$ and $x^{\lambda}=y^{\lambda}+I=1+I=1$. Further, 
$\EE=\{\sum_{i=0}^{{\lambda}-1}a_ix^i|a_i\in\AA\}$. Thus we obtain a difference ring
extension as claimed in the lemma. Now suppose that there is another difference
ring extension $\dfield{\AA[x']}{\sigma'}$ of $\dfield{\AA}{\sigma}$ with $x'\notin\AA$ subject to
the relations $\sigma'(x')=\alpha\,x'$ and $x'^{\lambda}=1$. Then by the first
isomorphism theorem, there is the ring isomorphism $\fct{\tau}{\EE}{\AA[x']}$
with $\tau(\sum_{i=0}^{{\lambda}-1}f_i\,x_i)=\sum_{i=0}^{{\lambda}-1}f_i\,x'_i$. Since
$\tau(\sigma(x))=\tau(\alpha\,x)=\tau(\alpha)\,\tau(x)=\alpha\,x'=\sigma'(x')$, it follows that $\tau(\sigma(f))=\sigma'(\tau(f))$ for all $f\in\AA[x]$. Summarizing, $\tau$ is a difference ring isomorphism.
\end{proof}

\noindent The extension $\dfield{\AA[x]}{\sigma}$
of $\dfield{\AA}{\sigma}$ in Lemma~\ref{Lemma:RExistence} is called
algebraic extension of order ${\lambda}$.\index{extension!algebraic}

\begin{example}\label{Exp:Qkxy}
(0) Take the \sigmaSE-ext.\ $\dfield{\QQ(k)}{\sigma}$ of $\dfield{\QQ}{\sigma}$ with $\sigma(k)=k+1$ from Ex.~\ref{Exp:FactorialPi}.\\
(1) Take the algebraic extension $\dfield{\QQ(k)[x]}{\sigma}$ of $\dfield{\QQ(k)}{\sigma}$ with $\sigma(x)=-x$ of order $2$: $\QQ(k)[x]$ is an algebraic ring extension of $\QQ(k)$ subject to the relation $x^2=1$ and $\sigma$ is extended from $\QQ(k)$ to $\QQ(k)[x]$ with $\sigma(x)=-x$. Note that $x$ represents the expression $X(k)=(-1)^k$ with $X(k+1)=-X(k)$.\\
(2) Take the algebraic extension $\dfield{\QQ(k)[x][y]}{\sigma}$ of $\dfield{\QQ(k)[x]}{\sigma}$ with $\sigma(y)=-x\,y$ of order~$2$: $\QQ(k)[x][y]$ is a ring extension of $\QQ(k)[x]$ with $y^2=1$ and $\sigma$ is extended from $\QQ(k)[x]$ to $\QQ(k)[x][y]$ with $\sigma(y)=-x\,y$. Note that $y$ represents the expression $Y(k)=(-1)^{\binom{k+1}{2}}=\prod_{j=1}^k(-1)^j$ with $Y(k+1)=-(-1)^k\,Y(k)$.
\end{example}

\noindent As for unimonomial extensions, we restrict now to those algebraic extensions where the constants remain unchanged. For the underlying motivation we refer to Remark~\ref{Remark:StableConstants}.

\begin{definition}\label{Def:RSingle}
Let ${\lambda}\in\NN\setminus\{0,1\}$. An algebraic extension $\dfield{\AA[x]}{\sigma}$
of $\dfield{\AA}{\sigma}$ order ${\lambda}$ with $\const{\AA[x]}{\sigma}=\const{\AA}{\sigma}$ is called root of unity extension (in short \rE-extension)\index{extension!\rE} of order ${\lambda}$.
The generator $x$ is called \rE-monomial.
\end{definition}

\begin{example}[Cont.\ Ex.~\ref{Exp:Qkxy}]\label{Exp:QkxyIsRExt}
For $\dfield{\QQ}{\sigma}\leq\dfield{\QQ(k)}{\sigma}\leq\dfield{\QQ(k)[x]}{\sigma}\leq\dfield{\QQ(k)[x][y]}{\sigma}$ from Example~\ref{Exp:Qkxy} we have that 
$\const{\QQ(k)[x][y]}{\sigma}=\const{\QQ(k)[x]}{\sigma}=\const{\QQ(k)}{\sigma}=\QQ$, which can be checked algorithmically; see Example~\ref{Exp:QkxyCheckRExt} below.
Thus $\dfield{\QQ(k)[x]}{\sigma}$ is an \rE-extension of $\dfield{\QQ(k)}{\sigma}$ and $\dfield{\QQ(k)[x][y]}{\sigma}$ is an \rE-extension of $\dfield{\QQ(k)[x]}{\sigma}$.
\end{example}

To this end, we define a tower of such extensions. First, we introduce the following notion.
Let $\dfield{\AA}{\sigma}\leq\dfield{\EE}{\sigma}$ with $t\in\EE$. In the following $\AA\lr{t}$\index{$\AA\lr{t}$} denotes
the polynomial ring $\AA[t]$ if $\dfield{\AA[t]}{\sigma}$ is a
\sigmaSE-extension of $\dfield{\AA}{\sigma}$. 
$\AA\lr{t}$ denotes the ring of Laurent polynomials $\AA\ltr{t}$ if
$\dfield{\AA\ltr{t}}{\sigma}$ is a \piE-extension of $\dfield{\AA}{\sigma}$.
Finally, $\AA\lr{t}$ denotes the ring $\AA[t]$ with $t\notin\AA$ subject to the relation $t^{\lambda}=1$ if
$\dfield{\AA[t]}{\sigma}$ is an \rE-extension of $\dfield{\AA}{\sigma}$ of
order ${\lambda}$. 

\begin{definition}
\index{extension!(nested) \piE,\sigmaSE,\rE,\rE\piE,\\ \rE\sigmaSE,\pisiSE,\rpisiSE}
\index{\pisiSE-field}
\index{monomial!\piE,\sigmaSE,\rE,\rE\piE,\rE\sigmaSE,\pisiSE,\rpisiSE}
A difference ring extension $\dfield{\AA\lr{t}}{\sigma}$ of
$\dfield{\AA}{\sigma}$ is called \rpisiSE-extension if it is an \rE-extension,
\piE-extension or \sigmaSE-extension. Analogously, it is called
\rE\sigmaSE-extension, \rE\piE-extension or \pisiSE-extension if it is one of the corresponding extensions. More generally,
$\dfield{\GG\lr{t_1}\lr{t_2}\dots\lr{t_e}}{\sigma}$ is a (nested) \rpisiSE-extension
(resp.\ \rE\piE, \rE\sigmaSE, \hbox{\pisiSE-,} \rE-, \hbox{\piE-}, \sigmaSE-extension) of
$\dfield{\GG}{\sigma}$ if it is a tower of such extensions.\\
Similarly, if $\AA$ is a field, $\dfield{\AA(t)}{\sigma}$ is called a \pisiSE-field extension if it is either a \piE-field extension or a \sigmaSE-field extension. $\dfield{\GG(t_1)\dots(t_e)}{\sigma}$ is called a \pisiSE-field extension (resp.\ \piE-field extension, \sigmaSE-field extension) of $\dfield{\GG}{\sigma}$ if it is a tower of such extensions. In particular, if $\const{\GG}{\sigma}=\GG$, $\dfield{\GG(t_1)(t_2)\dots(t_e)}{\sigma}$ is called a \pisiSE-field over $\GG$.\\
In both, the ring and field version, $t_i$ is called \rpisiSE-monomial (resp.\ \rE\piE-, \rE\sigmaSE-, \pisiSE-monomial) if it is a generator of a \rpisiSE-extension (resp.\ \rE\piE-, \rE\sigmaSE-, \pisiSE-extension).
\end{definition}

\begin{example}[Cont.\ Ex.~\ref{Exp:QkxyIsRExt}]
(1) $\dfield{\QQ(k)}{\sigma}$ is a \pisiSE-field over $\QQ$.\\
(2) $\dfield{\QQ(k)\lr{x}\lr{y}}{\sigma}$ 
is an \rE-extension of $\dfield{\QQ(k)}{\sigma}$.
\end{example}

\noindent The generators with their sequential arrangement, incorporating the recursive definition of the automorphism, are always given explicitly. In particular, any reordering of the generators must respect the recursive nature induced by the automorphism.

\subsection{A characterization of \rpisiSE-extensions and their algorithmic construction}\label{Subsec:AlgProblems}

For the construction of \rpisiSE-extensions we rely on the following result; for the proofs of part~1, part~2 and part~3 we refer to Proof~\ref{Thm:SigmaChar},  Proof~\ref{Thm:PiChar} and Proof~\ref{Thm:RChar}, respectively.

\begin{theorem}\label{Thm:RPSCharacterization}
Let $\dfield{\AA}{\sigma}$ be a difference ring. Then
the following holds.
\begin{enumerate}
\item Let $\dfield{\AA[t]}{\sigma}$ be a unimonomial ring extension 
of $\dfield{\AA}{\sigma}$ with $\sigma(t)=t+\beta$ where $\beta\in\AA$ such that $\const{\AA}{\sigma}$ is a field. Then this is a \sigmaSE-extension (i.e.,
$\const{\AA[t]}{\sigma}=\const{\AA}{\sigma}$) iff there does not exist a
$g\in\AA$ with $\sigma(g)=g+\beta$.

\item Let $\dfield{\AA\ltr{t}}{\sigma}$ be a unimonomial ring
extension of
$\dfield{\AA}{\sigma}$ with $\sigma(t)=\alpha\,t$ where $\alpha\in\AA^*$. Then
this is a
\piE-extension (i.e., $\const{\AA\ltr{t}}{\sigma}=\const{\AA}{\sigma}$) iff
there are no $g\in\AA\setminus\{0\}$ and $m\in\ZZ\setminus\{0\}$
with
$\sigma(g)=\alpha^m\,g$. If it is a \piE-extension, $\ord(\alpha)=0$.

\item Let $\dfield{\AA[t]}{\sigma}$ be an algebraic ring
extension of $\dfield{\AA}{\sigma}$ of order $\lambda>1$ with $\sigma(t)=\alpha\,t$
where $\alpha\in\AA^*$. Then this is an \rE-extension (i.e.,
$\const{\AA[t]}{\sigma}=\const{\AA}{\sigma}$) iff there are no
$g\in\AA\setminus\{0\}$ and $m\in\{1,\dots,\lambda-1\}$ with
$\sigma(g)=\alpha^m\,g$. If it is an \rE-extension, then $\alpha$ is primitive, i.e., $\ord(\alpha)=\lambda$.
\end{enumerate}
\end{theorem}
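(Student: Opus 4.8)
The three parts are structurally parallel: in each case an extension with a prescribed $\sigma$-action on the new generator $t$ is a \sigmaSE-, \piE-, or \rE-extension precisely when the constants do not grow, and the criterion is stated in terms of the non-solvability of a first-order homogeneous (parts~2,~3) or inhomogeneous (part~1) equation over the base ring $\AA$. I would prove all three by the same strategy: take an arbitrary putative new constant, written in normal form as a (Laurent/truncated) polynomial in $t$ with coefficients in $\AA$, apply $\sigma$, and compare coefficients degree by degree. Throughout one uses that $\dfield{\AA}{\sigma}$ is a difference \emph{ring} extension, so $\sigma$ is injective on $\AA$, and that $t$ is transcendental (resp.\ satisfies only $t^{\lambda}=1$) so the coefficient comparison is legitimate.

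For part~1 (the \sigmaSE-case), write $g=\sum_{i=0}^d a_i t^i\in\AA[t]$ with $a_d\neq0$ and $d\ge1$ and compute $\sigma(g)=\sum_i \sigma(a_i)(t+\beta)^i$. The top coefficient gives $\sigma(a_d)=a_d$, so $a_d\in\const{\AA}{\sigma}=\KK$; I may as well normalize $a_d$. The coefficient of $t^{d-1}$ then yields an equation of the shape $\sigma(a_{d-1})=a_{d-1}+d\,a_d\,\beta$ (up to the precise constant), i.e.\ $\sigma(a_{d-1}/(d a_d))=a_{d-1}/(da_d)+\beta$, producing a solution $g_0\in\AA$ of $\sigma(g_0)=g_0+\beta$ as soon as $d\ge1$. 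Conversely, if such a $g_0$ exists then $t-g_0$ (or $t/1-g_0$) is a non-trivial constant, so the extension is not a \sigmaSE-extension. The case $d=0$ just says constants in $\AA[t]$ of degree $0$ lie in $\AA$, which is automatic. This establishes the equivalence; the only subtlety is bookkeeping the binomial expansion, which I would keep at the level "compare the two highest coefficients."

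For parts~2 and~3 the mechanism is even cleaner because $\sigma(t)=\alpha t$ is homogeneous. Write a candidate constant $g=\sum_{i} a_i t^i$ — the sum ranging over $i\in\ZZ$ (finitely many nonzero) in the Laurent case, and over $i\in\{0,\dots,\lambda-1\}$ in the algebraic case — and observe $\sigma(g)=\sum_i \sigma(a_i)\alpha^i t^i$. Equating with $g$ forces $\sigma(a_i)=\alpha^{-i}a_i$, equivalently $\sigma(a_i)=\alpha^{|i|}a_i$ after renaming, for each $i$. So $g$ is a non-constant (i.e.\ involves some $t^i$ with $i\neq0$) precisely when there exist $a\in\AA\setminus\{0\}$ and an exponent $m\neq0$ (with $m\in\ZZ\setminus\{0\}$ in the Laurent case, $m\in\{1,\dots,\lambda-1\}$ in the algebraic case) with $\sigma(a)=\alpha^m a$; this is exactly the stated criterion, and the "only if" direction is immediate since such an $a t^{m}$ (or $a t^{m}+\,\overline{a}\,t^{\lambda-m}$-type combination, but really just $a t^{m}$) is a non-trivial constant. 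For the supplementary claims on $\ord(\alpha)$: in the \piE-case, if $\alpha^n=1$ for some $n>0$ then $g=1\in\AA$ satisfies $\sigma(g)=\alpha^n g$ with $m=n\neq0$, contradicting the criterion, so $\ord(\alpha)=0$; in the \rE-case, $\alpha^{\lambda}=1$ already holds, and if $\alpha^m=1$ for some $m\in\{1,\dots,\lambda-1\}$ then again $g=1$ violates the criterion, so $\ord(\alpha)=\lambda$, i.e.\ $\alpha$ is primitive.

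**Main obstacle.** The genuinely delicate point, and the only place where more than routine coefficient comparison is needed, is making the coefficient-comparison argument airtight in the \rE-case: in $\AA[t]/\lr{t^{\lambda}-1}$ the monomials $1,t,\dots,t^{\lambda-1}$ form an $\AA$-basis, so comparing coefficients is valid, but one must be careful that multiplying two such truncated polynomials wraps exponents mod $\lambda$ — this is harmless for the purely homogeneous computation $\sigma(g)=\sum \sigma(a_i)\alpha^i t^i$ since each monomial is handled separately, but it must be explicitly noted. A secondary bit of care is needed for the "only if" directions: one must verify that the witness produced (say $g_0$ with $\sigma(g_0)=g_0+\beta$, or $a$ with $\sigma(a)=\alpha^m a$) indeed yields an element of the \emph{extension} ring that is a constant but not in $\AA$ — for instance $t-g_0\notin\AA$ because $t$ is transcendental, and in the Laurent/algebraic cases $a\,t^m\notin\AA$ because $m\neq0$. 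I expect all of this to go through smoothly once the normal-form/coefficient-comparison setup is in place, so the write-up will be short.
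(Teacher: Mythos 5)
Your strategy is the same as the paper's: coefficient comparison in the $\AA$-basis of the extension ring, the "two highest coefficients" computation for the $\Sigma^*$-case (the paper packages this as Lemma~\ref{LemmaB}/Lemma~\ref{Lemma:SigmaLemma} for a general group $G$ of semi-constants and then specializes to $G=\{1\}$; you do the $u=1$ case directly, which is the same computation), explicit new constants for the converse directions, and $g=1$ for the order statements. Three small repairs are needed in your write-up. First, the witnesses you exhibit in the converse directions of parts~2 and~3 are wrong as stated: if $\sigma(g)=\alpha^m g$ then $\sigma(g\,t^{m})=\alpha^{2m}g\,t^{m}$, so $g\,t^{m}$ is not a constant; the correct elements are $g\,t^{-m}$ in the Laurent case and $g\,t^{\lambda-m}$ in the algebraic case (this is exactly what the paper uses), and these lie outside $\AA$ for the same reasons you give. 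Second, in part~1 the step "normalize $a_d$" and the division by $d\,a_d$ is precisely where the hypothesis that $\const{\AA}{\sigma}$ is a field enters: the leading coefficient is a nonzero constant and must be invertible (invertibility of $d$ comes from $\QQ\subseteq\AA$); make this explicit, since for a general difference ring the statement fails without it. Third, in part~3 note that $\alpha^{\lambda}=1$ (apply $\sigma$ to $t^{\lambda}=1$), which is what lets you rewrite the exponent $-i$ coming from coefficient comparison as $\lambda-i\in\{1,\dots,\lambda-1\}$, matching the range of $m$ in the statement, and which you also need to conclude $\ord(\alpha)=\lambda$ exactly. With these fixes your argument coincides with the paper's proof.
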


\noindent For Karr's celebrated field version~\cite{Karr:81,Karr:85} of this result we refer to  Theorems~\ref{Thm:SigmaFieldTheorem} and~\ref{Thm:PiCharField} below, that can be nicely embedded in the general difference ring framework. We emphasize that Theorem~\ref{Thm:RPSCharacterization}  facilitates algorithmic tactics to build difference ring extensions and to verify simultaneously if they form \rpisiSE-extensions. Here we consider two cases.

\subsubsection{Testing and constructing \rE\piE-extensions}

Let $\dfield{\AA}{\sigma}$ be a difference ring and let $\alpha\in\AA$. Then we want to decide if we can construct an \rE\piE-extension $\dfield{\AA\lr{t}}{\sigma}$ of $\dfield{\AA}{\sigma}$ with $\sigma(t)=\alpha\,t$. First, we have to check if 
$\alpha\in\AA^*$. E.g., for the class of difference rings $\dfield{\AA}{\sigma}$, built by simple \rpisiSE-extensions introduced in  Definition~\ref{Def:SimpleRPS} below, this task will be straightforward. Next, we need the order of $\alpha$, i.e., we have to solve the following Problem~O with $G:=\AA^*$.
\medskip

\noindent\fbox{\begin{minipage}{12.8cm}
\noindent\index{Problem!O}\textbf{Problem~O in $G$.} \textit{Given} a group
$G$ and $\alpha\in G$. \textit{Find} $\ord(\alpha)$.
\end{minipage}}

\medskip

\noindent Given $\lambda=\ord(\alpha)$, we can decide which case has to be treated. 
If $\lambda=0$, only the construction of a \piE-extension might be possible due to  Theorem~\ref{Thm:RPSCharacterization}. Thus we construct the unimonomial extension $\dfield{\AA\ltr{t}}{\sigma}$ of $\dfield{\AA}{\sigma}$ with $\sigma(t)=\alpha\,t$. Otherwise, if $\lambda>0$, we construct the algebraic extension $\dfield{\AA[t]}{\sigma}$ of $\dfield{\AA}{\sigma}$ with $\sigma(t)=\alpha\,t$ of order $\lambda$. Finally, we check if our construction is indeed a \piE-extension or \rE-extension, i.e., if the constants remain unchanged. Using  Theorem~\ref{Thm:RPSCharacterization} this test can be accomplished by solving
\medskip

\noindent\fbox{\begin{minipage}{12.8cm}
\noindent\index{Problem!MT}\textbf{Problem~MT in $\dfield{\AA}{\sigma}$.} \textit{Given} a difference ring
$\dfield{\AA}{\sigma}$ and $\alpha\in\AA^*$ with $\lambda=\ord(\alpha)$. \textit{Decide} if
there are a $g\in\AA\setminus\{0\}$ and an $m\in\ZZ\setminus\{0\}$ for the case $\lambda=0$ (resp.\ $m\in\{1,\dots,\lambda-1\}$ for the case $\lambda>0$) such that the multiplicative version of the telescoping equation (MT) holds: 
\begin{equation}\label{Equ:MultTele}
\sigma(g)=\alpha^m\,g.
\end{equation}
\end{minipage}}

\medskip

\noindent More generally, if we are given a tower of algebraic and unimonomial extensions, which model indefinite nested products, Problem MT can be used to check if the construction constitutes a nested \rE\piE-extension.

\begin{example}[Cont.\ Ex.~\ref{Exp:QkxyIsRExt}]\label{Exp:QkxyCheckRExt}
We will verify that $\dfield{\QQ(k)[x][y]}{\sigma}$ is an \rE-extension of $\dfield{\QQ(k)}{\sigma}$. (1) Take $\alpha=-1$ with $\lambda=\ord(\alpha)=2$. We solve Problem~MP by the algorithms presented below: there are no $g\in\QQ(k)^*$ and $m\in\{1\}$ with $\sigma(g)=(-1)^m\,g$. Hence by Theorem\footnote{Note: Theorem~\ref{Thm:RPSCharacterization}.(3) is a shortcut for ``part~3 of Theorem~\ref{Thm:RPSCharacterization}''. The same convention will be applied for other references.}~\ref{Thm:RPSCharacterization}.(3) $\dfield{\QQ(k)[x]}{\sigma}$ is an \rE-extension of $\dfield{\QQ(k)}{\sigma}$. \\
(2) Now we solve Problem~O for $\alpha=-x$ and get $\lambda=\ord(-x)=2$; see Example~\ref{Exp:GetOrder}.(2). In addition, solving Problem~MP for $\alpha$ shows that there is no $g\in\QQ(k)[x]\setminus\{0\}$ with $\sigma(g)=-x\,g$. Thus by Theorem~\ref{Thm:RPSCharacterization}.(3) $\dfield{\QQ(k)[x][y]}{\sigma}$ forms an \rE-extension of $\dfield{\QQ(k)[x]}{\sigma}$. 
\end{example}

\begin{example}\label{Exp:QkxtCheckRPiExt}
We construct a ring in which the objects in~\eqref{Equ:ProductId} can be represented. \\
(0) Take the \pisiSE-field $\dfield{\KK(k)}{\sigma}$ over $\KK=\QQ(\iota)$ with $\sigma(k)=k+1$.\\
(1) Take $\alpha=\iota$. Then solving Problem~O provides $\lambda=\ord(\alpha)=4$. In particular solving the corresponding Problem~MP proves that there are no $g\in\KK(k)^*$ and $m\in\{1,2,3\}$ with~\eqref{Equ:MultTele}. Hence by Theorem~\ref{Thm:RPSCharacterization}.(3) we can construct the \rE-extension $\dfield{\KK(k)[x]}{\sigma}$ of $\dfield{\KK(k)}{\sigma}$ with $\sigma(x)=\iota\,x$. Note that the \rE-monomial $x$ represents $\iota^k$.\\
(2) Take $\alpha=x\,k$. Solving Problem~O yields $\lambda=\ord(\alpha)=0$ and solving Problem~MP shows that there are no $g\in\KK(k)[x]\setminus\{0\}$ and $m\in\ZZ\setminus\{0\}$ with~\eqref{Equ:MultTele}.
With Theorem~\ref{Thm:RPSCharacterization}.(2) we can construct the \piE-extension $\dfield{\KK(k)[x]}{\sigma}\leq\dfield{\KK(k)[x]\lr{t}}{\sigma}$ with $\sigma(t)=x\,k\,t$; here the \piE-monomial $t$ represents $\prod_{j=1}^{k-1}j\iota^j$. 
\end{example}

\subsubsection{Testing and constructing \sigmaSE-extensions}

In order to verify if a unimonomial extension as given in Theorem~\ref{Thm:RPSCharacterization}.(1) is a \sigmaSE-extension, it suffices to solve Problem~T with $f=\beta$ and to check if there is not a telescoping solution. We illustrate this feature by actually constructing a difference ring in which the summand
\begin{equation}\label{Equ:MainSummand}
f(k)=(-1)^{\binom{k+1}{2}} k^2 
\sum_{j=1}^k \frac{(-1)^j}{j}
\end{equation}
given on the left hand side of~\eqref{SumId} and the additional sum 
\begin{equation}\label{Equ:ExtraSum}
\sum_{j=1}^k \frac{(-1)^{\binom{j+1}{2}}}{j}
\end{equation}
occurring on the right hand side of~\eqref{SumId} 
can be represented. In particular, we demonstrate how identity~\eqref{SumId} can be discovered in this difference ring. 

\begin{example}[Cont.\ Ex.~\ref{Exp:QkxyIsRExt}]\label{ExpQkxysS}
(0) Take the difference ring $\dfield{\AA}{\sigma}$ with $\AA=\QQ(k)[x][y]$.\\
(1) Take $f=\sigma(\frac{x}{k})=\frac{-x}{k+1}$. Then solving Problem~T shows that there is no $g\in\AA$ with $\sigma(g)-g=\frac{-x}{k+1}$. Hence we can construct the \sigmaSE-extension $\dfield{\AA[s]}{\sigma}$ of $\dfield{\AA}{\sigma}$ with $\sigma(s)=s+\frac{-x}{k+1}$; note that the \sigmaSE-monomial $s$ represents $\sum_{j=1}^k \frac{(-1)^j}{j}$.\\
(2) Take $f=\sigma(\frac{y}{k})=\frac{-x\,y}{k+1}$. Then solving Problem~T shows that there is no $g\in\AA[s]$ with $\sigma(g)-g=\frac{-x\,y}{k+1}$. Hence we can construct the \sigmaSE-extension $\dfield{\AA[s][S]}{\sigma}$ of $\dfield{\AA[s]}{\sigma}$ with $\sigma(S)=S+\frac{-x\,y}{k+1}$; note that the \sigmaSE-monomial $S$ represents the sum~\eqref{Equ:ExtraSum}.\\
(3) Take $f=y\,k^2\,s$ which represents~\eqref{Equ:MainSummand}. Solving Problem~T produces the solution
\begin{equation}\label{Equ:DFTeleSol}
g=s y \big(\tfrac{1}{2} (k-1)
   (k+1) x-\tfrac{1}{2} (k-2)
   k\big)+y
   \big(\tfrac{1}{4} (1-2
   k)-\tfrac{1}{4}x\big)+\tfrac
   {1}{2}S;
\end{equation}
for further details see Example~\ref{Exp:TeleSigmaBound}.
Hence this yields the solution of the telescoping equation~\eqref{Equ:TeleProblem} for our summand~\eqref{Equ:MainSummand} by replacing the \rE\sigmaSE-monomials $x,y,s,S$ with the corresponding summation objects. Taking $a=1$ in~\eqref{Equ:SumTele} and performing the evaluation $c:=g(1)=0\in\QQ$ gives the identity~\eqref{SumId}.\\ 
(4) Note that we succeeded in representing
the sum $F(k)=\sum_{i=1}^kf(i)$ with $f$ from~\eqref{Equ:MainSummand} in the difference ring  in $\AA[s][S]$ with
$\sigma(g)-c=\sigma(g)$. Namely, replacing the variables in $\sigma(g)$ with the corresponding summation objects yields the right hand side of~\eqref{SumId}. This is of particular interest if there are further sums defined over $F(k)$ which one wants to represent in a \sigmaSE-extension over $\dfield{\AA[s][S]}{\sigma}$.
\end{example}

\noindent We remark that for the derivation of the identity~\eqref{SumId} it is crucial to introduce the extra sum~\eqref{Equ:ExtraSum}. Here this was accomplished manually. But, using algorithms from~\cite{Schneider:08c,Schneider:14} in combination with the results of this article, this sum can be determined automatically.

\subsubsection{The underlying problems for \rpisiSE-extensions}

As in the difference field approach~\cite{Karr:81,Schneider:05a,Schneider:08c,Schneider:14},
Problem~T and more generally Problem~PT will be solved by reducing them from $\dfield{\AA}{\sigma}$ to smaller difference rings (i.e., rings built by less \rpisiSE-monomials). Likewise, this reduction technique can be applied in order to solve a special case of Problem~MT that will cover all the cases needed for our difference ring constructions. However, in order to carry out these reductions, one has to tackle generalized problems within the recursion steps.

\medskip

For Problem~MT the following generalization is needed.
Let $\dfield{\AA}{\sigma}$ be a difference ring, let $W\subseteq\AA$ and let
$\vect{f}=(f_1,\dots,f_n)\in(\AA^*)^n$. Then we define the set~\cite{Karr:81}\index{$M(\vect{f},\AA)$}
$$M(\vect{f},W):=\{(m_1,\dots,m_n)\in\ZZ^n|\,\sigma(g)=f_1^{m_1}
\dots f_n^{m_n}\,g\text{ for some }g\in W\setminus\{0\}\}.$$

\noindent In the following, we want to calculate a finite representation of $M(\vect{f},\AA)$. If $\AA$ is a field, i.e., $\AA^*=\AA\setminus\{0\}$, it is immediate that $M(\vect{f},\AA)$ is a submodule of $\ZZ^n$ over $\ZZ$ and there is a basis of $M(\vect{f},\AA)$ with rank $\leq n$; see~\cite{Karr:81}. In the setting of rings, this result carries over if the set of semi-constants (also called semi-invariants~\cite{Bron:00})\index{semi-constant}\index{$\gsconstF{\AA}{\sigma}$, $\gsconst{\AA}{\sigma}$}  
of $\dfield{\AA}{\sigma}$ defined by
$$\gsconstF{\AA}{\sigma}=\{c\in\AA| \sigma(c)=u\,c\text{ for some
$u\in\AA^*$}\}$$
forms a multiplicative group (excluding the $0$ element). Note: if $\AA$ is a field, we have that $\gsconstF{\AA}{\sigma}\setminus\{0\}=\AA\setminus\{0\}=\AA^*$. 
Unfortunately, for a general difference ring the set $\gsconstF{\AA}{\sigma}\setminus\{0\}$ is only a multiplicative monoid~\cite{Bron:00}. In order to gain more flexibility, we introduce the following refinement.
For a given multiplicative subgroup $G$ of $\AA^*$ (in short $G\leq\AA^*)$, we define the set of
semi-constants (semi-invariants)\index{$\sconstF{G}{\AA}{\sigma}$, $\sconst{G}{\AA}{\sigma}$} of $\dfield{\AA}{\sigma}$ over $G$ by
$$\sconstF{G}{\AA}{\sigma}=\{c\in\AA| \sigma(c)=u\,c\text{ for some
$u\in G$}\}.$$
Note that $\sconstF{(\AA^*)}{\AA}{\sigma}=\gsconstF{\AA}{\sigma}$ and $\sconstF{\{1\}}{\AA}{\sigma}=\constF{\AA}{\sigma}$. 
If it is clear from the context, we drop $\sigma$ and just write
$\sconst{G}{\AA}{\sigma}$ and $\gsconst{\AA}{\sigma}$, respectively.

Here is one of the main challenges: For all our considerations we will choose $G$ such that $\sconst{G}{\AA}{\sigma}\setminus\{0\}$ is a subgroup of $\AA^*$ (in short, 
$\gsconst{\AA}{\sigma}\setminus\{0\}\leq\AA^*$).
Then with this careful choice of $G$ we can summarize the above considerations with the following lemma.

\begin{lemma}\label{Lemma:MZModule}
Let $\dfield{\AA}{\sigma}$ be a difference ring and let $G\leq\AA^*$ with 
$\sconst{G}{\AA}{\sigma}\setminus\{0\}\leq\AA^*$; let $\vect{f}\in G^n$. Then
$M(\vect{f},\AA)=M(\vect{f},\sconst{G}{\AA}{\sigma})$.
In particular, $M(\vect{f},\AA)$ is a submodule of $\ZZ^n$ over $\ZZ$, and it has a finite $\ZZ$-basis with rank $\leq n$.
\end{lemma}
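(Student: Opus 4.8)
The plan is to show first that $M(\vect{f},\AA)=M(\vect{f},\sconst{G}{\AA}{\sigma})$ and then to deduce the module structure from this identity. The inclusion $M(\vect{f},\sconst{G}{\AA}{\sigma})\subseteq M(\vect{f},\AA)$ is trivial since $\sconst{G}{\AA}{\sigma}\subseteq\AA$. For the reverse inclusion, take $(m_1,\dots,m_n)\in M(\vect{f},\AA)$, so there is a $g\in\AA\setminus\{0\}$ with $\sigma(g)=f_1^{m_1}\cdots f_n^{m_n}\,g$. Since each $f_i\in G\leq\AA^*$, the product $u:=f_1^{m_1}\cdots f_n^{m_n}$ lies in $G$, and hence $\sigma(g)=u\,g$ with $u\in G$ witnesses that $g\in\sconst{G}{\AA}{\sigma}$. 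Therefore $g\in\sconst{G}{\AA}{\sigma}\setminus\{0\}$ and $(m_1,\dots,m_n)\in M(\vect{f},\sconst{G}{\AA}{\sigma})$, giving the equality.

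Next I would verify that $M(\vect{f},\AA)$ is a $\ZZ$-submodule of $\ZZ^n$. Clearly $\vect{0}\in M(\vect{f},\AA)$ via $g=1$. Suppose $\vect{m},\vect{m}'\in M(\vect{f},\AA)$ are witnessed by $g,g'\in\sconst{G}{\AA}{\sigma}\setminus\{0\}$, i.e.\ $\sigma(g)=(\prod_i f_i^{m_i})\,g$ and $\sigma(g')=(\prod_i f_i^{m'_i})\,g'$. Because $\sconst{G}{\AA}{\sigma}\setminus\{0\}$ is a \emph{group} under multiplication by hypothesis, $g\,g'$ and $g^{-1}$ again lie in $\sconst{G}{\AA}{\sigma}\setminus\{0\}$; in particular $\AA$ has no zero divisors obstructing these products, and applying $\sigma$ (a ring automorphism, hence multiplicative) gives $\sigma(g\,g')=(\prod_i f_i^{m_i+m'_i})\,g\,g'$ and $\sigma(g^{-1})=(\prod_i f_i^{-m_i})\,g^{-1}$. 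This shows $\vect{m}+\vect{m}'\in M(\vect{f},\AA)$ and $-\vect{m}\in M(\vect{f},\AA)$, so $M(\vect{f},\AA)$ is closed under addition and negation, hence a $\ZZ$-submodule of $\ZZ^n$.

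Finally, every submodule of the free $\ZZ$-module $\ZZ^n$ is itself free of rank $\leq n$, since $\ZZ$ is a principal ideal domain; thus $M(\vect{f},\AA)$ has a finite $\ZZ$-basis of rank at most $n$. I expect the only genuinely load-bearing point to be the use of the group hypothesis on $\sconst{G}{\AA}{\sigma}\setminus\{0\}$: without it one cannot invert $g$ (so negation fails) nor form $g\,g'$ safely (potential zero divisors), which is precisely why the intermediate set equality $M(\vect{f},\AA)=M(\vect{f},\sconst{G}{\AA}{\sigma})$ is stated first — it localizes the witnesses $g$ to a set on which multiplication and inversion are available. Everything else is routine.
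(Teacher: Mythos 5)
Your proof is correct and supplies exactly the argument the paper leaves implicit (the lemma is stated there as a summary of the preceding discussion, carrying the field case of Karr over to rings via the hypothesis $\sconst{G}{\AA}{\sigma}\setminus\{0\}\leq\AA^*$): every witness $g$ is forced into $\sconst{G}{\AA}{\sigma}\setminus\{0\}$ and hence into the unit group, so witnesses can be multiplied and inverted, which gives closure of $M(\vect{f},\AA)$ under addition and negation, and freeness with rank $\leq n$ follows since $\ZZ$ is a principal ideal domain. One cosmetic remark: the nonvanishing of $g\,g'$ needs no statement about zero divisors in $\AA$ — it follows simply because $g\,g'$ is a unit and a unit in a ring with $1\neq0$ cannot be zero.
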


In the light of this property, we can state Problem~PMT.

\medskip

\noindent\fbox{\begin{minipage}{12.8cm}
\noindent\index{Problem!PMT}\textbf{Problem~PMT in $\dfield{\AA}{\sigma}$
for $G$.} \textit{Given} a difference ring
$\dfield{\AA}{\sigma}$ with $G\leq\AA^*$ such that
$\sconst{G}{\AA}{\sigma}\setminus\{0\}\leq\AA^*$ holds;  given $\vect{f}\in G^n$.
\textit{Find} a
$\ZZ$-basis of $M(\vect{f},\AA)$.
\end{minipage}}

\medskip

\noindent Observe that Problem~MT can be reduced to Problem~PMT for a group $G$ with $\sconst{G}{\AA}{\sigma}\setminus\{0\}\leq\AA^*$ if we restrict\footnote{Note that this restriction, in particular the choice of $G$, is fundamental: it is the essential step to specify the type of products that one can handle algorithmically; see Definition~\ref{Def:SimpleRPS}.} to the situation that $\alpha\in G$. More precisely, assume that we have calculated $\lambda=\ord(\alpha)$ and succeeded in solving Problem~PMT, i.e., we are given a basis of $M=M((\alpha),\AA)\subseteq \ZZ^1$. If the basis is empty, there cannot be an $m\in\ZZ\setminus\{0\}$ and a $g\in\AA\setminus\{0\}$ with~\eqref{Equ:MultTele}. Otherwise, if the basis is not empty, the rank is $1$. More precisely, we obtain $m>0$ with $M=m\,\ZZ$. Hence $m$ is the smallest positive choice such that there is a $g\in\AA\setminus\{0\}$ with~\eqref{Equ:MultTele}. Therefore we can again decide\footnote{Note: If $\lambda:=\ord(\alpha)>0$, we have that $\lambda\in M$, i.e., the rank of $M$ is $1$. In particular, we can construct an \rE-extension $\dfield{\AA}{\sigma}\leq\dfield{\AA[t]}{\sigma}$ with $\sigma(t)=\alpha\,t$ iff $\lambda=m>0$.} Problem MT.
\medskip

For the generalization of Problems~T and~PT we introduce the following set.
Let $\dfield{\AA}{\sigma}$ be a difference ring with constant field $\KK$, let $W\subseteq\AA$, and let $u\in\AA\setminus\{0\}$ and
$\vect{f}=(f_1,\dots,f_n)\in\AA^n$. Then we define~\cite{Karr:81}\index{$V(u,\vect{f},\AA)$}
$$V(u,\vect{f},\dfield{W}{\sigma})=\{(c_1,\dots,c_n,g)\in\KK^n\times
W|\,\sigma(g)-u\,g=c_1\,f_1+\dots+c_n\,f_n\};$$
if it is clear from the context, we write
$V(u,\vect{f},W)$ and suppress the automorphism $\sigma$.

\noindent As with Lemma~\ref{Lemma:MZModule} the following result will be crucial for  further considerations.

\begin{lemma}\label{Lemma:VBasis}
Let $\dfield{\AA}{\sigma}$ be a difference ring with constant field $\KK$ and
let $G\leq\AA^*$ with $\sconst{G}{\AA}{\sigma}\setminus\{0\}\leq\AA^*$. Let
$W$ be a
$\KK$-subspace of $\AA$. Then for $\vect{f}\in\AA^n$ and $u\in G$ we have
that $V(u,\vect{f},W)$ is a $\KK$-subspace of $\KK^n\times W$ with $\dim
V(u,\vect{f},W)\leq n+1$.
\end{lemma}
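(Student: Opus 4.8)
The plan is to prove that $V(u,\vect{f},W)$ is a $\KK$-subspace of $\KK^n\times W$ and then establish the dimension bound. The subspace property is routine: given two solutions $(c_1,\dots,c_n,g)$ and $(c_1',\dots,c_n',g')$ in $V(u,\vect{f},W)$, linearity of $\sigma$ together with the fact that $W$ is a $\KK$-subspace shows that any $\KK$-linear combination $(d\,\vect{c}+d'\,\vect{c}',\,d\,g+d'\,g')$ again satisfies $\sigma(h)-u\,h=\sum_i(d\,c_i+d'\,c_i')f_i$ with $h=d\,g+d'\,g'\in W$, and the zero tuple lies in the set; hence $V(u,\vect{f},W)$ is closed under the ambient $\KK$-vector space structure.

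For the dimension bound I would argue by contradiction. Suppose $\dim V(u,\vect{f},W)\geq n+2$, and pick $n+2$ linearly independent elements $(\vect{c}^{(j)},g^{(j)})$, $j=1,\dots,n+2$. Consider the projection $\pi:\KK^n\times W\to\KK^n$ onto the first $n$ coordinates; this is $\KK$-linear, so its restriction to $V(u,\vect{f},W)$ has image of dimension at most $n$, hence kernel of dimension at least $2$ inside $V(u,\vect{f},W)$. Thus there exist two $\KK$-linearly independent pairs $(\vect{0},g_1)$ and $(\vect{0},g_2)$ in $V(u,\vect{f},W)$, i.e.\ $g_1,g_2\in W\setminus\{0\}$ with $\sigma(g_i)=u\,g_i$ for $i=1,2$ and $g_1,g_2$ $\KK$-linearly independent.

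The key step is then to derive a contradiction from the existence of two $\KK$-linearly independent elements of the homogeneous solution space. Since $\sigma(g_1)=u\,g_1$ and $g_1\in W\subseteq\AA$ with $u\in G\leq\AA^*$, we have $g_1\in\sconst{G}{\AA}{\sigma}$; by hypothesis $\sconst{G}{\AA}{\sigma}\setminus\{0\}\leq\AA^*$, so $g_1$ is a unit of $\AA$. Now form $h:=g_2/g_1\in\AA$. A short computation gives $\sigma(h)=\sigma(g_2)/\sigma(g_1)=(u\,g_2)/(u\,g_1)=g_2/g_1=h$, so $h\in\const{\AA}{\sigma}=\KK$. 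Therefore $g_2=h\,g_1$ with $h\in\KK$, contradicting the $\KK$-linear independence of $g_1$ and $g_2$. Hence the homogeneous part $\ker\pi|_{V}$ has dimension at most $1$, and consequently $\dim V(u,\vect{f},W)\leq \dim\operatorname{im}\pi|_{V}+\dim\ker\pi|_{V}\leq n+1$.

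The main obstacle — and the reason the hypothesis $\sconst{G}{\AA}{\sigma}\setminus\{0\}\leq\AA^*$ is imposed — is precisely the invertibility of solutions to $\sigma(g)=u\,g$: in a general difference ring such a $g$ need not be a unit, so the quotient $g_2/g_1$ need not exist, and the classical field argument (due to Karr) that the homogeneous solution space of a first-order equation is at most one-dimensional over $\KK$ would break down. Everything else is formal linear algebra (rank–nullity applied to the projection $\pi$) plus the elementary verification that $V(u,\vect{f},W)$ is closed under $\KK$-linear combinations.
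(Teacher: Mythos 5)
Your proof is correct and follows essentially the same route as the paper: both arguments reduce, via $\KK$-linear algebra (your rank–nullity on the projection, the paper's row operations), to two independent homogeneous solutions $\sigma(g_i)=u\,g_i$ with zero parameter vector, and then invoke $\sconst{G}{\AA}{\sigma}\setminus\{0\}\leq\AA^*$ to form the quotient and conclude it lies in $\KK$, contradicting independence. No gaps.
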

\begin{proof}
Suppose that there are $m$ linearly independent solutions with $m>n+1$, say 
$(c_{i,1},\dots,c_{i,n},g_i)$ with $1\leq i\leq m$. Then by row operations over the field $\KK$ we can derive at least two linearly independent vectors, say
$\vect{v_1}=(0,\dots,0,g)$ and $\vect{v_2}=(0,\dots,0,h)$. Hence we have that $\sigma(g)=u\,g$
and $\sigma(h)=u\,h$ where
$g,h\in\sconst{G}{\AA}{\sigma}\setminus\{0\}\leq\AA^*$.
Consequently, $\sigma(\frac{g}{h})=\frac{g}{h}$, thus $c=g/h\in\KK^*$ and
therefore $\vect{v_1}=c\,\vect{v_2}$; a contradiction that the vectors are linearly independent.
\end{proof}

This result gives rise to the following problem specification.

\medskip

\noindent\fbox{\begin{minipage}{12.8cm}
\index{Problem!FPLDE}\noindent\textbf{Problem~PFLDE in $\dfield{\AA}{\sigma}$ for $G$ (with constant field $\KK$).} \textit{Given} a difference ring
$\dfield{\AA}{\sigma}$ with constant field $\KK$ and $G\leq\AA^*$ such that
$\sconst{G}{\AA}{\sigma}\setminus\{0\}\leq\AA^*$ holds; given $u\in G$ and
$\vect{f}\in\AA^n$.
\textit{Find} a
$\KK$-basis of $V(u,\vect{f},\AA)$.
\end{minipage}}
\medskip

\noindent In particular, if we can solve Problem~PFLDE in $\dfield{\AA}{\sigma}$ for $G$, it follows with $1\in G$ that we can solve Problem~T and~PT in $\dfield{\AA}{\sigma}$. Furthermore, we can solve the multiplicative version of telescoping: if $\alpha\in G$, we can determine a $g\in\AA\setminus\{0\}$, in case of existence, such that $\sigma(g)=\alpha\,g$ holds. This feature is illustrated by the following example.

\begin{example}[Cont.\ Ex.~\ref{Exp:QkxtCheckRPiExt}]\label{Exp:ProductExample}
Given $Q(b)=\prod_{k=1}^b\frac{-(\iota^k)}{1+k}$ on the left hand side of~\eqref{Equ:ProductId}, we want to rewrite it in terms of the product $P(b)=\prod_{j=1}^{b-1}j\iota^j$. In a preparation step we constructed already the 
\rpisiSE-extension $\dfield{\KK(k)[x]\lr{t}}{\sigma}$ of $\dfield{\KK(k)}{\sigma}$ with $\KK=\QQ(\iota)$, $\sigma(x)=\iota\,x$ and $\sigma(t)=k\,x\,t$ in
Example~\ref{Exp:QkxtCheckRPiExt}. There we can represent $\frac{-(\iota^k)}{k+1}$ with $u=\frac{-x}{k+1}$ and $P(k)$ with $t$.
Now we search for a $g\in\KK(k)[x]\lr{t}\setminus\{0\}$ such that
$\sigma(g)=u\,g$
holds. 
More precisely, we are interested in a basis of $V=V(u,(0),\KK(k)[x]\lr{t})$.
Activating our machinery, we get the basis $\{(0,g),(1,0)\}$ of $V$ with $g=\frac{x(\iota+x^2)}{k}t^{-1}$. For the chosen group $G$ with $u\in G$, that we use to solve the underlying Problem~PFLDE in $\dfield{\KK(k)[x]\lr{t}}{\sigma}$, and the corresponding calculation steps we refer to Example~\ref{Exp:ProductBound} below.
Since $g$ is a solution of $\sigma(g)=u\,g$, 
$g(k)=\big(\iota+(-1)^k\big) \frac{\iota^k}{k}P(k)^{-1}$ is a solution of
$-\frac{\iota^k}{k+1}=\frac{g(k+1)}{g(k)}.$
Hence by the telescoping trick we get
$\prod_{k=1}^b-\frac{\iota^k}{k+1}=\frac{g(b+1)}{g(1)}$
which produces~\eqref{Equ:ProductId}.
\end{example}

\subsection{The main results}\label{Subsec:MainResults}

Suppose that we are given a difference ring $\dfield{\GG}{\sigma}$ which is computable and we are given a group $G\leq\GG^*$ with $\sconst{G}{\GG}{\sigma}\setminus\{0\}\leq\GG^*$. In this article we will restrict to certain classes of \rpisiSE-extensions $\dfield{\EE}{\sigma}$ of $\dfield{\GG}{\sigma}$ equipped with a group $\tilde{G}$ with $G\leq\tilde{G}\leq\EE^*$ and $\sconst{\tilde{G}}{\EE}{\sigma}\setminus\{0\}\leq\EE^*$ such that we can derive the following algorithmic machinery: 
\begin{enumerate}
 \item Problem O in $\tilde{G}$ can be reduced to Problem O in $G$; 
 \item Problem PMT in $\dfield{\EE}{\sigma}$ for $\tilde{G}$ can be reduced to Problem PMT in $\dfield{\GG}{\sigma}$ for $G$; 
 \item Problem PFLDE in $\dfield{\EE}{\sigma}$ for $\tilde{G}$ can be reduced to Problem PFLDE in $\dfield{\GG}{\sigma}$ for $G$ (see Subsection~\ref{Sec:Single-Rooted}) or to Problem PFLDE in
 $\dfield{\GG}{\sigma^k}$ for $G$ for all $k\geq1$ (see Subsection~\ref{Sec:SimpleConstantStable}).
\end{enumerate}
In a nutshell, if we choose as base case a difference ring $\dfield{\GG}{\sigma}$ and a group $G\leq\GG^*$ in which we can solve Problem O in $G$ and Problems PMT and PFLDE in $\dfield{\GG}{\sigma}$ for $G$ (resp.\ $\dfield{\GG}{\sigma^k}$ for $G$ for all $k\geq1$), we obtain recursive algorithms that solve the corresponding problems in the larger difference ring $\dfield{\EE}{\sigma}$ and larger group $\tilde{G}$.

As it turns out, we will succeed in this task for a subclass of \rpisiSE-extensions $\dfield{\GG}{\sigma}\leq\dfield{\EE}{\sigma}$ and a properly chosen group $\tilde{G}\leq\EE^*$ that can treat all objects (among the general class of \rpisiSE-extensions) that the author has encountered in practical problem solving so far. More precisely, we will restrict to simple \rpisiSE-extensions.

Let
$\dfield{\GG\lr{t_1}\dots\lr{t_e}}{\sigma}$
be a \rpisiSE-extension of
$\dfield{\GG}{\sigma}$ and let $G\leq\GG^*$. Then we define\index{product group}\index{$\dgroup{G}{\AA}{\GG}$}
\begin{equation}\label{Def:dgroup}
\dgroup{G}{\EE}{\GG}=\{g\,t_1^{m_1}\dots
t_e^{m_e}|\,h\in G\text{ and }m_i\in\ZZ\text{ where $m_i=0$ if $t_i$ is a
\sigmaSE-monomial}\}.
\end{equation}
It is easy to see that $\tilde{G}=\dgroup{G}{\EE}{\GG}$ forms a group. More precisely, we obtain the following chain of subgroups:
$G\leq\dgroup{G}{\EE}{\GG}\leq\EE^*.$ We call $\dgroup{G}{\EE}{\GG}$ also the product-group over $G$ for the \rpisiSE-extension $\dfield{\EE}{\sigma}$ of $\dfield{\GG}{\sigma}$. We are now ready to define ($G$--)simple \rpisiSE-extensions. 

\begin{definition}\label{Def:SimpleRPS}
Let $\dfield{\GG}{\sigma}$ be a difference ring and let $G\leq\GG^*$ be a
group.
An \rpisiSE-extension $\dfield{\EE}{\sigma}$ of
$\dfield{\GG}{\sigma}$ with $\EE=\GG\lr{t_1}\lr{t_2}\dots\lr{t_e}$ is called
$G$-simple\index{extension!simple, $G$-simple} if for any \rE\piE-monomial $t_i$ we have that $\sigma(t_i)/t_i\in\dgroup{G}{\EE}{\GG}$. 
Moreover, an \rE\piE, \rE\sigmaSE, \pisiSE-, \rE-, \piE-, and \sigmaSE-extension
of $\dfield{\GG}{\sigma}$ is $G$-simple if it is a $G$-simple \rpisiSE-extension.
We call any such extension simple if it is $\GG^*$-simple. Analogously, we call an \rE\piE, \rE\sigmaSE, \pisiSE-, \rE-, \piE-, and \sigmaSE-monomial $G$-simple (resp.\ simple) if the extension is $G$-simple (resp.\ simple).\index{monomial!simple, $G$-simple}
\end{definition}

\noindent In all our examples the difference rings have been built by a simple \rpisiSE-extension $\dfield{\EE}{\sigma}$ of $\dfield{\GG}{\sigma}$ where $\dfield{\GG}{\sigma}$ is a \pisiSE-field $\dfield{\KK(k)}{\sigma}$ 
over $\KK$ with $\sigma(k)=k+1$. In particular, the Problems~PMT and PFLDE have been considered for the constructed $\dfield{\EE}{\sigma}$ in $G=\dgroup{(\KK(k)^*)}{\EE}{\KK(k)}$. Before we finally turn to the class of simple \rpisiSE-extensions, we present one example which cannot be treated properly with our toolbox under consideration.

\begin{example}\label{Exp:NotSimpleExt}
Take $\dfield{\QQ(k)\ltr{t}}{\sigma}$ from Example~\ref{Exp:FactorialPi} with $\sigma(k)=k+1$ and $\sigma(t)=(k+1)\,t$. Subsequently, we will use our notation
$\QQ(k)\lr{t}=\QQ(k)\ltr{t}$. Then we can construct the \rE-extension $\dfield{\QQ(k)\lr{t}[x]}{\sigma}$ of $\dfield{\QQ(k)\lr{t}}{\sigma}$ with $\sigma(x)=-x$ of order $2$. In this ring we are given the idempotent elements $e_1=(1-x)/2$ and $e_2=(x+1)/2$ with $e_1^2=e_1$ and $e_2^2=e_2$. Finally take $\alpha=e_1+e_2\,t$. Then observe that $\alpha\cdot(e_1+e_2/t)=1$, i.e., $\alpha\in\QQ(k)\lr{t}[x]^*$. Note that $\ord(\alpha)=0$. Otherwise it would follow that $e_2^{\lambda}=0$ with $\lambda=\ord(\alpha)>0$; a contradiction that $e_2$ is idempotent. Consequently, $T$ cannot be an \rE-extension, and we construct the unimonomial extension $\dfield{\QQ(k)\lr{t}[x][T,\frac{1}{T}]}{\sigma}$ of $\dfield{\QQ(k)\lr{t}[x]}{\sigma}$ with $\sigma(T)=\alpha\,T$. It seems non-trivial to derive an (algorithmic) proof (or disproof) that $T$ is a \piE-monomial, and it would be nice to see a solution to this problem.
\end{example}

\noindent Summarizing, we aim at solving Problems PMT and PFLDE 
in a $G$-simple \rpisiSE-extension $\dfield{\GG}{\sigma}\leq\dfield{\EE}{\sigma}$ for $\tilde{G}=\dgroup{G}{\EE}{\GG}$, and we want to solve Problem O in $\tilde{G}$. In order to accomplish this task, we will restrict ourselves further to  the following two  situations.

\subsubsection{A solution for single-rooted \rpisiSE-extensions}\label{Sec:Single-Rooted}

In most applications \rE-extensions are not nested, e.g., only objects like $(-1)^k$ arise. In addition, such objects do not occur in transcendental products, but only in sums, like cyclotomic sums~\cite{ABS:11} or generalized harmonic sums~\cite{ABS:13}. A formal definition of this special, but very practical oriented class of \rpisiSE-extensions is as follows.

\begin{definition}\label{Def:SingleRooted}
An \rpisiSE-extension $\dfield{\EE}{\sigma}$ of $\dfield{\GG}{\sigma}$ is called single-rooted\index{extension!single-rooted} if the generators of the extension can be reordered to 
\begin{equation}\label{Equ:SingleRooted}
\EE=\GG\lr{t_1}\dots\lr{t_r}\lr{x_{1}}\dots\lr{x_u}\lr{s_1}\dots\lr{s_v},
\end{equation}
respecting the recursive nature of the automorphism, such that the $t_i$ are \piE-monomials, the $x_i$ are \rE-monomials with $\sigma(x_i)/x_i\in\GG^*$ and the $s_i$ are \sigmaSE-monomials.
\end{definition}

\noindent Given this class of single-rooted and simple\footnote{Note: If $\GG$ is a field, any single-rooted \rpisiSE-extension is simple by Corollary~\ref{Cor:SingleRootedIsSimple}.} \rpisiSE-extension, we will show the following theorem in Proof~\ref{Proof:sconstIsGroupRing}.

\begin{theorem}\label{Thm:sconstIsGroupRing}
Let $\dfield{\GG}{\sigma}$ be a difference ring and let $G\leq\GG^*$ with $\sconst{G}{\GG}{\sigma}\setminus\{0\}\leq\GG^*$.
Let $\dfield{\EE}{\sigma}$ 
be a simple and single-rooted \rpisiSE-extension of $\dfield{\GG}{\sigma}$ with~\eqref{Equ:SingleRooted} as specified in Definition~\ref{Def:SingleRooted}, and let
$\tilde{G}=\dgroup{G}{\GG\lr{t_1}\dots\lr{t_r}}{\GG}$.
Then $\sconst{\tilde{G}}{\EE}{\sigma}\setminus\{0\}\leq\EE^*$ with
\begin{equation*}
\sconst{\tilde{G}}{\EE}{\sigma}=\{h\,t_1^{m_1}\dots t_r^{m_r}\,x_1^{n_1}\dots x_u^{n_u}|\,h\in\sconst{G}{\GG}{\sigma},
m_i\in\ZZ\text{ and } n_i\in\NN\}.
\end{equation*}
\end{theorem}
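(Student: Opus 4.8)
The plan is to prove the two assertions of the theorem together by induction on the number $e=r+u+v$ of generators in the single-rooted presentation~\eqref{Equ:SingleRooted}, building up the tower one generator at a time. The base case $e=0$ is the hypothesis on $\dfield{\GG}{\sigma}$. For the induction step I would handle the three kinds of generators separately, always writing $\EE=\hat{\EE}\lr{z}$ where $\dfield{\hat{\EE}}{\sigma}$ is the intermediate difference ring (whose semi-constant set over the appropriate subgroup $\hat{G}$ is already described by the induction hypothesis, as a group adjoined with the $t_i$'s and $x_i$'s seen so far) and $z$ is the newly adjoined $t_{r}$, $x_u$, or $s_v$.

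The heart of the argument is to take an arbitrary $c\in\sconst{\tilde{G}}{\EE}{\sigma}\setminus\{0\}$, so $\sigma(c)=u\,c$ with $u\in\tilde{G}=\dgroup{G}{\GG\lr{t_1}\dots\lr{t_r}}{\GG}$, and to show $c$ has the claimed monomial shape and, in particular, $c\in\EE^*$. Write $c$ in terms of the top generator $z$ of the tower. \textbf{Case $z=s$ a \sigmaSE-monomial:} here $c=\sum_{i} c_i s^i\in\hat{\EE}[s]$; comparing leading coefficients in $\sigma(c)=u\,c$ (using that $u$ involves no $s$, since $u\in\tilde{G}$ lives in $\GG\lr{t_1}\dots\lr{t_r}$, and that $\sigma(s)=s+\beta$ only shifts degree downward in the corrections) forces, by the standard \sigmaSE-degree argument, that $c$ has degree $0$ in $s$, i.e.\ $c\in\hat{\EE}$, reducing to the induction hypothesis. \textbf{Case $z=t$ a \piE-monomial with $\sigma(t)/t\in\GG^*$:} now $c=\sum_{i=p}^{q} c_i t^i$ is a Laurent polynomial; since $\sigma(t^i)=(\sigma(t)/t)^i t^i$ is again a unit times $t^i$, the equation $\sigma(c)=u\,c$ decouples coefficient-wise, so each nonzero term $c_i t^i$ individually satisfies $\sigma(c_i t^i)=u\,c_i t^i$, hence $\sigma(c_i)=(u\,(t/\sigma(t))^i)\,c_i$ with $u\,(t/\sigma(t))^i\in\dgroup{G}{\GG\lr{t_1}\dots\lr{t_r}}{\GG}\cdot(\text{powers in }t)$ — but actually it lands back in a product-group of the same type over $\hat\EE$, so $c_i\in\sconst{}{\hat\EE}{}$ by induction and $c_i$ is a unit; if two distinct powers $i\ne j$ occurred, their quotient would be a nonconstant unit, and I would use Lemma~\ref{Lemma:VBasis}-style reasoning (ratios of semi-constants are constants) to conclude $c$ is a single monomial $h\,t_1^{m_1}\dots t_r^{m_r}$ times a power of the new $t$, which is exactly the claimed form. \textbf{Case $z=x$ an \rE-monomial with $\sigma(x)/x=\alpha\in\GG^*$, $\alpha^\lambda=1$:} here $c=\sum_{i=0}^{\lambda-1} c_i x^i$ with $x^\lambda=1$; because $\alpha$ is a root of unity (Theorem~\ref{Thm:RPSCharacterization}.(3)), $\sigma(x^i)=\alpha^i x^i$ again preserves the monomial $x^i$, so the equation again decouples across the $x$-grading and each surviving $c_i x^i$ is itself a semi-constant; as before distinct exponents are ruled out by the ratios-are-constants argument, leaving $c=h\,t_1^{m_1}\dots t_r^{m_r}\,x_1^{n_1}\dots x_u^{n_u}$ with $n_i\in\NN$ (we only need $0\le n_i<\lambda_i$, and since $x_i^{\lambda_i}=1$ any $n_i\in\NN$ is legitimate, matching the statement), and such an element is visibly a unit with inverse $h^{-1}t_1^{-m_1}\dots x_u^{\lambda_u-n_u}\dots$.

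Conversely, I would check the easy inclusion: every element $h\,t_1^{m_1}\dots t_r^{m_r}\,x_1^{n_1}\dots x_u^{n_u}$ with $h\in\sconst{G}{\GG}{\sigma}$ lies in $\sconst{\tilde{G}}{\EE}{\sigma}$, because applying $\sigma$ multiplies it by $(\sigma(h)/h)\prod(\sigma(t_j)/t_j)^{m_j}\prod(\sigma(x_i)/x_i)^{n_i}$, and each factor lies in $\dgroup{G}{\GG\lr{t_1}\dots\lr{t_r}}{\GG}=\tilde{G}$ by simplicity together with $\sigma(h)/h\in G$; moreover these elements are all units, so $\sconst{\tilde{G}}{\EE}{\sigma}\setminus\{0\}$ is closed under inverses and is therefore a subgroup of $\EE^*$. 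The main obstacle I anticipate is the decoupling step — justifying rigorously that in the $\piE$ and $\rE$ cases the equation $\sigma(c)=u\,c$ really splits along the grading by $z$, which hinges on $\sigma(z)/z$ lying in $\GG^*$ (this is precisely where single-rootedness and $\sigma(x_i)/x_i\in\GG^*$ are used, preventing cross-terms between different powers of $z$) and on the fact that $u\in\tilde{G}$ carries no dependence on $z$ itself; plus the bookkeeping that the intermediate product-group $\hat{G}$ at each stage still satisfies the hypothesis $\sconst{\hat G}{\hat\EE}{\sigma}\setminus\{0\}\leq\hat\EE^*$ so that the induction hypothesis genuinely applies — this last point is what makes the order $t$'s before $x$'s before $s$'s in the reordering~\eqref{Equ:SingleRooted} essential.
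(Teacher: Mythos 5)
Your plan retraces the paper's own route, only unrolled: the paper proves this theorem by combining Corollary~\ref{Cor:pisiSCONST}.(1) (itself an induction over the \pisiSE-monomials built from Theorem~\ref{Thm:PiPart2Strong} and Theorem~\ref{Thm:SigmaTheoremPart2}) with Proposition~\ref{Prop:RExtSConstSpecial} (iterating Proposition~\ref{Prop:RPart2Weak} for the \rE-block), and your generator-by-generator induction re-derives exactly these one-step facts. Your \sigmaSE-step and \rE-step are sound: the degree argument is Lemma~\ref{LemmaB}, and in the \rE-step the ratio trick works because $u$ is genuinely free of the top $x$, the induction hypothesis makes the surviving coefficients units, and a second surviving power of $x$ would produce a constant of $\EE$ outside the ring below, contradicting the defining property of an \rE-extension.

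There is, however, a genuine gap in your \piE-step, exactly at the point you flag as the anticipated obstacle. Your decoupling rests on the claim that ``$u\in\tilde{G}$ carries no dependence on $z$ itself''. This is false when $z$ is one of the \piE-monomials: by definition $\tilde{G}=\dgroup{G}{\GG\lr{t_1}\dots\lr{t_r}}{\GG}$ contains arbitrary powers of every $t_j$, so at the stage whose top generator is $t_j$ the semi-invariance factor has the form $u=v\,t_j^{m}$ with $m\in\ZZ$ possibly nonzero, and then coefficient comparison does \emph{not} give $\sigma(c_i t_j^i)=u\,c_i t_j^i$; it gives $\sigma(c_i)\,\alpha^i=v\,c_{i-m}$, which mixes coefficients. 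One must first prove that $m=0$ is forced — this is precisely the content of Theorem~\ref{Thm:PiPart2Strong}, obtained there by looking at the minimal (resp.\ maximal) nonvanishing $t_j$-coefficient of $c$ and deriving a contradiction when $m\neq0$. Note that this case cannot be waved away: elements $u\in\tilde{G}$ with nonzero $t_j$-exponent do occur, and the theorem's conclusion implicitly asserts that none of them is realized as $\sigma(c)/c$. Your proposal omits this reduction and its stated justification denies the need for it, so as written the induction breaks at every \piE-monomial; once the $m=0$ step is inserted, the rest of your argument (monomiality via ratios, invertibility, induction) goes through. A smaller inaccuracy in the same case: simplicity only guarantees $\sigma(t_j)/t_j\in\dgroup{(\GG^*)}{\GG\lr{t_1}\dots\lr{t_{j-1}}}{\GG}$, not $\sigma(t_j)/t_j\in\GG^*$; this is harmless for the grading argument, since the ratio is still a unit free of $t_j$, but it is another instance of dependence on the $t$'s that your sketch assumes away.
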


\noindent In particular, we obtain the following reduction algorithms summarized in Theorem~\ref{Thm:AlgMainResultRestricted}; for a proof of part~1 see Proof~\ref{Proof:AlgMainResultRestricted1} and of part~2 see Proof~\ref{Proof:AlgMainResultRestricted2}.

\begin{theorem}\label{Thm:AlgMainResultRestricted}
Let $\dfield{\GG}{\sigma}$ be a computable difference ring with $G\leq\GG^*$ and $\sconst{G}{\GG}{\sigma}\setminus\{0\}\leq\GG^*$. Let $\dfield{\EE}{\sigma}$ be a single-rooted and $G$-simple \rpisiSE-extension of $\dfield{\GG}{\sigma}$ with~\eqref{Equ:SingleRooted} as given in Definition~\ref{Def:SingleRooted}, and let
$\tilde{G}=\dgroup{G}{\GG\lr{t_1}\dots\lr{t_r}}{\GG}$. Then the following holds.
\begin{enumerate}
 \item Problem~PMT is solvable in $\dfield{\EE}{\sigma}$ for $\tilde{G}$ if it is solvable in $\dfield{\GG}{\sigma}$ for $G$.
 \item Problem~PFLDE is solvable in $\dfield{\EE}{\sigma}$ for $\tilde{G}$ if Problems~PFLDE and PMT are solvable in $\dfield{\GG}{\sigma}$ for $G$ and if\footnote{Instead of Problem~O it suffices if know the orders of all the \rE-monomials in $\dfield{\GG}{\sigma}\leq\dfield{\EE}{\sigma}$.} Problem~O is solvable in $G$.
\end{enumerate}
\end{theorem}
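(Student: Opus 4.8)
The plan is to prove Theorem~\ref{Thm:AlgMainResultRestricted} by a single-generator induction on the tower $\EE=\GG\lr{t_1}\dots\lr{t_r}\lr{x_1}\dots\lr{x_u}\lr{s_1}\dots\lr{s_v}$ of single-rooted, $G$-simple \rpisiSE-monomials, reducing each problem over $\dfield{\EE}{\sigma}$ one monomial at a time down to $\dfield{\GG}{\sigma}$. The base case is trivial (the tower is empty and there is nothing to do), so everything rests on the induction step: given that Problems~PMT and~PFLDE are solvable in $\dfield{\AA}{\sigma}$ for a suitable group $G'\le\AA^*$ with $\sconst{G'}{\AA}{\sigma}\setminus\{0\}\le\AA^*$, show they remain solvable after adjoining one further simple monomial $t$, where $t$ is a \piE-monomial, an \rE-monomial with $\sigma(x)/x\in\GG^*$, or a \sigmaSE-monomial. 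Theorem~\ref{Thm:sconstIsGroupRing} supplies exactly what is needed to keep the induction hypothesis alive: it guarantees that at every level of the tower the relevant semi-constant set (over the product-group $\tilde G$ built only from the $t_i,x_i$) is a subgroup of the units, so Lemmas~\ref{Lemma:MZModule} and~\ref{Lemma:VBasis} apply at each stage and $M(\vect f,\AA')$ and $V(u,\vect f,\AA')$ really are finitely generated modules resp.\ finite-dimensional $\KK$-spaces, hence admit the bases we are asked to compute.

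For part~1 (Problem~PMT), in the induction step one is given $\vect f=(f_1,\dots,f_n)\in\tilde G^n$ with $\tilde G=\dgroup{G'}{\AA[t]}{\AA}$ (or the Laurent/algebraic analogue) and must compute a $\ZZ$-basis of $M(\vect f,\AA\lr{t})$. By Lemma~\ref{Lemma:MZModule} we may work inside $\sconst{\tilde G}{\AA\lr{t}}{\sigma}$, whose explicit shape is pinned down by Theorem~\ref{Thm:sconstIsGroupRing}: every semi-constant has the form $h\,t^{m}$ (times the monomials already present) with $h$ a semi-constant over $G'$. Writing each $f_j=h_j\,t^{d_j}$ with $h_j$ a semi-constant below and $d_j\in\ZZ$ (the exponent being $0$ in the \sigmaSE\ case), the equation $\sigma(g)=f_1^{m_1}\cdots f_n^{m_n}\,g$ for $g=h\,t^{e}$ separates: matching $t$-degrees gives a linear constraint $\sum m_j d_j$ against the shift of the $t$-part, while the leftover equation is again a multiplicative-telescoping equation one rung down, i.e.\ an instance of Problem~PMT in $\dfield{\AA}{\sigma}$ for $G'$, which by induction we can solve. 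Intersecting the solution module from below with the degree constraint (a routine $\ZZ$-linear-algebra step) yields the desired basis. The \rE\ case additionally uses Problem~O (the footnote's remark that knowing the orders of the \rE-monomials suffices) to record that $x^{\lambda}=1$ forces exponents modulo $\lambda$, which merely refines the lattice.

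For part~2 (Problem~PFLDE), the induction step must, given $u\in\tilde G$ and $\vect f\in\AA\lr{t}^n$, produce a $\KK$-basis of $V(u,\vect f,\AA\lr{t})$; here one needs Problems~PFLDE \emph{and}~PMT solvable below, plus Problem~O. The standard degree-bound mechanism (Subsection~\ref{Subsection:DegreeBounds}, referenced in Remark~\ref{Remark:StableConstants}) gives an a priori bound $N$ on the $t$-degree (and, for \piE-monomials, on the negative degree too) of any solution $g=\sum_{i} g_i t^i$, so that solving the equation $\sigma(g)-u\,g=\sum c_j f_j$ reduces to a coupled \emph{finite} system of first-order equations $\sigma(g_i)-u_i\,g_i = (\text{terms built from lower }g_{i'}\text{ and the }f_j)$ over $\dfield{\AA}{\sigma}$, where each coefficient $u_i$ lies in the semi-constant group below — this is precisely where knowing $\sconst{G'}{\AA}{\sigma}\setminus\{0\}\le\AA^*$ and the structure theorem are used, to guarantee the $u_i$ stay in the admissible group $G'$ so that Problem~PFLDE applies to each. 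One then solves these from the highest degree downward (or, for \rE\ and \piE\ monomials, handles the degree-zero "orbit" coupling via a single Problem~PMT computation to detect when a homogeneous solution $\sigma(g)=u\,g$ exists, choosing the normalization accordingly), and assembles the finitely many scalar solutions into the $\KK$-basis, which Lemma~\ref{Lemma:VBasis} certifies has dimension $\le n+1$.

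I expect the main obstacle to be part~2 in the \piE\ and \rE\ cases, specifically the degree-bounding and the handling of the degree-zero coupling: for a transcendental \piE-monomial the Laurent structure means one must bound the solution support from both sides and control the homogeneous part $\sigma(g)=\alpha^m g$ (which is why Problem~PMT is invoked inside the PFLDE reduction and why $G$-simplicity, forcing $\sigma(t)/t\in\dgroup{G}{\EE}{\GG}$, is indispensable — without it the relevant coefficients need not lie in a group where we can solve anything), and for an \rE-monomial the non-invertibility of $t$ outside the unit group, together with the relation $t^{\lambda}=1$, forces a careful bookkeeping of exponents modulo $\lambda$ and a verification that the reduced system over $\dfield{\AA}{\sigma}$ has coefficients in $G'$ rather than merely in $\AA^*$. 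Everything else — the $\ZZ$- and $\KK$-linear algebra, the separation by $t$-degree, the assembly of bases — is bookkeeping once the structural input from Theorems~\ref{Thm:sconstIsGroupRing} and~\ref{Thm:RPSCharacterization} and the degree bounds are in hand.
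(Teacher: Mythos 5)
Your proposal is correct and follows essentially the same route as the paper: solvability of Problems~PMT and~PFLDE is lifted one monomial at a time along the tower~\eqref{Equ:SingleRooted}, with the semi-constant structure (Theorem~\ref{Thm:sconstIsGroupRing}, Proposition~\ref{Prop:RPart2Weak}, Corollary~\ref{Cor:pisiSCONST}) keeping $M(\vect{f},\cdot)$ and $V(u,\vect{f},\cdot)$ finitely generated at every level, \sigmaSE-monomials removed outright, \piE-monomials handled by the exponent constraint plus a PMT instance below (resp.\ PMT-based degree bounds for PFLDE), and \rE-monomials by the trivial bound $0\le i<\lambda$ with coefficient-wise reduction --- exactly the paper's Lemmas~\ref{Lemma:MProblemRemoveSigmaExt}, \ref{Lemma:MProblemReducePi}, \ref{Lemma:SingleRootExtension}, Theorem~\ref{Thm:PSReduction} and Proposition~\ref{Prop:SpecialRReduction}. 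One minor remark: in part~1 the \rE-case needs neither Problem~O nor mod-$\lambda$ bookkeeping (Lemma~\ref{Lemma:SingleRootExtension} is a plain projection of a module computed below, and the order is part of the extension's data), so your appeal to Problem~O there is superfluous though harmless.
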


\noindent All the calculations in~\cite{Schneider:04c,PSW:11,Schneider:07b,Schneider:09a,Physics1,Physics2,Physics3} rely precisely on this machinery. For one of the most important applications we refer to Subsection~\ref{Subsec:dAlembert}.

\subsubsection{A solution for simple \rpisiSE-extensions of a strong constant-stable difference field}\label{Sec:SimpleConstantStable}

In the following we restrict to simple \rpisiSE-extensions where the ground domain $\GG=\FF$ is a field. In this setting, the semi-constants form a multiplicative group. More precisely, we will show the following result in Proof~\ref{Proof:sconstIsGroupField}.

\begin{theorem}\label{Thm:sconstIsGroupField}
Let $\dfield{\EE}{\sigma}$
be a simple \rpisiSE-extension of a difference field $\dfield{\FF}{\sigma}$ and consider its
product-group $\tilde{G}=\dgroup{(\FF^*)}{\EE}{\FF}$.
Then $\sconst{\tilde{G}}{\EE}{\sigma}\setminus\{0\}\leq\EE^*$. 
\end{theorem}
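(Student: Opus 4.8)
The plan is to induct on the number $e$ of generators in the tower $\EE = \FF\lr{t_1}\dots\lr{t_e}$ and show at each step that a semi-constant over the product-group remains a unit, so that $\sconst{\tilde G}{\EE}{\sigma}\setminus\{0\}$ is closed under inverses (closure under multiplication being immediate from $\sigma$ being a ring homomorphism and $\tilde G$ being a group). The base case $e=0$ is exactly the hypothesis that $\dfield{\FF}{\sigma}$ is a field, where every nonzero element is a unit and $\tilde G = \FF^*$. So it suffices to handle a single extension step: assuming $\sconst{G}{\AA}{\sigma}\setminus\{0\}\leq\AA^*$ for $\AA=\FF\lr{t_1}\dots\lr{t_{e-1}}$ and $G=\dgroup{(\FF^*)}{\AA}{\FF}$, show the same for $\AA\lr{t}$ with $t=t_e$ and the enlarged group $\tilde G = \dgroup{(\FF^*)}{\AA\lr{t}}{\FF}$, splitting into the three cases according to whether $t$ is a \sigmaSE-, \piE-, or \rE-monomial.

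First I would take $c\in\AA\lr{t}\setminus\{0\}$ with $\sigma(c) = u\,c$ for some $u\in\tilde G$, and analyze $c$ via its representation in $\AA\lr{t}$. In the \sigmaSE-case, $c = \sum_{i} c_i t^i$ is an ordinary polynomial over $\AA$; comparing leading (and trailing) coefficients of $\sigma(c)=u\,c$ — noting $u\in\AA$ here since $t$ is a \sigmaSE-monomial and hence contributes no factor to the product-group — forces $c$ to be a single monomial $c_d t^d$, and then $\sigma(c_d)(t+\beta)^d = u\,c_d t^d$ with $\beta=\sigma(t)-t\in\AA$; degree comparison in $t$ gives $d=0$, so $c=c_0\in\sconst{G}{\AA}{\sigma}\setminus\{0\}\subseteq\AA^*\subseteq(\AA\lr{t})^*$. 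In the \piE-case, $c=\sum_{i=p}^{q} c_i t^i$ is a Laurent polynomial and $u = w\,t^{\ell}$ with $w\in\AA$; then $\sigma(c) = u\,c$ reads $\sum \sigma(c_i)\alpha^i t^i = \sum w\,c_i\, t^{i+\ell}$ where $\alpha=\sigma(t)/t\in\AA^*$; matching coefficients shows each nonzero $c_i$ satisfies $\sigma(c_i)=w\alpha^{-i}c_i\cdot(\text{shift})$, i.e. $c_i\in\sconst{G}{\AA}{\sigma}\setminus\{0\}$ and the exponents appearing in $c$ form an arithmetic progression with step $\ell$; but distinct coefficients $c_i,c_j$ would give $\sigma(c_i/c_j)=\alpha^{j-i}(c_i/c_j)$, and since a \piE-extension admits no such relation with $j\neq i$ (Theorem~\ref{Thm:RPSCharacterization}.(2)) we get $c$ is a single monomial $c_p t^p$, a unit in $\AA\lr{t}$ because $c_p\in\AA^*$ and $t$ is invertible. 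In the \rE-case (order $\lambda$), $c=\sum_{i=0}^{\lambda-1}c_i t^i$, $u=w\,t^\ell$, and the same coefficient-matching modulo $\lambda$ together with Theorem~\ref{Thm:RPSCharacterization}.(3) (no $g\neq0$ with $\sigma(g)=\alpha^m g$ for $m\in\{1,\dots,\lambda-1\}$) forces again a single monomial $c_i t^i$ with $c_i\in\AA^*$; since $t^\lambda=1$, $t$ is a unit with inverse $t^{\lambda-1}$, so $c$ is a unit.

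The structural input making all three cases work uniformly is Lemma~\ref{Lemma:VBasis}-type reasoning: if $c=\sum c_i t^i$ solves $\sigma(c)=u\,c$ and we look at the ``pieces'' $c_i t^i$, each surviving piece individually satisfies a multiplicative equation $\sigma(c_i t^i) = u_i\,(c_i t^i)$ with $u_i\in\tilde G$, and any two pieces differ by an element $d$ of $\sconst{\tilde G}{\AA\lr{t}}{\sigma}$ with $\sigma(d)=(\text{power of }\alpha)\,d$; the defining \piE/\rE\ property of $t$ (constants stay unchanged) kills all but one piece. Then the retained piece is a unit times a power of $t$, and $t$ itself is a unit in $\AA\lr{t}$ in the \piE\ and \rE\ settings (trivially, and via $t^{\lambda-1}$ respectively); in the \sigmaSE\ setting the retained power of $t$ must be $t^0$ by degree considerations. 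Combined with the inductive hypothesis $c_i\in\AA^*$, this yields $c\in(\AA\lr{t})^*$, completing the step. (One also records, as in the statement of Theorem~\ref{Thm:sconstIsGroupRing}, that this identifies $\sconst{\tilde G}{\EE}{\sigma}$ explicitly, though for the present theorem only the group property is asserted.)

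The main obstacle is the \piE-case: unlike the \rE-case where there are only finitely many powers of $t$ to compare, a Laurent polynomial $c$ can a priori spread over an unbounded range of exponents, and one must rule out ``honest'' multi-term solutions. The key is to argue that the set of exponents $i$ with $c_i\neq 0$ is closed under the shift by $\ell=$ (the $t$-exponent of $u$) in a way that, combined with finiteness of the support, forces $\ell=0$ or forces the support to be a singleton — and then to invoke Theorem~\ref{Thm:RPSCharacterization}.(2) to convert a hypothetical relation $\sigma(c_i/c_j)=\alpha^{i-j}(c_i/c_j)$ into a contradiction. Making this airtight (rather than hand-waving about ``comparing coefficients'') is where the real work lies; everything else is bookkeeping against the characterization theorem and the inductive hypothesis.
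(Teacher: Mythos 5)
Your \sigmaSE- and \piE-steps essentially reproduce Theorem~\ref{Thm:SigmaTheoremPart2} and Theorem~\ref{Thm:PiPart2Strong} and are fine (modulo noting that $\alpha=\sigma(t)/t\in G$ is exactly where $G$-simplicity enters, so that the surviving coefficient lands in $\sconst{G}{\AA}{\sigma}$). The genuine gap is in your \rE-step. When the multiplier $u=w\,t^{\ell}$ involves a nonzero power $\ell\in\{1,\dots,\lambda-1\}$ of the new \rE-monomial, coefficient comparison in $\sigma(c)=u\,c$ does \emph{not} give each piece $c_i t^i$ its own multiplicative equation: it gives $\sigma(c_j)\,\alpha^{j}=w\,c_{j-\ell\bmod\lambda}$, i.e.\ the coefficients are coupled cyclically, and multi-term semi-constants really do occur. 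A concrete counterexample to your "single monomial" claim is in the paper itself (Example~\ref{Example:MBasisInKx}): in $\QQ(\iota)[x]$ with $\sigma(x)=\iota\,x$, $\ord(x)=4$, the element $g=x+\iota\,x^3$ satisfies $\sigma(g)=x^2\,g$, so $g\in\sconst{\tilde{G}}{\AA[x]}{\sigma}\setminus\{0\}$ but is not of the form $c_i x^i$. Hence your route to invertibility (single monomial with unit coefficient) collapses exactly in the case the theorem is really about, namely semi-invariance with respect to the enlarged product group.

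The correct handling of this case is Theorem~\ref{Thm:RPart2Strong}: one passes to $g^{n}$ with $n=\ord(x)$, so that the $x$-power in $u^{n}$ disappears and the inductive hypothesis applies to $g^n$ — and this step needs the ring $\AA[x]$ to be \emph{reduced} to guarantee $g^n\neq0$ (and to exclude a spurious \piE-like case). That reducedness is not free when, as in your induction along the given ordering, an \rE-monomial sits on top of earlier \piE- and \sigmaSE-monomials; the paper sidesteps this by first invoking Lemma~\ref{Lemma:ReorderSimpleRPISI} (here simplicity and the field ground ring are used) to move all \rE-monomials directly above $\FF$, where reducedness follows from the radical-ideal argument (Lemma~\ref{Lemma:RadicalIdeal}, Corollary~\ref{Cor:SConstRExtension}), and then treats the \pisiSE-layer on top via Corollary~\ref{Cor:pisiSCONST}.(1) (Theorem~\ref{Thm:rpisiSCONST}). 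So your proposal is missing two ideas: the power trick $g\mapsto g^{\ord(x)}$ for \rE-monomials with $u$ depending on $x$, and the reducedness (obtained via reordering) that makes that trick legitimate.
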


\noindent For a solution of Problems PMT and PFLDE we require in addition that $\dfield{\FF}{\sigma}$ is strong constant-stable.

\begin{definition}\label{Def:ConstantStable}
A difference ring $\dfield{\AA}{\sigma}$ with constant field $\KK$ is called
constant-stable if for all $k>0$ we
have that $\constF{\AA}{\sigma^k}=\KK$. It is called strong constant-stable if
it is constant-stable\index{ring!constant-stable}\index{ring!(strong) constant-stable} and any root of unity of $\AA$ is in $\KK$. 
\end{definition}

\noindent In this setting we can treat products over roots of unity from $\KK$ and, more generally, products that are built recursively over such products; for examples see~\eqref{Equ:Equ:AlgebraicObjects} and for further (algorithmic) properties see Corollary~\ref{Cor:SimpleNestedRBasics} below.
More precisely, given such a tower of \rpisiSE-extensions, we can solve Problems PMT and PFLDE as follows; for the proofs, resp. the underlying algorithms, of part~1 see Proof~\ref{Proof:AlgMainResultFull1}, of part~2 see Proof~\ref{Proof:AlgMainResultFull2} and of part~3  see Proof~\ref{Proof:AlgMainResultFull3}.

\begin{theorem}\label{Thm:AlgMainResultFull}
Let $\dfield{\FF}{\sigma}$ be a computable difference field where Problem~O is solvable in $(\const{\FF}{\sigma})^*$.
Let $\dfield{\EE}{\sigma}$ be a simple \rpisiSE-extension of
$\dfield{\FF}{\sigma}$. Then the following holds.
\begin{enumerate}
\item Problem~O is solvable in $\dgroup{(\FF^*)}{\EE}{\FF}$.

\vspace*{0.1cm}

\item[If ] \hspace*{-0.2cm}$\dfield{\FF}{\sigma}$ is in addition strong constant-stable, then
\item Problem~PMT is solvable in $\dfield{\EE}{\sigma}$ for $\dgroup{(\FF^*)}{\EE}{\FF}$ if it is solvable in $\dfield{\FF}{\sigma}$ for $\FF^*$;
\item Problem~PFLDE is solvable in $\dfield{\EE}{\sigma}$ for
$\dgroup{(\FF^*)}{\EE}{\FF}$ if Problem~PMT is solvable
in
$\dfield{\FF}{\sigma}$ for $\FF^*$ and Problem~PFLDE is solvable\footnote{We emphasize that we will always work with the automorphism $\sigma$ during the reduction process. Only in the base cases we might face the problem to solve instances of Problem PFLDE in $\dfield{\FF}{\sigma^k}$ with $k>1$. In a nutshell, we succeed in avoiding to work with $\sigma^k$ for some $k>1$ as much as possible.
This strategy is of particular advantage, if $\dfield{\FF}{\sigma}$ is built only by few summation objects. Then the typical phenomenon of the expression swell in symbolic summation due to $\sigma^k$ is prevented as much as possible.}
in $\dfield{\FF}{\sigma^k}$ for $\FF^*$ for all $k>0$.
\end{enumerate}
\end{theorem}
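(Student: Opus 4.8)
The plan is to proceed by induction on the number $e$ of generators in the simple \rpisiSE-extension $\dfield{\EE}{\sigma}$ of $\dfield{\FF}{\sigma}$, treating one adjoined monomial $t=t_e$ at a time. Write $\dfield{\EE'}{\sigma}\leq\dfield{\EE}{\sigma}$ with $\EE=\EE'\lr{t}$, where by the induction hypothesis all three problems are already handled for $\dfield{\EE'}{\sigma}$ relative to the product-group $\dgroup{(\FF^*)}{\EE'}{\FF}$. The base case $e=0$ is the hypothesis of the theorem ($\EE=\FF$, with Problem~O solvable in $(\const{\FF}{\sigma})^*$ and Problems~PMT, PFLDE handled in $\dfield{\FF}{\sigma}$ and its powers by assumption). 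For the reduction step I would split according to the type of $t$ (\sigmaSE-, \piE-, or \rE-monomial), since the structure of $\EE=\EE'[t]$, $\EE'\ltr{t}$, or $\EE'[t]/\lr{t^\lambda-1}$ differs in each case, but in all cases $\sigma(t)/t$ (resp.\ $\sigma(t)-t$) lies in the appropriate product-group over $\FF^*$ by $G$-simplicity, which is what makes the reduction feasible.

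\textbf{Part 1 (Problem~O).} For the order function, the only genuinely new situation is when $t$ is an \rE-monomial or \piE-monomial, since \sigmaSE-monomials contribute nothing multiplicatively. For a general unit $\alpha=h\,t_1^{m_1}\cdots t_e^{m_e}\in\dgroup{(\FF^*)}{\EE}{\FF}$, I would compute $\ord(\alpha)$ by peeling off the top variable: if $t=t_e$ is a \piE-monomial, then $\alpha^n=1$ forces the $t_e$-exponent to vanish (since $t_e$ is transcendental and a unit), reducing to $\ord$ of an element of $\dgroup{(\FF^*)}{\EE'}{\FF}$; if $t_e$ is an \rE-monomial of order $\lambda$, one reduces the exponent modulo $\lambda$ and again tracks when the $t_e$-part becomes a root of unity power of $1$, using the strong-constant-stability of $\FF$ to locate all roots of unity in $\KK$. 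Iterating down to $\FF$ reduces everything to Problem~O in $(\FF^*)$, which specializes to $(\const{\FF}{\sigma})^*$ because any unit of $\FF$ whose power is $1$ is a root of unity, hence (by whatever constant-field normalization is in force, and trivially for the strong constant-stable case) lies in $\KK^*$. This part I expect to be routine bookkeeping.

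\textbf{Parts 2 and 3 (Problems~PMT and PFLDE).} Here the substance lies in the reduction of a single extension step, and I would invoke Theorem~\ref{Thm:sconstIsGroupField} to guarantee that $\sconst{\tilde G}{\EE}{\sigma}\setminus\{0\}\leq\EE^*$ throughout, so that Lemmas~\ref{Lemma:MZModule} and~\ref{Lemma:VBasis} apply and $M(\vect f,\EE)$ and $V(u,\vect f,\EE)$ are finitely generated of the expected ranks. For Problem~PMT, given $\vect f\in\tilde G^n$, the task is to find a $\ZZ$-basis of $M(\vect f,\EE)$; writing a candidate solution $g\in\EE$ in its $t$-expansion $g=\sum_j g_j t^j$ and comparing coefficients in $\sigma(g)=f_1^{m_1}\cdots f_n^{m_n}g$ reduces, after a degree/order analysis on $t$ dictated by $\sigma(t)=\alpha t$ (the \piE/\rE case) or $\sigma(t)=t+\beta$ (the \sigmaSE case), to finitely many instances of a multiplicative telescoping problem in $\dfield{\EE'}{\sigma}$ — i.e.\ to Problem~PMT in $\dfield{\EE'}{\sigma}$ for the smaller product-group — which the induction hypothesis solves. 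For Problem~PFLDE, I would similarly expand $g=\sum_j g_j t^j$ in $\sigma(g)-u\,g=\sum_i c_i f_i$; the interplay of $u\in\tilde G$ with the $t$-action splits the equation coefficient-wise into a coupled system whose homogeneous part is governed by $M((\text{appropriate units}),\EE')$ (handled via Problem~PMT in $\dfield{\EE'}{\sigma}$) and whose particular solutions require solving finitely many PFLDE-instances in $\dfield{\EE'}{\sigma}$, with the subtlety that in the \sigmaSE-case of the base field one may be forced down to $\dfield{\FF}{\sigma^k}$ for some $k>1$ (this is the origin of the $\sigma^k$ clause in part~3 and of the footnote). In each type the degree bounds on the $t$-expansion of $g$ — finiteness of the search space — come from the dedicated degree-bound machinery of Subsection~\ref{Subsection:DegreeBounds}; for the \rE-case the "degree" is automatically bounded by $\lambda-1$, which is the simplest subcase.

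\textbf{Main obstacle.} The hard part will be the Problem~PFLDE reduction in the \piE/\rE case: establishing effective degree (and, for Laurent-polynomial \piE-monomials, trailing-degree) bounds on the $t$-expansion of a hypothetical solution $g$, so that the infinitely many coefficient equations collapse to a finite coupled system, and then disentangling that system — whose "diagonal" involves shifting by powers $\alpha^j$ that themselves may lie in $\tilde G$ — into a bounded number of lower-level PFLDE and PMT calls without the constant field growing. Controlling the constants is exactly where Theorem~\ref{Thm:sconstIsGroupField} and strong constant-stability are indispensable, since the argument of Lemma~\ref{Lemma:VBasis} (two homogeneous solutions differ by an element of $\KK^*$) is what keeps $\dim V(u,\vect f,\EE)\leq n+1$ and prevents the reduction from introducing new constants at each step.
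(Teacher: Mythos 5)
Your part~1 argument and your treatment of the \pisiSE-tower on top (coefficient comparison in the top monomial, degree bounds from Subsection~\ref{Subsection:DegreeBounds}, recursion into the smaller ring) do match the paper's route (Theorems~\ref{Thm:ProblemMPiSi} and~\ref{Thm:PSReduction}). The genuine gap is in how you handle the \rE-monomials for Problems~PMT and PFLDE. Your plan is ``peel off the top variable and reduce to PMT/PFLDE in $\dfield{\EE'}{\sigma}$'', which is exactly the single-rooted machinery (Lemma~\ref{Lemma:SingleRootExtension}, Proposition~\ref{Prop:SpecialRReduction}, i.e.\ Theorem~\ref{Thm:AlgMainResultRestricted}): it works only when $\sigma(t)/t$ and the multiplier $u$ are free of \rE-monomials. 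In a general simple \rpisiSE-extension the \rE-monomials are nested (e.g.\ $\sigma(y)=-x\,y$) and $u\in\dgroup{(\FF^*)}{\EE}{\FF}$ may itself contain \rE-monomials, so coefficient comparison in the top \rE-monomial produces a \emph{cyclically coupled} system $\sigma(g_j)\alpha^j-v\,g_{j-m}=\dots$ rather than independent PFLDE instances in $\EE'$ for the relevant group. You flag this as the ``main obstacle'' but do not supply the idea that resolves it, and that idea is the core of the paper's proof: reorder all \rE-monomials into a block directly above $\FF$ (Lemma~\ref{Lemma:ReorderSimpleRPISI}) and treat that block in one stroke. For PMT this is done via the structural splitting of semi-constants $g=\tilde{g}\,h$ with $\tilde{g}\in\FF^*$ and $h\in\constF{\KK[x_1,\dots,x_e]}{\sigma^r}^*$ (Lemma~\ref{Lemma:RExtStructureForRing}, Corollary~\ref{Cor:RExtStructureForField}), which separates the problem into a PMT instance over $\FF$ and a finite search in $\KK[x_1]\dots[x_e]$ (Lemma~\ref{Lemma:MProblemForRExtensions}, Proposition~\ref{Prop:ComputeBasesOfRM}); for PFLDE it is done by passing from $\sigma$ to $\sigma^{\lambda}$ with $\lambda=\lcm$ of the factorial orders and periods of the \rE-data via Lemma~\ref{Lemma:PFLDE1ShiftTonShift} (see~\eqref{Equ:DefineLambdaForPFLDE}), so that all $x_i$ are fixed by $\sigma^{\lambda}$ and the equation splits coefficient-wise into PFLDE instances in $\dfield{\FF}{\sigma^{\lambda}}$, followed by a linear-algebra step to cut the $\sigma^{\lambda}$-solution space back to the $\sigma$-solution space (Proposition~\ref{Prop:LiftForNestedRExt}). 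Strong constant-stability enters exactly here, to keep $\constF{\FF}{\sigma^{\lambda}}=\KK$ and to force the stray roots of unity into $\KK$.

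Related to this, your attribution of the $\sigma^k$ clause in part~3 to ``the \sigmaSE-case of the base field'' is wrong: the \sigmaSE- and \piE-reductions (Theorem~\ref{Thm:PSReduction}) stay entirely with $\sigma$, and the footnote of the theorem says as much. The need for Problem~PFLDE in $\dfield{\FF}{\sigma^k}$ for all $k>0$ arises solely from the nested \rE-block treatment just described. Without the reordering lemma, the $g=\tilde{g}\,h$ factorization, and the $\sigma\to\sigma^{\lambda}$ device, your induction cannot be closed for parts~2 and~3 as stated (it would prove the single-rooted Theorem~\ref{Thm:AlgMainResultRestricted}, not this theorem), or it would require an induction hypothesis about PFLDE in $\dfield{\EE'}{\sigma^k}$ for intermediate rings, which is stronger than what is being proven and is not available.
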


\noindent We remark that this reduction machinery has been utilized in Examples~\ref{ExpQkxysS} and~\ref{Exp:ProductExample}
to obtain the identities~\eqref{SumId} and~\eqref{Equ:ProductId}, respectively. Further details will be given below.

\subsubsection{A complete machinery: algorithms for the ground difference rings}\label{Subsec:GroundFieldAlg}

Both, Theorems~\ref{Thm:AlgMainResultRestricted} and~\ref{Thm:AlgMainResultFull} provide algorithms to reduce the Problems~PMT and PFLDE (and thus the Problems T, PT and special cases of Problem MT) from an \rpisiSE-extension $\dfield{\EE}{\sigma}$ of $\dfield{\GG}{\sigma}$ to the ground difference ring $\dfield{\GG}{\sigma}$. Theorem~\ref{Thm:AlgMainResultRestricted} requires less conditions on $\dfield{\GG}{\sigma}$, but considers only single--rooted \rpisiSE-extensions, whereas Theorem~\ref{Thm:AlgMainResultFull} requires more properties on $\dfield{\GG}{\sigma}$ but allows nested \rE-extensions which are of the type as given in Corollary~\ref{Cor:SimpleNestedRBasics} below. Note that the algorithms for the latter case are more demanding, in particular, one has to solve Problem~PFLDE in $\dfield{\GG}{\sigma^k}$ with $k>0$ instead of $k=1$ only.

We emphasize that both theorems are applicable for a rather general class of difference fields $\dfield{\GG}{\sigma}$. Namely, $\dfield{\GG}{\sigma}$ itself can be a \pisiSE-field extension of $\dfield{\HH}{\sigma}$ where certain properties in the difference field $\dfield{\HH}{\sigma}$ hold. Here the following remarks are in place.\\
(a) By~\cite{Karr:81} a \pisiSE-field extension $\dfield{\GG}{\sigma}$ of $\dfield{\HH}{\sigma}$ is constant-stable if $\dfield{\HH}{\sigma}$ is constant-stable. In particular, if we are given a root of unity from $\GG$, it cannot depend on transcendental elements and is therefore from $\HH$. Thus $\dfield{\GG}{\sigma}$ is strong constant-stable if $\dfield{\HH}{\sigma}$ is strong constant-stable.\\ 
(b) It has been shown in~\cite{Schneider:06d} that one can solve Problem~PMT in $\dfield{\GG}{\sigma}$ for $\GG^*$ and Problem~PFLDE in $\dfield{\GG}{\sigma^k}$ for $\GG^*$ for $k>0$ if certain properties hold for the difference field $\dfield{\HH}{\sigma}$. Among others (see Def.~1 and~2 in \cite{Schneider:06d}) Problem~PMT must be solvable in $\dfield{\HH}{\sigma}$ for $\HH^*$ and Problem~PFLDE must be solvable in $\dfield{\HH}{\sigma^k}$ for $\HH^*$.

\noindent Summarizing, if we are given the tower of extensions
$$\dfield{\HH}{\sigma}\stackrel{\text{\pisiSE-field ext.}}{\leq}\dfield{\GG}{\sigma}\stackrel{\text{\rpisiSE-ring ext.}}{\leq}\dfield{\EE}{\sigma}$$
where $\dfield{\HH}{\sigma}$ is strong constant-stable and the properties given in Def.~1 and~2 of \cite{Schneider:06d} hold in $\dfield{\HH}{\sigma}$, then we can solve Problems PMT and PFLDE in $\dfield{\EE}{\sigma}$ for $\dgroup{(\GG^*)}{\EE}{\GG}$.

So far, the required properties have been verified and the necessary algorithms have been worked out for the following difference fields $\dfield{\HH}{\sigma}$ with constant field $\KK$.

\begin{enumerate}
 \item $\KK=\HH$, i.e., $\dfield{\GG}{\sigma}$ is a \pisiSE-field over $\KK$; here  the constant field $\KK$ can be a rational function field over an algebraic number field; see~\cite[Theorem~3.5]{Schneider:05c}.
 
 \item $\dfield{\HH}{\sigma}$ is a free difference field, i.e., $\HH=\KK(\dots,x_{-1},x_0,x_1,\dots)$ with $\sigma(x_i)=x_{i+1}$; here $\KK$ is of the type as given in case (1). Note that in this field one can model unspecified sequences; see~\cite{Schneider:06d,Schneider:06e}.

 \item $\dfield{\HH}{\sigma}$ can be a radical difference field representing objects like $\sqrt[d]{k}$; see~\cite{Schneider:07f}. 
\end{enumerate}

\noindent For simplicity, all our examples are chosen from case (1). More precisely, we always take the \pisiSE-field $\dfield{\HH}{\sigma}=\dfield{\KK(k)}{\sigma}$ over $\KK\in\{\QQ,\QQ(\iota)\}$ with $\sigma(k)=k+1$.

\subsection{Application: representation of d'Alembertian solutions in \rpisiSE-extensions}\label{Subsec:dAlembert}

We illustrate how an important class of d'Alembertian solutions~\cite{Abramov:94}, a subclass of Liouvillian solutions~\cite{Singer:99,Petkov:2013}, of a given linear difference operator, can be represented completely automatically in \rpisiSE-extensions. In order to obtain the d'Alembertian solutions, one starts as follows: first the linear difference operator is factored as much as possible into linear right hand factors. This can be accomplished, e.g., with the algorithms from~\cite{Petkov:92,vanHoeij:99,HK:12} or, within the setting of \pisiSE-fields with the algorithms given in~\cite{ABPS:14} which are based on~\cite{Bron:00,Schneider:01,Schneider:05a}. The latter machinery is available within the summation package \texttt{Sigma}. Then given this factored form of the operator, the d'Alembertian solutions can be read off. They can be given by a finite number of hypergeometric expressions 
and indefinite nested sums defined over such expressions. More precisely, each solution is of the form
\begin{equation}\label{Equ:dAlembertianSol}
\sum_{i_1=\lambda_1}^kh_1(i_1)\sum_{i_2=\lambda_2}^{i_1}h_2(i_2)\dots \sum_{i_r=\lambda_{r-1}}^{i_{r-1}}h_r(i_r)
\end{equation}
where $\lambda_i\in\NN$ and the hypergeometric expression $h_i(k)$ can be written in the form $\prod_{j=\lambda_i}^k \alpha_i(j)$ with $\alpha_i(z)$ being a rational function from $\KK(z)$. 

Subsequently, we restrict ourselves to a field $\KK$ which is a rational function field
$\KK=\QQ(n_1,\dots,n_r)$ over the rational numbers. Now take the \pisiSE-field $\dfield{\KK(k)}{\sigma}$ over $\KK$ with $\sigma(k)=k+1$. Then the solutions, all being of the form~\eqref{Equ:dAlembertianSol}, can be represented in a single-rooted simple \rpisiSE-extension as follows.

(1) In~\cite[Section~6]{Schneider:05c} an algorithm has been presented that calculates a single-rooted simple \rE\piE-extension $\dfield{\GG}{\sigma}$ of $\dfield{\KK(k)}{\sigma}$ in which all hypergeometric expressions occurring in the d'Alembertian solutions are explicitly represented. 

(2) Then the challenging task is to construct a \sigmaSE-extension of $\dfield{\GG}{\sigma}$ and to represent there the arising sums of the d'Alembertian solutions. Given $\dfield{\GG}{\sigma}$ from step 1, this can be accomplished by applying iteratively Theorem~\ref{Thm:RPSCharacterization}.(1). Suppose we represented already an inner summand in a \sigmaSE-extension $\dfield{\AA}{\sigma}$ of $\dfield{\GG}{\sigma}$ with $\beta\in\AA$. Since $\dfield{\AA}{\sigma}$ is a simple \rpisiSE-extension of $\dfield{\KK(k)}{\sigma}$ and $\dfield{\KK(k)}{\sigma}$ is a \pisiSE-field over $\KK$, we can solve Problem~T with $f=\beta$ by using the underlying algorithm of Theorem~\ref{Thm:AlgMainResultRestricted} in combination with the base case algorithms; see Subsection~\ref{Subsec:GroundFieldAlg}. If we find a $g\in\AA$ with $\sigma(g)=g+\beta$, we can represent the sum under consideration with $g+c$ where $c\in\KK$ is determined by the boundary condition (lower summation bound) of the given sum; for further details we refer to Example~\ref{ExpQkxysS}.(4). Otherwise, we construct the \sigmaSE-extension $\dfield{\AA[t]}{\sigma}$ of $\dfield{\AA}{\sigma}$ with $\sigma(t)=t+\beta$ by Theorem~\ref{Thm:RPSCharacterization}.(1) and we succeeded in representing the sum under consideration by $t$ with the appropriate shift behaviour. Note that $\dfield{\AA[t]}{\sigma}$ is again a single-rooted simple \rpisiSE-extension of $\dfield{\KK(k)}{\sigma}$. Proceeding iteratively, all the nested hypergeometric sums are represented in terms of an \rpisiSE-extension over $\dfield{\KK(k)}{\sigma}$.

\smallskip

\noindent Exactly this difference ring machinery is implemented in \texttt{Sigma} and has been used to tackle challenging applications, like~\cite{Schneider:04c,PSW:11,Schneider:07b,Schneider:09a,Physics1,Physics2,Physics3} mentioned already in the introduction. In particular, this toolbox has been combined with the algorithms worked out in~\cite{Schneider:04a,Schneider:05c,Schneider:07d,Schneider:08c,Petkov:10,Schneider:14} in order to find representations of d'Alembertian solutions with certain optimality properties, like minimal nesting depth. For a recent summary of all these features (unfortunately, in the setting of difference fields) we refer to~\cite{Schneider:13a,Schneider:13b}.

\section{Single nested \rpisiSE-extensions}\label{Sec:SingleExt}

This section delivers relevant properties of single nested \rpisiSE-extensions. The characterization of \rpisiSE-extensions (Theorem~\ref{Thm:RPSCharacterization}) will be elaborated. In addition, properties of the semi-constants within \rpisiSE-extensions are derived to gain further insight in the nature of \rpisiSE-extensions and to prove Theorems~\ref{Thm:sconstIsGroupRing} and~\ref{Thm:sconstIsGroupField} in Section~\ref{Sec:NestedExt}.

We start with some general properties which will be essential throughout this article. 
\begin{definition}
A ring $\AA$ is called reduced\index{ring!reduced}\index{ring!connected} if there are no non-zero nilpotent elements, i.e.,
for any $f\in\AA\setminus\{0\}$ and any $n>0$ we have that $f^n\neq0$. $\AA$ is called connected if $0$ and $1$ are the only idempotent elements, i.e., for any $f\in\AA\setminus\{0,1\}$ we have that $f^2\neq f$. 
\end{definition}

\noindent Namely, we rely on the following ring properties. A polynomial $\sum_{i=0}^n a_ix^i\in\AA[t]$ with coefficients from a ring $\AA$ is 
invertible if and only if $a_0\in\AA^*$ and $a_i$ with $i\geq1$ are nilpotent elements. Thus in a reduced ring, i.e., a ring which has no nilpotent non-zero elements, we have that $\AA[t]^*=\AA^*$. Besides, there is a complete characterization of invertible elements in the ring of Laurent polynomials $\AA\ltr{t}$ presented in~\cite[Theorem~1]{Karpilovsky:83} (see also~\cite{Neher:09}).
Based on this work we extract the following crucial result.

\begin{lemma}\label{Lemma:PolyInv}\label{Lemma:LaurentInv}
Let $\AA$ be a commutative ring with 1. If $\AA$ is reduced, then $\AA[t]^*=\AA^*$. If $\AA$ is reduced and connected, then $\AA\ltr{t}^*=\{u\,t^r|\, u\in\AA^*\text{ and }r\in\ZZ\}.$
\end{lemma}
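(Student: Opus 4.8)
The claim has two parts. The first, $\AA[t]^* = \AA^*$ for reduced $\AA$, follows from the standard characterization of units in a polynomial ring: $\sum_{i=0}^n a_i t^i$ is invertible in $\AA[t]$ iff $a_0 \in \AA^*$ and $a_1,\dots,a_n$ are nilpotent. I would cite this as a known fact (or sketch the quick argument: if $f g = 1$ with $f = \sum a_i t^i$, $g = \sum b_j t^j$, an induction on $k$ shows $a_n^{k+1} b_{m-k} = 0$, so $a_n$ is nilpotent, and one strips off the top term and recurses). Since a reduced ring has no nonzero nilpotents, the higher coefficients must vanish, leaving $f = a_0 \in \AA^*$; conversely $\AA^* \subseteq \AA[t]^*$ trivially. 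So the only real content is the Laurent case.

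For the Laurent polynomial ring, I would invoke the cited result of Karpilovsky~\cite[Theorem~1]{Karpilovsky:83} (see also~\cite{Neher:09}), which describes $\AA\ltr{t}^*$ in general in terms of idempotents of $\AA$: a unit of $\AA\ltr{t}$ can be written, after decomposing $\AA$ along a complete system of orthogonal idempotents, as a sum of pieces each of the form $u_i\, t^{r_i}$ (plus nilpotent corrections). The plan is then to feed in the two hypotheses. Connectedness forces the only idempotents of $\AA$ to be $0$ and $1$, so the decomposition is trivial and a unit has the single-monomial shape $u\,t^r + (\text{nilpotent terms})$. Reducedness then kills the nilpotent terms exactly as in the polynomial case, leaving $u\,t^r$ with $u \in \AA^*$, $r \in \ZZ$. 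The reverse inclusion $\{u t^r\} \subseteq \AA\ltr{t}^*$ is clear since $u t^r \cdot u^{-1} t^{-r} = 1$.

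Alternatively, to keep the argument self-contained I would give a direct proof. Let $f = \sum_{i=m}^{n} a_i t^i \in \AA\ltr{t}^*$ with $a_m, a_n \ne 0$ and inverse $g = \sum_{j=p}^{q} b_j t^j$ with $b_p, b_q \ne 0$. Multiplying through by a suitable power of $t$ reduces to the case $m = p = 0$, i.e. to showing a unit of $\AA[t]$ that is not a zero divisor and whose inverse also lies in $\AA[t]$ — but more carefully, from $fg = 1$ one extracts $a_n b_q = 0$ (top coefficient) and $a_0 b_0 = 1$ (bottom coefficient, after the normalization), the latter giving $a_0, b_0 \in \AA^*$. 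Then the standard nilpotence induction applies to the "polynomial tails" $f/a_0 t^{-m}$-type expressions: one shows every $a_i$ with $i > m$ is nilpotent, hence zero by reducedness. This forces $f = a_m t^m$, and $a_m \in \AA^*$ because $a_m b_{-m}' = 1$ for the corresponding coefficient of $g$ — here connectedness is what guarantees there is no "splitting" of the leading behavior across idempotents that would otherwise allow, e.g., $f = e_1 t + e_2$ as in Example~\ref{Exp:NotSimpleExt}.

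The main obstacle is pinning down precisely how connectedness enters: without it, genuinely new units appear (the $e_1 + e_2 t$ phenomenon), so the proof cannot be a pure mimicry of the polynomial-ring argument. The cleanest route is to lean on Karpilovsky's theorem and simply observe that "no nontrivial idempotents" collapses his general description to a single monomial; the self-contained route requires tracking the leading and trailing coefficients simultaneously and arguing that any idempotent cutting $f$ into several monomial pieces of different degrees would be nontrivial, which is where I expect the bookkeeping to be slightly delicate. I would present the proof via the citation for brevity, remarking on the role of connectedness.
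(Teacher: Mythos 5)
Your proposal is correct and follows essentially the same route as the paper, which gives no separate proof but extracts the lemma directly from the standard characterization of units in $\AA[t]$ (constant term a unit, higher coefficients nilpotent) together with Karpilovsky's description of $\AA\ltr{t}^*$, exactly as in your primary argument. Your observation that connectedness trivializes the idempotent decomposition and reducedness removes the nilpotent part is precisely how the paper's hypotheses enter.
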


\noindent Since our rings are usually not connected, Lemma~\ref{Lemma:LaurentInv} can be applied only partially.

\begin{example}\label{Exp:NotConnected}
The generators in the ring given in Example~\ref{Exp:NotSimpleExt}
can be reordered to $\QQ(k)[x]\lr{t}$. Since $\QQ(k)[x]$ has the idempotent elements $e_1,e_2$, it is not connected. Therefore we get relations such as $(\e_1+e_2\,t)(e_1+\frac{e_2}{t})=1$ which are predicted in~\cite{Karpilovsky:83,Neher:09}.
\end{example}

Subsequently, we enumerate further definitions and properties in difference rings and fields that will be used throughout the article. 
Let $\dfield{\AA}{\sigma}$ be a difference ring. 
The rising factorial (or $\sigma$-factorial)\index{function!rising factorial} of $f\in\AA^*$ to $k\in\ZZ$ is defined by\index{$\sigmaFac{f}{k,\sigma}$, $\sigmaFac{f}{k}$}
$$\sigmaFac{f}{k,\sigma}=\begin{cases}
                  f\,\sigma(f)\dots\,\sigma^{k-1}(f)&\text{ if }k>0\\
                  1&\text{ if }k=0\\
\sigma^{-1}(f^{-1})\,\sigma^{-2}(f^{-1})\dots\sigma^{k}(f^{-1})&\text{
if }k<0.
                 \end{cases}$$
If the automorphism is clear from the context, we also will write
$\sigmaFac{f}{k}$ instead of $\sigmaFac{f}{k,\sigma}$.         
We will rely on the following simple identities
(compare also~\cite[page 307]{Karr:85}). The proofs are omitted to the reader.

\begin{lemma}\label{Lemma:SigmaFacId}
Let $\dfield{\AA}{\sigma}$ be a difference ring, $f,h\in\AA^*$ and
$n,m\in\ZZ$. Then:
\begin{enumerate}
\item $\sigmaFac{(f\,h)}{n}=\sigmaFac{f}{n}\,\sigmaFac{h}{n}$.
\item $\sigmaFac{f}{n+m}=\sigma^n(\sigmaFac{f}{m})\,\sigmaFac{f}{n}$.
\item $\sigmaFac{f}{n\,m}=\sigmaFac{(\sigmaFac{f}{n,\sigma})}{m,\sigma^n}$.
\item If $\sigma(h)=f\,h$, then $\sigma^n(h)=\sigmaFac{f}{n}\,h$.
\item $\sigma^k(f)\in\AA^*$ and $\sigmaFac{f}{n}\in\AA^*$.
\end{enumerate}
\end{lemma}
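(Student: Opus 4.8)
The five identities are all routine verifications once one sets up the right bookkeeping, so the plan is to prove them in the order listed, reusing earlier items. First I would prove (1): since $\sigma$ is a ring homomorphism, $\sigma^i(f\,h)=\sigma^i(f)\,\sigma^i(h)$, so for $n>0$ the product $\prod_{i=0}^{n-1}\sigma^i(fh)$ splits as $\bigl(\prod_{i=0}^{n-1}\sigma^i(f)\bigr)\bigl(\prod_{i=0}^{n-1}\sigma^i(h)\bigr)$ by commutativity; the cases $n=0$ and $n<0$ follow the same way, using that $\sigma^{-j}\bigl((fh)^{-1}\bigr)=\sigma^{-j}(f^{-1})\,\sigma^{-j}(h^{-1})$, which is where one should double-check that taking inverses commutes with $\sigma$ (it does, since $\sigma$ is a ring automorphism). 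Part (5) I would dispose of immediately and separately, since it is needed to make sense of everything: $\sigma^k(f)\in\AA^*$ because $\sigma^k(f)\,\sigma^k(f^{-1})=\sigma^k(1)=1$, and $\sigmaFac{f}{n}\in\AA^*$ because it is a finite product of units (or a product of $\sigma^{-j}(f^{-1})$'s, each a unit by the same reasoning).

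Next I would handle (4), the key dynamical fact, by induction on $n$. For $n\ge 0$: the base case $n=0$ reads $h=h$; assuming $\sigma^n(h)=\sigmaFac{f}{n}\,h$, apply $\sigma$ to get $\sigma^{n+1}(h)=\sigma(\sigmaFac{f}{n})\,\sigma(h)=\sigma(\sigmaFac{f}{n})\,f\,h$, and one must recognize $\sigma(\sigmaFac{f}{n})\,f$ as $\sigmaFac{f}{n+1}$ — indeed $\sigmaFac{f}{n+1}=f\,\sigma(f)\cdots\sigma^n(f)=f\cdot\sigma\bigl(f\,\sigma(f)\cdots\sigma^{n-1}(f)\bigr)=f\,\sigma(\sigmaFac{f}{n})$, using commutativity to move the leading $f$ where convenient. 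For $n<0$ one argues similarly from $\sigma^{-1}(h)=f^{-1}\cdot\sigma^{-1}$ applied appropriately, or more cleanly: from $\sigma(h)=fh$ we get $h=\sigma^{-1}(f)^{-1}\sigma^{-1}(h)$, i.e. $\sigma^{-1}(h)=\sigma^{-1}(f)\,h$ after rearranging — wait, rather $\sigma^{-1}(h)=\sigma^{-1}(f^{-1})\,h=\sigmaFac{f}{-1}\,h$, and then induct downward matching the $k<0$ branch of the definition. Then (2) is the cocycle identity: for $n,m\ge0$ it is just $\prod_{i=0}^{n+m-1}\sigma^i(f)=\bigl(\prod_{i=0}^{n-1}\sigma^i(f)\bigr)\sigma^n\bigl(\prod_{i=0}^{m-1}\sigma^i(f)\bigr)$ by reindexing the tail; the mixed-sign cases ($n$ and $m$ of opposite sign, or both negative) I would either check by a short case analysis or — slicker — define $h$ with $\sigma(h)=f\,h$ abstractly (e.g. in the extension $\dfield{\AA\ltr{y}}{\sigma}$ of Lemma~\ref{Lemma:RExistence}-style constructions, or just formally) so that by (4) all three quantities in (2) equal $\sigma^{n+m}(h)/h$ computed two ways; since $h$ is a unit this forces the identity in $\AA$ itself. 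That trick of passing to a formal generator with the desired multiplier, proving the identity there where it is transparent, and pulling back, is the cleanest route and I expect it also streamlines (3).

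Finally (3): write $F:=\sigmaFac{f}{n,\sigma}$, so $\sigma(h)=f\,h$ gives $\sigma^n(h)=F\,h$ by (4), i.e. $h$ satisfies $(\sigma^n)(h)=F\,h$ as an element of the difference ring $\dfield{\AA}{\sigma^n}$. Applying (4) again but now in $\dfield{\AA}{\sigma^n}$ yields $(\sigma^n)^m(h)=\sigmaFac{F}{m,\sigma^n}\,h$, i.e. $\sigma^{nm}(h)=\sigmaFac{(\sigmaFac{f}{n,\sigma})}{m,\sigma^n}\,h$. On the other hand, directly in $\dfield{\AA}{\sigma}$, part (4) gives $\sigma^{nm}(h)=\sigmaFac{f}{nm,\sigma}\,h$. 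Equating and cancelling the unit $h$ gives (3). The main obstacle — and it is a minor one — is just being careful with the negative-index branches of the definition of $\sigmaFac{f}{k}$ and with the fact that $m$ (and $n$) may be negative in (3), so that "$\dfield{\AA}{\sigma^n}$" for negative $n$ means the difference ring with automorphism $\sigma^n=(\sigma^{-1})^{|n|}$; once one grants that $\sigma$ is an automorphism (hence $\sigma^{-1}$ exists and is an automorphism) all the sign cases collapse into the same inductive scheme, and the "introduce a formal generator $h$ with $\sigma(h)=fh$" device makes the sign bookkeeping essentially disappear. No single step is genuinely hard; the work is entirely in organizing the case distinctions, which is exactly why the paper leaves it to the reader.
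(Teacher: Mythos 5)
Your proposal is correct; the paper itself gives no argument here (the proofs are explicitly left to the reader as routine verifications), and your inductions for (1), (4), (5) and the reindexing for (2) are exactly the intended bookkeeping. Your device of adjoining a formal unit $h$ with $\sigma(h)=f\,h$ (e.g.\ the Laurent extension $\dfield{\AA\ltr{y}}{\sigma}$ with $\sigma(y)=f\,y$, which exists by the paper's first lemma since $f\in\AA^*$) and deducing (2) and (3) from (4) by computing $\sigma^{n+m}(h)$ resp.\ $\sigma^{nm}(h)$ in two ways is legitimate, since $\AA$ embeds in the extension and both sides of the identities lie in $\AA$; it neatly avoids the sign case analysis, including the use of $\dfield{\AA}{\sigma^n}$ for $n\le 0$, where $\sigma^n$ is still an automorphism. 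The only blemish is the momentarily garbled intermediate step in the downward induction for (4), but the relation you settle on, $\sigma^{-1}(h)=\sigma^{-1}(f^{-1})\,h=\sigmaFac{f}{-1}\,h$, is the correct one, so nothing is missing.
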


\noindent Let $\AA\lr{t}$ be a ring of (Laurent) polynomials. For $f=\sum_i f_i\,t^i\in\AA\lr{t}$ we define\index{$\deg$}\index{$\ldeg$}\index{function!degree}
\begin{equation*}
\deg(f)=\begin{cases}
\max\{i|f_i\neq0\}&\text{ if $f\neq0$}\\
-\infty&\text{ if $f=0$}
\end{cases}\quad\text{ and }\quad\ldeg(f)=\begin{cases}
\min\{i|f_i\neq0\}&\text{ if $f\neq0$}\\
\infty&\text{ if $f=0$}.         
\end{cases}
\end{equation*}
In addition, for $a,b\in\ZZ$ we introduce the set of truncated (Laurent) polynomials by\index{$\AA\lr{t}_{a,b}$}
\begin{equation}\label{Equ:TruncatedRing}
\AA\lr{t}_{a,b}=\{\sum_{i=a}^bf_i\,t^i|f_i\in\AA\}.
\end{equation}

\noindent We conclude this part with the following two lemmas.

\begin{lemma}\label{Lemma:BasicsinDR}
Let $\dfield{\AA\lr{t}}{\sigma}$ be a unimonomial ring extension of
$\dfield{\AA}{\sigma}$ of (Laurent) polynomial type. Then for any
$k\in\ZZ$ and $f\in\AA\lr{t}$ we have that $\deg(\sigma^k(f))=\deg(f)$.
\end{lemma}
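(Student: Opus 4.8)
\textbf{Proof proposal for Lemma~\ref{Lemma:BasicsinDR}.} The plan is to show that applying $\sigma$ to a (Laurent) polynomial $f\in\AA\lr{t}$ preserves its degree, and then iterate. First I would treat the case $k=1$ directly. Write $f=\sum_{i=a}^{d}f_i\,t^i$ with $f_d\neq0$, so $d=\deg(f)$ and $f_d\in\AA$. Since the extension is unimonomial of (Laurent) polynomial type, we have $\sigma(t)=\alpha\,t+\beta$ with $\alpha\in\AA^*$ (and $\beta=0$ in the Laurent case). Thus
\[
\sigma(f)=\sum_{i=a}^{d}\sigma(f_i)\,(\alpha\,t+\beta)^i,
\]
and the coefficient of $t^d$ in this expansion is $\sigma(f_d)\,\alpha^d$, since the only term contributing to degree $d$ is the $i=d$ summand (the lower-degree summands $i<d$ contribute only powers $t^j$ with $j\le i<d$). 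Now $\sigma(f_d)\in\AA^*$ by Lemma~\ref{Lemma:SigmaFacId}.(5) (or simply because $\sigma$ is an automorphism of $\AA$ and $f_d\neq0$ implies $\sigma(f_d)\neq0$), and $\alpha^d\in\AA^*$; hence $\sigma(f_d)\,\alpha^d\neq0$, which shows $\deg(\sigma(f))=d=\deg(f)$.

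For general $k>0$ the result follows by induction on $k$, using $\sigma^{k}(f)=\sigma(\sigma^{k-1}(f))$ together with the $k=1$ case. For $k<0$ one argues symmetrically: $\sigma^{-1}$ is also a ring automorphism fixing $\AA$ setwise, and $\sigma^{-1}(t)=\alpha'\,t+\beta'$ with $\alpha'=\sigma^{-1}(\alpha^{-1})\in\AA^*$ (and $\beta'=0$ in the Laurent case), so the identical argument applies with $\sigma$ replaced by $\sigma^{-1}$; then induct on $|k|$. The case $k=0$ is trivial. Note that in the Laurent polynomial case the same reasoning applied to $\ldeg$ shows $\ldeg(\sigma^k(f))=\ldeg(f)$ as well, though only the statement for $\deg$ is needed here.

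There is essentially no serious obstacle: the only point requiring care is verifying that the top coefficient does not vanish, which reduces to the observations that $\sigma$ restricted to $\AA$ is injective (hence $\sigma(f_d)\neq0$) and that $\alpha$, being a unit, has $\alpha^d$ a unit for every $d\in\ZZ$. The case $f=0$ is handled by the convention $\deg(0)=-\infty$, which is manifestly preserved since $\sigma^k(0)=0$.
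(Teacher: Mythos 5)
Your proof is correct and follows essentially the same route as the paper: compare the leading coefficient of $\sigma^k(f)$ and observe it is a unit multiple of $\sigma^k(f_d)$, hence non-zero. The only cosmetic difference is that the paper treats all $k\in\ZZ$ in one step, writing the top coefficient as $\sigmaFac{\alpha}{k}^{\,m}\,\sigma^k(f_m)$ and invoking Lemma~\ref{Lemma:SigmaFacId}.(5), whereas you prove the case $k=\pm1$ and then induct, which amounts to the same computation.
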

\begin{proof}
Let $f=\sum_i f_i\,t_i$. If $f=0$, $\sigma^k(f)=0$ and thus with $\deg(0)=-\infty$ the statement holds. Otherwise, let $m:=\deg(f)\in\ZZ$. Then note that $\sigma^k(f)=\sum_i\sigma^k(f_i)(\sigma^k(t))^i$, i.e., $t^m$ is the largest possible monomial in $\sigma^k(f)$ with the coefficient $h:=\sigmaFac{\alpha}{k}^m\,\sigma^k(f_m)$. Since $\sigma^k(f_m)\neq0$ and $\sigmaFac{\alpha}{k}\in\AA^*$ by Lemma~\ref{Lemma:SigmaFacId}.(5), the coefficient $h$ is non-zero.  
\end{proof}

\begin{lemma}\label{LemmaA}
Let $\dfield{\FF(t)}{\sigma}$ be a unimonomial field extension of $\dfield{\FF}{\sigma}$, and let $p,q\in\FF[t]^*$ with $\gcd(p,q)=1$ and $k\in\ZZ$. Then the following holds. 
\begin{enumerate}
\item If $p\mid q$ then $\sigma^k(p)\mid\sigma^k(q)$. 
\item $\gcd(\sigma^k(p),\sigma^k(q))=1$. 
 \item $\frac{\sigma(p/q)}{p/q}\in\FF$ if and only if $\sigma(p)/p\in\FF$ and $\sigma(q)/q\in\FF$.
\end{enumerate}
 \end{lemma}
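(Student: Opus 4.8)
The three parts concern a unimonomial field extension $\dfield{\FF(t)}{\sigma}$, so $\sigma(t)=\alpha\,t+\beta$ with $\alpha\in\FF^*$, $\beta\in\FF$, and $\sigma$ restricts to the identity-preserving automorphism on $\FF$. The plan is to exploit that $\sigma$ is a ring automorphism of $\FF[t]$ which maps $\FF$ to $\FF$, hence sends $\FF[t]$ bijectively onto $\FF[t]$ and preserves degrees by Lemma~\ref{Lemma:BasicsinDR}. For part~1, if $p\mid q$ write $q=p\,h$ with $h\in\FF[t]$; applying the ring homomorphism $\sigma^k$ gives $\sigma^k(q)=\sigma^k(p)\,\sigma^k(h)$, and since $\sigma^k(h)\in\FF[t]$ this shows $\sigma^k(p)\mid\sigma^k(q)$. (The same argument with $\sigma^{-k}$ gives the converse divisibility, so $\sigma^k$ actually preserves the divisibility relation exactly.)

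For part~2, I would argue that $\sigma^k$ induces a bijection on the set of monic irreducible polynomials of $\FF[t]$ (it is a degree-preserving ring automorphism fixing $\FF$ setwise, so it maps irreducibles to irreducibles and is invertible). Writing $p$ and $q$ in terms of their irreducible factorizations, $\gcd(p,q)=1$ means they share no irreducible factor; applying $\sigma^k$ sends these two coprime factorizations to two factorizations of $\sigma^k(p)$ and $\sigma^k(q)$ that still share no irreducible factor, whence $\gcd(\sigma^k(p),\sigma^k(q))=1$. Alternatively, and perhaps more cleanly, I would use B\'ezout: $\gcd(p,q)=1$ gives $a\,p+b\,q=1$ with $a,b\in\FF[t]$, and applying $\sigma^k$ yields $\sigma^k(a)\,\sigma^k(p)+\sigma^k(b)\,\sigma^k(q)=1$, which forces $\gcd(\sigma^k(p),\sigma^k(q))=1$. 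The B\'ezout route is shorter and avoids any discussion of how $\sigma$ acts on irreducibles.

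For part~3, the forward direction is the interesting one. Assume $\frac{\sigma(p/q)}{p/q}\in\FF$, i.e. $\frac{\sigma(p)\,q}{p\,\sigma(q)}\in\FF$, say it equals $w\in\FF^*$. Then $\sigma(p)\,q = w\,p\,\sigma(q)$ in $\FF[t]$. Now $p\mid \sigma(p)\,q$ and, by part~2 applied with $k=1$, $\gcd(\sigma(p),q)$ — wait, that is not quite what is needed; rather I need $\gcd(p,q)=1$ together with the relation to conclude $p\mid\sigma(q)$... let me instead use: from $\sigma(p)\,q=w\,p\,\sigma(q)$ and $\gcd(p,q)=1$ we get $p\mid\sigma(p)$ is false in general — so the right move is $q\mid \sigma(p)\,q = w\,p\,\sigma(q)$ trivially; the useful one is $p \mid w\,p\,\sigma(q)=\sigma(p)\,q$, and since $\gcd(p,q)=1$, $p\mid\sigma(p)$. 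By degree preservation (Lemma~\ref{Lemma:BasicsinDR}) $\deg\sigma(p)=\deg p$, so $\sigma(p)=c\,p$ for some $c\in\FF^*$, i.e. $\sigma(p)/p\in\FF$. Symmetrically $\sigma(q)/q\in\FF$. The converse direction is immediate: if $\sigma(p)=c\,p$ and $\sigma(q)=d\,q$ with $c,d\in\FF^*$, then $\frac{\sigma(p/q)}{p/q}=\frac{c}{d}\in\FF$.

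The main obstacle, and the point deserving care, is the divisibility-and-coprimality juggling in part~3's forward direction: from the single relation $\sigma(p)\,q=w\,p\,\sigma(q)$ one must correctly deduce $p\mid\sigma(p)$ (using $\gcd(p,q)=1$, and implicitly $\gcd(p,\sigma(q))$ which need not be $1$, so one must route the divisibility through the $q$-factor carefully) and then invoke the degree bound to upgrade divisibility to a unit multiple. Everything else is a routine application of the fact that $\sigma^k$ is a degree-preserving ring automorphism of $\FF[t]$ fixing $\FF$ setwise, combined with B\'ezout.
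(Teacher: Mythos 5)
Your argument is correct, and parts of it diverge mildly from the paper's proof, so a short comparison is in order. Part 1 is the paper's argument verbatim (apply the ring homomorphism $\sigma^k$ to $q=p\,h$, noting $\sigma^k(\FF[t])=\FF[t]$ since $\sigma(t)=\alpha t+\beta$ with $\alpha\in\FF^*$). For part 2 the paper pulls a putative nontrivial common divisor $u$ of $\sigma^k(p),\sigma^k(q)$ back via $\sigma^{-k}$ and uses part 1 to contradict $\gcd(p,q)=1$; your B\'ezout route ($ap+bq=1$ mapped by $\sigma^k$) is equally valid and indeed shorter, with the same underlying fact doing the work, namely that $\sigma^{\pm1}$ preserves $\FF[t]$. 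For part 3 the paper derives \emph{both} divisibilities: $p\mid\sigma(p)$ from $\gcd(p,q)=1$ and $\sigma(p)\mid p$ from part 2 applied to $\sigma(p),\sigma(q)$, and concludes $\sigma(p)/p\in\FF$ from mutual divisibility; you instead establish only $p\mid\sigma(p)$ and upgrade it to an associate relation via degree preservation (Lemma~\ref{Lemma:BasicsinDR}, which applies because $\sigma$ restricted to $\FF[t]$ is the polynomial-type unimonomial extension). Both closings are sound; the paper's stays inside divisibility and actually exercises part 2, while yours trades that for the degree lemma and handles $q$ by the symmetric relation $\sigma(q)\,p=w^{-1}\,\sigma(p)\,q$ (equivalently, applying the same argument to $q/p$). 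One presentational remark: the self-correcting detour in your part 3 ("wait, that is not quite what is needed\dots") should be excised from a final write-up; the clean chain is $p\mid w\,p\,\sigma(q)=\sigma(p)\,q$, then $\gcd(p,q)=1$ gives $p\mid\sigma(p)$, then $\deg\sigma(p)=\deg p$ gives $\sigma(p)=c\,p$ with $c\in\FF^*$.
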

\begin{proof}
(1) If $p\mid q$, i.e., $p\,w=q$ for some $w\in\FF[t]\setminus\{0\}$, then $\sigma^k(p)=\sigma^k(w)\,\sigma^k(q)$, and thus $\sigma^k(p)\mid \sigma^k(q)$.
(2) Suppose that $1\neq\gcd(\sigma^k(p),\sigma^k(q))=:u\in\FF[t]\setminus\FF$. Then $\sigma^{-k}(u)\in\FF[t]\setminus\FF$. Since $\sigma^{-k}(u)\mid p$ and $\sigma^{-k}(u)\mid q$ by part~1 of the lemma, $\gcd(p,q)\neq1$, a contradiction to the assumption.
(3) The implication $\Leftarrow$ is immediate. Suppose that
$u:=\sigma(p/q)/(p/q)\in\FF$, i.e., $\sigma(p)\,q=u\,p\,\sigma(q)$. By part~2 of the
lemma, $\sigma(p)\mid p$ and $p\mid\sigma(p)$ which implies that
$\sigma(p)/p\in\FF$. Analogously, it follows that $\sigma(q)/q\in\FF$.
\end{proof}

\subsection{\sigmaSE-extensions}

The essence of all the properties of \sigmaSE-extensions is contained in the following lemma.

\begin{lemma}\label{LemmaB}
Let $\dfield{\AA}{\sigma}$ be a difference ring and let $G\leq \AA^*$ with 
$\sconst{G}{\AA}{\sigma}\setminus\{0\}\leq\AA^*$. Let $\dfield{\AA[t]}{\sigma}$
be a unimonomial ring extension of $\dfield{\AA}{\sigma}$ with
$\sigma(t)=t+\beta$ for some $\beta\in\AA$. If there are a $u\in G$ and a 
$g\in\AA[t]$ with $\deg(g)\geq1$ such that
\begin{equation}\label{Equ:LemmaDegConst}
\deg(\sigma(g)-u\,g)<\deg(g)-1
\end{equation}
holds, then there is a $\gamma\in\AA$ with
$\sigma(\gamma)-\gamma=\beta$.
\end{lemma}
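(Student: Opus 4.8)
The plan is to analyze the leading coefficients of $g$ with respect to $t$ and extract from the degree-drop condition~\eqref{Equ:LemmaDegConst} a telescoping-type relation in $\AA$. Write $g=\sum_{i=0}^n g_i\,t^i$ with $n=\deg(g)\geq1$ and $g_n\neq0$. Since $\sigma(t)=t+\beta$ with $\beta\in\AA$, expanding $\sigma(g)=\sum_i \sigma(g_i)(t+\beta)^i$ shows that $t^n$ has coefficient $\sigma(g_n)$ and $t^{n-1}$ has coefficient $\sigma(g_{n-1})+n\,\beta\,\sigma(g_n)$. Comparing with $u\,g$, whose $t^n$-coefficient is $u\,g_n$ and whose $t^{n-1}$-coefficient is $u\,g_{n-1}$, the hypothesis $\deg(\sigma(g)-u\,g)<n-1$ forces both
\begin{equation}\label{Equ:LeadingCoeffRel}
\sigma(g_n)=u\,g_n\qquad\text{and}\qquad \sigma(g_{n-1})+n\,\beta\,\sigma(g_n)=u\,g_{n-1}.
\end{equation}

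\textbf{Using the group hypothesis.} From the first equation in~\eqref{Equ:LeadingCoeffRel} and $u\in G$ we get $g_n\in\sconst{G}{\AA}{\sigma}$, and since $g_n\neq0$, the assumption $\sconst{G}{\AA}{\sigma}\setminus\{0\}\leq\AA^*$ gives $g_n\in\AA^*$. Now substitute $\sigma(g_n)=u\,g_n$ into the second equation of~\eqref{Equ:LeadingCoeffRel} to obtain $\sigma(g_{n-1})+n\,\beta\,u\,g_n=u\,g_{n-1}$, i.e. $u\,g_{n-1}-\sigma(g_{n-1})=n\,\beta\,u\,g_n$. The idea is to divide by the unit $u\,g_n$ and recognize the left side as $\sigma$ applied to a quotient. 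Indeed, set $\gamma':=g_{n-1}/g_n\in\AA$. Then $\sigma(\gamma')=\sigma(g_{n-1})/\sigma(g_n)=\sigma(g_{n-1})/(u\,g_n)$, so dividing the relation by $u\,g_n$ yields $\gamma'-\sigma(\gamma')=n\,\beta$, that is $\sigma(\gamma')-\gamma'=-n\,\beta$. Finally put $\gamma:=-\tfrac{1}{n}\gamma'\in\AA$ (recall $\QQ\subseteq\AA$ and $n\geq1$); then $\sigma(\gamma)-\gamma=\beta$, as required.

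\textbf{Main obstacle.} The routine part is the binomial expansion of $\sigma(g)=\sum_i\sigma(g_i)(t+\beta)^i$ and the bookkeeping of the top two coefficients; the only subtle point is that the degree-drop by \emph{more} than one (strict inequality in~\eqref{Equ:LemmaDegConst}) is exactly what is needed to kill \emph{both} the degree-$n$ and the degree-$(n-1)$ coefficients, and that the first of these, together with the carefully arranged hypothesis $\sconst{G}{\AA}{\sigma}\setminus\{0\}\leq\AA^*$, promotes $g_n$ from a mere semi-constant to a unit — without this, the quotient $g_{n-1}/g_n$ need not lie in $\AA$ and the argument collapses. I also need the standard fact (Lemma~\ref{Lemma:SigmaFacId}.(5) and the definition of a difference ring) that $\sigma$ commutes with taking inverses of units, which justifies $\sigma(g_{n-1}/g_n)=\sigma(g_{n-1})/\sigma(g_n)$.
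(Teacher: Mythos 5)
Your proposal is correct and follows essentially the same route as the paper: compare the coefficients of $t^n$ and $t^{n-1}$ in $\sigma(g)-u\,g$, use the hypothesis $\sconst{G}{\AA}{\sigma}\setminus\{0\}\leq\AA^*$ to promote $g_n$ to a unit, and divide the second relation to obtain $\gamma=\frac{-g_{n-1}}{n\,g_n}$ with $\sigma(\gamma)-\gamma=\beta$. The only difference is cosmetic — the paper divides directly by $-n\,\sigma(g_n)$ while you divide by $u\,g_n$ and rescale by $-\tfrac1n$ — yielding the identical $\gamma$.
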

\begin{proof}
Let $g=\sum_{i=0}^ng_it^i\in\AA[t]$ with $\deg(g)=n\geq1$ and $u\in G$ as stated in the lemma, and define $f=\sigma(g)-u\,g\in\AA[t]$. With~\eqref{Equ:LemmaDegConst} it follows that $f=\sum_{i=0}^{n-2}f_i\,t^i$.
Thus comparing the $n$th and $(n-1)$th coefficient in
$\sum_{i=0}^{n-2}f_i\,t^i=f=\sigma(g)-ug=\sum_{i=0}^n\sigma(g_i)(t+\beta)^i-u\sum_{i=0}^ng_it^i$
and using $(t+\beta)^i=\sum_{j=0}^i\binom{i}{j}t^{i-j}\beta^j$ for
$0\leq i\leq n$ yield
\begin{align*}
\sigma(g_n)-ug_n=0&&\text{ and }&&
\sigma(g_{n-1})+\sigma(g_n)\tbinom{n}{1}\beta-ug_{n-1}=0.
\end{align*}
The first equation shows that $g_n\in\sconst{G}{\AA}{\sigma}\setminus\{0\}\leq\AA^*$. Hence we get
$u=\sigma(g_n)/g_n$. Substituting $u$ for $\sigma(g_n)/g_n$ in the second equation gives
$\sigma(g_{n-1})-\tfrac{\sigma(g_n)}{g_n}g_{n-1}=-n\beta\sigma(g_n).$
Dividing this equation by $-n\,\sigma(g_n)\in\AA^*$ yields
$\sigma(\gamma)-\gamma=\beta$ with $\gamma:=\frac{-g_{n-1}}{ng_n}\in\AA$.
\end{proof}

\noindent Lemma~\ref{LemmaB} leads to the following equivalent properties
of \sigmaSE-extensions.

\begin{lemma}\label{Lemma:SigmaLemma}
Let $\dfield{\AA}{\sigma}$ be a difference ring and let $G\leq\AA^*$ with 
$\sconst{G}{\AA}{\sigma}\setminus\{0\}\leq\AA^*$.
Let $\dfield{\AA[t]}{\sigma}$ be a unimonomial ring extension of
$\dfield{\AA}{\sigma}$ with $\sigma(t)=t+\beta$ for some $\beta\in\AA$. Then the
following statements are equivalent.
\begin{enumerate}
\item There is a $g\in\AA[t]\setminus\AA$ and $u\in G$ with $\sigma(g)=u\,g$.
\item There is a $g\in\AA$ with $\sigma(g)=g+\beta$.
\item $\const{\AA[t]}{\sigma}\supsetneq\const{\AA}{\sigma}.$
\end{enumerate}
\end{lemma}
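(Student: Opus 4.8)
The three statements will be shown equivalent via the cycle $(3)\Rightarrow(1)\Rightarrow(2)\Rightarrow(3)$. The implication $(2)\Rightarrow(3)$ is the easy direction: given $g\in\AA$ with $\sigma(g)=g+\beta$, the element $t-g\in\AA[t]\setminus\AA$ satisfies $\sigma(t-g)=(t+\beta)-(g+\beta)=t-g$, so $t-g$ is a constant not lying in $\AA$, hence not in $\const{\AA}{\sigma}$, which gives $\const{\AA[t]}{\sigma}\supsetneq\const{\AA}{\sigma}$. The implication $(3)\Rightarrow(1)$ is also immediate: if the constants grow, pick any $g\in\const{\AA[t]}{\sigma}\setminus\const{\AA}{\sigma}$; then $g\notin\AA$ (since a constant in $\AA$ would already lie in $\const{\AA}{\sigma}$), so $g\in\AA[t]\setminus\AA$ and $\sigma(g)=g=1\cdot g$ with $u=1\in G$, as $1\in G$ because $G\leq\AA^*$ is a subgroup.

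\textbf{The main step.} The real content is $(1)\Rightarrow(2)$, and here the plan is to invoke Lemma~\ref{LemmaB}. Suppose $g\in\AA[t]\setminus\AA$ and $u\in G$ with $\sigma(g)=u\,g$. Then $\deg(g)\geq1$, and $\sigma(g)-u\,g=0$, so in particular $\deg(\sigma(g)-u\,g)=-\infty<\deg(g)-1$, i.e.\ hypothesis~\eqref{Equ:LemmaDegConst} of Lemma~\ref{LemmaB} is satisfied (using that $\deg(g)\geq1$ so that $\deg(g)-1\geq 0>-\infty$). Lemma~\ref{LemmaB} then directly yields a $\gamma\in\AA$ with $\sigma(\gamma)-\gamma=\beta$, which is exactly statement~(2). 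So this step is essentially a one-line application once Lemma~\ref{LemmaB} is in hand; the genuine work was already done in proving that lemma (comparing the top two coefficients of $\sigma(g)-u\,g$ and using that the leading coefficient lands in $\sconst{G}{\AA}{\sigma}\setminus\{0\}\leq\AA^*$, hence is invertible).

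\textbf{Where to be careful.} The one subtlety worth double-checking is the degenerate case $\deg(g)=1$: Lemma~\ref{LemmaB}'s proof compares the $n$th and $(n-1)$th coefficients, and for $n=1$ the $(n-1)$th coefficient is the constant term, which is still a valid comparison since $(t+\beta)^i$ is expanded for $0\le i\le n$; no issue arises. One should also note explicitly that $1\in G$, which is used twice (in $(3)\Rightarrow(1)$ and implicitly in matching the hypothesis of Lemma~\ref{LemmaB}, where $u$ is an arbitrary element of $G$, here $u=1$ in one direction and a general $u\in G$ in the $(1)\Rightarrow(2)$ direction). I do not anticipate any real obstacle; the proof is a clean packaging of Lemma~\ref{LemmaB} together with the two trivial implications.
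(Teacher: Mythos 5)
Your proposal is correct and follows essentially the same route as the paper: the two easy implications ($t-g$ is a new constant, and a new constant satisfies $\sigma(g)=1\cdot g$ with $1\in G$) are identical, and the main step $(1)\Rightarrow(2)$ is the same one-line application of Lemma~\ref{LemmaB} via $\deg(\sigma(g)-u\,g)=-\infty<\deg(g)-1$. The only difference is the cosmetic ordering of the cycle, and your extra remarks (why $g\notin\AA$ in $(3)\Rightarrow(1)$, the $n=1$ case of Lemma~\ref{LemmaB}) are correct but not needed beyond what the paper states.
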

\begin{proof}
$(1)\!\!\Rightarrow\!\!(2)$: Let $g\in\AA[t]\setminus\AA$, $u\in G$ with
$\sigma(g)=u\,g$. Since $\deg(g)\geq1$ and
$\deg(\sigma(g)-u\,g)<0\leq\deg(g)-1$, there is a
$\gamma\in\AA$ with $\sigma(\gamma)=\gamma+\beta$ by Lemma~\ref{LemmaB}.\\
$(2)\Rightarrow(3)$: Let $g\in\AA$ with $\sigma(g)=g+\beta$. Since
$\sigma(t)=t+\beta$, it follows that $\sigma(t-g)=(t-g)$, i.e.,
$t-g\in\const{\AA[t]}{\sigma}$. Since $t-g\notin\AA$,
$t-g\notin\const{\AA}{\sigma}$.\\
$(3)\Rightarrow(1)$: Suppose that 
$\const{\AA}{\sigma}\subsetneq\const{\AA[t]}{\sigma}$ and take
$g\in\const{\AA[t]}{\sigma}\setminus\const{\AA}{\sigma}$. Then $\sigma(g)=u\,g$
with $u=1\in G$. Thus the lemma is proven.
\end{proof}

\noindent As a consequence we can now establish the characterization theorem of \sigmaSE-extensions.

\ExternalProof{(Theorem~\ref{Thm:RPSCharacterization}.(1)\label{Thm:SigmaChar})}{For $G=\{1\}$ we have that $\sconst{G}{\AA}{\sigma}=\const{\AA}{\sigma}=\KK$. By assumption $\KK$ is a field and thus $\sconst{G}{\AA}{\sigma}\setminus\{0\}\leq\AA^*$.
Therefore we can apply Lemma~\ref{Lemma:SigmaLemma} and its equivalence $(2)\Leftrightarrow(3)$ establishes Theorem~\ref{Thm:RPSCharacterization}.(1).
}

\noindent In order to rediscover the difference field version from~\cite{Karr:81,Karr:85}, we specialize Lemma~\ref{Lemma:SigmaLemma} to difference fields by exploiting Lemma~\ref{LemmaA}.(3). 

\begin{lemma}\label{Lemma:SigmaFieldLemma}
Let $\dfield{\FF(t)}{\sigma}$ be a unimonomial field extension of $\dfield{\FF}{\sigma}$ with $\sigma(t)=t+\beta$ for some $\beta\in\FF$. Then the following statements are equivalent.
\begin{enumerate}
\item There is a $g\in\FF(t)\setminus\FF$ with $\frac{\sigma(g)}{g}\in\FF$.
\item There is a $g\in\FF$ with $\sigma(g)=g+\beta$.
\item $\const{\FF(t)}{\sigma}\supsetneq\const{\FF}{\sigma}.$
\end{enumerate}
\end{lemma}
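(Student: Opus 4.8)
The plan is to deduce Lemma~\ref{Lemma:SigmaFieldLemma} from the ring-version Lemma~\ref{Lemma:SigmaLemma} by choosing the right group $G$ and using Lemma~\ref{LemmaA}.(3) to match up the two notions of ``quasi-constant''. First I would set $\dfield{\AA}{\sigma}=\dfield{\FF}{\sigma}$ and, rather than passing to $\FF(t)$ directly, work inside the polynomial ring $\FF[t]$, to which Lemma~\ref{Lemma:SigmaLemma} literally applies. The natural choice of group is $G=\FF^*=\FF\setminus\{0\}$; since $\FF$ is a field, $\sconst{G}{\FF}{\sigma}\setminus\{0\}=\FF^*=\FF^*$ trivially, so the hypothesis $\sconst{G}{\FF}{\sigma}\setminus\{0\}\leq\FF^*$ of Lemma~\ref{Lemma:SigmaLemma} is satisfied. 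With this setup Lemma~\ref{Lemma:SigmaLemma} immediately gives the equivalence of: $(1')$ there is $g\in\FF[t]\setminus\FF$ and $u\in\FF^*$ with $\sigma(g)=u\,g$; $(2)$ there is $g\in\FF$ with $\sigma(g)=g+\beta$; and $(3')$ $\const{\FF[t]}{\sigma}\supsetneq\const{\FF}{\sigma}$.

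Next I would upgrade $(2')$ and $(3')$ from the polynomial ring to the rational function field. For $(3)\Leftrightarrow(3')$: the constants of $\dfield{\FF(t)}{\sigma}$ that lie in $\FF[t]$ are exactly $\const{\FF[t]}{\sigma}$, and conversely any constant $c=p/q\in\FF(t)$ with $\gcd(p,q)=1$ forces $\sigma(p)/p=\sigma(q)/q$, hence by Lemma~\ref{LemmaA}.(2) that $\sigma(p)\mid p$ and $p\mid\sigma(p)$, so $\sigma(p)/p\in\FF^*$ and similarly for $q$; pushing this through, one shows a non-$\FF$ constant of $\FF(t)$ already yields a non-$\FF$ constant of $\FF[t]$ (or one argues more cheaply: statement $(2)$ is visibly symmetric in the ring and field pictures, so it suffices to prove $(1)\Leftrightarrow(2)$ and $(2)\Leftrightarrow(3)$ afresh, reusing $(2)\Rightarrow(3)$ verbatim from the proof of Lemma~\ref{Lemma:SigmaLemma} since $t-g\notin\FF$). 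For $(3)\Rightarrow(1)$ one simply takes $g\in\const{\FF(t)}{\sigma}\setminus\FF$, which has $\sigma(g)/g=1\in\FF$.

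The step that needs the real work is $(1)\Rightarrow(2)$, i.e.\ descending from a rational solution of $\sigma(g)/g\in\FF$ to a polynomial one so that Lemma~\ref{LemmaB} (or $(1')$ of Lemma~\ref{Lemma:SigmaLemma}) applies. Here I would write $g=p/q$ with $p,q\in\FF[t]$, $\gcd(p,q)=1$, and $q$ monic, say; then $\sigma(g)/g\in\FF$ means $\sigma(p)\,q=u\,p\,\sigma(q)$ for some $u\in\FF^*$. By Lemma~\ref{LemmaA}.(3) this gives $\sigma(p)/p\in\FF$ \emph{and} $\sigma(q)/q\in\FF$. If $g\notin\FF$ then at least one of $p,q$ is non-constant; say $p\notin\FF$ (the case $q\notin\FF$ is symmetric, or note $1/g$ works just as well). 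Then $p\in\FF[t]\setminus\FF$ with $\sigma(p)=u_p\,p$ for $u_p=\sigma(p)/p\in\FF^*$, which is exactly statement $(1')$ of Lemma~\ref{Lemma:SigmaLemma} with $G=\FF^*$, and hence $(2)$ follows. The main obstacle, then, is not difficult but is precisely this reduction to numerator/denominator via Lemma~\ref{LemmaA}.(3): one must be careful that a genuinely non-constant $g\in\FF(t)$ really does produce a non-constant polynomial factor on which $\sigma$ acts by an $\FF$-multiple, rather than having the ``non-constancy'' cancel between $p$ and $q$ — but coprimality of $p$ and $q$, together with $\gcd(\sigma(p),\sigma(q))=1$ from Lemma~\ref{LemmaA}.(2), rules this out. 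I would close by remarking that the $\sconst{G}{\AA}{\sigma}\setminus\{0\}\leq\AA^*$ hypothesis, which is the crux of the ring version, is free of charge here because $\FF$ is a field.
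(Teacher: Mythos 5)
Your proposal is correct and follows essentially the paper's own route: split $g=p/q$ with $\gcd(p,q)=1$, use Lemma~\ref{LemmaA}.(3) to get $\sigma(p)/p,\sigma(q)/q\in\FF$, and feed the non-constant factor into the polynomial-ring argument (Lemma~\ref{LemmaB}, i.e.\ Lemma~\ref{Lemma:SigmaLemma}.(1)$\Rightarrow$(2) with $G=\FF^*$), with $(2)\Rightarrow(3)$ and $(3)\Rightarrow(1)$ taken over from Lemma~\ref{Lemma:SigmaLemma}. The only difference is cosmetic — you route through the ring lemma where the paper cites Lemma~\ref{LemmaB} directly, and your side remarks on transferring $(3)$ between $\FF[t]$ and $\FF(t)$ are unnecessary since you ultimately prove the three field statements in a cycle as the paper does.
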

\begin{proof}
$(1)\Rightarrow(2)$: Let $g\in\FF(t)\setminus\FF$ with $\sigma(g)/g\in\FF$. Write
$g=\frac{p}{q}$ with $p,q\in\FF[t]^*$ and $\gcd(p,q)=1$. By Lemma~\ref{LemmaA},
$\sigma(p)/p\in\FF$ and $\sigma(q)/q\in\FF$. Since $g\notin\FF$, we have that $p\notin\FF$ or
$q\notin\FF$. Thus there is a $g'\in\FF[t]$ with $\deg(g')\geq1$ and $\deg(\sigma(g')-g')=\deg(0)=-\infty<0\leq\deg(g')-1$. Hence by
Lemma~\ref{LemmaB} there is a $\gamma\in\FF$ with
$\sigma(\gamma)=\gamma+\beta$.\\ 
$(2)\Rightarrow(3)$ follows by
Lemma~\ref{Lemma:SigmaLemma}. $(3)\Rightarrow(1)$ is analogous to the proof of Lemma~\ref{Lemma:SigmaLemma}.
\end{proof}

\noindent Note that the above lemma is contained in Karr's work by combining Theorems~2.1 and~2.3 from~\cite{Karr:85}. As a consequence, we obtain the following result.

\begin{theorem}\label{Thm:SigmaFieldTheorem}
Let $\dfield{\FF(t)}{\sigma}$ be a unimonomial field extension of
$\dfield{\FF}{\sigma}$ with $\sigma(t)=t+\beta$ for some $\beta\in\FF$. Then
this is a \sigmaSE-extension iff there is no $g\in\FF$ with
$\sigma(g)=g+\beta$.
\end{theorem}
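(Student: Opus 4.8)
The plan is to derive Theorem~\ref{Thm:SigmaFieldTheorem} as an immediate corollary of Lemma~\ref{Lemma:SigmaFieldLemma}, in exactly the same way that Theorem~\ref{Thm:RPSCharacterization}.(1) was derived from Lemma~\ref{Lemma:SigmaLemma} in Proof~\ref{Thm:SigmaChar}. The key observation is that ``being a \sigmaSE-extension'' means, by Definition~\ref{Def:PiSigmaSingle}, precisely that $\sigma(t)-t\in\FF$ (which holds here since $\sigma(t)-t=\beta\in\FF$) together with $\const{\FF(t)}{\sigma}=\const{\FF}{\sigma}$. So the content to prove is the equivalence: $\const{\FF(t)}{\sigma}=\const{\FF}{\sigma}$ if and only if there is no $g\in\FF$ with $\sigma(g)=g+\beta$.

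First I would note that the constant field of the extension always contains the constant field of the ground field, so $\const{\FF(t)}{\sigma}=\const{\FF}{\sigma}$ fails exactly when $\const{\FF(t)}{\sigma}\supsetneq\const{\FF}{\sigma}$, i.e.\ when statement~(3) of Lemma~\ref{Lemma:SigmaFieldLemma} holds. Then I would simply invoke the equivalence $(2)\Leftrightarrow(3)$ of Lemma~\ref{Lemma:SigmaFieldLemma}: statement~(3) holds if and only if statement~(2) holds, i.e.\ if and only if there \emph{is} a $g\in\FF$ with $\sigma(g)=g+\beta$. Negating both sides gives: $\const{\FF(t)}{\sigma}=\const{\FF}{\sigma}$ iff there is no such $g$, which is exactly the claim. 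One should also remark that Lemma~\ref{Lemma:SigmaFieldLemma} applies here because its hypothesis is met verbatim — $\dfield{\FF(t)}{\sigma}$ is a unimonomial field extension with $\sigma(t)=t+\beta$, $\beta\in\FF$.

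There is essentially no obstacle here; the theorem is a restatement of the already-proven Lemma~\ref{Lemma:SigmaFieldLemma} packaged in the language of Definition~\ref{Def:PiSigmaSingle}. The only point requiring a line of care is the unwinding of the definition of \sigmaSE-extension: one must point out that the condition $\sigma(t)-t\in\FF$ is automatically satisfied by hypothesis, so that ``\sigmaSE-extension'' reduces to the constancy condition on the constant field. After that the proof is a two-line appeal to the lemma, and I would expect the author to write it in the \verb|\ExternalProof| style used for Proof~\ref{Thm:SigmaChar}.

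\ExternalProof{(Theorem~\ref{Thm:SigmaFieldTheorem})}{Since $\sigma(t)-t=\beta\in\FF$, by Definition~\ref{Def:PiSigmaSingle} the extension $\dfield{\FF(t)}{\sigma}$ is a \sigmaSE-extension of $\dfield{\FF}{\sigma}$ if and only if $\const{\FF(t)}{\sigma}=\const{\FF}{\sigma}$, i.e.\ (using $\const{\FF}{\sigma}\subseteq\const{\FF(t)}{\sigma}$) if and only if statement~(3) of Lemma~\ref{Lemma:SigmaFieldLemma} fails. By the equivalence $(2)\Leftrightarrow(3)$ of Lemma~\ref{Lemma:SigmaFieldLemma}, statement~(3) fails if and only if statement~(2) fails, i.e.\ if and only if there is no $g\in\FF$ with $\sigma(g)=g+\beta$. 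This establishes the claim.}
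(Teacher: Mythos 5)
Your proposal is correct and matches the paper's own treatment: the paper states Theorem~\ref{Thm:SigmaFieldTheorem} directly as a consequence of Lemma~\ref{Lemma:SigmaFieldLemma}, exactly via the (negated) equivalence $(2)\Leftrightarrow(3)$ that you invoke. Your added remark that $\sigma(t)-t=\beta\in\FF$ makes the \sigmaSE-condition reduce to the constant-field equality is the right (and only) unwinding needed.
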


\noindent By the equivalence (3) $\Leftrightarrow$ (1) of Lemma~\ref{Lemma:SigmaLemma} we obtain the following result concerning the semi-constants.

\begin{theorem}\label{Thm:SigmaTheoremPart2}
Let $\dfield{\AA}{\sigma}$ be a difference ring and let $G\leq\AA^*$ with 
$\sconst{G}{\AA}{\sigma}\setminus\{0\}\leq \AA^*$. If 
$\dfield{\AA[t]}{\sigma}$ is a \sigmaSE-extension of $\dfield{\AA}{\sigma}$,
then $\sconst{G}{\AA[t]}{\sigma}=\sconst{G}{\AA}{\sigma}$.
\end{theorem}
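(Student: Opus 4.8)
The plan is to derive the equality directly from the equivalence $(1)\Leftrightarrow(3)$ of Lemma~\ref{Lemma:SigmaLemma}. Note first that this lemma applies: by hypothesis $G\leq\AA^*$ with $\sconst{G}{\AA}{\sigma}\setminus\{0\}\leq\AA^*$, and $\dfield{\AA[t]}{\sigma}$ is a unimonomial ring extension with $\sigma(t)=t+\beta$ for some $\beta\in\AA$, which are exactly the standing assumptions of Lemma~\ref{Lemma:SigmaLemma}. The inclusion $\sconst{G}{\AA}{\sigma}\subseteq\sconst{G}{\AA[t]}{\sigma}$ is immediate, since any $c\in\AA$ admitting a $u\in G$ with $\sigma(c)=u\,c$ lies a fortiori in $\AA[t]$ and continues to satisfy this relation.

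For the reverse inclusion I would argue by contradiction. Suppose some $c\in\sconst{G}{\AA[t]}{\sigma}$ does not lie in $\AA$; then $c\in\AA[t]\setminus\AA$ and $\sigma(c)=u\,c$ for some $u\in G$, which is precisely statement~(1) of Lemma~\ref{Lemma:SigmaLemma}. By the equivalence with statement~(3), this forces $\const{\AA[t]}{\sigma}\supsetneq\const{\AA}{\sigma}$, contradicting the assumption that $\dfield{\AA[t]}{\sigma}$ is a \sigmaSE-extension, for which $\const{\AA[t]}{\sigma}=\const{\AA}{\sigma}$ by definition. Hence every $c\in\sconst{G}{\AA[t]}{\sigma}$ already lies in $\AA$; the relation $\sigma(c)=u\,c$ with $u\in G\leq\AA^*$ then holds inside $\AA$, so $c\in\sconst{G}{\AA}{\sigma}$. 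Combining the two inclusions gives the claimed equality.

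I do not expect a genuine obstacle here. The only point requiring care is that the hypothesis $\sconst{G}{\AA}{\sigma}\setminus\{0\}\leq\AA^*$ is exactly what makes the leading coefficient of a hypothetical semi-constant of positive degree a unit, which is the mechanism behind Lemma~\ref{Lemma:SigmaLemma} (and behind Lemma~\ref{LemmaB}). A fully self-contained alternative route, avoiding the stated equivalence, is: if $c\in\AA[t]$ with $\deg(c)=n\geq1$ and $\sigma(c)=u\,c$ for $u\in G$, then $\deg(\sigma(c)-u\,c)=\deg(0)=-\infty<\deg(c)-1$, so Lemma~\ref{LemmaB} yields a $\gamma\in\AA$ with $\sigma(\gamma)-\gamma=\beta$; invoking Lemma~\ref{Lemma:SigmaLemma} $(2)\Rightarrow(3)$ then contradicts the \sigmaSE-property, so again $\deg(c)\leq0$, i.e.\ $c\in\AA$.
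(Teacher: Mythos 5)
Your proof is correct and follows the paper's own route: the paper derives this theorem precisely from the equivalence $(1)\Leftrightarrow(3)$ of Lemma~\ref{Lemma:SigmaLemma}, exactly as you do (with the trivial inclusion $\sconst{G}{\AA}{\sigma}\subseteq\sconst{G}{\AA[t]}{\sigma}$ left implicit there). Your alternative route via Lemma~\ref{LemmaB} is just the internal mechanism of that lemma, so it adds no genuinely different argument.
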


\noindent Furthermore, if we specialize to $G=\AA[t]^*$ and assume that $\AA$ is reduced, we get Theorem~\ref{Thm:SigmaTheoremStrongPart2}. For its proof given below we use in addition the following lemma.

\begin{lemma}\label{Lemma:sconstGeneral}
Let $\dfield{\AA}{\sigma}$ be a difference ring: $\gsconst{\AA}{\sigma}\setminus\{0\}\leq\AA^*$ iff $\gsconst{\AA}{\sigma}\setminus\{0\}=\AA^*$.
\end{lemma}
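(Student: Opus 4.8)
The plan is to prove the non-trivial implication: assuming $\gsconst{\AA}{\sigma}\setminus\{0\}\leq\AA^*$, show that every element of $\AA^*$ lies in $\gsconst{\AA}{\sigma}$. The reverse implication is trivial, since $\AA^*$ is always a group. First I would recall that by definition $\gsconst{\AA}{\sigma}=\{c\in\AA\mid\sigma(c)=u\,c\text{ for some }u\in\AA^*\}$, and that it always contains $0$ and is always a multiplicative monoid (this is the observation attributed to~\cite{Bron:00} in the excerpt). So the content is really: if the non-zero part happens to be closed under taking inverses in $\AA^*$, then it must already be all of $\AA^*$.

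The key step is to show $\AA^*\subseteq\gsconst{\AA}{\sigma}$. Take any $a\in\AA^*$. Since $\sigma$ is a ring automorphism, $\sigma(a)\in\AA^*$ as well, and therefore $u:=\sigma(a)\,a^{-1}\in\AA^*$. But then $\sigma(a)=u\,a$ with $u\in\AA^*$, which is exactly the defining condition for $a\in\gsconst{\AA}{\sigma}$. Hence $\AA^*\subseteq\gsconst{\AA}{\sigma}\setminus\{0\}$. Combined with the hypothesis $\gsconst{\AA}{\sigma}\setminus\{0\}\leq\AA^*$ (in particular $\gsconst{\AA}{\sigma}\setminus\{0\}\subseteq\AA^*$), we obtain $\gsconst{\AA}{\sigma}\setminus\{0\}=\AA^*$, as claimed.

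There is essentially no obstacle here — this is a one-line unwinding of the definitions together with the fact that automorphisms preserve units. If anything, the only thing to be careful about is the bookkeeping of the $0$ element: one should note explicitly that $0\notin\AA^*$ and that removing $0$ from $\gsconst{\AA}{\sigma}$ is precisely what makes the comparison with $\AA^*$ meaningful. The substance of the lemma is conceptual rather than computational: it says that the monoid $\gsconst{\AA}{\sigma}\setminus\{0\}$ is a group if and only if it is as large as it can possibly be, namely the full unit group $\AA^*$. I would present it in exactly the two-paragraph form above, with the trivial direction dispatched in one sentence.
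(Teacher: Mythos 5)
Your proof is correct and follows essentially the same route as the paper's: both establish $\AA^*\subseteq\gsconst{\AA}{\sigma}\setminus\{0\}$ by writing $u=\sigma(a)/a\in\AA^*$ for $a\in\AA^*$, and then use the hypothesized inclusion $\gsconst{\AA}{\sigma}\setminus\{0\}\leq\AA^*$ for the reverse containment, the other implication being immediate. No issues.
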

\begin{proof}
Suppose that $\gsconst{\AA}{\sigma}\setminus\{0\}\leq\AA^*$. If $a\in\AA^*$, then $\sigma(a)\in\AA^*$. Thus $u:=\frac{\sigma(a)}{a}\in\AA^*$. With $\sigma(a)=u\,a$ it follows that
$a\in\gsconst{\AA}{\sigma}\setminus\{0\}$. Hence $\gsconst{\AA}{\sigma}\setminus\{0\}\supseteq\AA^*$ and with $\gsconst{\AA}{\sigma}\setminus\{0\}\leq\AA^*$ we have 
$\gsconst{\AA}{\sigma}\setminus\{0\}\subseteq\AA^*$. The other implication is immediate.
\end{proof}

\begin{theorem}\label{Thm:SigmaTheoremStrongPart2}
Let $\dfield{\AA[t]}{\sigma}$ be a \sigmaSE-extension of $\dfield{\AA}{\sigma}$
where $\AA$ is reduced and $\gsconst{\AA}{\sigma}\setminus\{0\}\leq\AA^*$. Then
$\gsconst{\AA[t]}{\sigma}\setminus\{0\}=\gsconst{\AA}{\sigma}\setminus\{0\}=\AA^*$.
\end{theorem}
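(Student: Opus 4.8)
The plan is to combine Theorem~\ref{Thm:SigmaTheoremPart2} with the specific structure of invertible elements in a polynomial ring over a reduced ring. First I would observe that since $\AA$ is reduced, the \sigmaSE-extension $\dfield{\AA[t]}{\sigma}$ is built on a polynomial ring $\AA[t]$ whose base $\AA$ is reduced; moreover $\AA[t]$ is again reduced (no nonzero nilpotents, since a nilpotent polynomial over a reduced ring is zero). Hence by Lemma~\ref{Lemma:PolyInv} we have $\AA[t]^*=\AA^*$. In particular, taking $G=\AA[t]^*=\AA^*\leq\AA[t]^*$, the hypothesis $\gsconst{\AA}{\sigma}\setminus\{0\}\leq\AA^*$ together with Lemma~\ref{Lemma:sconstGeneral} gives $\gsconst{\AA}{\sigma}\setminus\{0\}=\AA^*$, and so $\sconst{(\AA^*)}{\AA}{\sigma}\setminus\{0\}=\gsconst{\AA}{\sigma}\setminus\{0\}=\AA^*\leq\AA^*$, which is exactly the condition needed to invoke Theorem~\ref{Thm:SigmaTheoremPart2} for the group $G=\AA^*$.

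Next I would apply Theorem~\ref{Thm:SigmaTheoremPart2} with $G=\AA^*$: since $\dfield{\AA[t]}{\sigma}$ is a \sigmaSE-extension and $\sconst{(\AA^*)}{\AA}{\sigma}\setminus\{0\}\leq\AA^*$, we conclude $\sconst{(\AA^*)}{\AA[t]}{\sigma}=\sconst{(\AA^*)}{\AA}{\sigma}$. Now I would upgrade both sides to the full semi-constant sets. On the left, since $\AA[t]^*=\AA^*$ we have $\gsconstF{\AA[t]}{\sigma}=\sconstF{(\AA[t]^*)}{\AA[t]}{\sigma}=\sconstF{(\AA^*)}{\AA[t]}{\sigma}$; on the right, $\sconstF{(\AA^*)}{\AA}{\sigma}=\gsconstF{\AA}{\sigma}$ by definition. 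Combining these identifications with the equality from Theorem~\ref{Thm:SigmaTheoremPart2} yields $\gsconst{\AA[t]}{\sigma}=\gsconst{\AA}{\sigma}$, hence $\gsconst{\AA[t]}{\sigma}\setminus\{0\}=\gsconst{\AA}{\sigma}\setminus\{0\}$, and the latter equals $\AA^*$ as established above.

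The only subtlety, and thus the main point requiring care rather than the main obstacle, is the bookkeeping that $\AA[t]$ inherits reducedness from $\AA$ so that Lemma~\ref{Lemma:PolyInv} applies and gives $\AA[t]^*=\AA^*$; everything else is a direct chaining of the already-proven Theorem~\ref{Thm:SigmaTheoremPart2} and Lemma~\ref{Lemma:sconstGeneral}. I do not expect a genuine obstacle here — the result is essentially the reduced-ring specialization of Theorem~\ref{Thm:SigmaTheoremPart2}, packaged so that it can be fed into the nested-extension induction in Section~\ref{Sec:NestedExt}.
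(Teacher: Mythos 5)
Your proposal is correct and follows essentially the same route as the paper's proof: Lemma~\ref{Lemma:sconstGeneral} to get $\gsconst{\AA}{\sigma}\setminus\{0\}=\AA^*$, Lemma~\ref{Lemma:PolyInv} (which needs only $\AA$ reduced, so your extra remark that $\AA[t]$ is reduced is harmless but unnecessary) to get $\AA[t]^*=\AA^*$, and then Theorem~\ref{Thm:SigmaTheoremPart2} with $G=\AA^*$ to identify $\gsconst{\AA[t]}{\sigma}$ with $\gsconst{\AA}{\sigma}$.
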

\begin{proof}
By Lemma~\ref{Lemma:sconstGeneral} it follows that $\gsconst{\AA}{\sigma}\setminus\{0\}=\AA^*$.
Since $\AA$ is reduced, $\AA[t]^*=\AA^*$ by Lemma~\ref{Lemma:PolyInv} and thus   
$\gsconst{\AA[t]}{\sigma}=\sconst{\AA[t]^*}{\AA[t]}{\sigma}=\sconst{\AA^*}{\AA[t]}{\sigma}$. 
Now take $G=\AA^*$ and apply Theorem~\ref{Thm:SigmaTheoremPart2}. Hence $\sconst{\AA^*}{\AA[t]}{\sigma}=\sconst{\AA^*}{\AA}{\sigma}=\gsconst{\AA}{\sigma}$. 
\end{proof}

\subsection{\piE-extensions}

Analogously to Lemma~\ref{LemmaB} we obtain by coefficient comparison the following lemma.

\begin{lemma}\label{LemmaC}
Let $\dfield{\AA\ltr{t}}{\sigma}$ be a unimonomial ring extension of $\dfield{\AA}{\sigma}$ with $\alpha=\frac{\sigma(t)}{t}\in\AA^*$; let $u\in\AA$ and $g=\sum_{i=0}^ng_i\,t^i\in\AA\ltr{t}$. If
$\sigma(g)=u\,g,$
then
$\sigma(g_i)=u\,\alpha^{-i}g_i$ for all $i$.
\end{lemma}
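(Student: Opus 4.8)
The plan is to prove Lemma~\ref{LemmaC} by a direct coefficient comparison in the ring of Laurent polynomials $\AA\ltr{t}$, using the fact that $\sigma(t)=\alpha\,t$ and hence $\sigma(t^i)=\sigma(t)^i=\alpha^i\,t^i$ for every $i\in\ZZ$ (recall $\alpha\in\AA^*$ by hypothesis, so negative powers are legitimate).

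First I would write out $\sigma(g)$ explicitly. Since $\sigma$ is a ring automorphism restricting to the given $\sigma$ on $\AA$ and sending $t\mapsto\alpha\,t$, we get
\begin{equation*}
\sigma(g)=\sigma\Big(\sum_i g_i\,t^i\Big)=\sum_i \sigma(g_i)\,\sigma(t)^i=\sum_i \sigma(g_i)\,\alpha^i\,t^i.
\end{equation*}
On the other hand $u\,g=\sum_i u\,g_i\,t^i$. The key structural point is that $\AA\ltr{t}$ is a free $\AA$-module with basis $\{t^i\}_{i\in\ZZ}$, so $t$ being transcendental over $\AA$ means two such expressions are equal iff their coefficients agree termwise.

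Then I would equate coefficients of $t^i$ on both sides of $\sigma(g)=u\,g$: for each $i$ this gives $\sigma(g_i)\,\alpha^i=u\,g_i$. Since $\alpha\in\AA^*$, also $\alpha^i\in\AA^*$ (for $i$ of either sign), so I may multiply by $\alpha^{-i}$ to obtain $\sigma(g_i)=u\,\alpha^{-i}\,g_i$, which is exactly the claimed identity. This handles all $i$ uniformly, including $i<0$, precisely because $\alpha$ is a unit.

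There is no real obstacle here: the whole content is the freeness of $\AA\ltr{t}$ over $\AA$ (which is part of the standing setup, $t$ being transcendental over $\AA$) together with the multiplicativity $\sigma(t^i)=\alpha^i t^i$. The only point deserving a word of care is the treatment of negative exponents, i.e.\ making sure $\alpha^{-i}$ makes sense; this is immediate from $\alpha\in\AA^*$ and was perhaps the reason the lemma is stated for Laurent polynomials rather than ordinary polynomials. So the proof is a two-line coefficient comparison and I would present it as such.
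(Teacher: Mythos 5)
Your proof is correct and follows exactly the route the paper intends: the paper gives no written proof for this lemma, stating only that it follows ``by coefficient comparison'' analogously to Lemma~\ref{LemmaB}, which is precisely your two-line argument comparing coefficients of $t^i$ and using $\alpha\in\AA^*$ to divide by $\alpha^i$.
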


\noindent Now we are in the position to obtain the characterization theorem of \piE-extensions.

\ExternalProof{(Theorem~\ref{Thm:RPSCharacterization}.(2)\label{Thm:PiChar})}{``$\Leftarrow$'': Let $m\in\ZZ\setminus\{0\}$ and $g\in\AA\setminus\{0\}$ with  $\sigma(g)=\alpha^m\,g$. Since $\sigma(t^m)=\alpha^m\,t^m$, it follows that
$\sigma(g/t^m)=g/t^m$, i.e., $g/t^m\in\const{\AA\ltr{t}}{\sigma}$. Clearly $g/t^m\notin\AA$ which implies that $g/t^m\notin\const{\AA}{\sigma}$.\\
``$\Rightarrow$'': Let $g=\sum_{i}g_it^i\in\AA\ltr{t}\setminus\AA$ such that $\sigma(g)=g$. Thus $g_m\neq0$ for some $m\neq0$. By Lemma~\ref{LemmaC} we have that $\sigma(g_m)=\alpha^{-m}g_m$.\\
Suppose that $t$ is a \piE-monomial, but $\ord(\alpha)=n>0$. Then $\sigma(t^n)=\alpha^n\,t^n=t^n$, which is a contradiction to the first part of the statement.
}

\noindent Requiring in addition that the semi-constants form a group, this result can be sharpened.

\begin{theorem}\label{Thm:PiCharStrong}
Let $\dfield{\AA}{\sigma}$ be a difference ring and let $G\leq\AA^*$ with $
\sconst{G}{\AA}{\sigma}\setminus\{0\}\leq\AA^*$. Let
$\dfield{\AA\ltr{t}}{\sigma}$ be a unimonomial extension of
$\dfield{\AA}{\sigma}$ with $\sigma(t)=\alpha\,t$ for some $\alpha\in G$.
Then this is a \piE-extension iff there are no
$g\in\sconst{G}{\AA}{\sigma}\setminus\{0\}$ and $m>0$ with
$\sigma(g)=\alpha^m\,g$.
\end{theorem}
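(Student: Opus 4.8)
The strategy is to obtain Theorem~\ref{Thm:PiCharStrong} as a sharpening of the already-proven characterization Theorem~\ref{Thm:RPSCharacterization}.(2): the latter says that $\dfield{\AA\ltr{t}}{\sigma}$ is a \piE-extension precisely when the multiplicative telescoping equation $\sigma(g)=\alpha^m\,g$ has no solution with $g\in\AA\setminus\{0\}$ and $m\in\ZZ\setminus\{0\}$, whereas the new statement restricts the admissible solutions to $g\in\sconst{G}{\AA}{\sigma}\setminus\{0\}$ and $m>0$. Since the restricted solution set is contained in the unrestricted one, only one implication needs genuine work, and it amounts to showing that every unrestricted solution can be transformed into a restricted one.

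Concretely, for the direction ``\piE-extension $\Rightarrow$ no restricted solution'' there is nothing to prove beyond noting that $\sconst{G}{\AA}{\sigma}\setminus\{0\}\subseteq\AA\setminus\{0\}$ and that $m>0$ forces $m\in\ZZ\setminus\{0\}$, so any restricted solution is already forbidden by Theorem~\ref{Thm:RPSCharacterization}.(2). For the converse I would argue by contraposition: if $\dfield{\AA\ltr{t}}{\sigma}$ is not a \piE-extension, Theorem~\ref{Thm:RPSCharacterization}.(2) yields $g\in\AA\setminus\{0\}$ and $m\in\ZZ\setminus\{0\}$ with $\sigma(g)=\alpha^m\,g$. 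Since $\alpha\in G$ and $G$ is a group, $\alpha^m\in G$, so $g\in\sconst{G}{\AA}{\sigma}$; being nonzero, $g\in\sconst{G}{\AA}{\sigma}\setminus\{0\}$, and the standing hypothesis then forces $g\in\AA^*$. If $m>0$, this $g$ already serves as the desired restricted solution. If $m<0$, I replace $g$ by $g^{-1}$: applying the automorphism $\sigma$ to $\sigma(g)=\alpha^m\,g$ and inverting gives $\sigma(g^{-1})=\alpha^{-m}\,g^{-1}$ with $-m>0$ and $\alpha^{-m}\in G$, so that $g^{-1}\in\sconst{G}{\AA}{\sigma}\setminus\{0\}$ is a restricted solution. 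In either case a restricted solution exists, which is what contraposition requires.

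The only point that needs attention is the passage from a solution with arbitrary nonzero exponent to one with positive exponent, and it rests entirely on the standing hypothesis $\sconst{G}{\AA}{\sigma}\setminus\{0\}\leq\AA^*$: it is invertibility of $g$ that licenses replacing $m$ by $-m$ through $g\mapsto g^{-1}$, and closure of $G$ under inversion that keeps the resulting multiplier inside $G$. I therefore do not expect a genuine obstacle here; the proof is essentially a bookkeeping refinement of Theorem~\ref{Thm:RPSCharacterization}.(2) that exploits the facts that semi-constants over $G$ are units and that $G$ is a group.
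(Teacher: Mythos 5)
Your proposal is correct and follows essentially the same route as the paper: one direction is the immediate specialization of Theorem~\ref{Thm:RPSCharacterization}.(2), and the other is obtained by contraposition, using $\alpha^m\in G$ to place the solution $g$ in $\sconst{G}{\AA}{\sigma}\setminus\{0\}\leq\AA^*$ and passing to $g^{-1}$ to convert a negative exponent $m$ into $-m>0$. The only cosmetic remark is that inverting the relation $\sigma(g)=\alpha^m g$ directly (using that $\sigma$ is a ring automorphism, so $\sigma(g^{-1})=\sigma(g)^{-1}$) already gives $\sigma(g^{-1})=\alpha^{-m}g^{-1}$; no further application of $\sigma$ is needed.
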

\begin{proof}
$\Rightarrow$: Suppose that $t$ is not a \piE-monomial. Then we can take $g\in\AA\setminus\{0\}$ and $m\in\ZZ\setminus\{0\}$ with
$\sigma(g)=\alpha^m\,g$. Hence
$g\in\sconst{G}{\AA}{\sigma}\setminus\{0\}\leq\AA^*$. Thus if
$m<0$, we get $\sigma(\tilde{g})=\alpha^{-m}\tilde{g}$ with $\tilde{g}=\frac1g\in\AA^*$. The other direction
is immediate by Theorem~\ref{Thm:RPSCharacterization}.(2).
\end{proof}

\noindent Together with Lemma~\ref{LemmaA} we rediscover Karr's field version; see~\cite[Theorem~2.2]{Karr:85}

\begin{theorem}\label{Thm:PiCharField}
Let $\dfield{\FF(t)}{\sigma}$ be a unimonomial field extension of
$\dfield{\FF}{\sigma}$ with $\alpha=\frac{\sigma(t)}{t}\in\FF^*$.
Then this is a \piE-extension iff
there are no $g\in\FF^*$ and $m>0$ with $\sigma(g)=\alpha^m\,g$.
\end{theorem}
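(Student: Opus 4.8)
The plan is to derive Theorem~\ref{Thm:PiCharField} from the difference ring version Theorem~\ref{Thm:RPSCharacterization}.(2) (equivalently, from Theorem~\ref{Thm:PiCharStrong} with $G=\FF^*$) in the same spirit in which Theorem~\ref{Thm:SigmaFieldTheorem} was obtained from Lemma~\ref{Lemma:SigmaFieldLemma}: the bridge from the rational function field $\FF(t)$ down to the Laurent polynomial subring $\FF\ltr{t}$ is the coprimality statement Lemma~\ref{LemmaA}.(3), and inside $\FF\ltr{t}$ the coefficient comparison of Lemma~\ref{LemmaC} plays the role that Lemma~\ref{LemmaB} played in the \sigmaSE-case. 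First I would record the elementary observation that ``there are no $g\in\FF^*$ and $m>0$ with $\sigma(g)=\alpha^m\,g$'' is equivalent, after replacing $g$ by $g^{-1}$ and $m$ by $-m$, to ``there are no $g\in\FF\setminus\{0\}$ and $m\in\ZZ\setminus\{0\}$ with $\sigma(g)=\alpha^m\,g$'', which by Theorem~\ref{Thm:RPSCharacterization}.(2) says precisely that $\dfield{\FF\ltr{t}}{\sigma}$ is a \piE-extension, i.e.\ $\const{\FF\ltr{t}}{\sigma}=\const{\FF}{\sigma}$. Thus the claim reduces to the equivalence $\const{\FF(t)}{\sigma}=\const{\FF}{\sigma}$ iff $\const{\FF\ltr{t}}{\sigma}=\const{\FF}{\sigma}$.

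For the nontrivial direction ``$\Leftarrow$'' I would assume that $\dfield{\FF\ltr{t}}{\sigma}$ is a \piE-extension and argue by contradiction that no constant is gained by passing further to $\FF(t)$. So suppose $g\in\FF(t)\setminus\FF$ with $\sigma(g)=g$, and write $g=p/q$ with $p,q\in\FF[t]^*$ and $\gcd(p,q)=1$. Since $\sigma(p/q)/(p/q)=1\in\FF$, Lemma~\ref{LemmaA}.(3) yields $u:=\sigma(p)/p\in\FF^*$ and $v:=\sigma(q)/q\in\FF^*$ (indeed $u=v$). Applying Lemma~\ref{LemmaC} inside $\FF\ltr{t}$ to $p=\sum_i p_i\,t^i$ and to $q=\sum_i q_i\,t^i$ gives $\sigma(p_i)=u\,\alpha^{-i}\,p_i$ and $\sigma(q_i)=v\,\alpha^{-i}\,q_i$ for all $i$. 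Since $g\notin\FF$, at least one of $p,q$ lies outside $\FF$; replacing $g$ by $g^{-1}$ if necessary, we may assume $p\notin\FF$. If $p$ has two nonzero coefficients $p_i,p_j$ with $i\neq j$, dividing the corresponding relations gives $\sigma(p_i/p_j)=\alpha^{j-i}(p_i/p_j)$ with $p_i/p_j\in\FF^*$ and $j-i\neq 0$. Otherwise $p=c\,t^n$ with $c\in\FF^*$ and $n\geq 1$; then $\gcd(p,q)=1$ forces $q_0\neq 0$, hence $v=\sigma(q_0)/q_0$, and either $q$ has a further nonzero coefficient $q_j$ with $j\geq 1$, yielding $\sigma(q_j/q_0)=\alpha^{-j}(q_j/q_0)$, or $q=q_0\in\FF^*$ so that $g=(c/q_0)\,t^n$ and $\sigma(g)=g$ forces $\sigma(c/q_0)=\alpha^{-n}(c/q_0)$. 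In every case we have produced an $h\in\FF^*$ and an $m\in\ZZ\setminus\{0\}$ with $\sigma(h)=\alpha^m\,h$, contradicting that $\dfield{\FF\ltr{t}}{\sigma}$ is a \piE-extension by Theorem~\ref{Thm:RPSCharacterization}.(2). Hence $\const{\FF(t)}{\sigma}=\const{\FF}{\sigma}$, so $\dfield{\FF(t)}{\sigma}$ is a \piE-field extension.

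The converse ``$\Rightarrow$'' is the short one: if $\dfield{\FF(t)}{\sigma}$ is a \piE-field extension but there were $g\in\FF^*$ and $m>0$ with $\sigma(g)=\alpha^m\,g$, then $\sigma(g\,t^{-m})=(\alpha^m g)(\alpha t)^{-m}=g\,t^{-m}$, so $g\,t^{-m}\in\const{\FF(t)}{\sigma}$; since $t$ is transcendental over $\FF$ and $m>0$ we have $g\,t^{-m}\notin\FF$, contradicting $\const{\FF(t)}{\sigma}=\const{\FF}{\sigma}\subseteq\FF$. I expect the only delicate point to be the case distinction in ``$\Leftarrow$'' on the shapes of $p$ and $q$, where the hypothesis $\gcd(p,q)=1$ must be used to guarantee that at least one of them has a nonzero constant term (so that they cannot both be monomials in $t$ of positive degree); everything else is routine coefficient comparison. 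As a shortcut, since $G=\FF^*$ trivially satisfies $\sconst{\FF^*}{\FF}{\sigma}\setminus\{0\}=\FF^*\leq\FF^*$, one may invoke Theorem~\ref{Thm:PiCharStrong} in place of Theorem~\ref{Thm:RPSCharacterization}.(2) to trim the bookkeeping, but the essential reduction via Lemma~\ref{LemmaA}.(3) and Lemma~\ref{LemmaC} remains the same.
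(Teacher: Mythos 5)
Your proposal is correct and follows essentially the same route as the paper: the easy direction is the $g\,t^{-m}$ computation underlying Theorem~\ref{Thm:RPSCharacterization}.(2), and the hard direction writes $g=p/q$ with $\gcd(p,q)=1$, invokes Lemma~\ref{LemmaA}.(3) to get $\sigma(p)/p,\sigma(q)/q\in\FF$, and then extracts a coefficient ratio satisfying $\sigma(h)=\alpha^m h$ with $m\neq0$ via Lemma~\ref{LemmaC}. The only difference is cosmetic — you normalize by assuming $p\notin\FF$ and split on whether $p$ (and then $q$) is a monomial, while the paper assumes $\deg q\geq\deg p$ and works with a single auxiliary polynomial $a$ — with the coprimality of $p$ and $q$ handling the degenerate monomial cases in both versions.
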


\begin{proof}
 The direction from right to left follows by Theorem~\ref{Thm:RPSCharacterization}.(2) and the
fact that any \piE-field extension is a \piE-ring extension. Now let $g\in\FF(t)\setminus\FF$ with $\sigma(g)=g$. Write $g=p/q$ with $p,q\in\FF(t)$ where $\gcd(p,q)=1$ and $q$ is monic. W.l.o.g.\ suppose that $\deg(q)\geq\deg(p)$ (otherwise take $1/g$ instead of $g$). By Lemma~\ref{LemmaA},
\begin{equation}\label{Equ:pqQuotient}
\sigma(p)/p\in\FF\quad\quad\text{and}\quad\quad\sigma(q)/q\in\FF.
\end{equation}
We consider two cases. First suppose that $p\in\FF^*$ and $q=t^m$ with $m>0$. Then
$\frac{p}{t^m}=g=\sigma(g)=\frac{\sigma(p)}{\alpha^mt^m}$
which implies that $\sigma(p)=\alpha^mp$.
What remains to consider is the case that $p\notin\FF$ or $q\neq t^m$ for some $m>0$. Define
$$a:=\begin{cases}p&\text{if $q=t^m$ for some $m>0$,}\\
q&\text{otherwise}.
\end{cases}$$
The following holds.
\begin{enumerate}
\item $a\in\FF[t]\setminus\FF$: If $a=q$, note that $q\notin\FF$ by
$\deg(p)\leq\deg(q)$ and $p/q\notin\FF$; if $a=p$, $q=t^m$ and hence
$p\notin\FF$ by assumption.
\item $u:=\sigma(a)/a\in\FF^*$ by~\eqref{Equ:pqQuotient}.
\item $a\neq ut^m$ for all $u\in\FF^*$ and $m>0$: $a$ could be only of this
form, if $q=t^m$ for some $m>0$. Hence $a=p$. But since $\gcd(p,q)=1$, $t\nmid p$.
\end{enumerate}
By the properties~(1) and~(3), it follows that $a=\sum_{i=k}^na_it^i$ with $a_k\neq0\neq a_n$ where $n>k\geq0$. Property~(2) and Lemma~\ref{LemmaC} yield
$\sigma(a_k)=\frac{u}{\alpha^k}a_k$ and $\sigma(a_n)=\frac{u}{\alpha^n}a_n$
which implies $\sigma(\frac{a_k}{a_n})=\alpha^{n-k}\frac{a_k}{a_n}$. Since $\frac{a_k}{a_n}\in\FF^*$ and $n-k>0$, the theorem is proven.
\end{proof}

\noindent Finally, we characterize the set of semi-constants for \piE-extensions.

\begin{proposition}\label{Prop:PiPart2Weak}
Let $\dfield{\AA}{\sigma}$ be a difference ring with $G\leq\AA^*$ and
$\sconst{G}{\AA}{\sigma}\setminus\{0\}\leq\AA^*$. Let
$\dfield{\AA\ltr{t}}{\sigma}$ be \piE-extension of $\dfield{\AA}{\sigma}$ with
$\sigma(t)=\alpha\,t$ for some $\alpha\in G$. Then 
$\sconst{G}{\AA\ltr{t}}{\sigma}=\{h\,t^m|h\in\sconst{G}{\AA}{\sigma}\text{ and }m\in\ZZ\}$
and $\sconst{G}{\AA\ltr{t}}{\sigma}\setminus\{0\}\leq\AA\ltr{t}^*$. 
\end{proposition}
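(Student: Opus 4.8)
The plan is to prove the two assertions of Proposition~\ref{Prop:PiPart2Weak} in the natural order: first identify $\sconst{G}{\AA\ltr{t}}{\sigma}$ explicitly as $\{h\,t^m\mid h\in\sconst{G}{\AA}{\sigma},\ m\in\ZZ\}$, and then deduce from this description that $\sconst{G}{\AA\ltr{t}}{\sigma}\setminus\{0\}$ is a subgroup of $\AA\ltr{t}^*$. The inclusion ``$\supseteq$'' in the first claim is immediate: if $h\in\sconst{G}{\AA}{\sigma}$ with $\sigma(h)=u\,h$ for some $u\in G$, then $\sigma(h\,t^m)=\sigma(h)\,\alpha^m\,t^m=(u\,\alpha^m)(h\,t^m)$, and $u\,\alpha^m\in G$ since $\alpha\in G$; so $h\,t^m\in\sconst{G}{\AA\ltr{t}}{\sigma}$.

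The substance is the inclusion ``$\subseteq$''. Let $g=\sum_{i=k}^{n}g_i\,t^i\in\AA\ltr{t}$ with $g\neq0$ satisfy $\sigma(g)=u\,g$ for some $u\in G\leq\AA^*$. By Lemma~\ref{LemmaC} (which applies verbatim to Laurent polynomials), each coefficient satisfies $\sigma(g_i)=u\,\alpha^{-i}\,g_i$. Pick one index $j$ with $g_j\neq0$; since $u\,\alpha^{-j}\in G$, this shows $g_j\in\sconst{G}{\AA}{\sigma}\setminus\{0\}\leq\AA^*$, so $g_j$ is a unit and $u=\sigma(g_j)/g_j\cdot\alpha^{j}$. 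Now suppose, for contradiction, that a second nonzero coefficient $g_\ell$ exists with $\ell\neq j$. The same argument gives $g_\ell\in\sconst{G}{\AA}{\sigma}\setminus\{0\}\leq\AA^*$ and $\sigma(g_\ell)=u\,\alpha^{-\ell}\,g_\ell$. Dividing the two relations for $g_j$ and $g_\ell$ yields $\sigma(g_j/g_\ell)=\alpha^{\ell-j}\,(g_j/g_\ell)$ with $g_j/g_\ell\in\AA^*$ and $\ell-j\neq0$; this is precisely a nontrivial solution of the multiplicative telescoping equation for $\alpha$, contradicting Theorem~\ref{Thm:RPSCharacterization}.(2), which holds because $\dfield{\AA\ltr{t}}{\sigma}$ is a \piE-extension. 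Hence $g$ has a single nonzero coefficient, $g=g_j\,t^{j}$ with $g_j\in\sconst{G}{\AA}{\sigma}$, as desired. I expect this contradiction step — extracting the $\pi$-extension obstruction from two surviving coefficients — to be the only real point of the argument; everything else is bookkeeping.

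For the second assertion, note that $\sconst{G}{\AA\ltr{t}}{\sigma}\setminus\{0\}=\{h\,t^m\mid h\in\sconst{G}{\AA}{\sigma}\setminus\{0\},\ m\in\ZZ\}$ by the description just proved. Since $\sconst{G}{\AA}{\sigma}\setminus\{0\}\leq\AA^*$ by hypothesis, every such $h$ is a unit of $\AA$, hence $h\,t^m$ is a unit of $\AA\ltr{t}$ with inverse $h^{-1}t^{-m}$; this gives $\sconst{G}{\AA\ltr{t}}{\sigma}\setminus\{0\}\subseteq\AA\ltr{t}^*$. Closure under multiplication and inverses is then immediate from the product formula $(h_1t^{m_1})(h_2t^{m_2})=(h_1h_2)t^{m_1+m_2}$ together with the fact that $\sconst{G}{\AA}{\sigma}\setminus\{0\}$ is already a group and $\ZZ$ is a group under addition; the identity $1=1\cdot t^0$ lies in the set. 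Therefore $\sconst{G}{\AA\ltr{t}}{\sigma}\setminus\{0\}$ is a multiplicative subgroup of $\AA\ltr{t}^*$, completing the proof.
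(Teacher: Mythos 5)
Your proposal is correct and follows essentially the same route as the paper: the inclusion $\supseteq$ by direct computation, the inclusion $\subseteq$ via the coefficient relations of Lemma~\ref{LemmaC} and a contradiction with Theorem~\ref{Thm:RPSCharacterization}.(2) when two coefficients are nonzero, and the subgroup property read off from the resulting description. The only difference is that you spell out the unit/closure verification at the end, which the paper leaves implicit.
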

\begin{proof}
``$\subseteq$'': Let $g\in\sconst{G}{\AA\ltr{t}}{\sigma}$, i.e., $g=\sum_i
g_it^i\in\AA\ltr{t}$ with $\sigma(g)=u\,g$ for some $u\in G$. By
Lemma~\ref{LemmaC} we get $\sigma(g_i)\alpha^i=u\,g_i$ and thus $\sigma(g_i)=\frac{u}{\alpha^i}\,g_i$ 
Now suppose that there
are $r,s\in\ZZ$ with $s>r$ and $g_r\neq 0\neq g_s$.
As
$\frac{u}{\alpha^s}\in G$, it follows that
$g_s\in\sconst{G}{\AA}{\sigma}\setminus\{0\}\leq\AA^*$. Thus we conclude that
$\sigma(\frac{g_r}{g_s})=\alpha^{s-r}\frac{g_r}{g_s}$
with $s-r>0$; 
a contradiction to Theorem~\ref{Thm:RPSCharacterization}.(2). Hence $g=h\,t^m$ for some $h\in\sconst{G}{\AA}{\sigma}$, $m\in\ZZ$.\\
``$\supseteq$'': Let $g=h\,t^m$ with $h\in\sconst{G}{\AA}{\sigma}$, $m\in\ZZ$. Then there is a $u\in G$ with $\sigma(h)=u\, h$. Hence
$\sigma(g)=\sigma(h)\,\alpha^m t^m=u\,\alpha^m h\,t^m=u\,\alpha^m\,g$ with $u\,\alpha^m\in G$. Thus $g\in\sconst{G}{\AA\ltr{t}}{\sigma}$.\\
Summarizing, we proved equality which implies that
$\sconst{G}{\AA\ltr{t}}{\sigma}\setminus\{0\}\leq\AA\ltr{t}^*$.
\end{proof}

\noindent So far we obtained a description of the semi-constants for a subgroup $G$ of $\AA^*$. Now we will lift this result to the group 
$$\tilde{G}=\dgroup{G}{\AA\lr{t}}{\AA}=\{h\,t^m|\,h\in G\text{ and }m\in\ZZ\}\leq\AA\lr{t}^*\}.$$

\begin{theorem}\label{Thm:PiPart2Strong}
Let $\dfield{\AA}{\sigma}$ be a difference ring and let $G\leq\AA^*$ with 
$\sconst{G}{\AA}{\sigma}\setminus\{0\}\leq\AA^*$. Let
$\dfield{\AA\ltr{t}}{\sigma}$ be \piE-extension of $\dfield{\AA}{\sigma}$ with
$\sigma(t)=\alpha\,t$ for some $\alpha\in G$ and let $\tilde{G}=\dgroup{G}{\AA\lr{t}}{\AA}$.
Then
$\sconst{\tilde{G}}{\AA\ltr{t}}{\sigma}=\sconst{G}{\AA\ltr{t}}{\sigma}=\{h\,t^m|h\in\sconst{G}{\AA}{\sigma}\text{ and }m\in\ZZ\}$.
\end{theorem}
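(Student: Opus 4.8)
The plan is to reduce everything to Proposition~\ref{Prop:PiPart2Weak}: the second equality $\sconst{G}{\AA\ltr{t}}{\sigma}=\{h\,t^m\mid h\in\sconst{G}{\AA}{\sigma},\ m\in\ZZ\}$ is exactly that proposition, so the only thing left to prove is the first equality $\sconst{\tilde{G}}{\AA\ltr{t}}{\sigma}=\sconst{G}{\AA\ltr{t}}{\sigma}$. Since $\tilde{G}=\dgroup{G}{\AA\lr{t}}{\AA}=\{h\,t^m\mid h\in G,\ m\in\ZZ\}$ contains $G$ (take $m=0$), the inclusion $\sconst{G}{\AA\ltr{t}}{\sigma}\subseteq\sconst{\tilde{G}}{\AA\ltr{t}}{\sigma}$ is immediate from the definition of semi-constants, so the work is concentrated in the reverse inclusion.

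For that, I would take $c\in\sconst{\tilde{G}}{\AA\ltr{t}}{\sigma}$; the case $c=0$ is trivial, so assume $c\neq0$ and write $\sigma(c)=u\,c$ with $u=h\,t^m$, $h\in G\leq\AA^*$ and $m\in\ZZ$. The key is a count of the $t$-degree. On one side, $\dfield{\AA\ltr{t}}{\sigma}$ is a unimonomial ring extension of Laurent polynomial type, so $\deg(\sigma(c))=\deg(c)$ by Lemma~\ref{Lemma:BasicsinDR}. On the other side, if $c=\sum_i c_i\,t^i$ has leading term $c_s\,t^s$ with $s=\deg(c)\in\ZZ$, then $h\,t^m\,c=\sum_i h\,c_i\,t^{i+m}$, and since $h$ is a unit we have $h\,c_s\neq0$, hence $\deg(h\,t^m\,c)=s+m$. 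Equating the two expressions for $\sigma(c)$ forces $\deg(c)=\deg(c)+m$, i.e.\ $m=0$. Therefore $u=h\in G$, so $c\in\sconst{G}{\AA\ltr{t}}{\sigma}$, which yields $\sconst{\tilde{G}}{\AA\ltr{t}}{\sigma}\subseteq\sconst{G}{\AA\ltr{t}}{\sigma}$. Together with Proposition~\ref{Prop:PiPart2Weak} this proves the full chain of equalities (and in particular $\sconst{\tilde{G}}{\AA\ltr{t}}{\sigma}\setminus\{0\}\leq\AA\ltr{t}^*$).

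There is no deep obstacle here; the one step to phrase carefully is the observation that multiplying a Laurent polynomial by the unit $h\in\AA^*$ preserves its leading coefficient, hence its $t$-degree, which is precisely what rigidifies the $t$-powers occurring in $\tilde{G}$ and forces $m=0$. If one prefers to avoid Lemma~\ref{Lemma:BasicsinDR}, the same conclusion follows by the coefficient comparison used in Lemma~\ref{LemmaC}: from $\sigma(c)=h\,t^m\,c$ one reads off $\sigma(c_j)\,\alpha^j=h\,c_{j-m}$ for every $j$, and evaluating this identity at the smallest nonzero index when $m>0$ (resp.\ the largest when $m<0$) forces that coefficient to vanish, a contradiction; so again $m=0$.
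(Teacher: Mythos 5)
Your proposal is correct and takes essentially the same route as the paper: reduce the statement to Proposition~\ref{Prop:PiPart2Weak}, note $G\leq\tilde{G}$ gives one inclusion, and show the exponent $m$ in $\sigma(g)=h\,t^m\,g$ must be zero. Your primary argument via $\deg(\sigma(g))=\deg(g)$ (Lemma~\ref{Lemma:BasicsinDR}) is just a compact repackaging of the paper's coefficient comparison at an extreme index, and the alternative you sketch at the end is exactly the paper's proof.
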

\begin{proof}
We show that
$\sconst{\tilde{G}}{\AA\ltr{t}}{\sigma}=\sconst{G}{\AA\ltr{t}}{\sigma}$. Then by
Proposition~\ref{Prop:PiPart2Weak} the theorem is proven. Since $G\leq \tilde{G}$,
the inclusion
$\sconst{\tilde{G}}{\AA\ltr{t}}{\sigma}\supseteq\sconst{G}{\AA\ltr{t}}{\sigma}$
is immediate. Now suppose that $g=\sum_i
g_i\,t^i\in\sconst{\tilde{G}}{\AA\ltr{t}}{\sigma}$. Hence there are an $m\in\ZZ$ and
an $h\in G$ with $\sigma(g)=h\,t^mg$. By coefficient comparison it follows
that
$\sigma(g_i)\alpha^i=h g_{i-m}.$
If $m\geq1$, take $s$ minimal such that $g_s\neq0$. Then $\sigma(g_s)\alpha^s\neq0$. But by the choice of $s$, we get $g_{s-m}=0$ and thus $h\,g_{s-m}=0$, a contradiction. Otherwise, if $m<0$, take $s$ maximal such that $g_{s-m}\neq0$. Then $h\,g_{s-m}\neq0$. But by the choice of $s$, we get $\sigma(g_s)\alpha^s=0$, again a contradiction. Thus $m=0$ and consequently, $g\in\sconst{G}{\AA\ltr{t}}{\sigma}$.
\end{proof}

\noindent We close this subsection with Theorem~\ref{Thm:PiEfullsconst}. It provides a description of $\gsconst{\AA\ltr{t}}{\sigma}$ under the assumption that $\AA$ is reduced and connected. This result is not applicable if general \rE-extensions pop up; see Example~\ref{Exp:NotConnected}. But, it will be used for further insights summarized in Corollary~\ref{Cor:pisiSCONST}.(2), Proposition~\ref{Prop:SimpleIsKarrVersion} and Corollary~\ref{Cor:SingleRootedIsSimple} below.

\begin{theorem}\label{Thm:PiEfullsconst}
Let $\dfield{\AA}{\sigma}$ be a difference ring being reduced and connected with  $\gsconst{\AA}{\sigma}\setminus\{0\}=\AA^*$. Let $\dfield{\AA\ltr{t}}{\sigma}$ be \piE-extension of $\dfield{\AA}{\sigma}$ with $\sigma(t)=\alpha\,t$ for some $\alpha\in\AA^*$.  Then
$\gsconst{\AA\ltr{t}}{\sigma}=\{h\,t^m|\,h\in\gsconst{\AA}{\sigma}\text{ and }m\in\ZZ\}.$
\end{theorem}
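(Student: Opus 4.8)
The plan is to reduce Theorem~\ref{Thm:PiEfullsconst} to the already-established Theorem~\ref{Thm:PiPart2Strong} by choosing the subgroup $G$ as large as possible. Since $\dfield{\AA}{\sigma}$ is reduced and connected and $\gsconst{\AA}{\sigma}\setminus\{0\}=\AA^*$, Lemma~\ref{Lemma:LaurentInv} applies and gives $\AA\ltr{t}^*=\{u\,t^r|\,u\in\AA^*\text{ and }r\in\ZZ\}$; in particular the \piE-monomial $t$ satisfies $\alpha=\sigma(t)/t\in\AA^*$, so we may take $G=\AA^*$. With this choice $\sconst{G}{\AA}{\sigma}=\sconst{\AA^*}{\AA}{\sigma}=\gsconst{\AA}{\sigma}$, and the hypothesis $\gsconst{\AA}{\sigma}\setminus\{0\}=\AA^*$ says precisely that $\sconst{G}{\AA}{\sigma}\setminus\{0\}=\AA^*=G\leq\AA^*$, so the standing assumption of Theorem~\ref{Thm:PiPart2Strong} is met and $\alpha\in G$ holds trivially.

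Next I would identify $\tilde G=\dgroup{G}{\AA\lr{t}}{\AA}$ with $\AA\ltr{t}^*$. By definition $\dgroup{(\AA^*)}{\AA\ltr{t}}{\AA}=\{h\,t^m|\,h\in\AA^*,\,m\in\ZZ\}$, and by Lemma~\ref{Lemma:LaurentInv} this set is exactly $\AA\ltr{t}^*$. Therefore $\gsconst{\AA\ltr{t}}{\sigma}=\sconst{\AA\ltr{t}^*}{\AA\ltr{t}}{\sigma}=\sconst{\tilde G}{\AA\ltr{t}}{\sigma}$. Now Theorem~\ref{Thm:PiPart2Strong}, applied with this $G=\AA^*$, yields
$$\sconst{\tilde G}{\AA\ltr{t}}{\sigma}=\{h\,t^m|\,h\in\sconst{G}{\AA}{\sigma}\text{ and }m\in\ZZ\}=\{h\,t^m|\,h\in\gsconst{\AA}{\sigma}\text{ and }m\in\ZZ\},$$
which is exactly the asserted description of $\gsconst{\AA\ltr{t}}{\sigma}$.

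The only point requiring care — and the step I expect to be the mild obstacle — is verifying that the hypotheses of Lemma~\ref{Lemma:LaurentInv} and of Theorem~\ref{Thm:PiPart2Strong} are genuinely in force here, namely that ``$\AA$ reduced and connected'' plus ``$\gsconst{\AA}{\sigma}\setminus\{0\}=\AA^*$'' is the same data needed to run both results, and that the $G=\AA^*$ chosen for the reduction coincides with the group implicitly used in the target formula. Both checks are routine once one unwinds the definitions: connectedness and reducedness are used solely to pin down $\AA\ltr{t}^*$ via Karpilovsky's theorem, and the semi-constant hypothesis is equivalent (by Lemma~\ref{Lemma:sconstGeneral}) to $\gsconst{\AA}{\sigma}\setminus\{0\}\leq\AA^*$, which is the precise input Theorem~\ref{Thm:PiPart2Strong} demands. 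Hence the proof is a short composition of these three results with no new computation.

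\begin{proof}
Since $\dfield{\AA}{\sigma}$ is reduced and connected, Lemma~\ref{Lemma:LaurentInv} gives
$\AA\ltr{t}^*=\{u\,t^r|\,u\in\AA^*\text{ and }r\in\ZZ\}$; in particular $\alpha=\sigma(t)/t\in\AA^*$. Put $G=\AA^*$. Then
$\sconst{G}{\AA}{\sigma}=\sconst{\AA^*}{\AA}{\sigma}=\gsconst{\AA}{\sigma}$, and by hypothesis
$\sconst{G}{\AA}{\sigma}\setminus\{0\}=\gsconst{\AA}{\sigma}\setminus\{0\}=\AA^*=G$, so in particular
$\sconst{G}{\AA}{\sigma}\setminus\{0\}\leq\AA^*$ and $\alpha\in G$.

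Let $\tilde{G}=\dgroup{G}{\AA\lr{t}}{\AA}=\{h\,t^m|\,h\in\AA^*\text{ and }m\in\ZZ\}$. By the description of $\AA\ltr{t}^*$ above we have $\tilde{G}=\AA\ltr{t}^*$, hence
$\gsconst{\AA\ltr{t}}{\sigma}=\sconst{\AA\ltr{t}^*}{\AA\ltr{t}}{\sigma}=\sconst{\tilde{G}}{\AA\ltr{t}}{\sigma}$.
Applying Theorem~\ref{Thm:PiPart2Strong} with this $G=\AA^*$ yields
\begin{equation*}
\sconst{\tilde{G}}{\AA\ltr{t}}{\sigma}=\{h\,t^m|\,h\in\sconst{G}{\AA}{\sigma}\text{ and }m\in\ZZ\}=\{h\,t^m|\,h\in\gsconst{\AA}{\sigma}\text{ and }m\in\ZZ\}.
\end{equation*}
Combining the last two displays gives
$\gsconst{\AA\ltr{t}}{\sigma}=\{h\,t^m|\,h\in\gsconst{\AA}{\sigma}\text{ and }m\in\ZZ\}$, as claimed.
\end{proof}
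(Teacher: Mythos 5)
Your proposal is correct and follows essentially the same route as the paper: identify $\tilde G=\dgroup{(\AA^*)}{\AA\lr{t}}{\AA}$ with $\AA\ltr{t}^*$ via Lemma~\ref{Lemma:LaurentInv} (using that $\AA$ is reduced and connected) and then invoke Theorem~\ref{Thm:PiPart2Strong} with $G=\AA^*$. Your additional verification of the hypotheses of Theorem~\ref{Thm:PiPart2Strong} is accurate, just spelled out more explicitly than in the paper's one-line argument.
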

\begin{proof}
Take $\tilde{G}=\dgroup{(\AA^*)}{\AA\lr{t}}{\AA}$. Then
$\tilde{G}=\AA\ltr{t}^*$ by Lemma~\ref{Lemma:LaurentInv}. Thus
$\gsconst{\AA\ltr{t}}{\sigma}=\sconst{\AA[t,1/t]^*}{\AA\ltr{t}}{\sigma}=\sconst
{\tilde{G}}{\AA\ltr{t}}{\sigma}\stackrel{\text{Thm.~\ref{Thm:PiPart2Strong}}}
=\{h\,t^m|\,h\in\gsconst{\AA}{\sigma}\text{ and }m\in\ZZ\}.$
\end{proof}

\subsection{\rE-extensions}

We start with the proof of the characterization theorem of \rE-extensions.

\ExternalProof{(Theorem~\ref{Thm:RPSCharacterization}.(3)\label{Thm:RChar})}{``$\Leftarrow$'': Let $m\in\{1,\dots,\lambda-1\}$ and $g\in\AA\setminus\{0\}$ with  $\sigma(g)=\alpha^m\,g$. Since $\sigma(t^m)=\alpha^m\,t^m$, it follows that
$\sigma(g\,t^{\lambda-m})=g\,t^{\lambda-m}$, i.e., $g\,t^{\lambda-m}\in\const{\AA[t]}{\sigma}$. Clearly $g\,t^{\lambda-m}\notin\AA$ which implies that $g\,t^{\lambda-m}\notin\const{\AA}{\sigma}$.\\
``$\Rightarrow$'': Let $g=\sum_{i=0}^{\lambda-1}g_i\,t^i\in\AA[t]\setminus\AA$ with $\sigma(g)=g$. Thus $g_r\neq0$ for some $r\in\{1,\dots,\lambda-1\}$. By coefficient comparison we get $\sigma(g_r)=\alpha^{\lambda-r}g_r$ with $\lambda-r\in\{1,\dots,\lambda-1\}$.\\
Let $t$ be an \rE-monomial and let $m:=\ord(\alpha)<\lambda$. Then with $g=1\in\AA\setminus\{0\}$ we have that
$\sigma(g)=1=\alpha^m\,1=\alpha^m\,g$. A contradiction to the first statement.
}

\noindent Finally, we work out properties for the set of semi-constants. Since the proof of the following theorem is completely analogous to the proof of Proposition~\ref{Prop:PiPart2Weak}, it is skipped.

\begin{proposition}\label{Prop:RPart2Weak}
Let $\dfield{\AA}{\sigma}$ be a difference ring with $G\leq\AA^*$ and 
$\sconst{G}{\AA}{\sigma}\setminus\{0\}\leq\AA^*$. Let $\dfield{\AA[x]}{\sigma}$
be an \rE-extension of $\dfield{\AA}{\sigma}$ with $\alpha=\frac{\sigma(x)}{x}\in G$ and $\lambda:=\ord(x)=\ord(\alpha)>1$. Then
$\sconst{G}{\AA[x]}{\sigma}=\{h\,x^m|h\in\sconst{G}{\AA}{\sigma},\, 0\leq m<\lambda\}$
and $\sconst{G}{\AA[x]}{\sigma}\setminus\{0\}\leq\AA[x]^*$.
\end{proposition}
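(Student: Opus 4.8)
The plan is to mirror the proof of Proposition~\ref{Prop:PiPart2Weak} almost verbatim, replacing the Laurent-polynomial structure by the truncated polynomial structure $\AA[x]=\AA\lr{x}_{0,\lambda-1}$ coming from the relation $x^\lambda=1$, and replacing the appeal to Theorem~\ref{Thm:RPSCharacterization}.(2) (no solutions of $\sigma(g)=\alpha^m g$ for $m\in\ZZ\setminus\{0\}$) by the appeal to Theorem~\ref{Thm:RPSCharacterization}.(3) (no solutions of $\sigma(g)=\alpha^m g$ in $\AA\setminus\{0\}$ for $m\in\{1,\dots,\lambda-1\}$). First I would prove the inclusion ``$\subseteq$''. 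Let $g=\sum_{i=0}^{\lambda-1} g_i x^i\in\sconst{G}{\AA[x]}{\sigma}$, so $\sigma(g)=u\,g$ for some $u\in G$. Comparing coefficients of $x^i$ (using $\sigma(x^i)=\alpha^i x^i$ and that $1,x,\dots,x^{\lambda-1}$ is an $\AA$-basis of $\AA[x]$) gives $\sigma(g_i)\alpha^i=u\,g_i$, i.e.\ $\sigma(g_i)=\tfrac{u}{\alpha^i}g_i$ for each $i$. The key step is then to show at most one $g_i$ is nonzero: suppose $g_r\ne0\ne g_s$ with $0\le r<s\le\lambda-1$. Since $\tfrac{u}{\alpha^s}\in G$ and $\sigma(g_s)=\tfrac{u}{\alpha^s}g_s$ with $g_s\ne0$, we get $g_s\in\sconst{G}{\AA}{\sigma}\setminus\{0\}\le\AA^*$, so $g_s$ is a unit; dividing the two relations yields $\sigma(g_r/g_s)=\alpha^{s-r}(g_r/g_s)$ with $g_r/g_s\in\AA\setminus\{0\}$ and $s-r\in\{1,\dots,\lambda-1\}$, contradicting Theorem~\ref{Thm:RPSCharacterization}.(3). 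Hence $g=h\,x^m$ with $h=g_m\in\sconst{G}{\AA}{\sigma}$ and $0\le m<\lambda$.

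For the inclusion ``$\supseteq$'' I would take $g=h\,x^m$ with $h\in\sconst{G}{\AA}{\sigma}$, $0\le m<\lambda$, pick $u\in G$ with $\sigma(h)=u\,h$, and compute $\sigma(g)=\sigma(h)\,\alpha^m x^m=(u\,\alpha^m)\,h\,x^m=(u\,\alpha^m)\,g$ with $u\,\alpha^m\in G$ (here $\alpha\in G$ is used). Thus $g\in\sconst{G}{\AA[x]}{\sigma}$, establishing the reverse inclusion and hence the claimed equality. Finally, from the explicit description it follows that every nonzero element of $\sconst{G}{\AA[x]}{\sigma}$ is of the form $h\,x^m$ with $h\in\sconst{G}{\AA}{\sigma}\setminus\{0\}\le\AA^*$, hence a product of the unit $h$ and the unit $x^m$ (note $x\in\AA[x]^*$ since $x\cdot x^{\lambda-1}=1$), so it lies in $\AA[x]^*$; and since the set of such elements is visibly closed under multiplication and inverses (with $(h\,x^m)^{-1}=h^{-1}x^{\lambda-m}$ or $h^{-1}$ according to whether $m>0$), we obtain $\sconst{G}{\AA[x]}{\sigma}\setminus\{0\}\le\AA[x]^*$.

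The only place requiring a little care—the ``main obstacle'', though it is minor—is that in the \rE-setting the exponents live in $\{0,\dots,\lambda-1\}$ and the relation $x^\lambda=1$ means one cannot simply transplant the Laurent-polynomial argument's use of arbitrarily large degree differences; one must check that the difference $s-r$ of two distinct exponents indeed lands in $\{1,\dots,\lambda-1\}$, which is exactly the range covered by Theorem~\ref{Thm:RPSCharacterization}.(3). Since $0\le r<s\le\lambda-1$ forces $1\le s-r\le\lambda-1$, this is automatic, and no separate case analysis (unlike the $m\ge1$ versus $m<0$ split in the $\Pi$-proof of Theorem~\ref{Thm:PiPart2Strong}) is needed here. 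Everything else is routine coefficient comparison, so the proof is genuinely short; this is presumably why the authors chose to skip it.
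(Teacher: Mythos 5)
Your proof is correct and is precisely the argument the paper intends: it declares the proof ``completely analogous to the proof of Proposition~\ref{Prop:PiPart2Weak}'' and skips it, and your write-up carries out exactly that transfer (coefficient comparison in the basis $1,x,\dots,x^{\lambda-1}$, exclusion of two nonzero coefficients via Theorem~\ref{Thm:RPSCharacterization}.(3) with $s-r\in\{1,\dots,\lambda-1\}$, and the direct verification of ``$\supseteq$'' and of the group property).
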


As in Theorem~\ref{Thm:PiPart2Strong} we will lift this result from the group $G\leq\AA^*$ to 
$$\tilde{G}=\dgroup{G}{\AA[x]}{\AA}=\{h\,x^m|\,h\in G\text{ and }m\in\{0,\dots,\lambda-1\}\}\leq\AA[x]^*\}.$$
We remark that there is the following subtlety. We have to assume that $\AA[x]$ is reduced in order to prove the result below. In order to take care of this extra property, further investigations will be necessary in Subsection~\ref{Subsec:NestedRExt}.

\begin{theorem}\label{Thm:RPart2Strong}
Let
$\dfield{\AA[x]}{\sigma}$ be an \rE-extension of $\dfield{\AA}{\sigma}$ and 
let $G\leq\AA^*$
with $\sconst{G}{\AA}{\sigma}\setminus\{0\}\leq\AA^*$. If $\AA[x]$ is reduced,
then $\sconst{\tilde{G}}{\AA[x]}{\sigma}\setminus\{0\}\leq\AA[x]^*$ for $\tilde{G}=\dgroup{G}{\AA[x]}{\AA}.$
\end{theorem}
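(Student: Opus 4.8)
The plan is to reduce the claim to Proposition~\ref{Prop:RPart2Weak} by identifying, under the hypothesis that $\AA[x]$ is reduced, the group $\tilde{G}=\dgroup{G}{\AA[x]}{\AA}$ with a group that Proposition~\ref{Prop:RPart2Weak} already handles. First I would recall that Proposition~\ref{Prop:RPart2Weak} gives $\sconst{G}{\AA[x]}{\sigma}=\{h\,x^m\mid h\in\sconst{G}{\AA}{\sigma},\ 0\le m<\lambda\}$ together with $\sconst{G}{\AA[x]}{\sigma}\setminus\{0\}\le\AA[x]^*$. So it suffices to show $\sconst{\tilde{G}}{\AA[x]}{\sigma}=\sconst{G}{\AA[x]}{\sigma}$; the inclusion $\supseteq$ is immediate from $G\le\tilde{G}$, and once equality is established the conclusion $\sconst{\tilde{G}}{\AA[x]}{\sigma}\setminus\{0\}\le\AA[x]^*$ follows at once from the corresponding statement in Proposition~\ref{Prop:RPart2Weak}.

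For the nontrivial inclusion $\subseteq$, take $g\in\sconst{\tilde{G}}{\AA[x]}{\sigma}$, so there are $h\in G$ and $m\in\{0,\dots,\lambda-1\}$ with $\sigma(g)=h\,x^m\,g$. Writing $g=\sum_{i=0}^{\lambda-1}g_i\,x^i$ and using $\sigma(x)=\alpha x$ with $x^{\lambda}=1$, coefficient comparison in $\sigma(g)=h\,x^m\,g$ gives a relation of the shape $\sigma(g_i)\,\alpha^i=h\,g_{i-m\bmod\lambda}$ for each $i$ (indices read cyclically modulo $\lambda$). The goal is to show that this forces $g$ to have a single nonzero coefficient, i.e. $g=g_j\,x^j$ for some $j$; then $\sigma(g_j)=\frac{h}{\alpha^j}\alpha^{\text{(correction)}}\,g_j$ exhibits $g_j\in\sconst{G}{\AA}{\sigma}$ after absorbing the unit $\alpha$ (which lies in $G$), whence $g=g_j\,x^j\in\sconst{G}{\AA[x]}{\sigma}$ as required. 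This is the analogue of the cyclic shift argument used in the proof of Theorem~\ref{Thm:PiPart2Strong}, but here the index set is $\ZZ/\lambda\ZZ$ rather than $\ZZ$, so the "pick the extreme nonzero coefficient" trick must be replaced by a cardinality/orbit argument on $\ZZ/\lambda\ZZ$.

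I expect the main obstacle to be exactly this last point, and it is where the \emph{reducedness} of $\AA[x]$ enters. In the Laurent case the index support is a finite subset of $\ZZ$ with a genuine minimum and maximum, which immediately yields the contradiction; over $\ZZ/\lambda\ZZ$ the multiplication-by-$x^m$ map permutes the residues in cycles, and one must argue that a nonempty support closed (up to the unit factors $h,\alpha$) under this permutation cannot be a proper nonempty subset unless $m=0$. The clean way is: the relations $\sigma(g_i)\alpha^i=h\,g_{i-m}$ show that $g_i\ne0$ iff $g_{i-m}\ne0$, so the support of $g$ is a union of orbits of the shift-by-$m$ action on $\ZZ/\lambda\ZZ$; if $m\neq0$ such an orbit has size $\lambda/\gcd(\lambda,m)>1$, and along that orbit each nonzero $g_i$ satisfies $\sigma(g_i)=\frac{h}{\alpha^i}g_i$ with $\frac{h}{\alpha^i}\in G$, so $g_i\in\sconst{G}{\AA}{\sigma}\setminus\{0\}\le\AA^*$, whence $g_i\in\AA^*$; but then, comparing two coefficients $g_i,g_j$ in the same orbit one derives $\sigma(g_i/g_j)=\alpha^{j-i}\,(g_i/g_j)$ with $g_i/g_j\in\AA^*$, contradicting Theorem~\ref{Thm:RPSCharacterization}.(3) since $0<j-i\bmod\lambda<\lambda$. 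Hence $m=0$ and $g\in\sconst{G}{\AA[x]}{\sigma}$. The role of reducedness is to guarantee $\AA[x]^*=\AA^*$ (Lemma~\ref{Lemma:PolyInv}), so that the invertible coefficients $g_i$ actually lie in $\AA$ and the quotient $g_i/g_j$ makes sense inside $\AA$; this is precisely the subtlety flagged in the remark preceding the theorem, and it is why the extra hypothesis is needed.
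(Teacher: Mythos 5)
Your overall strategy cannot work, because the intermediate statement you reduce everything to is false. You want to show $\sconst{\tilde{G}}{\AA[x]}{\sigma}=\sconst{G}{\AA[x]}{\sigma}$ (equivalently, that $\sigma(g)=h\,x^m\,g$ with $h\in G$ forces $m=0$, so that $g$ is a monomial $h'x^j$) and then invoke Proposition~\ref{Prop:RPart2Weak}. In contrast to the \piE-case (Theorem~\ref{Thm:PiPart2Strong}), this equality fails for \rE-extensions. Take the paper's Example~\ref{Exp:Qkxy} and let the adjoined \rE-monomial of the theorem be $y$: so $\AA=\QQ(k)[x]$ with $\sigma(x)=-x$, and $\dfield{\AA[y]}{\sigma}$ with $\sigma(y)=-x\,y$, $y^2=1$; take $G=\dgroup{(\QQ(k)^*)}{\AA}{\QQ(k)}$, so that even $\sigma(y)/y=-x\in G$. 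Then $g=(1-x)+(1+x)\,y$ satisfies $\sigma(g)=y\,g$, hence $g\in\sconst{\tilde{G}}{\AA[y]}{\sigma}\setminus\{0\}$ with exponent $m=1$; but $g$ is not of the form $h\,y^j$ and does not lie in $\sconst{G}{\AA[y]}{\sigma}$ (indeed $g^2=4$, so $g$ is a unit, and $\sigma(g)=u'g$ with $u'\in\AA^*$ would force $y=u'\in\AA$). This is exactly why the paper proves only invertibility, and does so differently: it passes to $g^{\ord(x)}$, which kills the factor $x^m$ and gives $\sigma(g^{\ord(x)})=u^{\ord(x)}g^{\ord(x)}$, and then distinguishes whether $g^{\ord(x)}$ lies in $\AA$ or not, in the latter case deriving a contradiction via a \piE-extension with $\sigma(t)=u^{\ord(x)}t$. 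Reducedness enters there to guarantee that powers such as $g^{\ord(x)}$ (and the constants produced from them) are nonzero — not, as you suggest, primarily to get $\AA[x]^*=\AA^*$.

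Even setting this aside, two steps of your argument fail. First, Proposition~\ref{Prop:RPart2Weak} assumes $\sigma(x)/x\in G$, which Theorem~\ref{Thm:RPart2Strong} does not; in its later use (Corollary~\ref{Cor:SConstRExtension}, for arbitrary, not necessarily simple, nested \rE-extensions of a field) the quotient $\sigma(x)/x$ need not lie in the chosen group, so your reduction would at best cover a special case. Second, the orbit argument misuses your own coefficient relation: from $\sigma(g_i)\,\alpha^i=h\,g_{i-m}$ you cannot conclude $\sigma(g_i)=\frac{h}{\alpha^i}\,g_i$ unless $m=0$, which is precisely what you are trying to prove. The relation links $g_i$ to $g_{i-m}$, so neither $g_i\in\sconst{G}{\AA}{\sigma}$ nor the quotient identity $\sigma(g_i/g_j)=\alpha^{j-i}(g_i/g_j)$ follows; composing around an orbit of length $d$ only yields a $\sigma^d$-semi-invariance whose factor is a product of $\sigma$-shifts of $h$ and powers of $\alpha$, and the hypotheses (stated for $\sigma$, with $G$ not assumed closed under $\sigma$) give no control over that — as the counterexample above confirms, since there the two coefficients $1\pm x$ are zero divisors, not units.
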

\begin{proof}
Let $\alpha:=\frac{\sigma(x)}{x}\in\AA^*$ and $n=\ord(\alpha)=\ord(x)$.
Let $g\in\sconst{\tilde{G}}{\AA[x]}{\sigma}\setminus\{0\}$, i.e.,
$\sigma(g)=u\,x^m\,g$ with $u\in G$ and $0\leq m<n$. Since $x^{m\,n}=1$, 
$\sigma(g^n)=u^n\,g^n$
with $u^n\in G$.\\
First suppose that $v:=g^n\in\AA$. Since $\AA[x]$ is reduced, 
$v\neq0$ and thus
$v\in\sconst{G}{\AA}{\sigma}\setminus\{0\}\leq\AA^*$, i.e., $g\,(g^{n-1}/v)=1$.
Hence $g$ is invertible, i.e., $g\in\AA[x]^*$.\\
Otherwise, suppose that $v:=g^n\notin\AA$. Define $a:=u^n\in G$. We consider
two sub-cases. Suppose that there are a $k>0$ and a $w\in\AA\setminus\{0\}$ with
$\sigma(w)=a^k\,w$. Then $w\in\sconst{G}{\AA}{\sigma}\setminus\{0\}\leq\AA^*$.
Hence
$\sigma((g^n)^k/w)=(g^n)^k/w$, i.e, $c:=(g^n)^k/w\in\KK$, and since $\AA[x]$ is reduced, $c\neq0$. Thus (as above) $g\,(g^{k\,n-1}/w/c)=1$ and therefore $g\in\KK[x]^*$.\\
Finally, suppose that there are no $k>0$ and $w\in\AA\setminus\{0\}$ with
$\sigma(w)=a^k\,g$. Hence by Theorem~\ref{Thm:PiCharStrong} there is the
\piE-extension $\dfield{\AA\ltr{t}}{\sigma}$ of $\dfield{\AA}{\sigma}$ with
$\sigma(t)=a\,t$ ($a\in G\leq\AA^*$). Let
$v=g^n=\sum_{i=0}^{n-1}v_i\,x^i\in\AA[x]\setminus\AA$. Then $\sigma(v)=a\,v$ and
thus by coefficient comparison it follows that
$\sigma(v_i)=a\,\alpha^{n-i}\,v_i$ for some $v_i\in\AA\setminus\{0\}$ with
$1\leq i<n$. Hence $\sigma(\frac{v_i}{t})=\alpha^{n-i}\frac{v_i}{t}$, and thus
$\sigma(\frac{v_i^n}{t^n})=\frac{v_i^n}{t^n}$. Since $\AA$ is reduced, we have $v_i^n\neq0$, and consequently $\frac{v_i^n}{t^n}\in\const{\AA\ltr{t}}{\sigma}\setminus\AA$, a contradiction that $t$ is a \piE-monomial. Thus this case can be excluded. Summarizing, any element in $\sconst{\tilde{G}}{\AA[x]}{\sigma}\setminus\{0\}$ is from $\AA[x]^*$. 
\end{proof}

\section{Nested \rpisiSE-extensions and simple \rpisiSE-extensions}\label{Sec:NestedExt}

We explore the set of semi-constants. First we deal with nested \rE-extensions in Subsection~\ref{Subsec:NestedRExt} and with nested \pisiSE-extensions in Subsection~\ref{Subsec:NestedPiSiExt}. Finally, we obtain Theorems~\ref{Thm:sconstIsGroupRing} and~\ref{Thm:sconstIsGroupField} for nested \rpisiSE-extensions in Subsection~\ref{Subsec:NestedRpisiExt}. In addition, we work out further structural properties of (simple) \rpisiSE-extensions. 

\subsection{Nested \rE-extensions}\label{Subsec:NestedRExt}

We derive a first result of the semi-constants by applying iteratively Proposition~\ref{Prop:RPart2Weak}.

\begin{proposition}\label{Prop:RExtSConstSpecial}
Let $\dfield{\AA}{\sigma}$ be a difference ring with $G\leq\AA^*$ and $\sconst{G}{\AA}{\sigma}\setminus\{0\}\leq\AA^*$.
Let $\dfield{\EE}{\sigma}$ with $\EE=\AA\lr{x_1}\dots\lr{x_e}$ be an \rE-extension of $\dfield{\AA}{\sigma}$ with $\frac{\sigma(x_i)}{x_i}\in G$ and $n_i=\ord(x_i)$. Then 
$\sconst{G}{\EE}{\sigma}=\{h\,x_1^{m_1}\dots x_1^{m_e}|\,h\in\sconst{G}{\AA}{\sigma}\text{ and }0\leq m_i<n_i\text{ for }1\leq i\leq e\}$
and $\sconst{G}{\EE}{\sigma}\setminus\{0\}\leq\EE^*$.
\end{proposition}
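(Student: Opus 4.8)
The statement to prove is Proposition~\ref{Prop:RExtSConstSpecial}: for a tower of \rE-extensions $\EE = \AA\lr{x_1}\dots\lr{x_e}$ with $\sigma(x_i)/x_i \in G$ and $n_i = \ord(x_i)$, we have $\sconst{G}{\EE}{\sigma} = \{h\,x_1^{m_1}\dots x_e^{m_e} \mid h\in\sconst{G}{\AA}{\sigma}, 0\le m_i < n_i\}$ and $\sconst{G}{\EE}{\sigma}\setminus\{0\}\le\EE^*$.

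The natural approach is induction on $e$. The base case $e=0$ is trivial. For the inductive step, I want to apply Proposition~\ref{Prop:RPart2Weak} to the top extension $\dfield{\EE'[x_e]}{\sigma}$ where $\EE' = \AA\lr{x_1}\dots\lr{x_{e-1}}$. But to do so I need the hypothesis of that proposition: I need a subgroup of $(\EE')^*$ — and it must be $G$ itself, since the proposition is stated for the same group $G$ on both sides — with $\sconst{G}{\EE'}{\sigma}\setminus\{0\}\le(\EE')^*$. The subtlety, already flagged in the remark before Theorem~\ref{Thm:RPart2Strong}, is that Proposition~\ref{Prop:RPart2Weak} requires $\alpha = \sigma(x_e)/x_e \in G$ and $\ord(x_e) = \ord(\alpha) > 1$; the hypothesis $\sigma(x_i)/x_i\in G$ is given, and $\ord(x_i) = \ord(\sigma(x_i)/x_i) > 1$ holds because each $\dfield{\AA\lr{x_1}\dots\lr{x_i}}{\sigma}$ is an \rE-extension (so by Theorem~\ref{Thm:RPSCharacterization}.(3) the element $\sigma(x_i)/x_i$ is primitive of order $n_i$, and $n_i>1$ by Definition~\ref{Def:RSingle}).

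So the plan is: by induction hypothesis applied to $\dfield{\AA}{\sigma}\le\dfield{\EE'}{\sigma}$ (the tower of the first $e-1$ \rE-monomials), we get $\sconst{G}{\EE'}{\sigma} = \{h\,x_1^{m_1}\dots x_{e-1}^{m_{e-1}} \mid h\in\sconst{G}{\AA}{\sigma}, 0\le m_i<n_i\}$ together with $\sconst{G}{\EE'}{\sigma}\setminus\{0\}\le(\EE')^*$. Now $\dfield{\EE'[x_e]}{\sigma}$ is by assumption an \rE-extension of $\dfield{\EE'}{\sigma}$ with $\alpha := \sigma(x_e)/x_e \in G \le (\EE')^*$ and $\lambda := \ord(x_e) = \ord(\alpha) = n_e > 1$ (using Theorem~\ref{Thm:RPSCharacterization}.(3) for $\ord(\alpha)=\lambda$, and $\lambda>1$ from Definition~\ref{Def:RSingle}). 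Since $\sconst{G}{\EE'}{\sigma}\setminus\{0\}\le(\EE')^*$, the hypotheses of Proposition~\ref{Prop:RPart2Weak} are met with $\dfield{\AA}{\sigma}$ there replaced by $\dfield{\EE'}{\sigma}$. It yields
\[
\sconst{G}{\EE'[x_e]}{\sigma} = \{h'\,x_e^{m}\mid h'\in\sconst{G}{\EE'}{\sigma},\ 0\le m<n_e\}
\]
and $\sconst{G}{\EE'[x_e]}{\sigma}\setminus\{0\}\le\EE'[x_e]^* = \EE^*$. Substituting the inductive description of $\sconst{G}{\EE'}{\sigma}$ into this — every $h'$ is of the form $h\,x_1^{m_1}\dots x_{e-1}^{m_{e-1}}$ — gives exactly the claimed description of $\sconst{G}{\EE}{\sigma}$, since $\EE = \EE'[x_e] = \EE'\lr{x_e}$ and the exponent ranges combine correctly. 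This completes the induction.

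The main thing to get right — and really the only non-routine point — is the bookkeeping that makes Proposition~\ref{Prop:RPart2Weak} applicable at each stage: namely checking $\sigma(x_e)/x_e\in G$ (given), that its order equals $\ord(x_e)=n_e>1$ (from Theorem~\ref{Thm:RPSCharacterization}.(3) and Definition~\ref{Def:RSingle}), and — crucially — that the inductively obtained semi-constant set $\sconst{G}{\EE'}{\sigma}\setminus\{0\}$ is again contained in $(\EE')^*$, which is precisely the second conclusion of the proposition propagated through the induction. No genuine obstacle arises; the statement is essentially an iterated application of the single-step result, and the induction closes cleanly because Proposition~\ref{Prop:RPart2Weak} delivers both halves (the explicit form and the group-in-units property) needed to feed the next step.
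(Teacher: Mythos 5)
Your proof is correct and follows exactly the route the paper takes: the paper's proof of this proposition is precisely the iterated application of Proposition~\ref{Prop:RPart2Weak}, with the induction carrying along both the explicit description of $\sconst{G}{\EE'}{\sigma}$ and the units property needed to re-enter the single-step result. Your careful verification of the hypotheses (in particular $\ord(\alpha)=\ord(x_e)=n_e>1$ via Theorem~\ref{Thm:RPSCharacterization}.(3) and Definition~\ref{Def:RSingle}) fills in details the paper leaves implicit, but the argument is the same.
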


In order to treat nested \rE-extensions, we proceed as follows. Let $\dfield{\AA\lr{x_1}\dots\lr{x_e}}{\sigma}$ be an \rE-extension of
$\dfield{\AA}{\sigma}$ with $\lambda_i=\ord(x_i)$ and
$\sigma(x_i)=\alpha_i\,x_i$. Moreover, take the polynomial ring
$R=\AA[y_1,\dots,y_e]$ and define
$\alpha'_i=\alpha|_{x_1\to y_1,\dots,x_{i-1}\to y_{i-1}}.$
Then we obtain the automorphism $\fct{\sigma'}{R}{R}$ by
$\sigma'|_{\AA}=\sigma$ and $\sigma(y_i)=\alpha_i y_i$, i.e.,
$\dfield{R}{\sigma'}$ is a difference ring extension of $\dfield{\AA}{\sigma}$.
Thus by iterative application of the construction used for
Lemma~\ref{Lemma:RExistence} it follows that $\AA\lr{x_1}\dots\lr{x_e}$ is
isomorphic to $R/I$ where $I$ is the ideal 
\begin{equation}\label{Equ:RExtIdeal}
I=\langle y_1^{\lambda_1}-1,\dots,y_e^{\lambda_e}-1\rangle
\end{equation}
in $R$. In particular, we obtain the automorphism $\fct{\sigma''}{R/I}{R/I}$ defined by $\sigma''(f+I)=\sigma'(f)+I$ and it follows that the difference ring
 $\dfield{\AA\lr{x_1}\dots\lr{x_e}}{\sigma}$
is isomorphic to $\dfield{R/I}{\sigma''}$; here $f\in\AA\lr{x_1}\dots\lr{x_e}$ is mapped to $f'+I$ 
with $f'=f|_{x_1\to y_1,\dots,x_e\to y_e}$.

Take $G=\dgroup{(\FF^*)}{\EE}{\FF}\leq\EE^*$. In order to show that $\sconst{G}{\EE}{\sigma}\setminus\{0\}\leq\EE^*$ holds as claimed in Corollary~\ref{Cor:SConstRExtension} below, we use Gr\"obner bases theory.

\begin{lemma}\label{Lemma:RadicalIdeal}
Let $\lambda_i\in\NN\setminus\{0\}$. Then the zero-dimensional ideal $I$ given in~\eqref{Equ:RExtIdeal} in the polynomial ring $R=\FF[y_1,\dots,y_e]$ is radical. 
\end{lemma}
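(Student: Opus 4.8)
The plan is to show that $R = \FF[y_1,\dots,y_e]$ modulo the ideal $I=\langle y_1^{\lambda_1}-1,\dots,y_e^{\lambda_e}-1\rangle$ is a reduced ring, which is equivalent to $I$ being radical. Since $I$ is visibly a product ideal — the generators involve disjoint sets of variables — I would first reduce the multivariate statement to the univariate one: $R/I \cong \bigotimes_{i=1}^e \FF[y_i]/\langle y_i^{\lambda_i}-1\rangle$, a tensor product of $\FF$-algebras. So the core case is $e=1$: the single-variable polynomial $y^\lambda - 1 \in \FF[y]$ generates a radical ideal, equivalently $y^\lambda-1$ is squarefree over $\FF$. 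This holds because $\gcd(y^\lambda-1,\tfrac{d}{dy}(y^\lambda-1)) = \gcd(y^\lambda-1,\lambda y^{\lambda-1})=1$ in characteristic $0$ (here $\lambda\neq 0$ in $\FF$, since $\FF$ contains $\QQ$), so $y^\lambda-1$ has no repeated factors, hence $\langle y^\lambda-1\rangle$ is radical.

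Next I would lift from the univariate case to the multivariate one. The cleanest route is to argue directly that $I$ is radical using a Gröbner-basis / normal-form argument, which also matches the lemma's stated context. Fix a monomial order (any will do, since the given generators already form a Gröbner basis: their leading monomials $y_i^{\lambda_i}$ are pairwise coprime, so all S-polynomials reduce to zero by Buchberger's first criterion). Then the normal forms modulo $I$ are exactly the $\FF$-linear combinations of the finitely many monomials $y_1^{a_1}\cdots y_e^{a_e}$ with $0\le a_i<\lambda_i$, so $R/I$ is a finite-dimensional $\FF$-algebra of dimension $\prod_i \lambda_i$; in particular $I$ is zero-dimensional. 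To see it is radical, suppose $f^n\in I$ for some $n>0$; I want $f\in I$. Pass to an algebraic closure $\bar\FF$: since $I\bar\FF[y_1,\dots,y_e]$ still has the same $\prod_i\lambda_i$ many standard monomials as a basis, and the variety $V(I)\subseteq\bar\FF^e$ consists of the tuples $(\zeta_1,\dots,\zeta_e)$ of roots of unity with $\zeta_i^{\lambda_i}=1$ — of which there are exactly $\prod_i\lambda_i$ distinct points because each $y_i^{\lambda_i}-1$ is squarefree — the dimension equals the number of points. A zero-dimensional ideal whose quotient dimension equals the cardinality of its variety is radical (it is the intersection of the distinct maximal ideals of those points). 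Hence $I\bar\FF[\underline y]$ is radical, so $f$ vanishes on $V(I)$ and lies in $I\bar\FF[\underline y]$; intersecting back with $\FF[\underline y]$ (faithful flatness of $\bar\FF$ over $\FF$) gives $f\in I$.

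Alternatively, and perhaps more in keeping with an elementary write-up, I would avoid the closure entirely: factor $y_i^{\lambda_i}-1 = \prod_j p_{ij}(y_i)$ into distinct monic irreducibles over $\FF$ (distinct because squarefree), and then by CRT in each variable separately, $R/I \cong \bigotimes_i \big(\prod_j \FF[y_i]/\langle p_{ij}\rangle\big) \cong \prod_{(j_1,\dots,j_e)} \big(\FF[y_1]/\langle p_{1j_1}\rangle\big)\otimes_\FF\cdots\otimes_\FF\big(\FF[y_e]/\langle p_{ej_e}\rangle\big)$. Each factor is a tensor product over $\FF$ (a field of characteristic $0$, hence perfect) of finite separable field extensions of $\FF$, and such a tensor product is a finite product of fields, in particular reduced. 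A finite product of reduced rings is reduced, so $R/I$ is reduced, i.e.\ $I$ is radical.

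I expect the main obstacle to be purely expository: deciding how much of the tensor-product / CRT bookkeeping to spell out versus invoking "$\FF$ perfect $\Rightarrow$ separable algebras are reduced." The one genuinely load-bearing fact is squarefreeness of $y^\lambda-1$ in characteristic $0$ (equivalently $\lambda\in\FF^*$), which fails in characteristic dividing $\lambda$; since the paper globally assumes characteristic $0$ and $\QQ\subseteq\FF$, this is free. Everything else is standard commutative algebra, so the write-up should be short.
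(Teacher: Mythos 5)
Your proposal is correct, and both of your routes would work. The paper's own proof checks exactly the same load-bearing fact that you isolate --- $\gcd\bigl(y_i^{\lambda_i}-1,\ \lambda_i y_i^{\lambda_i-1}\bigr)=1$ in characteristic $0$, i.e.\ a squarefree univariate polynomial in each variable lies in $I$ --- but then concludes in one line by citing Seidenberg's criterion for zero-dimensional ideals over a perfect field (Becker--Weispfenning, Thm.~8.22), of which the lemma is a direct instance. What you do differently is replace that citation by a self-contained argument: either the Gr\"obner-basis/point-counting proof over $\bar\FF$ (the generators are already a Gr\"obner basis since their leading monomials are coprime, $\dim_{\FF} R/I=\prod_i\lambda_i$ equals the number of points of $V(I)$ over $\bar\FF$, hence $I\bar\FF[y_1,\dots,y_e]$ is radical, and contraction back to $\FF$ is harmless because the standard monomials are a basis over both fields), or the CRT/tensor decomposition into \'etale $\FF$-algebras, which are reduced since $\FF$ is perfect. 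Both are standard and sound; what the paper's approach buys is brevity via a textbook black box, while yours buys independence from that reference at the cost of some commutative-algebra bookkeeping --- in effect you re-prove the special case of Seidenberg's criterion that is needed. Either write-up is acceptable; if you want to match the paper's level of detail, verifying the squarefreeness and citing the criterion suffices.
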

\begin{proof}
The ideal $I$ is zero-dimensional. Since $\FF$ has characteristic $0$, it
is perfect. We therefore apply Seidenberg's criterion (algorithm) given
in~\cite[Thm.~8.22]{BeckerWeispfenning:93}. Define $f_i=y_i^{\lambda_i}-1$. Then
for each $i$ ($1\leq i\leq e$) we have that $f_i\in R\cap\FF[y_i]$ and
$\gcd(f_i,\frac{d}{d y_i}f_i)=\gcd(y_i^{\lambda_i}-1,\lambda_i
y^{\lambda_i-1})=1$. Thus~\cite[Thm.~8.22]{BeckerWeispfenning:93} implies that
$\langle f_1,\dots,f_e\rangle$ is radical. 
\end{proof}

\begin{corollary}\label{Cor:SConstRExtension}
Let $\dfield{\EE}{\sigma}$  
be an \rE-extension of a difference field $\dfield{\FF}{\sigma}$ and let
$G=\dgroup{(\FF^*)}{\EE}{\FF}$.
Then $\EE$ is reduced and $\sconst{G}{\EE}{\sigma}\setminus\{0\}\leq\EE^*$.
\end{corollary}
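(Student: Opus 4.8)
The plan is to deduce Corollary~\ref{Cor:SConstRExtension} from the structural results already established, using Lemma~\ref{Lemma:RadicalIdeal} to handle the reducedness and Theorem~\ref{Thm:RPart2Strong} (together with Proposition~\ref{Prop:RExtSConstSpecial}) to conclude the semi-constant statement. Concretely, write $\EE=\FF\lr{x_1}\dots\lr{x_e}$ with $\lambda_i=\ord(x_i)$ and $\sigma(x_i)=\alpha_i\,x_i$, and recall the identification $\EE\cong R/I$ with $R=\FF[y_1,\dots,y_e]$ and $I=\langle y_1^{\lambda_1}-1,\dots,y_e^{\lambda_e}-1\rangle$.

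First I would establish that $\EE$ is reduced. By Lemma~\ref{Lemma:RadicalIdeal} the ideal $I$ is radical, and a quotient of a (Noetherian) polynomial ring by a radical ideal has no nonzero nilpotents; hence $R/I\cong\EE$ is reduced. (One must make sure the isomorphism is applied merely as a ring isomorphism here; the difference structure is not needed for reducedness.)

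Next I would prove $\sconst{G}{\EE}{\sigma}\setminus\{0\}\leq\EE^*$ for $G=\dgroup{(\FF^*)}{\EE}{\FF}$ by induction on $e$. The base case $e=0$ gives $\EE=\FF$, where $G=\FF^*$ and $\sconst{\FF^*}{\FF}{\sigma}\setminus\{0\}=\FF\setminus\{0\}=\FF^*$ trivially (a field). For the inductive step set $\AA=\FF\lr{x_1}\dots\lr{x_{e-1}}$, so that $\EE=\AA\lr{x_e}$ is a single \rE-extension of order $\lambda_e$. By the induction hypothesis applied to the group $G':=\dgroup{(\FF^*)}{\AA}{\FF}$ we have $\sconst{G'}{\AA}{\sigma}\setminus\{0\}\leq\AA^*$, and $\AA$ is reduced (same argument as above, or because a subring of the reduced ring $\EE$ is reduced). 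Note $\alpha_e=\sigma(x_e)/x_e\in\AA^*$, and $\dgroup{G'}{\AA\lr{x_e}}{\AA}=G$; moreover $\AA[x_e]=\EE$ is reduced. Thus all hypotheses of Theorem~\ref{Thm:RPart2Strong} are met (with $G'$ in the role of $G$ there), and it yields exactly $\sconst{G}{\EE}{\sigma}\setminus\{0\}\leq\EE^*$. Optionally, Proposition~\ref{Prop:RExtSConstSpecial} can be cited to record the explicit description $\sconst{G}{\EE}{\sigma}=\{h\,x_1^{m_1}\dots x_e^{m_e}\mid h\in\FF^*,\ 0\leq m_i<\lambda_i\}$ and to double-check consistency, since $\alpha_i=\sigma(x_i)/x_i\in G$ holds by construction of $G$.

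The main obstacle is the verification of the reducedness hypothesis demanded by Theorem~\ref{Thm:RPart2Strong}: that theorem requires $\AA[x]$ to be reduced, and in the nested situation this is precisely what Lemma~\ref{Lemma:RadicalIdeal} is designed to supply, but one must be careful that the lemma is stated for the full ideal~\eqref{Equ:RExtIdeal} with all $\lambda_i$ at once, so the cleanest route is to prove reducedness of $\EE$ in one stroke first (from the radicality of the complete ideal $I$), and only afterwards run the semi-constant induction, at each stage invoking reducedness of the already-constructed ring as a consequence of its being a subring of the reduced ring $\EE$. Everything else is a direct chaining of previously proven statements, so no further heavy computation is needed.
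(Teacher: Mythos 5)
Your argument is correct and takes essentially the same route as the paper: reducedness of $\EE$ is deduced from Lemma~\ref{Lemma:RadicalIdeal} via the identification of $\EE$ with $R/I$, and the semi-constant statement then follows by applying Theorem~\ref{Thm:RPart2Strong} iteratively, which is precisely your induction on $e$ (your extra checks, namely that $\AA=\FF\lr{x_1}\dots\lr{x_{e-1}}$ is reduced as a subring of the reduced ring $\EE$ and that the product groups compose to $G$, are fine). One caveat on your optional aside: Proposition~\ref{Prop:RExtSConstSpecial} assumes $\sigma(x_i)/x_i\in G$ for a group $G\leq\AA^*$ of the ground ring, which holds in the single-rooted/simple situation but not for a general nested \rE-extension (the $\alpha_i$ may be units, even roots of unity, of $\FF\lr{x_1}\dots\lr{x_{i-1}}$ that are not of the form $h\,x_1^{m_1}\dots x_{i-1}^{m_{i-1}}$ with $h\in\FF^*$), so that citation should be dropped; since you flagged it as optional, this does not affect the validity of your proof.
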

\begin{proof}
The difference ring $\dfield{\EE}{\sigma}$ with $\EE=\FF\lr{x_1}\dots\lr{x_r}$ is isomorphic to
$\dfield{R/I}{\sigma''}$ as defined above with~\eqref{Equ:RExtIdeal} where
$\AA=\FF$. Suppose that $\EE$ is not reduced. Then there are an
$f\in\EE\setminus\{0\}$ and an $n>0$ with $f^n=0$. Hence there is an $h\in R$
with $h+I\neq I$ and $(h+I)^n=h^n+I=I$. This implies that $h\notin I$ and
$h^n\in I$. Therefore $I$ is not radical, a contradiction to
Lemma~\ref{Lemma:RadicalIdeal}. Hence $\EE$ is reduced. Thus we can apply
Theorem~\ref{Thm:RPart2Strong} iteratively and it follows that
$\sconst{G}{\EE}{\sigma}\setminus\{0\}\leq\EE^*$.
\end{proof}

\subsection{Nested \pisiSE-extensions}\label{Subsec:NestedPiSiExt}

In Corollary~\ref{Cor:pisiSCONST} we will characterize the set of semi-constants within \pisiSE-extensions. Part~1 will deal with the general case. Part~2 assumes in addition that the ground ring is reduced and connected. In this setting, we rely on the following two lemmas.

\begin{lemma}\label{Lemma:LiftReducedConnected}
Let $\dfield{\AA\lr{t}}{\sigma}$ be a \pisiSE-extension of
$\dfield{\AA}{\sigma}$. If $\AA$ is reduced, $\AA\lr{t}$ is reduced. If $\AA$ is
reduced and connected, $\AA\lr{t}$ is reduced and connected.
\end{lemma}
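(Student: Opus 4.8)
The plan is to observe that the difference-ring structure is irrelevant for this statement: by definition a \pisiSE-extension $\dfield{\AA\lr{t}}{\sigma}$ of $\dfield{\AA}{\sigma}$ is either a \sigmaSE-extension, in which case $\AA\lr{t}=\AA[t]$ is an ordinary polynomial ring in a transcendental $t$, or a \piE-extension, in which case $\AA\lr{t}=\AA\ltr{t}$ is a ring of Laurent polynomials in a transcendental $t$. So it suffices to prove two purely ring-theoretic facts: (i) if $\AA$ is reduced then $\AA[t]$ and $\AA\ltr{t}$ are reduced; and (ii) if $\AA$ is reduced and connected then $\AA[t]$ and $\AA\ltr{t}$ are connected. (This is exactly where the restriction to \pisiSE-extensions is used: an \rE-extension produces the quotient ring $\AA[t]/\lr{t^{\lambda}-1}$, which need not be connected, as witnessed by the idempotents $(1\pm x)/2$ appearing in Example~\ref{Exp:NotSimpleExt}.)

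For the reducedness in (i) I would peel off leading terms. If $f=\sum_{i=0}^n a_i t^i\in\AA[t]$ satisfies $f^m=0$, then the coefficient of $t^{nm}$ in $f^m$ equals $a_n^m$ — the only way to build the monomial $t^{nm}$ out of $m$ factors each of degree at most $n$ is to pick $a_n t^n$ from every one of them — so $a_n^m=0$, hence $a_n=0$ since $\AA$ is reduced. As the nilpotent elements form an ideal, $f-a_n t^n$ is again nilpotent and of strictly smaller degree, so an induction on $\deg f$ gives $f=0$; thus $\AA[t]$ is reduced. For $\AA\ltr{t}$ I would reduce to this: $t$ is a unit, so multiplication by a suitable power of $t$ is a ring automorphism of $\AA\ltr{t}$ carrying any given nilpotent element to one supported in non-negative degrees, i.e.\ to an element of $\AA[t]$, which we have just shown is reduced; hence the original element is $0$. (Alternatively, $\AA\ltr{t}$ is the localization of the reduced ring $\AA[t]$ at the powers of $t$, and localizations of reduced rings are reduced.)

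For the connectedness in (ii), let $e=\sum_{i=0}^n a_i t^i\in\AA[t]$ be idempotent with $a_n\ne 0$, so $\deg e=n$. If $n\ge 1$, then comparing the coefficients of $t^{2n}$ in $e^2=e$ gives $a_n^2=0$ (the coefficient of $t^{2n}$ in $e$ itself being $0$), whence $a_n=0$ by reducedness of $\AA$, a contradiction. So $n=0$, i.e.\ $e=a_0\in\AA$; since $\AA$ is connected, $e\in\{0,1\}$, and $\AA[t]$ is connected. For $\AA\ltr{t}$ I would write an idempotent as $e=\sum_{i=r}^s a_i t^i$ with $a_r\ne 0\ne a_s$ and $r\le s$; in $e^2$ the coefficient of $t^{2r}$ is $a_r^2$ and that of $t^{2s}$ is $a_s^2$, and both are nonzero because $\AA$ is reduced, so $2r$ and $2s$ lie in the support of $e^2=e$, which is contained in $[r,s]$. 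This forces $r\ge 0$ and $s\le 0$, hence $r=s=0$ and $e=a_0\in\AA$, and connectedness of $\AA$ again gives $e\in\{0,1\}$; thus $\AA\ltr{t}$ is connected.

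The one point I would take care over — and the only place a subtlety could hide — is that $\AA$ may have zero divisors, so degrees need not be additive under multiplication. This is harmless for the argument, since the extreme-degree coefficients of a product really are the products of the extreme-degree coefficients of the factors, which is precisely what every coefficient comparison above uses; I do not anticipate any genuine obstacle. If one prefers a high-level justification, $\operatorname{Spec}\AA[t]$ and $\operatorname{Spec}\AA\ltr{t}$ are $\operatorname{Spec}\AA$ times a geometrically connected curve, which yields both connectedness statements at once, but the elementary extreme-coefficient computation keeps the proof self-contained.
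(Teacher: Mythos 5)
Your proof is correct and follows essentially the same route as the paper: comparison of extreme-degree coefficients to rule out nonzero nilpotents and nontrivial idempotents. The only cosmetic difference is the direction of transfer between the two rings: you treat $\AA[t]$ directly and pass to $\AA\ltr{t}$ via the unit $t$ (or localization), whereas the paper argues directly in $\AA\ltr{t}$ and obtains the polynomial case as a subring.
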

\begin{proof}
Let $t$ be a \piE-monomial. Moreover, let $\AA$ be reduced. Now take $f=\sum_i
f_it^i\in\AA\lr{t}=\AA\ltr{t}$ with $f\neq0$ and $f^n=0$ for some $n>0$. Since
$\AA$ is reduced, $f\notin\AA$. Let $m\in\ZZ$ be maximal such that $f_m\neq0$.
Then the coefficient of $t^{n\,m}$ in $f^n$ is $f_m^n$. Hence $f_m^n=0$ and thus
$f_m$ is a nilpotent element in $\AA$, a contradiction.\\
Now let $\AA$ be reduced and connected and take $f=\sum_i
f_it^i\in\AA\lr{t}=\AA\ltr{t}$ with $f^2=f$ and $f\notin\{0,1\}$. Since $\AA$ is
connected, $f\notin\AA$. Let $m$ be maximal such that $f_m\neq0$. If $m>0$, then
the coefficient of $t^{2m}$ in $f^2$ is $f_m^2$ and thus with $f^2=f$ we have
that $f_m^2=0$; a contradiction that $\AA$ is reduced. Otherwise, if $m=0$, we
take $\bar{m}$ minimal with $f_{\bar{m}}\neq0$. Note that $\bar{m}<0$ since $f\notin\AA$. As
above, it follows that $f_{\bar{m}}^2=0$, again a contradiction.
Summarizing, if $\AA$ is reduced (and connected), $\AA\ltr{t}$ is
reduced (and connected). 
For a \sigmaSE-monomial $t$, the same implications hold since
$\AA\lr{t}=\AA[t]\leq\AA\ltr{t}$.
\end{proof}

\noindent If $\AA$ is reduced, the shift behaviour of \piE-monomials does not depend on \sigmaSE-monomials.

\begin{lemma}\label{Lemma:OrderPiSi}
Let $\dfield{\EE}{\sigma}$ be a \pisiSE-ring extension of 
$\dfield{\AA}{\sigma}$ where $\AA$ is reduced. Then the generators can be
reordered such that we get the form
$\EE=\AA\lr{t_1}\dots\lr{t_p}\lr{s_1}\dots\lr{s_e}$ where
the $t_i$ are \piE-monomials and the $s_i$ are
\sigmaSE-monomials.
\end{lemma}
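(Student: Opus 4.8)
The plan is to induct on the number of \emph{inversions} of the given tower, i.e., on the number of pairs in which a \sigmaSE-monomial is introduced before a \piE-monomial. If there are no inversions, the \piE-monomials already precede all \sigmaSE-monomials and, reading off the generators in the given order, we are in the desired form. Otherwise some \emph{adjacent} pair is inverted, giving a segment
$$\dfield{\set B}{\sigma}\leq\dfield{\set B[s]}{\sigma}\leq\dfield{\set B[s]\ltr{t}}{\sigma}$$
of the tower in which $s$ is a \sigmaSE-monomial, $\sigma(s)=s+\beta$ with $\beta\in\set B$, and $t$ is a \piE-monomial, $\sigma(t)=\alpha\,t$ with $\alpha\in\set B[s]^*$. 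Here $\set B$ is an intermediate ring of a \pisiSE-extension of the reduced ring $\AA$, hence reduced by iterated use of Lemma~\ref{Lemma:LiftReducedConnected}; then $\set B[s]^*=\set B^*$ by Lemma~\ref{Lemma:PolyInv}, so in fact $\alpha\in\set B^*$. Moreover $s$ and $t$ are algebraically independent over $\set B$, so $\set B[s]\ltr{t}=\set B\ltr{t}[s]$ as rings, with $t$ transcendental over $\set B$ and $s$ transcendental over $\set B\ltr{t}$.

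I would then replace this segment by
$$\dfield{\set B}{\sigma}\leq\dfield{\set B\ltr{t}}{\sigma}\leq\dfield{\set B\ltr{t}[s]}{\sigma}$$
and check that both steps are again a \piE- resp.\ \sigmaSE-extension; this is the main work. For the first step, $\sigma(t)/t=\alpha\in\set B^*$, and by Theorem~\ref{Thm:RPSCharacterization}.(2) it is a \piE-extension iff there are no $g\in\set B\setminus\{0\}$ and $m\in\ZZ\setminus\{0\}$ with $\sigma(g)=\alpha^m\,g$; but such a pair would already contradict Theorem~\ref{Thm:RPSCharacterization}.(2) for the given \piE-extension $\dfield{\set B[s]}{\sigma}\leq\dfield{\set B[s]\ltr{t}}{\sigma}$, since $\set B\setminus\{0\}\subseteq\set B[s]\setminus\{0\}$. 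For the second step, $\sigma(s)-s=\beta\in\set B\ltr{t}$ and $\const{\set B\ltr{t}}{\sigma}=\const{\set B}{\sigma}=\KK$ is a field, so by Theorem~\ref{Thm:RPSCharacterization}.(1) it is a \sigmaSE-extension iff no $g=\sum_i g_i\,t^i\in\set B\ltr{t}$ satisfies $\sigma(g)=g+\beta$. Comparing coefficients of $t^i$ in $\sigma(g)=\sum_i\sigma(g_i)\,\alpha^i\,t^i=g+\beta$, exactly as in Lemma~\ref{LemmaC}, forces $\sigma(g_i)=\alpha^{-i}\,g_i$ for $i\neq0$ and $\sigma(g_0)=g_0+\beta$; the relations for $i\neq0$ give $g_i=0$ (otherwise $t$ would fail to be a \piE-monomial over $\set B$, contradicting the first step), and then $g_0\in\set B$ with $\sigma(g_0)=g_0+\beta$ contradicts that $\dfield{\set B[s]}{\sigma}$ is a \sigmaSE-extension of $\dfield{\set B}{\sigma}$ (Theorem~\ref{Thm:RPSCharacterization}.(1)). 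Hence no such $g$ exists and the second step is a \sigmaSE-extension.

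Since $\set B[s]\ltr{t}=\set B\ltr{t}[s]$, the portion of the tower above the segment is literally unchanged (the $\sigma$-images of all later generators already lie in this ring), so it remains a \pisiSE-extension, $\EE$ is the same ring, and the new listing still respects the recursion of $\sigma$ ($\alpha\in\set B^*$ precedes $t$, and $\beta\in\set B\leq\set B\ltr{t}$ precedes $s$). The swap flips only the relative order of $s$ and $t$, so the number of inversions drops by exactly one, and the induction terminates at the desired form $\EE=\AA\lr{t_1}\dots\lr{t_p}\lr{s_1}\dots\lr{s_e}$. The step I expect to be the real obstacle is the verification in the second paragraph that the two swapped steps still keep the constant field fixed; it is precisely there that Theorem~\ref{Thm:RPSCharacterization} and the coefficient comparison enter, and it is the only place where the hypothesis that $\AA$ (hence $\set B$, hence $\set B[s]$) is reduced is used, namely to secure $\alpha\in\set B^*$ rather than merely $\alpha\in\set B[s]^*$.
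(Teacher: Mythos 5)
Your proof is correct, and its decisive step is the same as the paper's: in a reduced ring, Lemma~\ref{Lemma:PolyInv} (combined with iterating Lemma~\ref{Lemma:LiftReducedConnected}) forces the invertible element $\alpha=\sigma(t)/t$ of a \piE-monomial to lie in the coefficient ring, so it cannot involve \sigmaSE-monomials. The organisation differs: the paper establishes this freeness for all \piE-monomials in one pass (forgetting $\sigma$ and arguing purely ring-theoretically) and then shuffles all \piE-monomials to the left, whereas you induct on adjacent inversions and re-verify, for each swap, that the two new steps are again a \piE- resp.\ \sigmaSE-extension via Theorem~\ref{Thm:RPSCharacterization}. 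That verification is sound but more than is needed: since $\set B\ltr{t}$ sits between $\set B$ and $\set B[s]\ltr{t}=\set B\ltr{t}[s]$, and the constants of the two end rings coincide (the original tower is constant-preserving), the constants of every intermediate ring in the reordered tower are automatically unchanged by a sandwich argument; hence your coefficient comparison and the appeal to Theorem~\ref{Thm:RPSCharacterization}.(1) can be dropped. This also removes the one blemish in your write-up: the backward direction of Theorem~\ref{Thm:RPSCharacterization}.(1) requires $\const{\set B\ltr{t}}{\sigma}$ to be a field, which is the paper's standing convention but not among the stated hypotheses of this lemma (the paper's own proof uses only reducedness); with the sandwich argument your proof, like the paper's, needs nothing beyond $\AA$ being reduced.
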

\begin{proof}
Let $\EE=\AA\lr{t_1}\dots\lr{t_e}$. By iterative application of
Lemma~\ref{Lemma:LiftReducedConnected} it follows that $\EE$ is reduced. Let
$t_i$ be a \piE-monomial where
$\alpha=\sigma(t_i)/t_i\in\AA\lr{t_1}\dots\lr{t_{i-1}}$ depends on a
\sigmaSE-monomial $t_j$ with $j<i$. 
Then we can reorder the generators such that
we get
$\HH=\AA\lr{t_1}\dots\lr{t_{j-1}}\lr{t_{j+1}}\dots\lr{t_{i-1}}$; here we forget $\sigma$ and argue purely in the given ring. In particular,
$\alpha\in\HH\lr{t_j}=\HH[t_j]\setminus\HH$. Since $\alpha$ is invertible,
$\alpha\in\HH$ by Lemma~\ref{Lemma:PolyInv}; a contradiction. Summarizing, for
all \piE-monomials $t_j$ we have that $\sigma(t_j)/t_j$ is free of
\sigmaSE-monomials. Thus we can shuffle all \piE-monomials to the left and all
\sigmaSE-monomials to the right and obtain again a \pisiSE-extension.
\end{proof}

\begin{corollary}\label{Cor:pisiSCONST}
Let $\dfield{\EE}{\sigma}$ be a \pisiSE-extension of $\dfield{\AA}{\sigma}$ with
$\EE=\AA\lr{t_1}\lr{t_2}\dots\lr{t_e}$.
\begin{enumerate}
\item Let $G\leq\AA^*$ with $\sconst{G}{\AA}{\sigma}\setminus\{0\}\leq
\AA^*$ and let $\tilde{G}=\dgroup{G}{\EE}{\AA}$.
If $\dfield{\EE}{\sigma}$ is a $G$-simple \pisiSE-extension of $\dfield{\AA}{\sigma}$, then $\sconst{\tilde{G}}{\EE}{\sigma}\setminus\{0\}\leq\EE^*$
where

\vspace*{-0.7cm}

\begin{multline*}
\sconst{\tilde{G}}{\EE}{\sigma}=\{h\,t_1^{m_1}\dots t_e^{m_e}|\,h\in\sconst{G}{\AA}{\sigma}\text{ and}\\[-0.09cm]
m_i\in\ZZ\text{ where $m_i=0$ if $t_i$ is a \sigmaSE-monomial}\}.
\end{multline*}

\vspace*{-0.1cm}

\item If $\AA$ is reduced and connected and $\gsconst{\AA}{\sigma}\setminus\{0\}=\AA^*$, then 

\vspace*{-0.5cm}

\begin{equation}\label{Equ:GSConstGeneralNested}
\begin{split}
\gsconst{\EE}{\sigma}\setminus\{0\}=\{h\,t_1^{m_1}\dots t_e^{m_e}|\,&h\in\AA^*\text{  and }m_i\in\ZZ\\[-0.09cm]
&\text{ where $m_i=0$ if $t_i$ is a \sigmaSE-monomial}
\}=\EE^*.
\end{split}
\end{equation}

\vspace*{-0.1cm}

\item If $\AA$ is a field then we have that~\eqref{Equ:GSConstGeneralNested}. 
\end{enumerate}
\end{corollary}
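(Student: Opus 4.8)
The plan is to deduce part~3 directly from part~2. Concretely, it suffices to verify that an arbitrary field $\AA$ fulfils the three hypotheses imposed there — that $\AA$ is reduced, that $\AA$ is connected, and that $\gsconst{\AA}{\sigma}\setminus\{0\}=\AA^*$ — since once this is established, part~2 applies verbatim and yields~\eqref{Equ:GSConstGeneralNested}.

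First I would dispatch the two purely ring-theoretic conditions. A field has no nonzero nilpotents: if $f\in\AA\setminus\{0\}$ satisfied $f^n=0$ for some $n>0$, multiplying by $(f^{-1})^n$ would force $1=0$; hence $\AA$ is reduced. A field has no idempotents besides $0$ and $1$: if $f^2=f$ with $f\neq0$, then $f$ is a unit, and cancelling $f$ gives $f=1$; hence $\AA$ is connected. Then I would treat the difference-ring condition. Since $\AA$ is a field we have $\AA^*=\AA\setminus\{0\}$, so $\gsconst{\AA}{\sigma}\setminus\{0\}\subseteq\AA^*$ is automatic; for the reverse inclusion, given $a\in\AA^*$ the element $u:=\sigma(a)/a$ again lies in $\AA^*$ because $\sigma$ is a field automorphism, and the relation $\sigma(a)=u\,a$ exhibits $a$ as a semi-constant. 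This is precisely the remark already recorded for fields in Section~\ref{Sec:MainResults}. Feeding these three facts into part~2 completes the argument.

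I do not expect a genuine obstacle here: all the substance sits in part~2, which rests on Lemmas~\ref{Lemma:LiftReducedConnected} and~\ref{Lemma:OrderPiSi} together with the iterated use of Theorem~\ref{Thm:PiEfullsconst} and its \sigmaSE-counterpart Theorem~\ref{Thm:SigmaTheoremStrongPart2}. The only point that deserves a moment's attention is that a nested \pisiSE-extension $\EE=\AA\lr{t_1}\dots\lr{t_e}$ of a field $\dfield{\AA}{\sigma}$ is still built as a tower of polynomial and Laurent-polynomial rings, so $\EE$ itself need not be a field — but this is exactly the generality in which part~2 is stated and proved, so no extra reasoning is required.
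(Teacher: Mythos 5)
Your derivation of part~3 from part~2 is correct and is exactly the paper's argument: a field is reduced and connected, and $\gsconst{\AA}{\sigma}\setminus\{0\}=\AA^*$ (the paper invokes Lemma~\ref{Lemma:sconstGeneral} here, you reprove it directly with $u=\sigma(a)/a$ — same computation), so part~2 applies. Also your closing remark that $\EE$ need not be a field but that part~2 is stated in that generality is the right observation.

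The genuine gap is that the statement you were asked to prove is the whole corollary, and your proposal establishes only part~3 while taking parts~1 and~2 for granted. Part~1 is not a consequence of the other parts and is the substantive one (it is what later feeds Theorems~\ref{Thm:sconstIsGroupRing} and~\ref{Thm:rpisiSCONST}); it requires its own induction on $e$: for a \sigmaSE-monomial on top the product group does not change and one invokes Theorem~\ref{Thm:SigmaTheoremPart2}, while for a $G$-simple \piE-monomial one needs Theorem~\ref{Thm:PiPart2Strong}, which both lifts the semi-constant computation from $G$ to the enlarged group $\dgroup{G}{\EE}{\AA}$ and yields the form $h\,t^m$; none of this appears in your write-up. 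For part~2 you only name ingredients without an argument, and the list is slightly off: the first equality comes from iterating Theorems~\ref{Thm:SigmaTheoremStrongPart2} and~\ref{Thm:PiEfullsconst}, with Lemma~\ref{Lemma:LiftReducedConnected} needed to propagate ``reduced and connected'' up the tower so that those theorems remain applicable at each level, and the second equality (identification with $\EE^*$) uses Lemma~\ref{Lemma:PolyInv}/\ref{Lemma:LaurentInv}; Lemma~\ref{Lemma:OrderPiSi}, which you cite, is not actually needed here. So as a proof of the stated corollary the proposal is incomplete: parts~1 and~2 must be proved, and part~1 in particular requires the induction with the group lift sketched above.
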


\begin{proof}
The first part is proven by induction on the number $e$ of extensions. If
$e=0$, nothing has to be shown. Now suppose that the first part holds and
consider one extra $\tilde{G}$-simple \pisiSE-monomial $t_{e+1}$ on top. Define
$\tilde{\tilde{G}}=\dgroup{\tilde{G}}{\EE\lr{t_{e+1}}}{\EE}=\dgroup{G}{\EE\lr{t_
{e+1}
}}{\AA}$. If $t_i$ is a \sigmaSE-monomial, $\tilde{\tilde{G}}=\tilde{G}$. Together with
Theorem~\ref{Thm:SigmaTheoremPart2} it follows that
$\sconst{\tilde{\tilde{G}}}{\EE[t_{e+1}]}{\sigma}=\sconst{\tilde{G}}{\EE[t_{e+1}
]}{\sigma}=\sconst{\tilde{G}}{\EE}{\sigma}$ and
$\sconst{\tilde{\tilde{G}}}{\EE[t_{e+1}]}{\sigma}\setminus\{0\}\leq\EE^*\leq\EE[
t_{e+1}]^*$. If $t_i$ is a \piE-monomial, we have
$\sigma(t_{e+1})/t_{e+1}\in\tilde{G}$. 
Hence Theorem~\ref{Thm:PiPart2Strong} yields  
$\sconst{\tilde{\tilde{G}}}{\EE\ltr{t_{e+1}}}{\sigma}=\{h\,t_{e+1}^m|\,
m\in\ZZ\text{ and }h\in\sconst{\tilde{G}}{\EE}{\sigma}\}$ and thus by the
induction assumption we have that
\begin{align*}
\sconst{\tilde{\tilde{G}}}{\EE\ltr{t_{e+1}}}{\sigma}=\{h\,t_1^{m_1}\dots t_{e+1}^{m_{e+1}}|\,&h\in\sconst{G}{\AA}{\sigma}\text{ and }m_i\in\ZZ\\
&\text{ where $m_i=0$ if $t_i$ is a \sigmaSE-monomial}\}
\end{align*}
and thus
$\sconst{\tilde{\tilde{G}}}{\EE\ltr{t_{e+1}}}{\sigma}\setminus\{0\}\leq\EE\ltr{
t_{e+1}}^*$. This completes the induction step.\\
Similarly, the first equality of part~2 follows by Theorems~\ref{Thm:SigmaTheoremStrongPart2}
and~\ref{Thm:PiEfullsconst}. The second equality follows by Lemmas~\ref{Lemma:PolyInv} and~\ref{Lemma:LiftReducedConnected}.
Since any field is connected and reduced and $\gsconst{\AA}{\sigma}\setminus\{0\}=\AA^*$ by Lemma~\ref{Lemma:sconstGeneral}, part~3 follows by part~2. 
\end{proof}

\noindent Restricting to \sigmaSE-extensions, the above result simplifies as follows.

\begin{corollary}\label{Cor:SemiConstantsForSigmaExt}
Let  
$\dfield{\EE}{\sigma}$ be a \sigmaSE-extension of $\dfield{\AA}{\sigma}$. Then the following holds.
\begin{enumerate}
\item If $G\leq\AA^*$ with $\sconst{G}{\AA}{\sigma}\setminus\{0\}\leq \AA^*$, then $\sconst{G}{\EE}{\sigma}=\sconst{G}{\AA}{\sigma}$.
\item If $\AA$ is reduced and $\gsconst{\AA}{\sigma}\setminus\{0\}=\AA^*$, then
$\EE$ is reduced and $\gsconst{\EE}{\sigma}=\gsconst{\AA}{\sigma}$.
\item If $\AA$ is a field, then
$\gsconst{\EE}{\sigma}\setminus\{0\}=\AA^*=\AA\setminus\{0\}$.
\end{enumerate}
\end{corollary}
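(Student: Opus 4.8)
The plan is to read off all three parts as the single-generator specializations of the results on \sigmaSE-extensions established above. Write $\EE=\AA[t]$ with $\sigma(t)=t+\beta$, $\beta\in\AA$, which is the shape of any \sigmaSE-extension. For part~1, the hypothesis that $G\leq\AA^*$ with $\sconst{G}{\AA}{\sigma}\setminus\{0\}\leq\AA^*$ is exactly the assumption of Theorem~\ref{Thm:SigmaTheoremPart2}; since $G\leq\AA^*\leq\EE^*$, the set $\sconst{G}{\EE}{\sigma}$ is precisely what that theorem calls $\sconst{G}{\AA[t]}{\sigma}$, so the theorem gives $\sconst{G}{\EE}{\sigma}=\sconst{G}{\AA}{\sigma}$ directly.

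For part~2, the reducedness of $\EE$ is the special case of Lemma~\ref{Lemma:LiftReducedConnected} with a single \sigmaSE-monomial (every \sigmaSE-extension being a \pisiSE-extension). For the semi-constant identity, note that the assumption $\gsconst{\AA}{\sigma}\setminus\{0\}=\AA^*$ in particular implies $\gsconst{\AA}{\sigma}\setminus\{0\}\leq\AA^*$, so Theorem~\ref{Thm:SigmaTheoremStrongPart2} applies and yields $\gsconst{\EE}{\sigma}\setminus\{0\}=\gsconst{\AA}{\sigma}\setminus\{0\}$. Since $0$ lies in both $\gsconst{\EE}{\sigma}$ and $\gsconst{\AA}{\sigma}$ (indeed $\sigma(0)=1\cdot 0$ with $1\in\AA^*$), adjoining it back gives $\gsconst{\EE}{\sigma}=\gsconst{\AA}{\sigma}$.

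For part~3, observe that a field $\AA$ is reduced and that $\gsconst{\AA}{\sigma}\setminus\{0\}=\AA\setminus\{0\}=\AA^*$: for nonzero $a\in\AA$ one has $\sigma(a)=(\sigma(a)/a)\,a$ with $\sigma(a)/a\in\AA^*$. Hence part~2 applies and delivers $\gsconst{\EE}{\sigma}\setminus\{0\}=\gsconst{\AA}{\sigma}\setminus\{0\}=\AA^*=\AA\setminus\{0\}$; alternatively, this is the one-generator instance of Corollary~\ref{Cor:pisiSCONST}.(3). Since every step is a citation of an already established lemma or theorem, I do not expect a genuine obstacle; the only points needing care are matching the hypotheses precisely --- the equivalence of ``$\gsconst{\AA}{\sigma}\setminus\{0\}=\AA^*$'' with ``$\gsconst{\AA}{\sigma}\setminus\{0\}\leq\AA^*$'', also recorded in Lemma~\ref{Lemma:sconstGeneral} --- and checking that reinserting the zero element is harmless.
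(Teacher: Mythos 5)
Your argument is correct as far as it goes, but it only proves the one\--generator case, and that is short of what the corollary asserts. In this section a \sigmaSE-extension $\dfield{\EE}{\sigma}$ of $\dfield{\AA}{\sigma}$ may be a \emph{nested} one, i.e.\ a tower $\EE=\AA[t_1]\dots[t_e]$ of \sigmaSE-extensions: the corollary is announced as the restriction of Corollary~\ref{Cor:pisiSCONST} (which concerns $\EE=\AA\lr{t_1}\dots\lr{t_e}$) to the case where all generators are \sigmaSE-monomials, and the use of the symbol $\EE$ follows that convention. So your opening step ``write $\EE=\AA[t]$, which is the shape of any \sigmaSE-extension'' is exactly where the proof falls short of the statement.

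The gap is routine to close with the results you already cite, by induction on the number $e$ of \sigmaSE-monomials. For part~1, Theorem~\ref{Thm:SigmaTheoremPart2} gives $\sconst{G}{\AA[t_1]}{\sigma}=\sconst{G}{\AA}{\sigma}$, and since $G\leq\AA^*\leq\AA[t_1]^*$ and $\sconst{G}{\AA[t_1]}{\sigma}\setminus\{0\}=\sconst{G}{\AA}{\sigma}\setminus\{0\}\leq\AA^*\leq\AA[t_1]^*$, the hypotheses are reproduced one level up, so one can climb the tower; alternatively, part~1 is literally Corollary~\ref{Cor:pisiSCONST}.(1) with $\tilde{G}=\dgroup{G}{\EE}{\AA}=G$ (no \rE\piE-monomials present). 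For part~2 note that Corollary~\ref{Cor:pisiSCONST}.(2) is \emph{not} directly available, since it additionally assumes connectedness, which part~2 does not; instead iterate Theorem~\ref{Thm:SigmaTheoremStrongPart2}: by Lemma~\ref{Lemma:LiftReducedConnected} each intermediate ring $\AA[t_1]\dots[t_i]$ is again reduced, and by Lemma~\ref{Lemma:PolyInv} its unit group is still $\AA^*$, so the hypothesis $\gsconst{\cdot}{\sigma}\setminus\{0\}=(\cdot)^*$ is preserved at every level and the conclusion passes up the tower. Part~3 then follows as you say (a field is reduced and, as you note via Lemma~\ref{Lemma:sconstGeneral}, satisfies $\gsconst{\AA}{\sigma}\setminus\{0\}=\AA^*$), or directly from the full nested statement Corollary~\ref{Cor:pisiSCONST}.(3) rather than merely its one-generator instance. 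Your handling of the zero element and the matching of hypotheses are otherwise fine.
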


\subsection{\rpisiSE-extensions and their simple and single-rooted restrictions}\label{Subsec:NestedRpisiExt}

We turn to the set of semi-constants within nested \rpisiSE-extensions. The case of simple and single-rooted \rpisiSE-extensions is immediate.

\ExternalProof{(Theorem~\ref{Thm:sconstIsGroupRing}\label{Proof:sconstIsGroupRing})}{
This follows by Corollary~\ref{Cor:pisiSCONST}.(1) and Proposition~\ref{Prop:RExtSConstSpecial}.
}

\noindent Likewise, simple \rpisiSE-extension can be treated if they are built in a particular form.

\begin{theorem}\label{Thm:rpisiSCONST}
Let $\dfield{\HH}{\sigma}$ be an \rE-extension of
a difference field $\dfield{\FF}{\sigma}$ and
let $\dfield{\EE}{\sigma}$ with
$\EE=\HH\lr{t_1}\lr{t_2}\dots\lr{t_e}$ be a simple \pisiSE-extension of
$\dfield{\HH}{\sigma}$. Let $G=\dgroup{(\FF^*)}{\HH}{\FF}$
and define $\tilde{G}=\dgroup{G}{\EE}{\HH}$. Then we have $\sconst{\tilde{G}}{\EE}{\sigma}\setminus\{0\}\leq\EE^*$ where
\begin{equation*}
\sconst{\tilde{G}}{\EE}{\sigma}=\{h\,t_1^{m_1}\dots t_e^{m_e}|\,h\in
\sconst{G}{\HH}{\sigma},\,m_i\in\ZZ\text{ where $m_i=0$ if $t_i$ is a \sigmaSE-monomial}\}.
\end{equation*}
\end{theorem}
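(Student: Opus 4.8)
\noindent The plan is to obtain the statement by combining two facts already established: the semi-constant structure of the \rE-part $\HH$ over the field $\FF$ (Corollary~\ref{Cor:SConstRExtension}) and the semi-constant structure of a simple \pisiSE-extension built over a ring whose relevant semi-constant set already forms a group (Corollary~\ref{Cor:pisiSCONST}.(1)); the argument is essentially a matter of fitting the hypotheses of the latter. Concretely, put $G=\dgroup{(\FF^*)}{\HH}{\FF}$ and apply Corollary~\ref{Cor:SConstRExtension} to the \rE-extension $\dfield{\HH}{\sigma}$ of the difference field $\dfield{\FF}{\sigma}$: this gives at once that $\HH$ is reduced and that $\sconst{G}{\HH}{\sigma}\setminus\{0\}\leq\HH^*$, which is precisely the hypothesis $\sconst{G}{\AA}{\sigma}\setminus\{0\}\leq\AA^*$ required by Corollary~\ref{Cor:pisiSCONST}.(1) upon setting $\AA:=\HH$ with this $G\leq\HH^*$. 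Unlike in Proof~\ref{Proof:sconstIsGroupRing}, where the \rE-part is single-rooted and Proposition~\ref{Prop:RExtSConstSpecial} applies directly, here $\HH$ may be a nested \rE-extension, which is why Corollary~\ref{Cor:SConstRExtension} is the tool one needs; no explicit monomial description of $\sconst{G}{\HH}{\sigma}$ is required, since the asserted formula leaves this set unexpanded.

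Next I would note that $\dfield{\EE}{\sigma}$ is a $G$-simple \pisiSE-extension of $\dfield{\HH}{\sigma}$ for exactly this $G$. Since $\EE=\HH\lr{t_1}\cdots\lr{t_e}$ is a \pisiSE-extension, the only \rE\piE-monomials among the $t_i$ are the \piE-monomials; for such a $t_i$, simplicity of the extension — read, as it must be here, with respect to the base field $\FF$ (equivalently: the whole tower over $\FF$ is $\FF^*$-simple) — gives $\sigma(t_i)/t_i\in\dgroup{(\FF^*)}{\EE}{\FF}=\dgroup{G}{\EE}{\HH}$. Hence $\tilde{G}=\dgroup{G}{\EE}{\HH}$ is exactly the product-group occurring in Corollary~\ref{Cor:pisiSCONST}.(1).

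It then remains only to apply Corollary~\ref{Cor:pisiSCONST}.(1) with $\AA=\HH$, this $G$, and $\tilde{G}=\dgroup{G}{\EE}{\HH}$: it yields $\sconst{\tilde{G}}{\EE}{\sigma}\setminus\{0\}\leq\EE^*$ together with $\sconst{\tilde{G}}{\EE}{\sigma}=\{h\,t_1^{m_1}\cdots t_e^{m_e}\mid h\in\sconst{G}{\HH}{\sigma},\ m_i\in\ZZ\text{ with }m_i=0\text{ if }t_i\text{ is a \sigmaSE-monomial}\}$, which is verbatim the assertion. The step I expect to require the most care is the bookkeeping of $G$: one must be sure the group with respect to which $\EE$ is ``simple'' is $\dgroup{(\FF^*)}{\HH}{\FF}$ rather than the a priori larger $\HH^*$, because it is precisely for this $G$ that Corollary~\ref{Cor:SConstRExtension} supplies $\sconst{G}{\HH}{\sigma}\setminus\{0\}\leq\HH^*$ and that the set $\sconst{G}{\HH}{\sigma}$ produced by Corollary~\ref{Cor:pisiSCONST}.(1) coincides with the one claimed in the theorem.
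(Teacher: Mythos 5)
Your proof is correct and is essentially the paper's own argument: the published proof consists precisely of invoking Corollary~\ref{Cor:SConstRExtension} to obtain $\sconst{G}{\HH}{\sigma}\setminus\{0\}\leq\HH^*$ and then applying Corollary~\ref{Cor:pisiSCONST}.(1) with $\AA=\HH$ and $\tilde{G}=\dgroup{G}{\EE}{\HH}$. The extra care you take in reading ``simple'' as $G$-simplicity with $G=\dgroup{(\FF^*)}{\HH}{\FF}$ (rather than the a priori larger $\HH^*$) is left implicit in the paper's one-line proof, and it is exactly the reading under which the theorem is used, via the reordering of Lemma~\ref{Lemma:ReorderSimpleRPISI}, in Proof~\ref{Proof:sconstIsGroupField}.
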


\begin{proof}
By Cor.~\ref{Cor:SConstRExtension}, $\sconst{G}{\HH}{\sigma}\setminus\{0\}\leq\HH^*$. Hence the result follows by
Cor.~\ref{Cor:pisiSCONST}.(1).
\end{proof}

\noindent Next, we show that simple \rpisiSE-extensions can be always brought to the shape as assumed in Theorem~\ref{Thm:rpisiSCONST}. This will finally produce a proof of Theorem~\ref{Thm:sconstIsGroupField}. 

\begin{lemma}\label{Lemma:ReorderSimpleRPISI}
Let $\dfield{\AA}{\sigma}$ be a difference ring with a group $G\leq\AA^*$ and let $\dfield{\EE}{\sigma}$ be a $G$-simple \rpisiSE-extension of $\dfield{\AA}{\sigma}$.
\begin{enumerate}
\item The \rpisiSE-monomials can be reordered to the form
$\EE=\AA\lr{t_1}\lr{t_2}\dots\lr{t_e}$ with $r,p\in\NN$ ($0\leq r\leq p\leq
e$) such that the following holds.
\begin{itemize}
\item For all $i$ $(1\leq i\leq r)$, $t_i$ is an \rE-monomial with
$\sigma(t_i)/t_i=u_i\,t_1^{z_1}\dots t_{i-1}^{z_{i-1}}$
for some root of unity $u_i\in G$ and $z_i\in\NN$.
\item For all $i$ $(r<i\leq p)$, $t_i$ is a \piE-monomial with
$\sigma(t_i)/t_i=u_i\,t_1^{z_1}\dots t_{i-1}^{z_{i-1}}$
for some $u_i\in G$ and $z_i\in\ZZ$.
\item For all $i$ $(p<i\leq e)$, $t_i$ is a \sigmaSE-monomial with
$\sigma(t_i)-t_i\in\AA\lr{t_1}\lr{t_2}\dots\lr{t_{i-1}}.$
\end{itemize}
\item For any $f\in\dgroup{G}{\EE}{\AA}$ which depends on a \piE-monomial we have that $\ord(f)=0$.
\end{enumerate}
\end{lemma}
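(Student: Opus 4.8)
The plan is to prove Part~1 and Part~2 together by induction on the number $e$ of generators, deriving Part~1 for a given $e$ from Part~2 for \rpisiSE-extensions with fewer generators, and Part~2 for $e$ from Part~1 for $e$ together with the inductive hypothesis; the case $e=0$ is vacuous. The first step would be to record the shape of an \rE/\piE shift quotient. By construction of a nested \rpisiSE-extension, $\sigma(t_i)/t_i$ lies in $(\AA\lr{t_1}\dots\lr{t_{i-1}})^*$ for any \rE- or \piE-monomial $t_i$, and $G$-simplicity places it in $\dgroup{G}{\EE}{\AA}$. Since each $\AA\lr{t_1}\dots\lr{t_j}$ is a free module over $\AA\lr{t_1}\dots\lr{t_{j-1}}$ on the powers of $t_j$ (basis $\{1,\dots,t_j^{\lambda_j-1}\}$, $\{t_j^m:m\in\ZZ\}$ or $\{t_j^m:m\in\NN\}$ according to the type of $t_j$), a descending comparison of $t_j$-components for $j=e,e-1,\dots,i$ shows that any element of $\dgroup{G}{\EE}{\AA}$ lying in $\AA\lr{t_1}\dots\lr{t_{i-1}}$ has the form $u\,t_1^{z_1}\dots t_{i-1}^{z_{i-1}}$ with $u\in G$, the exponents of the \rE-monomials taken modulo their orders and no \sigmaSE-monomial occurring; and if $t_i$ is an \rE-monomial then, since $\sigma(t_i)/t_i$ has order $\ord(t_i)>1$ by Theorem~\ref{Thm:RPSCharacterization}.(3), $u$ may be chosen as a root of unity and the $z_j\in\NN$.

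For the reordering in Part~1 I would perform two sweeps of adjacent transpositions. Throughout, $\EE$ and $\sigma$ are fixed and only the presentation changes; each transposition again produces a valid \rpisiSE-extension, since the defining datum of the moved generator still lies in the ring generated by its new predecessors, and its constant field is trapped between the (equal) constant fields of two prefixes common to the old and new presentation. In the first sweep I would push every \sigmaSE-monomial above all \rE- and \piE-monomials; this is legitimate because, by the previous step, an \rE/\piE shift quotient involves no \sigmaSE-monomial. In the second sweep I would push every \rE-monomial above all \piE-monomials lying below it; this is legitimate because the shift quotient of an \rE-monomial has positive order and therefore, by Part~2 applied to the prefix below it (a $G$-simple \rpisiSE-extension with fewer generators), depends on no \piE-monomial. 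The three bulleted forms are then read off, and the condition $\sigma(t_i)-t_i\in\AA\lr{t_1}\dots\lr{t_{i-1}}$ for \sigmaSE-monomials persists because reordering only enlarges the set of generators preceding a given \sigmaSE-monomial.

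For Part~2, suppose $f\in\dgroup{G}{\EE}{\AA}$ depends on a \piE-monomial but $f^N=1$ for some $N>0$; I would derive a contradiction. Using Part~1 I may take $\EE$ in normal form $\AA\lr{x_1}\dots\lr{x_r}\lr{t_1}\dots\lr{t_q}\lr{s_1}\dots\lr{s_v}$ with the $x_i$ \rE-, the $t_i$ \piE- and the $s_i$ \sigmaSE-monomials; as $f$ involves no $s_i$, it lies in $B\ltr{t_q}$ with $B=\AA\lr{x_1}\dots\lr{x_r}\lr{t_1}\dots\lr{t_{q-1}}$ and $q\geq1$, and $B\ltr{t_q}$ is a genuine Laurent polynomial ring since $t_q$ is a \piE-monomial. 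Write $f=c\,t_q^{b}$ with $b\in\ZZ$ and $c\in B^*$ (a product of an element of $G\leq\AA^*\leq B^*$ with powers of the units $x_1,\dots,x_r,t_1,\dots,t_{q-1}$ of $B$). If $b\neq0$, then $f^N-1=c^N t_q^{bN}-1$ is supported on the distinct exponents $bN\neq0$ and $0$ with nonzero coefficients $c^N$ and $-1$, so $f^N\neq1$. If $b=0$, then $f\in B$ still depends on a \piE-monomial (necessarily one of $t_1,\dots,t_{q-1}$), $B$ is a $G$-simple \rpisiSE-extension of $\AA$ with fewer generators, and the inductive hypothesis gives $\ord(f)=0$, contradicting $f^N=1$. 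Either way we are done, which closes the induction.

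I expect the main obstacle to be the second sweep of Part~1: moving the \rE-monomials past the \piE-monomials genuinely needs the fact that a root of unity in $\dgroup{G}{\EE}{\AA}$ cannot depend on a \piE-monomial, i.e.\ it needs Part~2 for smaller extensions, which is precisely what forces Parts~1 and~2 into a single interlocked induction; it also has to be combined with the free-module component bookkeeping used to determine the shape of the \rE/\piE shift quotients. The remaining verifications — that each transposition again yields an \rpisiSE-extension, and the one-line Laurent-degree argument in $t_q$ — are routine.
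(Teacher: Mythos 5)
Your proposal is correct and follows essentially the same route as the paper: an interlocked induction on the number of generators, reordering by using that \rE/\piE shift quotients are free of \sigmaSE-monomials by the definition of the product group and that elements of positive order are free of \piE-monomials by Part~2 for smaller towers, and proving Part~2 by writing $f=c\,t^b$ for a topmost \piE-monomial $t$ and contradicting its transcendence via the Laurent-degree argument. The differences (two sweeps of adjacent transpositions with explicit free-module bookkeeping, versus the paper's shuffle-the-top-generator induction) are purely organizational.
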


\begin{proof}
We show the lemma by induction on the number of \rpisiSE-monomials. Suppose that the lemma holds for $e$ extensions. Now let $\EE=\AA\lr{t_1}\dots\lr{t_{e}}$ and consider the \rpisiSE-monomial $t_{e+1}$ on top of $\EE$. 
By the induction assumption we can reorder $\EE$ such that it has the desired form (all \rE-monomials are on the left, all \piE-monomials are in the middle and all \sigmaSE-monomials are on the right).
If $t_{e+1}$ is a \sigmaSE-monomial, the required shape is fulfilled. 
If $t_{e+1}$ is an \rE-monomial, observe that $\alpha:=\sigma(t_{e+1})/t_{e+1}\in\dgroup{G}{\EE}{\AA}$. Since $\ord(\alpha)=\ord(t_{e+1})>1$ by Theorem~\ref{Thm:RPSCharacterization}.(3), $\alpha$ is free of \piE-monomials by the induction assumption and (by definition) free of \sigmaSE-monomials. Thus we can shuffle $t_{e+1}$ to the left (such that all \pisiSE-monomials are to the right), and the required shape is satisfied. Similarly, if $t_{e+1}$ is a \piE-monomial, $\sigma(t_{e+1})/t_{e+1}\in\dgroup{G}{\EE}{\AA}$ is free of \sigmaSE-monomials by definition and we can shuffle $t_{e+1}$ to the left such that all \sigmaSE-monomials are to the right. This completes the first part of the lemma. Now let $\EE=\AA\lr{x_1}\dots\lr{x_{e+1}}$ be in the desired ordered form. If $x_{e+1}$ is a \sigmaSE-monomial, we have that $\dgroup{G}{\EE\lr{x_{e+1}}}{\AA}=\dgroup{G}{\EE}{\AA}$. Thus the second part holds by the induction assumption. If $x_{e+1}$ is an \rE-extension, also all $x_i$ with $1\leq i\leq e$ are \rE-monomials, and the second statement holds trivially. Finally, let $x_{e+1}$ be a \piE-monomial and take $f\in\dgroup{G}{\EE\lr{x_{e+1}}}{\AA}$. If $f\in\EE$ and $f$ depends on \piE-monomials, we have again that $\ord(f)=0$ by the induction assumption. To this end, suppose that $f$ depends on $x_{e+1}$ and we have that $\ord(f)=n>0$. Then $f=u\,x_{e+1}^{m}$ where $m\neq0$ and $u\in\EE^*$. Since $f^n=1$, $u^n\,x_{e+1}^{m\,n}=1$ where
$u^n\neq0$. Hence $x_{e+1}$ is not transcendental over $\EE$, a contradiction to
the definition of a \piE-monomial. Thus $\ord(f)=n=0$. This completes the proof.
\end{proof}

\ExternalProof{(Theorem~\ref{Thm:sconstIsGroupField}\label{Proof:sconstIsGroupField})}{By Lemma~\ref{Lemma:ReorderSimpleRPISI} we can reorder the simple \rpisiSE-extension such that Theorem~\ref{Thm:rpisiSCONST} is applicable.
}

In the remaining part of this section we deliver insight into the structure of (simple) \rpisiSE-extensions. First observe that a tower of simple \rpisiSE-extensions is again simple.

\begin{lemma}\label{Lemma:SimpleTowerIsSimple}
Let $\dfield{\AA}{\sigma}$ be a difference ring with a group $G\leq\AA^*$ and
let $\dfield{\AA}{\sigma}\leq\dfield{\HH}{\sigma}\leq\dfield{\EE}{\sigma}$ be \rpisiSE-extensions. Then $\dgroup{(\dgroup{G}{\HH}{\AA})}{\EE}{\HH}=\dgroup{G}{\EE}{
\AA}$. Moreover, if $\dfield{\AA}{\sigma}\leq\dfield{\HH}{\sigma}$ is $G$-simple and $\dfield{\HH}{\sigma}\leq\dfield{\EE}{\sigma}$ is $\dgroup{G}{\HH}{\AA}$-simple, then $\dfield{\AA}{\sigma}\leq\dfield{\EE}{\sigma}$ is $G$-simple. 
\end{lemma}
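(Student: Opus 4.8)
The plan is to prove the set identity in the first assertion by a direct double inclusion obtained by re-associating the defining products, and then to deduce the simplicity statement by a case split according to the position of an \rE\piE-monomial in the tower.

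First I would fix notation: write $\HH=\AA\lr{t_1}\dots\lr{t_r}$ and $\EE=\HH\lr{t_{r+1}}\dots\lr{t_e}=\AA\lr{t_1}\dots\lr{t_e}$, with the generators in an order that respects the recursive definition of $\sigma$, and note that $G\leq\AA^*\leq\HH^*\leq\EE^*$, so that all product-groups occurring in the statement are well-defined subgroups by the remark following \eqref{Def:dgroup}. For the identity $\dgroup{(\dgroup{G}{\HH}{\AA})}{\EE}{\HH}=\dgroup{G}{\EE}{\AA}$, an element of the left-hand side has the form $k\,t_{r+1}^{m_{r+1}}\dots t_e^{m_e}$ with $k\in\dgroup{G}{\HH}{\AA}$ and $m_i=0$ whenever $t_i$ (for $r<i\le e$) is a \sigmaSE-monomial; writing $k=h\,t_1^{m_1}\dots t_r^{m_r}$ with $h\in G$ and $m_i=0$ whenever $t_i$ (for $1\le i\le r$) is a \sigmaSE-monomial turns it into an element $h\,t_1^{m_1}\dots t_e^{m_e}$ of $\dgroup{G}{\EE}{\AA}$. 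Conversely, every element $h\,t_1^{m_1}\dots t_e^{m_e}$ of $\dgroup{G}{\EE}{\AA}$ re-brackets as $(h\,t_1^{m_1}\dots t_r^{m_r})\,t_{r+1}^{m_{r+1}}\dots t_e^{m_e}$ with the first factor lying in $\dgroup{G}{\HH}{\AA}$, giving the reverse inclusion. As a byproduct (taking all upper exponents zero) one records $\dgroup{G}{\HH}{\AA}\leq\dgroup{G}{\EE}{\AA}$, which in any case is immediate from the definition.

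For the second assertion I would check the defining condition of a $G$-simple \rpisiSE-extension for each \rE\piE-monomial of $\dfield{\AA}{\sigma}\leq\dfield{\EE}{\sigma}$. If such a monomial is one of $t_1,\dots,t_r$, then since $\dfield{\AA}{\sigma}\leq\dfield{\HH}{\sigma}$ is $G$-simple we get $\sigma(t_i)/t_i\in\dgroup{G}{\HH}{\AA}\leq\dgroup{G}{\EE}{\AA}$ by the inclusion just recorded. If it is one of $t_{r+1},\dots,t_e$, then since $\dfield{\HH}{\sigma}\leq\dfield{\EE}{\sigma}$ is $\dgroup{G}{\HH}{\AA}$-simple we get $\sigma(t_i)/t_i\in\dgroup{(\dgroup{G}{\HH}{\AA})}{\EE}{\HH}=\dgroup{G}{\EE}{\AA}$ by the first assertion. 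In either case $\sigma(t_i)/t_i\in\dgroup{G}{\EE}{\AA}$, so $\dfield{\AA}{\sigma}\leq\dfield{\EE}{\sigma}$ is $G$-simple.

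Since everything amounts to unwinding the definition of the product-group and re-associating products, I do not anticipate a genuine obstacle. The only points needing care are to track the \sigmaSE-monomial constraints (``$m_i=0$ if $t_i$ is a \sigmaSE-monomial'') correctly when moving between the two bracketings, and to remember that what is needed throughout is membership rather than uniqueness of the representation $h\,t_1^{m_1}\dots t_e^{m_e}$ (which may fail, e.g.\ for \rE-monomials, where exponents are only determined modulo the order).
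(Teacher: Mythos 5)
Your proposal is correct. Note that the paper states Lemma~\ref{Lemma:SimpleTowerIsSimple} without proof, treating it as an immediate consequence of the definition~\eqref{Def:dgroup}; your double-inclusion/re-bracketing argument for $\dgroup{(\dgroup{G}{\HH}{\AA})}{\EE}{\HH}=\dgroup{G}{\EE}{\AA}$, followed by the case split on whether an \rE\piE-monomial lies in $\HH$ or above it, is exactly the intended unwinding, and your closing remark that only membership (not uniqueness of the representation $h\,t_1^{m_1}\dots t_e^{m_e}$) is needed correctly disposes of the one potential subtlety.
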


\noindent Further, the reordering as described in Lemma~\ref{Lemma:ReorderSimpleRPISI} is also possible if one relaxes the condition that the \rpisiSE-extension is simple but requires that the ground ring is a field.

\begin{lemma}\label{Lemma:ShuffleOverField}
Let $\dfield{\EE}{\sigma}$ be a \rpisiSE-ring extension of a difference field
$\dfield{\FF}{\sigma}$. Then $\dfield{\EE}{\sigma}$ can be reordered to the form
$\EE=\FF\lr{x_1}\dots\lr{x_r}\lr{t_1}\dots\lr{t_p}\lr{s_1}\dots\lr{s_e}$ where
the $x_i$ are \rE-monomials, the $t_i$ are \piE-monomials and the $s_i$ are
\sigmaSE-monomials.
\end{lemma}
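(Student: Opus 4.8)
The plan is to argue by induction on the number $e'$ of \rpisiSE-monomials of $\dfield{\EE}{\sigma}$; for $e'=0$ there is nothing to prove. For the induction step write $\EE=\HH\lr{t_{e'}}$, where $\dfield{\HH}{\sigma}$ is the \rpisiSE-extension of $\dfield{\FF}{\sigma}$ generated by the first $e'-1$ monomials and $t_{e'}$ is the top monomial. Since $\sigma(t_{e'})$ is an element of $\HH$, and since the transcendence of $t_{e'}$ over $\HH$ (resp.\ the relation $t_{e'}^{\lambda}=1$) as well as the invariance of the constants are intrinsic to the (difference) ring structure and independent of the chosen generating sequence of $\HH$, any reordering of the monomials of $\HH$ produces a presentation of $\dfield{\HH}{\sigma}$ over which $t_{e'}$ can again be adjoined so as to recover the same difference ring $\dfield{\EE}{\sigma}$; here, as in the proof of Lemma~\ref{Lemma:OrderPiSi}, one forgets $\sigma$ and argues in the underlying ring. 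Preservation of the constants along the resulting reordered tower is automatic, since every intermediate constant ring is squeezed between $\const{\FF}{\sigma}$ and $\const{\EE}{\sigma}$, which coincide. Hence by the induction hypothesis we may assume $\HH=\GG\lr{s_1}\dots\lr{s_q}$ with $\GG=\AA\lr{\tau_1}\dots\lr{\tau_p}$ and $\AA=\FF\lr{x_1}\dots\lr{x_r}$, where the $x_i$ are \rE-monomials, the $\tau_i$ are \piE-monomials and the $s_i$ are \sigmaSE-monomials. By Corollary~\ref{Cor:SConstRExtension} the ring $\AA$ is reduced, hence so are $\GG$ and $\HH$ by Lemma~\ref{Lemma:LiftReducedConnected}; since $\HH$ is a polynomial ring over the reduced ring $\GG$ in the \sigmaSE-monomials $s_1,\dots,s_q$, Lemma~\ref{Lemma:PolyInv} gives $\HH^{*}=\GG^{*}$.

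Now distinguish the type of $t_{e'}$. If $t_{e'}$ is a \sigmaSE-monomial it already sits at the right end and the desired shape is reached. If $t_{e'}$ is a \piE-monomial, then $\alpha:=\sigma(t_{e'})/t_{e'}\in\HH^{*}=\GG^{*}$ is free of $s_1,\dots,s_q$, so $t_{e'}$ may be moved leftward past $s_1,\dots,s_q$, giving $\AA\lr{\tau_1}\dots\lr{\tau_p}\lr{t_{e'}}\lr{s_1}\dots\lr{s_q}$; this is again an \rpisiSE-extension of the required form, each $s_i$ remaining a \sigmaSE-monomial because its additive shift $\beta_i$ lies in a subring that only grows and the relevant transcendence is untouched.

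The remaining case, $t_{e'}$ an \rE-monomial, carries the real content: I must show $\alpha:=\sigma(t_{e'})/t_{e'}$ already lies in $\AA$. As before $\alpha\in\HH^{*}=\GG^{*}$, so $\alpha$ is free of the $s_i$, and by Theorem~\ref{Thm:RPSCharacterization}.(3) the element $\alpha$ is a primitive root of unity, say $\alpha^{\lambda}=1$ with $\lambda=\ord(\alpha)>1$. Writing $\alpha=\sum_{i=m}^{M}\alpha_i\,\tau_p^{i}$ with $\alpha_m,\alpha_M\neq0$ (so $M$ and $m$ are its leading and trailing $\tau_p$-degrees) and coefficients in the reduced ring $\AA\lr{\tau_1}\dots\lr{\tau_{p-1}}$, the coefficient of $\tau_p^{M\lambda}$ in $\alpha^{\lambda}$ equals $\alpha_M^{\lambda}$, which is nonzero since a reduced ring has no nonzero nilpotents; as $\alpha^{\lambda}=1$ this forces $M=0$, and symmetrically $m=0$, so $\alpha$ is free of $\tau_p$. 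Iterating this leading/trailing-degree argument through $\tau_{p-1},\dots,\tau_1$ (every coefficient ring met being reduced by Lemma~\ref{Lemma:LiftReducedConnected}) yields $\alpha\in\AA$, and in fact $\alpha\in\AA^{*}$ since $\alpha$ is a root of unity. Hence $t_{e'}$ may be shuffled leftward past all of $\tau_1,\dots,\tau_p,s_1,\dots,s_q$ (the monomials it passes keeping their respective types, as their defining data lie in subrings that only grow), producing the \rpisiSE-extension $\FF\lr{x_1}\dots\lr{x_r}\lr{t_{e'}}\lr{\tau_1}\dots\lr{\tau_p}\lr{s_1}\dots\lr{s_q}$, which after relabelling $t_{e'}$ as $x_{r+1}$ has exactly the claimed shape. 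This final step is the main obstacle: over the non-connected ring $\AA$ a unit of the iterated ring of Laurent polynomials $\GG$ may genuinely involve the $\tau_i$ through idempotents, as illustrated in Example~\ref{Exp:NotSimpleExt}, so it is precisely the extra information $\alpha^{\lambda}=1$ with $\lambda>1$, combined with reducedness, that pins $\alpha$ down to the coefficient ring $\AA$.
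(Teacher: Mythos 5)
Your proof is correct and rests on essentially the same arguments as the paper's: reducedness of the ground tower (Corollary~\ref{Cor:SConstRExtension}, Lemma~\ref{Lemma:LiftReducedConnected}) combined with $\alpha^{\lambda}=1$ kills any dependence of an \rE-monomial's shift quotient on \piE-monomials via the leading/trailing-coefficient argument, while the unit-group description (Lemma~\ref{Lemma:PolyInv}) rules out dependence on \sigmaSE-monomials. The only difference is organizational: you induct on the topmost generator and re-derive the \piE/\sigmaSE-separation inline, whereas the paper shuffles all \rE-monomials to the front in one pass by the same contradiction and then invokes Lemma~\ref{Lemma:OrderPiSi} for the remaining \pisiSE-part.
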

\begin{proof}
First we try to shuffle all \rE-extensions to the front. Suppose that this fails at
the first time. Then there are an \rE-extension $\dfield{\HH}{\sigma}$ of
$\dfield{\FF}{\sigma}$, a \pisiSE-extension $\dfield{\GG}{\sigma}$ of
$\dfield{\HH}{\sigma}$ with $\GG=\HH\lr{y_1}\dots\lr{y_l}$ and an \rE-extension
$\dfield{\GG\lr{x}}{\sigma}$ of $\dfield{\GG}{\sigma}$ with $\alpha=\sigma(x)/x$
in
which $y_l$ occurs. Note that $\HH$ is reduced by
Corollary~\ref{Cor:SConstRExtension}, and $\GG$ is reduced by iterative
application of
Lemma~\ref{Lemma:LiftReducedConnected}. Write $\alpha=\sum_i f_i y_l^i$. Let $m\neq0$ such that $f_m\neq0$ and such that $|m|\geq1$ is maximal (we
remark
that $m<0$ can only happen if $y_l$ is a \piE-monomial).
By the choice of $m$, we have that the coefficient of $y_l^{m\,n}$ in
$\alpha^n$ is $f_m^n$. Hence with $\alpha^n=1$ it follows that $f_m^n=0$, a
contradiction to the assumption that $\GG$ is reduced. Therefore we can shuffle all \rE-monomials to the left and all \pisiSE-monomials to the right. Since the
nested \rE-extension is reduced by Corollary~\ref{Cor:SConstRExtension}, we can apply Lemma~\ref{Lemma:OrderPiSi} to
reorder the \pisiSE-monomials further as claimed in the statement.
\end{proof}

\noindent By definition any (nested) \sigmaSE-extension is a also a simple
\sigmaSE-extension. If the ground ring is reduced
and connected, we obtain the following stronger result.

\begin{proposition}\label{Prop:SimpleIsKarrVersion}
Let $\dfield{\GG}{\sigma}$ be a difference ring  where $\GG$ is reduced and connected and where $\gsconst{\GG}{\sigma}\setminus\{0\}=\GG^*$. Then a \pisiSE-extension $\dfield{\EE}{\sigma}$ of $\dfield{\GG}{\sigma}$ is simple.
\end{proposition}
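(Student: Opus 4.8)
The plan is to reduce the assertion to the description of the units of a tower of (Laurent) polynomial rings over a reduced connected ring, which is available here precisely because $\GG$ is reduced and connected. Write the given \pisiSE-extension as $\EE=\GG\lr{t_1}\lr{t_2}\dots\lr{t_e}$ and set $G=\GG^*$. Since a \pisiSE-extension contains no \rE-monomials, the extension $\dfield{\EE}{\sigma}$ is simple (in the sense of Definition~\ref{Def:SimpleRPS} with $G=\GG^*$) exactly when, for every \piE-monomial $t_i$ among $t_1,\dots,t_e$, one has $\sigma(t_i)/t_i\in\dgroup{G}{\EE}{\GG}$. By the definition of a \piE-extension, $\sigma(t_i)/t_i$ is a unit of the subring $\GG_{i-1}:=\GG\lr{t_1}\dots\lr{t_{i-1}}$, so the whole matter comes down to identifying $(\GG_{i-1})^*$.

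Next I would note that each prefix $\GG_{i-1}$ is itself a \pisiSE-extension of $\GG$, being a sub-tower of the original tower of \piE- and \sigmaSE-steps. Iterating Lemma~\ref{Lemma:LiftReducedConnected}, starting from the hypothesis that $\GG$ is reduced and connected, shows that $\GG_{i-1}$ is reduced and connected. Combining this with the hypothesis $\gsconst{\GG}{\sigma}\setminus\{0\}=\GG^*$, I would apply Corollary~\ref{Cor:pisiSCONST}.(2) to the \pisiSE-extension $\GG_{i-1}$ of $\GG$ (note that part~2 of that corollary does not presuppose simplicity). This yields $(\GG_{i-1})^*=\{h\,t_1^{m_1}\dots t_{i-1}^{m_{i-1}}\mid h\in\GG^*,\ m_j\in\ZZ,\ m_j=0\text{ if }t_j\text{ is a \sigmaSE-monomial}\}$, which is exactly $\dgroup{G}{\GG_{i-1}}{\GG}$ by~\eqref{Def:dgroup}. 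Equivalently, one could derive this directly by induction on $i$: a \sigmaSE-step is a polynomial ring over a reduced ring and contributes no new units by Lemma~\ref{Lemma:PolyInv}, whereas a \piE-step is a Laurent polynomial ring over a reduced connected ring and contributes only the integer powers of the new generator by Lemma~\ref{Lemma:LaurentInv}.

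Finally I would assemble the pieces: for any \piE-monomial $t_i$ we obtain $\sigma(t_i)/t_i\in(\GG_{i-1})^*=\dgroup{G}{\GG_{i-1}}{\GG}\subseteq\dgroup{G}{\EE}{\GG}$, the last inclusion being immediate from~\eqref{Def:dgroup} by taking the exponents of $t_i,\dots,t_e$ to be zero. Hence every \rE\piE-monomial of $\dfield{\EE}{\sigma}$ satisfies the defining condition of $G$-simplicity with $G=\GG^*$, so $\dfield{\EE}{\sigma}$ is $\GG^*$-simple, i.e.\ simple.

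I expect the only delicate point to be the bookkeeping: making sure the two hypotheses needed for Corollary~\ref{Cor:pisiSCONST}.(2) — that the prefix $\GG_{i-1}$ is reduced and connected, and that $\gsconst{(\cdot)}{\sigma}\setminus\{0\}$ equals the full unit group of the base — are genuinely inherited along the tower, which is exactly what the iterated Lemma~\ref{Lemma:LiftReducedConnected} and the passed-down hypothesis on $\gsconst{\GG}{\sigma}$ guarantee, and keeping precise track of which generators are forced to carry exponent $0$ in $\dgroup{G}{\EE}{\GG}$. There is no genuine algebraic difficulty beyond this.
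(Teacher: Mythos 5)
Your proof is correct and rests on the same key fact as the paper's own argument, namely Corollary~\ref{Cor:pisiSCONST}.(2) applied with ground ring $\GG$ (reduced, connected, $\gsconst{\GG}{\sigma}\setminus\{0\}=\GG^*$) to identify the unit group of each prefix with the corresponding product group, from which $\sigma(t_i)/t_i\in\dgroup{(\GG^*)}{\EE}{\GG}$ for every \piE-monomial follows. The only difference is cosmetic: the paper first reorders the generators via Lemma~\ref{Lemma:OrderPiSi} so that the \piE-monomials precede the \sigmaSE-monomials, whereas you apply the corollary directly to the (possibly mixed) prefixes, which works because its unit-group description already forces exponent $0$ on \sigmaSE-monomials; your remark that the prefixes are reduced and connected is harmless but not needed, since the corollary only requires these properties of the ground ring $\GG$.
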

\begin{proof}
Let $\EE=\GG\lr{t_1}\dots\lr{t_e}$. By Lemma~\ref{Lemma:OrderPiSi} we may suppose that the
generators are ordered such that the $t_1\dots,t_p$ are \piE-monomials and
the $t_{p+1}\dots,t_e$ are \sigmaSE-monomials. 
By Corollary~\ref{Cor:pisiSCONST}.(2)  we have that $\frac{\sigma(t_i)}{t_i}\in\GG\lr{t_1}\dots\lr{t_{i-1}}^*=\dgroup{(\GG^*)}{\GG\lr{t_1}\dots\lr{t_{i-1}}}{\GG}$ with $1\leq i\leq p$. Thus the \piE-monomials $t_i$ are $\GG^*$-simple. Moreover, the \sigmaSE-monomials $t_i$ on top are all $\GG^*$-simple by definition. Summarizing $\dfield{\FF}{\sigma}\leq\dfield{\EE}{\sigma}$ is simple.
\end{proof}

\noindent In other words, for a reduced and connected difference ring $\dfield{\AA}{\sigma}$ (e.g., if $\AA$ is a field) the notions of \pisiSE-ring extension and simple \pisiSE-ring extension are equivalent. The situation becomes rather different if the ring is, e.g., not connected; see Example~\ref{Exp:NotSimpleExt}. But, for single-rooted \rpisiSE-extensions over a difference field, the situation is again tame.

\begin{corollary}\label{Cor:SingleRootedIsSimple}
A single-rooted \rpisiSE-extension $\dfield{\EE}{\sigma}$ of a field $\dfield{\GG}{\sigma}$ is simple.
\end{corollary}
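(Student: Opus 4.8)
The plan is to reduce the statement directly to the definition of $\GG^*$-simplicity and to verify its one nontrivial clause separately for the \rE-monomials and for the \piE-monomials. First I would invoke Definition~\ref{Def:SingleRooted} to fix an ordering
$$\EE=\GG\lr{t_1}\dots\lr{t_r}\lr{x_1}\dots\lr{x_u}\lr{s_1}\dots\lr{s_v}$$
with the $t_i$ being \piE-monomials, the $x_i$ being \rE-monomials satisfying $\sigma(x_i)/x_i\in\GG^*$, and the $s_i$ being \sigmaSE-monomials. By Definition~\ref{Def:SimpleRPS} it then suffices to check that $\sigma(z)/z\in\dgroup{(\GG^*)}{\EE}{\GG}$ for every \rE\piE-monomial $z$ among these generators.

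For the \rE-monomials this is immediate from the single-rooted hypothesis, since $\sigma(x_i)/x_i\in\GG^*\subseteq\dgroup{(\GG^*)}{\EE}{\GG}$. The real content lies with the \piE-monomials. Here the key observation is that, in the chosen ordering, the subtower $\GG\lr{t_1}\dots\lr{t_{i-1}}$ sitting below a \piE-monomial $t_i$ is built entirely from \piE-monomials, hence $\dfield{\GG\lr{t_1}\dots\lr{t_{i-1}}}{\sigma}$ is a nested \piE-extension of the field $\dfield{\GG}{\sigma}$. Since a field is reduced and connected, I would iterate Lemma~\ref{Lemma:LiftReducedConnected} to conclude that $\GG\lr{t_1}\dots\lr{t_{i-1}}$ is still reduced and connected, and then iterate the Laurent-polynomial clause of Lemma~\ref{Lemma:LaurentInv} to obtain
$$\bigl(\GG\lr{t_1}\dots\lr{t_{i-1}}\bigr)^*=\{u\,t_1^{m_1}\cdots t_{i-1}^{m_{i-1}}\mid u\in\GG^*,\ m_1,\dots,m_{i-1}\in\ZZ\}.$$
Because $t_i$ is a \piE-monomial, $\sigma(t_i)/t_i$ is a unit of $\GG\lr{t_1}\dots\lr{t_{i-1}}$, hence lies in this group; and since the definition of $\dgroup{(\GG^*)}{\EE}{\GG}$ in~\eqref{Def:dgroup} admits arbitrary integer exponents on the \piE-generators, this forces $\sigma(t_i)/t_i\in\dgroup{(\GG^*)}{\EE}{\GG}$, as required.

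Combining the two cases, every \rE\piE-monomial satisfies the defining condition, so $\dfield{\GG}{\sigma}\leq\dfield{\EE}{\sigma}$ is $\GG^*$-simple, i.e., simple. The only delicate point — and the closest thing to an obstacle — is justifying the iterated application of Lemmas~\ref{Lemma:LiftReducedConnected} and~\ref{Lemma:LaurentInv}: one must know that each intermediate subring is reduced and connected before describing its units, and this is precisely what forces one to exploit the single-rooted ordering in which all \piE-monomials precede the \rE- and \sigmaSE-monomials. Inserting an \rE-monomial below a \piE-monomial would in general destroy connectedness (cf.\ Example~\ref{Exp:NotSimpleExt}), and the clean description of the unit group, and hence the argument, would break down.
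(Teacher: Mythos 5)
Your proof is correct, and it reaches the same two technical pillars as the paper -- Lemma~\ref{Lemma:LiftReducedConnected} (a \pisiSE-extension preserves reducedness and connectedness) and the unit description of Lemma~\ref{Lemma:LaurentInv} -- but it takes a more direct route. The paper's proof is a two-line reduction: since $\GG$ is a field, $\gsconst{\GG}{\sigma}\setminus\{0\}=\GG^*$, so Proposition~\ref{Prop:SimpleIsKarrVersion} applies to the \piE-block $\GG\lr{t_1}\dots\lr{t_r}$ and yields its simplicity wholesale, after which the \rE- and \sigmaSE-monomials are handled exactly as you do. You instead verify the defining condition of $\GG^*$-simplicity by hand for each \piE-monomial, exploiting that in the single-rooted ordering the subtower below $t_i$ is a pure \piE-tower over a field, hence reduced and connected, so that its unit group is $\{u\,t_1^{m_1}\cdots t_{i-1}^{m_{i-1}}\mid u\in\GG^*,\,m_j\in\ZZ\}$ and $\sigma(t_i)/t_i$ automatically lands in $\dgroup{(\GG^*)}{\EE}{\GG}$. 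This unwinding buys self-containedness: you never invoke the semi-constant formalism (the hypothesis $\gsconst{\GG}{\sigma}\setminus\{0\}=\GG^*$ and Corollary~\ref{Cor:pisiSCONST}.(2) on which Proposition~\ref{Prop:SimpleIsKarrVersion} rests), nor Lemma~\ref{Lemma:OrderPiSi}, since single-rootedness already places all \piE-monomials below the \rE- and \sigmaSE-monomials; and your closing remark correctly isolates why connectedness is the crucial ingredient, in line with Example~\ref{Exp:NotSimpleExt}. What the paper's route buys is brevity and reuse of a result already established in greater generality (arbitrary \pisiSE-extensions of reduced connected rings, where \sigmaSE-monomials may interleave with \piE-monomials). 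Both arguments are sound; yours is a legitimate, slightly more elementary specialization of the same circle of ideas.
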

\begin{proof}
By definition the \rpisiSE-extension can be reordered to the form~\eqref{Def:SingleRooted}. Since $\GG$ is a field, $\gsconst{\GG}{\sigma}\setminus\{0\}=\GG^*$. By Proposition~\ref{Prop:SimpleIsKarrVersion} the \piE-extension $\dfield{\GG\lr{t_1}\dots\lr{t_r}}{\sigma}$ of $\dfield{\GG}{\sigma}$ is simple. Since $\frac{\sigma(x_i)}{x_i}\in\GG^*$ for $1\leq i\leq u$, the \rE-monomials $x_i$ are $\GG^*$-simple. Since also the \sigmaSE-monomials $s_i$ are $\GG^*$-simple, we conclude that $\dfield{\GG}{\sigma}\leq\dfield{\EE}{\sigma}$ is simple.
\end{proof}

\section{The algorithmic machinery I: order, period, factorial order}\label{Sec:Period}

An important ingredient for the development of our summation algorithms is the knowledge of the order (see its definition in~\eqref{equ:OrderDef} and the corresponding Problem~O), the period and the factorial order. In $\dfield{\AA}{\sigma}$ we define the period of $h\in\AA^*$ by\index{function!period}\index{$\per(f)$}
\index{function!factorial order}\index{$\ford(f)$}\begin{align*}
\per(h)&=\begin{cases}
0&\text{ if }\nexists n>0\text{ s.t.\ } \sigma^n(h)=h\\
\min\{n>0|\,\sigma^n(h)=h\}&\text{ otherwise};
\end{cases}
\intertext{and the factorial order of $h$ by}
\ford(h)&=\begin{cases}
0&\text{ if }\nexists n>0\text{ s.t.\ } \sigmaFac{h}{n}=1\\
\min\{n>0|\,\sigmaFac{h}{n}=1\}&\text{ otherwise}.
\end{cases}
\end{align*}
Using the properties of the automorphism $\sigma$ and
Lemma~\ref{Lemma:SigmaFacId} it is easy to see that the $\ZZ$-modules generated by $\ord(h)$, $\per(h)$ and $\ford(h)$ are
$\langle\ord(h)\rangle=\ord(h)\,\ZZ=\{k\in\ZZ|\,h^k=1\}$,
$\langle\per(h)\rangle=\per(h)\,\ZZ=\{k\in\ZZ|\,\sigma^k(h)=h\}$, and
$\langle\ford(h)\rangle=\ford(h)\,\ZZ=\{k\in\ZZ|\,\sigmaFac{h}{k}=1\}$, respectively.
In addition, the following basic properties hold.

\begin{lemma}\label{Lemma:BasicsOrdPer}
 Let $\dfield{\AA}{\sigma}$ be a difference ring with $\alpha,h\in\AA^*$. Then the following holds.
 \begin{enumerate}
  \item If $\alpha\in(\const{\AA}{\sigma})^*$, then $\per(\alpha)=1$ and $\ford(\alpha)=\ord(\alpha)$.
  \item If $\sigma(h)=\alpha\,h$, then $\per(h)=\ford(\alpha)$.
  \item If $\ord(\alpha)>0$ and $\per(\alpha)>0$, then $\per(\alpha)\mid \ford(\alpha)\mid \per(\alpha)\,\ord(\alpha)$ and
\begin{equation}\label{Equ:FindSmallestSigmaFactPer}
\ford(\alpha)=\min(i\,\per(\alpha)|\,1\leq i\leq
\ord(\alpha)\text{ and }\sigmaFac{\alpha}{i\,\per(\alpha)}=1)>0.
\end{equation}
 \end{enumerate}
\end{lemma}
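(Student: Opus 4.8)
The plan is to prove the three parts of Lemma~\ref{Lemma:BasicsOrdPer} in order, using only the definitions of $\ord$, $\per$, $\ford$, the description of the associated $\ZZ$-modules given just above the lemma, and Lemma~\ref{Lemma:SigmaFacId}.

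\textbf{Part 1.} If $\alpha\in(\const{\AA}{\sigma})^*$, then $\sigma(\alpha)=\alpha$, so $\sigma^1(\alpha)=\alpha$ and hence $\per(\alpha)=1$. For the factorial order, observe that when $\alpha$ is a constant we have $\sigmaFac{\alpha}{n}=\alpha\,\sigma(\alpha)\cdots\sigma^{n-1}(\alpha)=\alpha^n$ for all $n>0$ (and this identity also extends to $n\le 0$ by the definition of the rising factorial, using $\sigma^{-j}(\alpha^{-1})=\alpha^{-1}$). Therefore $\{n\in\ZZ\mid \sigmaFac{\alpha}{n}=1\}=\{n\in\ZZ\mid \alpha^n=1\}$, i.e.\ $\langle\ford(\alpha)\rangle=\langle\ord(\alpha)\rangle$, which forces $\ford(\alpha)=\ord(\alpha)$ (both being the non-negative generator of the same $\ZZ$-module).

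\textbf{Part 2.} Suppose $\sigma(h)=\alpha\,h$. By Lemma~\ref{Lemma:SigmaFacId}.(4) we have $\sigma^n(h)=\sigmaFac{\alpha}{n}\,h$ for all $n\in\ZZ$. Since $h\in\AA^*$, the equation $\sigma^n(h)=h$ is equivalent to $\sigmaFac{\alpha}{n}\,h=h$, i.e.\ $(\sigmaFac{\alpha}{n}-1)\,h=0$, and multiplying by $h^{-1}$ this is equivalent to $\sigmaFac{\alpha}{n}=1$. Hence $\{n\in\ZZ\mid \sigma^n(h)=h\}=\{n\in\ZZ\mid \sigmaFac{\alpha}{n}=1\}$, so $\langle\per(h)\rangle=\langle\ford(\alpha)\rangle$ and therefore $\per(h)=\ford(\alpha)$.

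\textbf{Part 3.} Assume $\ord(\alpha)=\lambda>0$ and $\per(\alpha)=p>0$. First, $\per(\alpha)\mid\ford(\alpha)$: if $\sigmaFac{\alpha}{n}=1$ then applying Lemma~\ref{Lemma:SigmaFacId}.(4) with $h=1$ (note $\sigma(1)=1=\alpha^0\cdot 1$ is not quite the right setup) — more directly, from $\sigmaFac{\alpha}{n}=1$ and Lemma~\ref{Lemma:SigmaFacId}.(2), $1=\sigmaFac{\alpha}{n+1}=\sigma^1(\sigmaFac{\alpha}{n})\,\sigmaFac{\alpha}{1}$ compared with $\sigmaFac{\alpha}{1}=\alpha$ gives $\sigma(\sigmaFac{\alpha}{n})=\alpha^{-1}\sigmaFac{\alpha}{n}$; I will instead argue as follows: from $\sigma(\sigmaFac{\alpha}{n})\cdot\alpha=\sigmaFac{\alpha}{n+1}$ and (by Lemma~\ref{Lemma:SigmaFacId}.(2) with $n\mapsto1$, $m\mapsto n$) $\sigmaFac{\alpha}{n+1}=\sigma(\sigmaFac{\alpha}{n})\,\alpha$ one sees $\sigma(\sigmaFac{\alpha}{n})=\sigmaFac{\alpha}{n+1}\alpha^{-1}$, so if $\sigmaFac{\alpha}{n}=1$ then $\sigma(1)=1$ gives $\sigmaFac{\alpha}{n+1}=\alpha$, equivalently $\sigma^{n}(\alpha)=\alpha$ after unwinding with Lemma~\ref{Lemma:SigmaFacId}.(2): $\sigmaFac{\alpha}{n+1}=\sigma^n(\sigmaFac{\alpha}{1})\sigmaFac{\alpha}{n}=\sigma^n(\alpha)$, so $\sigmaFac{\alpha}{n}=1$ implies $\sigma^n(\alpha)=\alpha$, hence $p=\per(\alpha)\mid n$; taking $n=\ford(\alpha)$ yields $p\mid\ford(\alpha)$. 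Next, $\ford(\alpha)\mid p\,\lambda$: using Lemma~\ref{Lemma:SigmaFacId}.(2) repeatedly, $\sigmaFac{\alpha}{p\lambda}=\prod_{j=0}^{\lambda-1}\sigma^{jp}(\sigmaFac{\alpha}{p})$; since $\sigmaFac{\alpha}{p}$ satisfies $\sigma^p(\sigmaFac{\alpha}{p})=\sigmaFac{\alpha}{p}$? — this needs care, so I would instead note $\sigma^p(\alpha)=\alpha$ implies $\sigma^p$ fixes $\sigmaFac{\alpha}{p}$ (a product of $\sigma^i(\alpha)$, $0\le i<p$, permuted cyclically by $\sigma^p$), hence $\sigmaFac{\alpha}{p\lambda}=(\sigmaFac{\alpha}{p})^{\lambda}$. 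It remains to show $\sigmaFac{\alpha}{p}$ is a root of unity dividing into $1$ after $\lambda$-th power; indeed $\sigmaFac{\alpha}{p}=\prod_{i=0}^{p-1}\sigma^i(\alpha)$ and $(\sigmaFac{\alpha}{p})^{\lambda}=\prod_{i=0}^{p-1}\sigma^i(\alpha^{\lambda})=\prod_{i=0}^{p-1}\sigma^i(1)=1$. Hence $\sigmaFac{\alpha}{p\lambda}=1$, so $\ford(\alpha)\mid p\lambda$; in particular $\ford(\alpha)>0$. Finally, the displayed formula~\eqref{Equ:FindSmallestSigmaFactPer}: we have just shown $\ford(\alpha)$ is a positive multiple of $p$, say $\ford(\alpha)=i_0\,p$, and $\ford(\alpha)\le p\lambda$ gives $1\le i_0\le\lambda$; minimality of $\ford(\alpha)$ among positive $n$ with $\sigmaFac{\alpha}{n}=1$, together with the fact that any such $n$ is a multiple of $p$, shows $i_0$ is the least $i\in\{1,\dots,\lambda\}$ with $\sigmaFac{\alpha}{ip}=1$, which is exactly~\eqref{Equ:FindSmallestSigmaFactPer}.

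The main obstacle I anticipate is the bookkeeping in Part~3 — specifically justifying cleanly that $\sigma^p(\alpha)=\alpha$ implies $\sigmaFac{\alpha}{p\lambda}=(\sigmaFac{\alpha}{p})^{\lambda}$ and that $p\mid\ford(\alpha)$ — since these require careful use of Lemma~\ref{Lemma:SigmaFacId}.(2) to split and regroup rising-factorial products, and the one-line ``it is easy to see'' for the $\ZZ$-module descriptions above the lemma must actually be invoked at the right granularity. Parts~1 and~2 are essentially immediate from Lemma~\ref{Lemma:SigmaFacId} once one notes that multiplicativity by the unit $h$ (resp.\ the constant identity $\sigmaFac{\alpha}{n}=\alpha^n$) converts the defining conditions into each other.
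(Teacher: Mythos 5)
Your proof is correct and takes essentially the same route as the paper: parts 1 and 2 coincide with the paper's argument via Lemma~\ref{Lemma:SigmaFacId}, and part 3 rests on the same two facts the paper uses, namely that $\sigmaFac{\alpha}{n}=1$ forces $\sigma^n(\alpha)=\alpha$ and hence $\per(\alpha)\mid n$, and that $\sigmaFac{\alpha}{\per(\alpha)\ord(\alpha)}=1$ by regrouping the rising factorial using the periodicity of $\alpha$ and $\alpha^{\ord(\alpha)}=1$ (the paper just performs this regrouping in one stroke instead of via the $\ord(\alpha)$-th power of the $\per(\alpha)$-block). The false start you flag in part 3 is indeed invalid, but the replacement argument you substitute is correct, so there is no gap.
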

\begin{proof}
(1) Since $\sigma(\alpha)=\alpha$, $\per(\alpha)=1$. Since $\sigmaFac{\alpha}{n}=\alpha^n$ for $n\geq0$, $\ford(\alpha)=\ord(\alpha)$.\\
(2) By Lemma~\ref{Lemma:SigmaFacId}.(4) we have that $\sigma^n(h)=h$ iff $\sigmaFac{\alpha}{n}=1$. Hence $\per(h)=\ford(\alpha)$.\\
(3) Take $p=\per(\alpha)>0$ and $v=\ord(\alpha)>0$. Then we have that
\begin{equation*}
\sigmaFac{\alpha}{p\,v}=\alpha\,\sigma(\alpha)\dots\sigma^{p\,v-1}
(\alpha)=(\alpha\,
\sigma(\alpha)\dots\sigma^{p-1}(\alpha))^v=\alpha^v\,
\sigma(\alpha^v)\dots\sigma^{p-1}(\alpha^v)=1.
\end{equation*}
Consequently, we can choose $n=\ord(\alpha)\,\per(\alpha)$ to obtain
$\sigmaFac{\alpha}{n}=1$. In particular, for any $i\geq0$ with
$\sigmaFac{\alpha}{i}=1$ we have that 
$1=\frac{\sigma(1)}{1}=\frac{\sigma(\sigmaFac{\alpha}{i})}{\sigmaFac{\alpha}{i}
} =\frac{\sigma^i(\alpha)}{\alpha}.$
Hence $\per(\alpha)| i$. 
Thus the smallest $\lambda$ with $\sigmaFac{\alpha}{\lambda}=1$ is given
by~\eqref{Equ:FindSmallestSigmaFactPer}. In particular, 
$\per(\alpha)|\ford(\alpha)|\ord(\alpha)\,\per(\alpha)$.
\end{proof}

We will present methods to calculate the order, period and factorial order for the elements of $\dgroup{(\AA^*)}{\EE}{\AA}$ of a simple \rE-extension $\dfield{\EE}{\sigma}\geq\dfield{\GG}{\sigma}$ by recursion. First, we assume that the orders of the \rE-monomials in $\dfield{\EE}{\sigma}\geq\dfield{\GG}{\sigma}$ are already computed and show how the orders of the elements of $\dgroup{(\AA^*)}{\EE}{\AA}$ can be determined.

\begin{lemma}\label{Lemma:DROrder}
Let $\dfield{\EE}{\sigma}$ with $\EE=\AA\lr{x_1}\dots\lr{x_e}$ be an \rE-extension
of $\dfield{\AA}{\sigma}$ and define
\begin{equation}\label{Equ:alphaForXMon}
\alpha:=u\,x_1^{z_1}\dots x_e^{z_e}\in\dgroup{(\AA^*)}{\EE}{\AA}
\end{equation}
with $u\in\AA^*$ and $z_i\in\NN$. Then $\ord(\alpha)>0$ iff $\ord(u)>0$. If $\ord(u)>0$, then
\begin{equation}\label{Equ:GetOrder}
\ord(\alpha)=\lcm(\ord(u),\tfrac{\ord(x_1)}{\gcd(\ord(x_1),z_1)},\dots,\tfrac{
\ord(x_e)}{\gcd(\ord(x_e),z_e)}).
\end{equation}
\end{lemma}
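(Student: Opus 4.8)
The plan is to exploit the free-module structure of a nested \rE-extension and then reduce everything to an elementary computation with $\ZZ$-submodules. First I would recall, by iterating the construction behind Lemma~\ref{Lemma:RExistence} (each step adjoining $x_i$ as a quotient $\EE_{i-1}[y]/\langle y^{\lambda_i}-1\rangle$, where $\EE_{i-1}=\AA\lr{x_1}\dots\lr{x_{i-1}}$ and $\lambda_i$ is the order of the $i$-th \rE-extension), that $\EE=\AA\lr{x_1}\dots\lr{x_e}$ is a free $\AA$-module with basis $\{x_1^{j_1}\cdots x_e^{j_e}\mid 0\le j_i<\lambda_i\}$, since at every step we quotient a polynomial ring by a monic polynomial. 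In particular each $x_i$ is a unit ($x_i\cdot x_i^{\lambda_i-1}=x_i^{\lambda_i}=1$), and linear independence of $1,x_i,\dots,x_i^{\lambda_i-1}$ over $\EE_{i-1}$ shows that $x_i^d=1$ iff $\lambda_i\mid d$; hence $\ord(x_i)=\lambda_i$.

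Next I would simply compute powers of $\alpha$. For $n\in\NN$,
\[
\alpha^n=u^n\,x_1^{n z_1}\cdots x_e^{n z_e}=u^n\,x_1^{\,n z_1\bmod\lambda_1}\cdots x_e^{\,n z_e\bmod\lambda_e},
\]
so $\alpha^n$ is a scalar multiple of a single basis monomial of $\EE$. By uniqueness of representation in this free module, the equality $\alpha^n=1$ holds precisely when that monomial is the trivial one and the scalar equals $1$, i.e.\ when $u^n=1$ and $\lambda_i\mid n z_i$ for all $i$.

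It then remains to convert this into module arithmetic. Using the identities recorded just before Lemma~\ref{Lemma:BasicsOrdPer}, we have $\{n\in\ZZ\mid u^n=1\}=\ord(u)\,\ZZ$, while a direct $\gcd$ computation gives $\{n\in\ZZ\mid\lambda_i\mid n z_i\}=\tfrac{\lambda_i}{\gcd(\lambda_i,z_i)}\,\ZZ$ (which reads $\ZZ$ when $z_i=0$, with the convention $\gcd(\lambda_i,0)=\lambda_i$). Intersecting and using $a\ZZ\cap b\ZZ=\lcm(a,b)\ZZ$ with the convention $\lcm(0,\cdot)=0$, one obtains
\[
\{n\in\ZZ\mid\alpha^n=1\}=\ord(u)\,\ZZ\cap\bigcap_{i=1}^{e}\tfrac{\ord(x_i)}{\gcd(\ord(x_i),z_i)}\,\ZZ=m\,\ZZ,
\]
with $m=\lcm\big(\ord(u),\tfrac{\ord(x_1)}{\gcd(\ord(x_1),z_1)},\dots,\tfrac{\ord(x_e)}{\gcd(\ord(x_e),z_e)}\big)$. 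Since every $\tfrac{\ord(x_i)}{\gcd(\ord(x_i),z_i)}\ge 1$, the modulus $m$ vanishes iff $\ord(u)=0$; hence $\ord(\alpha)>0$ iff $\ord(u)>0$, and whenever $\ord(u)>0$ one reads off $\ord(\alpha)=m$, which is exactly~\eqref{Equ:GetOrder}.

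The computation is routine; the one place demanding care is the first step — verifying that the iterated quotient is genuinely $\AA$-free on the claimed monomial basis, so that $\alpha^n$ has a unique normal form and $\ord(x_i)=\lambda_i$ (rather than merely $\ord(x_i)\mid\lambda_i$) — together with keeping the degenerate conventions ($\ord=0$ meaning ``no finite order'', $\gcd(\lambda_i,0)=\lambda_i$, and $\lcm$ with a zero argument) mutually consistent throughout.
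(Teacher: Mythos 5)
Your proof is correct and follows essentially the same route as the paper's: both rest on the fact that $\EE$ is a free $\AA$-module on the monomials $x_1^{j_1}\cdots x_e^{j_e}$ (the paper phrases this as a contradiction with the defining relation $x_i^{\ord(x_i)}=1$ at the maximal index $i$ with $\ord(x_i)\nmid n z_i$), so that $\alpha^n=1$ forces $u^n=1$ and $\ord(x_i)\mid n z_i$ for all $i$. Your packaging of the remaining arithmetic as an intersection of subgroups $m\,\ZZ$ of $\ZZ$ is just a streamlined version of the paper's stagewise $\lcm$ computation via $\ord(x_i^{z_i})=\ord(x_i)/\gcd(\ord(x_i),z_i)$.
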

\begin{proof}
If $e=0$, the lemma holds. Now let $n:=\ord(\alpha)>0$.
Suppose that  $1\neq(x_1^{z_1}\dots x_e^{z_e})^n=x_1^{n\,z_1}\dots
x_e^{n\,z_e}$. Let $i$ be maximal such that $\ord(x_i)\nmid z_i\,n$.
Then there is an $s$ with $0<s<\ord(x_i)$ with
$x_i^{\ord{x_i}-s}=u^n\,x_1^{z_1}\dots
x_{i-1}^{z_{i-1}}\in\AA\lr{x_1}\dots\lr{x_{i-1}}$ which contradicts to the
construction that $x_i^{\ord(x_i)}=1$ is the defining relation of the
\rE-monomial. Thus $(x_1^{z_1}\dots x_e^{z_e})^n=1$ and $u^n=1$, i.e.,
$\ord(u)>0$ and $\ord(x_1^{z_1}\dots x_e^{z_e})>0$. In particular, 
$\ord(\alpha)=\lcm(\ord(u),\ord(x_1^{z_1}\dots x_e^{z_e})).$
By similar arguments we can show that $(x_1^{z_1})^{n}=\dots=(x_e^{z_e})^{n}=1$ and consequently
$\ord(x_1^{z_1}\dots x_e^{z_e})=\lcm(\ord(x_1^{z_1}),\dots,\ord(x_e^{z_e}).$
Since also 
$\ord(x_i^{z_i})=\frac{\ord(x_i)}{\gcd(\ord(x_i),z_i)}$ holds, the identity~\eqref{Equ:GetOrder} is proven.\\ 
Conversely, suppose that $\ord(u)>0$. Then the value of the right right hand side of~\eqref{Equ:GetOrder} is positive. Denote it by $n$. Then one can check that $\alpha^n=1$. Therefore $\ord(\alpha)>0$.
\end{proof}

\noindent In the next lemma we set the stage to calculate the period and factorial order.

\begin{lemma}\label{Lemma:DRPeriod}
Let $\dfield{\EE}{\sigma}$ with $\EE=\AA\lr{x_1}\dots\lr{x_e}$ be an \rE-extension
of $\dfield{\AA}{\sigma}$ where we have $\per(x_i)>0$ for $1\leq i\leq e$. Let $\alpha\in\dgroup{(\AA^*)}{\EE}{\AA}$ as in~\eqref{Equ:alphaForXMon} with
$z_1,\dots,z_e\in\NN$ and $u\in\AA^*$.
\begin{enumerate}
\item Then $\per(\alpha)>0$ iff $\per(u)>0$. If $\per(u)>0$, then
\begin{equation}\label{Equ:BoundPerAlpha}
\per(\alpha)=\min(1\leq j\leq\mu|\,\sigma^j(\alpha)=\alpha\text{ and }j\mid \mu)
\end{equation}
with $\mu=\lcm(\per(u),\per(x_{i_1}),\dots,\per(x_{i_k}))$ where $\{i_1,\dots,i_k\}=\{i:\,\ord(x_i)\nmid z_i\}$.
\item We have that $\ford(\alpha)>0$ iff $\ford(u)>0$.
\item If $\per(u),\ord(u)>0$, then $\ford(\alpha)>0$ and $0<\per(\alpha)|\ford(\alpha)|\per(\alpha)\,\ord(\alpha)$.
\item If the values $\ord(x_i)$ and $\per(x_i)$ for $1\leq i\leq e$ and the values $\per(u)>0$ and $\ord(u)>0$ are given explicitly, then $\per(\alpha)$ and $\ford(\alpha)$ can be calculated.
\end{enumerate}
\end{lemma}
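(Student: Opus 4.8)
The plan is to reduce every assertion about $\alpha=u\,x_1^{z_1}\cdots x_e^{z_e}$ to the corresponding assertion about the unit $u\in\AA^*$, using that $\alpha=u\,m$ with $m:=x_1^{z_1}\cdots x_e^{z_e}\in\EE^*$ (each $x_i$ is invertible since $x_i^{\ord(x_i)}=1$). Because $m$ is a unit we have $u=\alpha\,m^{-1}$, so period and factorial‑order data pass back and forth between $\alpha$ and $u$ as soon as one knows how $m$ itself behaves, and controlling $m$ is precisely where the standing hypothesis $\per(x_i)>0$ is used. I would first record the elementary reductions: $\sigma^k$ is a ring automorphism, the $\sigma$‑factorial is multiplicative (Lemma~\ref{Lemma:SigmaFacId}.(1)), and $\langle\per(h)\rangle=\{k:\sigma^k(h)=h\}$, $\langle\ford(h)\rangle=\{k:\sigmaFac{h}{k}=1\}$. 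Finally, if $\ord(x_i)\mid z_i$ then $x_i^{z_i}=1$, so $m=\prod_{i\in J}x_i^{z_i}$ where $J:=\{i:\ord(x_i)\nmid z_i\}$; this is the reason the $\lcm$ in~\eqref{Equ:BoundPerAlpha} ranges only over $J$. (The case $e=0$, where $\alpha=u$, is trivial.)

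For part~1, set $P:=\lcm_{i\in J}\per(x_i)$, which is positive, so that $\sigma^P(m)=m$. If $\per(u)>0$, then for $N:=\lcm(\per(u),P)>0$ we get $\sigma^N(\alpha)=\sigma^N(u)\,\sigma^N(m)=u\,m=\alpha$, so $\per(\alpha)>0$. Conversely, if $\per(\alpha)=n>0$, then for $N:=\lcm(n,P)$ we get $\sigma^N(u)=\sigma^N(\alpha)\,\sigma^N(m)^{-1}=\alpha\,m^{-1}=u$, so $\per(u)>0$. For the explicit value I would take $\mu:=\lcm\!\big(\per(u),\{\per(x_i):i\in J\}\big)$; since $\per(u)\mid\mu$ and $\per(x_i)\mid\mu$ for $i\in J$ one has $\sigma^\mu(u)=u$ and $\sigma^\mu(x_i)=x_i$, hence $\sigma^\mu(\alpha)=\alpha$ and therefore $\per(\alpha)\mid\mu$. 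Since $\{k:\sigma^k(\alpha)=\alpha\}=\per(\alpha)\,\ZZ$, the period equals the least divisor $j$ of $\mu$ with $\sigma^j(\alpha)=\alpha$, which is exactly~\eqref{Equ:BoundPerAlpha}.

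Part~2 runs along the same lines once one shows $\ford(m)>0$. Each $x_i$ satisfies $\ord(x_i)>1$ and, by hypothesis, $\per(x_i)>0$, so Lemma~\ref{Lemma:BasicsOrdPer}.(3) applied to $x_i$ gives $\ford(x_i)>0$ (indeed $\per(x_i)\mid\ford(x_i)\mid\per(x_i)\,\ord(x_i)$). Taking $n_0:=\lcm_{i\in J}\ford(x_i)>0$ we get $\sigmaFac{x_i}{n_0}=1$ for all $i\in J$, hence $\sigmaFac{m}{n_0}=\prod_{i\in J}\sigmaFac{x_i}{n_0}^{z_i}=1$ by Lemma~\ref{Lemma:SigmaFacId}.(1), so $\ford(m)>0$. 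Now $\sigmaFac{\alpha}{n}=\sigmaFac{u}{n}\,\sigmaFac{m}{n}$: if $\ford(u)>0$, evaluating at $n=\lcm(\ford(u),\ford(m))$ gives $\sigmaFac{\alpha}{n}=1$; if $\ford(\alpha)>0$, evaluating at $n=\lcm(\ford(\alpha),\ford(m))$ and using that $\sigmaFac{m}{n}$ is a unit gives $\sigmaFac{u}{n}=1$. Hence $\ford(\alpha)>0\iff\ford(u)>0$. Part~3 is then immediate: assuming $\per(u)>0$ and $\ord(u)>0$, Lemma~\ref{Lemma:DROrder} yields $\ord(\alpha)>0$ and part~1 yields $\per(\alpha)>0$, so Lemma~\ref{Lemma:BasicsOrdPer}.(3) applied to $\alpha$ gives $\ford(\alpha)>0$ and $0<\per(\alpha)\mid\ford(\alpha)\mid\per(\alpha)\,\ord(\alpha)$.

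For part~4 I would simply note that parts~1--3 turn the computation into a finite search. Given the values $\ord(x_i),\per(x_i)$ and $\per(u)>0$, $\ord(u)>0$, the integer $\mu$ of part~1 is computable, and $\per(\alpha)$ is the least divisor $j$ of $\mu$ passing the decidable test $\sigma^j(\alpha)=\alpha$; next, $\ord(\alpha)$ is obtained from $\ord(u),\ord(x_i)$ via~\eqref{Equ:GetOrder}; and then, by part~3 together with~\eqref{Equ:FindSmallestSigmaFactPer}, $\ford(\alpha)=\min\{i\,\per(\alpha):1\le i\le\ord(\alpha),\ \sigmaFac{\alpha}{i\,\per(\alpha)}=1\}$, again a bounded search. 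The one genuinely delicate step — the pivot of the whole argument — is establishing that $m$ on its own has positive period and positive factorial order, i.e.\ that the reduction $\alpha\leftrightarrow u$ is reversible; this is exactly where one must combine the standing assumption $\per(x_i)>0$ with Lemma~\ref{Lemma:BasicsOrdPer}, and everything after it is $\lcm$ bookkeeping and the $\sigma$‑factorial identities of Lemma~\ref{Lemma:SigmaFacId}.
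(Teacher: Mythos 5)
Your proof is correct and follows essentially the same route as the paper: reduce all statements about $\alpha$ to statements about $u$ by cancelling the invertible monomial part, use the lcm of the relevant periods (resp.\ factorial orders via Lemma~\ref{Lemma:BasicsOrdPer}.(3) and Lemma~\ref{Lemma:SigmaFacId}.(1)) in both directions, and obtain part~3 and the computability in part~4 from Lemma~\ref{Lemma:DROrder}, \eqref{Equ:BoundPerAlpha} and~\eqref{Equ:FindSmallestSigmaFactPer}. The only differences are cosmetic: you package the monomial factor as $m=\prod_{i\in J}x_i^{z_i}$ and restrict the lcms to the indices in $J$, whereas the paper's argument runs over all $x_i$; both are equally valid.
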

\begin{proof}
(1) Suppose that $\per(u)>0$. Then $\mu>0$.
In particular, it follows
that $\sigma^{\mu}(\alpha)=\alpha$. Consequently, $\per(\alpha)>0$ with
$\per(\alpha)|\mu$. Hence we have~\eqref{Equ:BoundPerAlpha}. Conversely, suppose that $\per(\alpha)>0$. Then with $\nu:=\lcm(\per(\alpha),\per(x_1),\dots,\per(x_e))>0$ we get $u\,x_1^{z_1}\dots x_e^{z_e}=\alpha=\sigma^{\nu}(\alpha)=\sigma^{\nu}(u)\,x_1^{z_1}\dots x_e^{z_e}$. Thus $\sigma^{\nu}(u)=u$, and consequently $\ord(u)>0$.\\
(2)  Since $\ord(x_i)$ and $\per(x_i)>0$, it follows that $\ford(x_i)>0$ by Lemma~\ref{Lemma:BasicsOrdPer}.(3) for all $1\leq i\leq e$. If $\ford(u)>0$, take $\nu=\lcm(\ford(u),\ford(x_1),\dots,\ford(x_e))>0$. By Lemma~\ref{Lemma:SigmaFacId}.(1), $\sigmaFac{\alpha}{\nu}=1$ and hence $\ford(\alpha)>0$. Conversely, if $\ford(\alpha)>0$, take $\nu'=\lcm(\ford(\alpha),\ford(x_1),\dots,\ford(x_e))>0$. Then again by Lemma~\ref{Lemma:SigmaFacId}.(1): $1=\sigmaFac{\alpha}{\nu'}=\sigmaFac{(u\,x_1^{z_1}\dots x_e^{z_e})}{\nu'}=\sigmaFac{u}{\nu'}$. Thus $\ford(u)>0$.\\
(3) By part~1, $\per(\alpha)>0$. And with $\ord(u)>0$ and Lemma~\ref{Lemma:DROrder} it follows that $\ord(\alpha)>0$. By Lemma~\ref{Lemma:BasicsOrdPer}.(3), 
$\per(\alpha)|\ford(\alpha)|\ord(\alpha)\,\per(\alpha)$. In particular, $\ford(\alpha)>0$.\\
(4) If $\per(u)$ and the values $\per(x_i)$ are given, $\mu$ from part~1 can be computed. 
In particular, if $\ord(u)$ and $\ord(x_i)$ are given explicitly, $\ord(\alpha)$ can be calculated by Lemma~\ref{Lemma:DROrder}.
Thus $\per(\alpha)$ can be determined by~\eqref{Equ:BoundPerAlpha} and then $\ford(\alpha)$ can be computed by~\eqref{Equ:FindSmallestSigmaFactPer}.
\end{proof}

\begin{example}\label{Exp:GetOrder}
(1) Take $\alpha=u=-1\in\QQ$. We get $\ord(\alpha)=2$. In addition, $\per(-1)=1$. Moreover, $1=\per(-1)|\ford(-1)|\per(-1)\,\ord(-1)=2$. Hence~\eqref{Equ:BoundPerAlpha} yields $\ford(-1)=2$.\\
(2) Consider the \rE-extension $\dfield{\QQ[x]}{\sigma}$ of $\dfield{\QQ}{\sigma}$ with $\sigma(x)=-x$ and $\ord(x)=2$, and take $\alpha=-x$. We get $\ord(\alpha)=\lcm(\ord(-1),\ord(x))=2$ by~\eqref{Equ:GetOrder}.
With $\mu=\lcm(\per(-1),\per(x))=2$ we get $\per(\alpha)=2$ by using~\eqref{Equ:BoundPerAlpha}. Furthermore, we get $2=\per(\alpha)|\ford(\alpha)|\per(\alpha)\,\ord(\alpha)=4$. Hence with~\eqref{Equ:BoundPerAlpha} we get $\ford(\alpha)=4$.\\
(3) Consider the \rE-extension $\dfield{\KK(k)[x]}{\sigma}$ of $\dfield{\KK(k)}{\sigma}$ with $\sigma(x)=\iota\,x$ and $\ord(x)=4$ from Example~\ref{Exp:QkxtCheckRPiExt}. We have $\per(x)=4$. Take $\alpha=x$. 
We obtain the following bounds $4=\per(\alpha)|\ford(\alpha)|\per(\alpha)\,\ord(\alpha)=16$. Thus with~\eqref{Equ:BoundPerAlpha} we determine $\ford(\alpha)=8$.
\end{example}

\noindent Combining the two lemmas from above we arrive at the following result.

\begin{proposition}\label{Prop:DeterminOrderAndPer}
Let $\dfield{\AA\lr{x_1}\dots\lr{x_e}}{\sigma}$ be a simple \rE-extension
of $\dfield{\AA}{\sigma}$ such that for $1\leq i\leq e$ we have that  $\sigma(x_i)/x_i=u_i\,x_1^{m_{i,1}}\dots x_{i-1}^{m_{i,i-1}}$
with $u_i\in\AA^*$ and $m_{i,j}\in\NN$. Then the following holds.
\begin{enumerate}
\item $\ord(u_i)>0$
for $1\leq i\leq e$. In particular, if the values $\ord(u_i)$ are given explicitly (are
computable), then the values $\ord(x_i)$ are computable.
\item If $\per(u_i)>0$ for $1\leq i\leq e$,
then $\per(x_i)>0$ for $1\leq i\leq e$. In particular, if the values of $\ord(u_i)$ and $\per(u_i)$ for $1\leq i\leq e$  are given
explicitly (are computed), the values $\per(x_i)$ for all $1\leq i\leq e$ are computable.
\end{enumerate}
\end{proposition}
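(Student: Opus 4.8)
The plan is to prove both parts by induction on the index $i$ (from $1$ up to $e$), at each stage feeding in the information already collected for $x_1,\dots,x_{i-1}$. For each $i$ put $\alpha_i:=\sigma(x_i)/x_i=u_i\,x_1^{m_{i,1}}\dots x_{i-1}^{m_{i,i-1}}$, which by hypothesis lies in $\dgroup{(\AA^*)}{\AA\lr{x_1}\dots\lr{x_{i-1}}}{\AA}$. Since, by definition, $\dfield{\AA\lr{x_1}\dots\lr{x_e}}{\sigma}$ is a tower of \rE-extensions, each $\dfield{\AA\lr{x_1}\dots\lr{x_{i-1}}}{\sigma}$ is again an \rE-extension of $\dfield{\AA}{\sigma}$, and $\dfield{\AA\lr{x_1}\dots\lr{x_i}}{\sigma}$ is an \rE-extension on top of it with $\sigma(x_i)=\alpha_i\,x_i$. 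Hence Theorem~\ref{Thm:RPSCharacterization}.(3) applies and tells us that $\alpha_i$ is primitive, i.e.\ $\ord(\alpha_i)=\ord(x_i)>0$.

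For part~1, I would apply Lemma~\ref{Lemma:DROrder} to $\alpha_i$ viewed as an element of $\dgroup{(\AA^*)}{\AA\lr{x_1}\dots\lr{x_{i-1}}}{\AA}$: it gives $\ord(\alpha_i)>0$ if and only if $\ord(u_i)>0$, and since $\ord(\alpha_i)>0$ we conclude $\ord(u_i)>0$. For the computability claim I would run the induction: for $i=1$ we simply have $\alpha_1=u_1$, so $\ord(x_1)=\ord(u_1)$ is given; for the step, once $\ord(x_1),\dots,\ord(x_{i-1})$ have been computed, the explicit formula~\eqref{Equ:GetOrder} of Lemma~\ref{Lemma:DROrder} expresses $\ord(x_i)=\ord(\alpha_i)$ as an lcm of $\ord(u_i)$ and the quantities $\ord(x_j)/\gcd(\ord(x_j),m_{i,j})$, all of which are then available.

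For part~2, assume $\per(u_i)>0$ for all $i$, and induct on $i$ to show $\per(x_i)>0$ and that it is computable. By Lemma~\ref{Lemma:BasicsOrdPer}.(2), $\per(x_i)=\ford(\alpha_i)$, so it suffices to show $\ford(\alpha_i)>0$. From $\ord(u_i)>0$ (part~1) and $\per(u_i)>0$, Lemma~\ref{Lemma:BasicsOrdPer}.(3) yields $\ford(u_i)>0$; then Lemma~\ref{Lemma:DRPeriod}.(2)---whose standing hypothesis $\per(x_j)>0$ for $j<i$ is exactly what the induction hypothesis supplies---gives $\ford(\alpha_i)>0$, hence $\per(x_i)>0$. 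For computability I would invoke Lemma~\ref{Lemma:DRPeriod}.(4): with $\ord(x_j),\per(x_j)$ known for $j<i$ (part~1 and the induction), $\ord(u_i),\per(u_i)$ given, and $\ord(u_i),\per(u_i)>0$, this lemma computes $\per(\alpha_i)$ and $\ford(\alpha_i)$, and $\per(x_i)=\ford(\alpha_i)$.

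The whole argument is a chain of citations to results already in hand; the only point requiring care is bookkeeping, namely checking at each inductive step that the subring $\AA\lr{x_1}\dots\lr{x_{i-1}}$ genuinely carries the structure of an \rE-extension of $\AA$ in the precise sense demanded by Theorem~\ref{Thm:RPSCharacterization}.(3), Lemma~\ref{Lemma:DROrder} and Lemma~\ref{Lemma:DRPeriod}, and that the positivity hypotheses $\per(x_j)>0$ needed for Lemma~\ref{Lemma:DRPeriod} are indeed delivered by the induction hypothesis of part~2 rather than tacitly assumed. I do not anticipate a genuine obstacle beyond this.
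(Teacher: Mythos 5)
Your proof is correct and follows essentially the same route as the paper: an induction along the tower, using Theorem~\ref{Thm:RPSCharacterization}.(3) together with Lemma~\ref{Lemma:DROrder} (formula~\eqref{Equ:GetOrder}) for the orders, and Lemma~\ref{Lemma:BasicsOrdPer}.(2) with Lemma~\ref{Lemma:DRPeriod} for the periods and their computability. Your explicit remark that the hypothesis $\per(x_j)>0$ for $j<i$ needed in Lemma~\ref{Lemma:DRPeriod} is supplied by the induction is exactly the bookkeeping the paper's proof performs as well.
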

\begin{proof}
(1) By iterative application of Lemma~\ref{Lemma:DRPeriod} it follows that
$\ord(u_i)>0$ for all $1\leq i\leq e$. Moreover, suppose that $\ord(u_i)$ is
given for $1\leq i\leq e$. Furthermore, assume that the values $\ord(x_i)$ for $1\leq i\leq
s$ with $s<e$ are already determined. Then define $\alpha=\sigma(x_s)/x_s$. 
By~\eqref{Equ:GetOrder} we obtain $\ord(\alpha)$ and thus
$\ord(x_s)=\ord(\alpha)$ by Theorem~\ref{Thm:RPSCharacterization}.(3). This completes the induction step.\\
(2) Suppose that $\per(u_i)>0$ for $1\leq i\leq e$. In addition, suppose that we have
shown already that $d_i=\per(x_i)>0$ for $1\leq i<s$ with $s\leq e$. Define
$\alpha=\sigma(x_s)/x_s$. By Lemma~\ref{Lemma:DRPeriod} we have $\per(\alpha)>0$ and 
$\ford(\alpha)>0$. By Lemma~\ref{Lemma:BasicsOrdPer}.(2) it follows that $\per(x_s)=\ford(\alpha)>0$. If the values $\ord(u_i)$ are given explicitly, we can compute $\ord(\alpha)$ by part~1. If $\per(u_s)$ is given explicitly and
$d_1,\dots,d_{s-1}$ are given (are already computed), $\per(\alpha)$ can be computed with Lemma~\ref{Lemma:BasicsOrdPer}.(3). Hence 
$\ford(\alpha)$ can be calculated with~\eqref{Equ:FindSmallestSigmaFactPer}. Thus we get $\ord(x_s)=\ford(\alpha)$ by Lemma~\ref{Lemma:BasicsOrdPer}.(2) which completes the induction step.
\end{proof}

\noindent If we restrict to the case that the ground domain is a field $\FF$ and all roots of unity of $\FF$ are constants, we end up at the following properties of \rE-extensions.

\begin{corollary}\label{Cor:SimpleNestedRBasics}
Let $\dfield{\EE}{\sigma}$ with $\EE=\FF\lr{x_1}\dots\lr{x_e}$ 
be a simple \rE-extension of a difference field $\dfield{\FF}{\sigma}$ with constant field $\KK$ such that all roots of unity in
$\FF$ are
constants (e.g., if $\dfield{\FF}{\sigma}$ is strong constant-stable). 
Then the following holds.
\begin{enumerate}
\item For $1\leq i\leq e$ we have that
\begin{equation}\label{Equ:SimpleDef2}
\sigma(x_i)/x_i=u_i\,x_1^{m_{i,1}}\dots x_{i-1}^{m_{i,i-1}}
\end{equation}
for some root of unity $u_i\in\KK^*$ with $\ord(u_i)\mid\ord(x_i)$ and
$m_{i,j}\in\NN$. 
\item 
$\dfield{\KK\lr{x_1}\dots\lr{x_e}}{\sigma}$ is a simple \rE-extension
of $\dfield{\KK}{\sigma}$. 
\item Let $\alpha=u\,x_1^{z_1}\dots
x_e^{z_e}\in\dgroup{(\KK^*)}{\KK\lr{x_1}\dots\lr{x_e}}{\KK}$ with $z_1,\dots,z_e\in\NN$ and $u\in\KK^*$. Then 
$$\ord(u)>0\Leftrightarrow
\ord(\alpha)>0\Leftrightarrow\per(\alpha)>0\Leftrightarrow\ford(\alpha)>0.$$
\item If $\dfield{\KK}{\sigma}$ is computable and Problem~O is solvable in $\KK^*$ then the values of $\ord(\alpha)$, $\per(\alpha)$ and $\ford(\alpha)$ are computable for all $\alpha\in\dgroup{(\KK^*)}{\KK\lr{x_1}\dots\lr{x_e}}{\KK}$.
\item Problem~O is solvable in $\dgroup{(\FF^*)}{\EE}{\FF}$ if it is solvable in $\KK^*$ and $\dfield{\FF}{\sigma}$ is computable.
\end{enumerate}
\end{corollary}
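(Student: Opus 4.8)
The plan is to prove the five items in the stated order; item~1 carries the real content, while items~3--5 are bookkeeping built on Lemmas~\ref{Lemma:DROrder}, \ref{Lemma:DRPeriod}, \ref{Lemma:BasicsOrdPer} and Proposition~\ref{Prop:DeterminOrderAndPer}. For item~1 I would argue as follows. Since $x_i$ is an \rE-monomial and $\dfield{\EE}{\sigma}$ is $\FF^*$-simple, $\alpha_i:=\sigma(x_i)/x_i$ lies in the product-group $\dgroup{(\FF^*)}{\EE}{\FF}$, while by the recursive construction of the tower it also lies in $\FF\lr{x_1}\dots\lr{x_{i-1}}$. An element of $\dgroup{(\FF^*)}{\EE}{\FF}$ is a single monomial $h\,x_1^{n_1}\dots x_e^{n_e}$ with $h\in\FF^*$, and since $\{x_1^{a_1}\dots x_e^{a_e}\}$ is a free $\FF$-basis of $\EE$, membership in $\FF\lr{x_1}\dots\lr{x_{i-1}}$ forces, after reducing exponents modulo the orders, $\alpha_i=u_i\,x_1^{m_{i,1}}\dots x_{i-1}^{m_{i,i-1}}$ with $u_i\in\FF^*$ and $m_{i,j}\in\NN$. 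By Theorem~\ref{Thm:RPSCharacterization}.(3) the element $\alpha_i$ is primitive, so $\ord(\alpha_i)=\ord(x_i)>1>0$; hence Lemma~\ref{Lemma:DROrder}, applied to $\alpha_i$ inside $\FF\lr{x_1}\dots\lr{x_{i-1}}$, forces $\ord(u_i)>0$, i.e.\ $u_i$ is a root of unity of $\FF$, and the hypothesis that all roots of unity of $\FF$ lie in $\KK$ gives $u_i\in\KK^*$. Finally~\eqref{Equ:GetOrder} displays $\ord(u_i)$ as one of the arguments of the $\lcm$ equal to $\ord(\alpha_i)=\ord(x_i)$, so $\ord(u_i)\mid\ord(x_i)$, which is~\eqref{Equ:SimpleDef2}.

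For item~2 I would first note that by item~1 we have $\sigma(x_i)/x_i\in\KK\lr{x_1}\dots\lr{x_{i-1}}$ and $\sigma$ is the identity on $\KK$, so the subring $\KK\lr{x_1}\dots\lr{x_e}$ of $\EE$ is stable under $\sigma$ and $\sigma^{-1}$ and inherits the automorphism. Each step $\dfield{\KK\lr{x_1}\dots\lr{x_{i-1}}}{\sigma}\leq\dfield{\KK\lr{x_1}\dots\lr{x_i}}{\sigma}$ is an algebraic extension of order $\lambda_i:=\ord(x_i)>1$, because $x_i^{\lambda_i}=1$ and $\{1,x_i,\dots,x_i^{\lambda_i-1}\}$ is a free basis over $\FF\lr{x_1}\dots\lr{x_{i-1}}$, hence a fortiori over the subring. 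It is constant-preserving: by Theorem~\ref{Thm:RPSCharacterization}.(3) in the larger ring there is no $g\in\FF\lr{x_1}\dots\lr{x_{i-1}}\setminus\{0\}$ and no $m\in\{1,\dots,\lambda_i-1\}$ with $\sigma(g)=\alpha_i^m g$, and this non-existence passes to the subring $\KK\lr{x_1}\dots\lr{x_{i-1}}$, so Theorem~\ref{Thm:RPSCharacterization}.(3) again yields an \rE-extension. Since $\sigma(x_i)/x_i=u_i\,x_1^{m_{i,1}}\dots x_{i-1}^{m_{i,i-1}}$ with $u_i\in\KK^*$, this \rE-extension of $\dfield{\KK}{\sigma}$ is $\KK^*$-simple, i.e.\ simple.

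For items~3--5 I would feed the structure from item~1 back into the tools of Section~\ref{Sec:Period}. By item~2, $\ord(u_i)>0$ (item~1) and $\per(u_i)=1$ for $u_i\in\KK^*$, Proposition~\ref{Prop:DeterminOrderAndPer} applied to $\dfield{\KK}{\sigma}\leq\dfield{\KK\lr{x_1}\dots\lr{x_e}}{\sigma}$ yields $\ord(x_i)>0$ and $\per(x_i)>0$ for all $i$, both computable once the $\ord(u_i)$ are. With $\per(x_i)>0$ available, for $\alpha=u\,x_1^{z_1}\dots x_e^{z_e}$ with $u\in\KK^*$ the equivalences of item~3 reduce, via Lemma~\ref{Lemma:DROrder} (for $\ord$), Lemma~\ref{Lemma:DRPeriod} (for $\per$ and $\ford$) and $\ford(u)=\ord(u)$ (Lemma~\ref{Lemma:BasicsOrdPer}.(1)), to the corresponding statements about $u$. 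Item~4 then follows because~\eqref{Equ:GetOrder}, the period bound~\eqref{Equ:BoundPerAlpha} and~\eqref{Equ:FindSmallestSigmaFactPer} turn the now-known values $\ord(x_i)$ and $\per(x_i)$ into effective procedures for $\ord(\alpha)$, $\per(\alpha)$ and $\ford(\alpha)$ (Lemma~\ref{Lemma:DRPeriod}.(4)). For item~5, given $\alpha=h\,x_1^{m_1}\dots x_e^{m_e}\in\dgroup{(\FF^*)}{\EE}{\FF}$, Lemma~\ref{Lemma:DROrder} reduces $\ord(\alpha)$ to $\ord(h)$ and the $\ord(x_i)$; computability of $\dfield{\FF}{\sigma}$ lets us decide whether $h\in\const{\FF}{\sigma}=\KK$, and if $h\notin\KK$ then $h$ is not a root of unity so $\ord(h)=\ord(\alpha)=0$, whereas if $h\in\KK^*$ we obtain $\ord(h)$ by solving Problem~O in $\KK^*$; the values $\ord(x_i)$ are computable by item~1 and Proposition~\ref{Prop:DeterminOrderAndPer}.(1). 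Hence $\ord(\alpha)$ is computable, i.e.\ Problem~O is solvable in $\dgroup{(\FF^*)}{\EE}{\FF}$.

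I expect the main obstacle to be item~1. The subtlety is that $\FF^*$-simplicity only guarantees a priori that the ``leading coefficient'' $u_i$ of $\sigma(x_i)/x_i$ is some unit of $\FF$; one must first upgrade this to $u_i$ being a root of unity --- which forces combining the primitivity of $\alpha_i$ (Theorem~\ref{Thm:RPSCharacterization}.(3)) with the explicit order formula~\eqref{Equ:GetOrder} --- and only then may one invoke the hypothesis on roots of unity of $\FF$ to place $u_i$ in $\KK$. A secondary technical point is the descent, in item~2, of the constant-preservation condition from $\FF\lr{x_1}\dots\lr{x_{i-1}}$ to the subring $\KK\lr{x_1}\dots\lr{x_{i-1}}$; this is routine once it is phrased through the ``no solution of $\sigma(g)=\alpha_i^m g$'' criterion of Theorem~\ref{Thm:RPSCharacterization}.(3).
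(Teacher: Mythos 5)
Your proposal is correct and follows the paper's own proof essentially step for step: part~1 via the definition of simplicity, the primitivity of $\sigma(x_i)/x_i$ from Theorem~\ref{Thm:RPSCharacterization}.(3), Lemma~\ref{Lemma:DROrder} and the hypothesis on roots of unity, and parts~3--5 via Proposition~\ref{Prop:DeterminOrderAndPer}, Lemmas~\ref{Lemma:DROrder}, \ref{Lemma:DRPeriod} and~\ref{Lemma:BasicsOrdPer}.(1) together with Problem~O in $\KK^*$, exactly as in the paper. The only (harmless) deviation is in part~2, where the paper just observes $\KK\subseteq\const{\KK\lr{x_1}\dots\lr{x_e}}{\sigma}\subseteq\const{\EE}{\sigma}=\KK$, while you re-derive the \rE-property by descending the non-existence criterion of Theorem~\ref{Thm:RPSCharacterization}.(3) to the subring; both arguments are valid and of the same difficulty.
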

\begin{proof}
(1) By definition we have that~\eqref{Equ:SimpleDef2} with $m_i\in\NN$ and
$u_i\in\FF^*$. By Lemma~\ref{Lemma:DROrder} it follows that $\ord(u_i)>0$
and $\ord(u_i)|\ord(x_i)$. In particular,
$u_i\in\KK^*$ since all roots of unity from $\FF$ are constants by assumption.\\
(2) It is immediate that $\dfield{\HH}{\sigma}$ with $\HH=\KK\lr{x_1}\dots\lr{x_e}$ forms a
difference ring. Since $\const{\EE}{\sigma}=\const{\FF}{\sigma}=\KK$, $\dfield{\HH}{\sigma}$ is a simple
\rE-extension of $\dfield{\KK}{\sigma}$.\\ 
(3) By part~1 we get $u_i\in\KK^*$ and $\ord(u_i)>0$ for $1\leq i\leq e$. In particular, $\per(u_i)=1$. With Proposition~\ref{Prop:DeterminOrderAndPer} we get $\per(x_i)>0$, and by Lemma~\ref{Lemma:BasicsOrdPer}.(1) we obtain $\per(u)=1$ and $\ford(u)=\ord(u)$. Thus the equivalences follow by Lemmas~\ref{Lemma:DROrder} and~\ref{Lemma:DRPeriod} (parts 1,2).\\
(4) Since $u_i\in\KK^*$, the values of $\ord(u_i)>0$ can be determined by solving Problem~O in $\KK^*$. Thus by
Proposition~\ref{Prop:DeterminOrderAndPer} the orders and periods of the $x_i$
can be computed. Let $\alpha:=u\,x_1^{z_1}\dots x_e^{z_e}$ with $u\in\KK^*$ and $z_i\in\NN$.
Then by Lemma~\ref{Lemma:DROrder} and the computation of $\ord(u)$ the order of $\alpha$ can be computed. Moreover, since $\per(u)=1$ and $\ord(u)=\ford(u)$ are given, we can invoke
Lemma~\ref{Lemma:DRPeriod} to calculate the period and factorial order of $\alpha$.\\
(5) Let $\alpha$ be given as in~\eqref{Equ:alphaForXMon} with $u\in\FF^*$ and $m_i\in\NN$. By Lemma~\ref{Lemma:DROrder} $\ord(\alpha)>0$ iff $\ord(u)>0$. By assumption, $\ord(u)>0$ implies $u\in\KK^*$.
Thus, if $u\notin\KK$, $\ord(\alpha)=0$.
Otherwise, if $u\in\KK^*$, we can apply part~4.
\end{proof}

\noindent Finally, we are in the position to prove Theorem~\ref{Thm:AlgMainResultFull}.(1).

\ExternalProof{(Theorem~\ref{Thm:AlgMainResultFull}.(1)\label{Proof:AlgMainResultFull1})}{
Let $\dfield{\EE}{\sigma}$ be a simple \rpisiSE-extension of $\dfield{\FF}{\sigma}$ where $\dfield{\FF}{\sigma}$ is computable and where any root of unity of $\FF$ is from $\KK=\const{\FF}{\sigma}$. Reorder it to the shape as given in Lemma~\ref{Lemma:ReorderSimpleRPISI}. In particular, the \rE-extension $\dfield{\FF\lr{t_1}\dots\lr{t_r}}{\sigma}$ of $\dfield{\FF}{\sigma}$ has the shape as given in Corollary~\ref{Cor:SimpleNestedRBasics}.(1).
Let $f\in\dgroup{(\FF^*)}{\EE}{\FF}$. Suppose first that $f$ depends on a \piE-monomial $t_i$. Now assume that $\ord(f)=n>0$, and let $i$ be maximal such that a \piE-monomial depends on $f$. Then $f=v\,t_i^m$ with $v\in\FF\lr{t_1}\dots\lr{t_{i-1}}^*$ and $m\in\ZZ\setminus\{0\}$. Hence $1=f^n=v^n\,t_i^{m\,n}$ and thus $t_i$ is not algebraically independent over $\FF\lr{t_1}\dots\lr{t_{i-1}}$; a contradiction. Consequently, if $f$ depends on \piE-monomials, $\ord(f)=0$. Otherwise, $f=u\,t_1^{m_1}\dots t_r^{m_r}$ with $u\in\FF^*$ and $m_i\in\NN$ where the $t_i$ are all \rE-monomials. Therefore the value $\ord(f)$ can be computed by Corollary~\ref{Cor:SimpleNestedRBasics}.(5).
}

\section{The algorithmic machinery II: Problem~PMT}\label{Sec:M}

We aim at proving Theorems~\ref{Thm:AlgMainResultRestricted}.(1) and~\ref{Thm:AlgMainResultFull}.(2), i.e., providing recursive algorithms that reduce Problem~PMT from a given \rpisiSE-extension to its ground ring (resp.\ field).
For this reduction we assume that for the given ground ring $\dfield{\GG}{\sigma}$ and given group $G\leq\GG^*$ we have that $\sconst{G}{\GG}{\sigma}\setminus\{0\}\leq\GG^*$. This property guarantees that for any $\vect{f}\in G^n$ a $\ZZ$-basis of $M(\vect{f},\GG)$ with rank $\leq n$ exists; see Lemma~\ref{Lemma:MZModule}. In particular, we rely on the fact that there are algorithms available that solve Problem~PMT in $\dfield{\GG}{\sigma}$ for $G$. For concrete classes of difference fields $\dfield{\GG}{\sigma}$ with these algorithmic properties we refer to Subsection~\ref{Subsec:GroundFieldAlg}.

\subsection{A reduction strategy for \pisiSE-extensions}

First, we treat the reduction for \pisiSE-extensions. More precisely, we will obtain

\begin{theorem}\label{Thm:ProblemMPiSi}
Let $\dfield{\GG}{\sigma}$ be a computable difference ring with $G\leq\GG^*$ where $\sconst{G}{\GG}{\sigma}\setminus\{0\}\leq\GG^*$.
Let $\dfield{\EE}{\sigma}$ be a $G$-simple \pisiSE-extension of $\dfield{\GG}{\sigma}$. Then $\sconst{\dgroup{G}{\EE}{\GG}}{\EE}{\sigma}\setminus\{0\}\leq\EE^*$ and 
Problem~PMT is solvable in $\dfield{\EE}{\sigma}$ for
$\dgroup{G}{\EE}{\GG}$ if it is solvable in $\dfield{\GG}{\sigma}$ for
$G$. 
\end{theorem}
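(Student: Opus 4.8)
The plan is to proceed by induction on the number $e$ of \pisiSE-monomials in the tower $\dfield{\EE}{\sigma}$ of $\dfield{\GG}{\sigma}$, reducing one generator at a time. The base case $e=0$ is trivial. So suppose $\EE=\HH\lr{t}$ where $\dfield{\HH}{\sigma}$ is a $G$-simple \pisiSE-extension of $\dfield{\GG}{\sigma}$ with, say, $e-1$ generators, $\tilde G:=\dgroup{G}{\HH}{\GG}$, and assume inductively that $\sconst{\tilde G}{\HH}{\sigma}\setminus\{0\}\leq\HH^*$ and that Problem~PMT is solvable in $\dfield{\HH}{\sigma}$ for $\tilde G$. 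Put $\hat G:=\dgroup{G}{\EE}{\GG}=\dgroup{\tilde G}{\HH\lr{t}}{\HH}$. That $\sconst{\hat G}{\EE}{\sigma}\setminus\{0\}\leq\EE^*$ is already supplied by Corollary~\ref{Cor:pisiSCONST}.(1) (the $G$-simple \pisiSE-case), so the real content is the algorithmic reduction of Problem~PMT across the single extension $\dfield{\HH}{\sigma}\leq\dfield{\HH\lr{t}}{\sigma}$.

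For the reduction step I would split into the two cases for $t$. \emph{Case $t$ a \sigmaSE-monomial:} here $\hat G=\tilde G$ and, by Theorem~\ref{Thm:SigmaTheoremPart2}, $\sconst{\tilde G}{\HH[t]}{\sigma}=\sconst{\tilde G}{\HH}{\sigma}$. Given $\vect f\in\tilde G^n$ and any $g\in\HH[t]\setminus\{0\}$ with $\sigma(g)=f_1^{m_1}\cdots f_n^{m_n}\,g$, the element $u:=f_1^{m_1}\cdots f_n^{m_n}$ lies in $\tilde G\leq\HH^*$, so $g\in\sconst{\tilde G}{\HH[t]}{\sigma}=\sconst{\tilde G}{\HH}{\sigma}\subseteq\HH$; hence $M(\vect f,\HH[t])=M(\vect f,\HH)$, and a $\ZZ$-basis of the latter is obtained by the assumed solver in $\dfield{\HH}{\sigma}$ for $\tilde G$. \emph{Case $t$ a \piE-monomial:} now $\alpha:=\sigma(t)/t\in\tilde G$ by $G$-simplicity, and $\hat G=\{h\,t^m\mid h\in\tilde G,\ m\in\ZZ\}$. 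Given $\vect f=(f_1,\dots,f_n)\in\hat G^n$, write each $f_j=h_j\,t^{a_j}$ with $h_j\in\tilde G$, $a_j\in\ZZ$. A tuple $(m_1,\dots,m_n)\in\ZZ^n$ lies in $M(\vect f,\EE)$ iff there is $g\in\EE\setminus\{0\}$ with $\sigma(g)=\bigl(\prod_j h_j^{m_j}\bigr)\,t^{\sum_j a_j m_j}\,g$; by Proposition~\ref{Prop:PiPart2Weak} (or Theorem~\ref{Thm:PiPart2Strong}) such a $g$ must be of the form $g=\bar g\,t^{k}$ for some $\bar g\in\HH$, $k\in\ZZ$, and substituting gives $\sigma(\bar g)=\bigl(\prod_j h_j^{m_j}\bigr)\,\alpha^{-k}\,t^{(\sum_j a_j m_j)+k}\,\bar g$ in $\HH\lr{t}$. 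Comparing $t$-degrees forces $k=-\sum_j a_j m_j$, and then the condition becomes $\sigma(\bar g)=\bigl(\prod_j h_j^{m_j}\bigr)\,\alpha^{\sum_j a_j m_j}\,\bar g$, i.e.\ $(m_1,\dots,m_n,\sum_j a_j m_j)\in M\bigl((h_1,\dots,h_n,\alpha),\HH\bigr)$, an equation among elements of $\tilde G$. Thus $M(\vect f,\EE)$ is the projection onto the first $n$ coordinates of the intersection of $M\bigl((h_1,\dots,h_n,\alpha),\HH\bigr)$ with the hyperplane $\{\,(m_1,\dots,m_n,m_{n+1}):m_{n+1}=\sum_j a_j m_j\,\}$. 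Since the PMT-solver in $\dfield{\HH}{\sigma}$ for $\tilde G$ returns a $\ZZ$-basis of $M\bigl((h_1,\dots,h_n,\alpha),\HH\bigr)$, one obtains a $\ZZ$-basis of $M(\vect f,\EE)$ by the usual linear-algebra operations over $\ZZ$ (take the sublattice cut out by the linear form $m_{n+1}-\sum_j a_j m_j$, then project), which are effective.

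Gluing the induction together, at each stage $\sconst{\hat G}{\EE}{\sigma}\setminus\{0\}\leq\EE^*$ holds by Corollary~\ref{Cor:pisiSCONST}.(1), and the PMT-solver is propagated from $\dfield{\GG}{\sigma}$ for $G$ up through every intermediate \pisiSE-layer. Applying this with the full tower $\dfield{\GG}{\sigma}\leq\dfield{\EE}{\sigma}$ and $\hat G=\dgroup{G}{\EE}{\GG}$ yields exactly the statement of the theorem.

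I expect the main obstacle to be the \piE-case: one has to be careful that the auxiliary module $M\bigl((h_1,\dots,h_n,\alpha),\HH\bigr)$ really does capture the solutions in $\EE$ — this is where Proposition~\ref{Prop:PiPart2Weak} (the shape $g=\bar g\,t^k$ of semi-constants, valid precisely because $\sconst{\tilde G}{\HH}{\sigma}\setminus\{0\}\leq\HH^*$) is indispensable, and it is the reason the hypothesis $\sconst{G}{\GG}{\sigma}\setminus\{0\}\leq\GG^*$ must be threaded through the induction. The bookkeeping of going from the enlarged tuple $(h_1,\dots,h_n,\alpha)$ and the constraint $k=-\sum_j a_j m_j$ back to a basis of the projected lattice in $\ZZ^n$ is routine but needs to be stated carefully so that rank $\leq n$ is preserved, as guaranteed abstractly by Lemma~\ref{Lemma:MZModule}.
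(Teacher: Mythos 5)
Your overall architecture---induction over the tower, Corollary~\ref{Cor:pisiSCONST}.(1) for the group property $\sconst{\dgroup{G}{\EE}{\GG}}{\EE}{\sigma}\setminus\{0\}\leq\EE^*$, Theorem~\ref{Thm:SigmaTheoremPart2} to dispose of \sigmaSE-monomials, and Theorem~\ref{Thm:PiPart2Strong} to force $g=\bar{g}\,t^k$ in the \piE-case---is exactly the paper's route (Lemmas~\ref{Lemma:MProblemRemoveSigmaExt} and~\ref{Lemma:MProblemReducePi}). However, your \piE-case contains a genuine error. With $g=\bar{g}\,t^k$ one has $\sigma(g)=\sigma(\bar{g})\,\alpha^k\,t^k$, so the equation $\sigma(g)=\bigl(\prod_j h_j^{m_j}\bigr)t^{\sum_j a_j m_j}\,g$ reads $\sigma(\bar{g})\,\alpha^k\,t^k=\bigl(\prod_j h_j^{m_j}\bigr)\bar{g}\,t^{k+\sum_j a_j m_j}$; since both coefficients are nonzero ($\bar{g}$ is a unit of $\HH$), comparing exponents of $t$ forces $\sum_j a_j m_j=0$, while $k$ remains completely free and only enters through $\sigma(\bar{g})=\bigl(\prod_j h_j^{m_j}\bigr)\alpha^{-k}\bar{g}$. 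Your intermediate display carries a spurious extra power $t^{k}$, which led you to conclude $k=-\sum_j a_j m_j$, to drop the annihilator constraint $\sum_j a_j m_j=0$, and to claim that $M(\vect{f},\EE)$ is the projection of $M\bigl((h_1,\dots,h_n,\alpha),\HH\bigr)$ intersected with $\{m_{n+1}=\sum_j a_j m_j\}$.

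That formula is false. Take $\HH=\QQ(k)$ with $\sigma(k)=k+1$, the \piE-extension with $\sigma(t)=(k+1)\,t$, $n=1$ and $f_1=(k+1)\,t^{-1}$, i.e.\ $h_1=k+1=\alpha$ and $a_1=-1$. For $m_1=1$ your condition asks for $\bar{g}\in\QQ(k)^*$ with $\sigma(\bar{g})=h_1\,\alpha^{a_1}\,\bar{g}=\bar{g}$, which holds with $\bar{g}=1$, so your formula puts $1\in M((f_1),\EE)$; but there is no $g\in\QQ(k)\ltr{t}\setminus\{0\}$ with $\sigma(g)=(k+1)t^{-1}g$ (coefficientwise $\sigma(g_i)(k+1)^i=(k+1)g_{i+1}$, impossible for the top nonzero coefficient), so in fact $M((f_1),\EE)=\{0\}$. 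The correct reduction is the paper's Lemma~\ref{Lemma:MProblemReducePi}: $M(\vect{f},\EE)=M_1\cap M_2$, where $M_2=\{\vect{m}\in\ZZ^n\mid \sum_j a_j m_j=0\}$ and $M_1$ is the projection onto the first $n$ coordinates of $M\bigl((h_1,\dots,h_n,\tfrac{1}{\alpha}),\HH\bigr)$. With this fix (basis of $M_2$ by linear algebra, basis of the extended module by the recursive PMT-solver, then intersect), the rest of your induction, including the \sigmaSE-case, goes through and coincides with the paper's proof.
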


\noindent For the underlying reduction method we use the following two lemmas.

\begin{lemma}\label{Lemma:MProblemRemoveSigmaExt}
Let $\dfield{\AA[t]}{\sigma}$ be a \sigmaSE-extension of
$\dfield{\AA}{\sigma}$ and let $H\leq\AA^*$ be a group with
$\sconst{H}{\AA}{\sigma}\setminus\{0\}\leq\AA^*$. Then for $\vect{f}\in H^n$ we have that
$M(\vect{f},\AA[t])=M(\vect{f},\AA)$.
\end{lemma}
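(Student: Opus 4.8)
The goal is to show $M(\vect{f},\AA[t])=M(\vect{f},\AA)$ for $\vect f\in H^n$, where $\dfield{\AA[t]}{\sigma}$ is a \sigmaSE-extension. The inclusion $M(\vect{f},\AA)\subseteq M(\vect{f},\AA[t])$ is trivial, since any witness $g\in\AA\setminus\{0\}$ is in particular a nonzero element of $\AA[t]$. So the work is in the reverse inclusion: given $\vect m=(m_1,\dots,m_n)\in M(\vect{f},\AA[t])$, i.e.\ a $g\in\AA[t]\setminus\{0\}$ with $\sigma(g)=f_1^{m_1}\dots f_n^{m_n}\,g$, we must produce such a witness in $\AA\setminus\{0\}$.

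The key observation is that $u:=f_1^{m_1}\dots f_n^{m_n}\in H\leq\AA^*$ and that $\sigma(g)=u\,g$ says exactly $g\in\sconst{H}{\AA[t]}{\sigma}$. Now I want to invoke the structural description of semi-constants over a \sigmaSE-extension, namely Theorem~\ref{Thm:SigmaTheoremPart2}: since $\dfield{\AA[t]}{\sigma}$ is a \sigmaSE-extension of $\dfield{\AA}{\sigma}$ and $H\leq\AA^*$ with $\sconst{H}{\AA}{\sigma}\setminus\{0\}\leq\AA^*$, we have $\sconst{H}{\AA[t]}{\sigma}=\sconst{H}{\AA}{\sigma}$. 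Hence $g\in\sconst{H}{\AA}{\sigma}\subseteq\AA$, and in particular $g\in\AA\setminus\{0\}$ is the desired witness, giving $\vect m\in M(\vect{f},\AA)$. If one prefers to argue from first principles rather than citing Theorem~\ref{Thm:SigmaTheoremPart2} directly, one applies Lemma~\ref{Lemma:SigmaLemma} (the equivalence $(1)\Leftrightarrow(3)$): if $g\notin\AA$ then there would be a nonzero-degree element of $\AA[t]$ that is a semi-constant over $H\subseteq G$-type group, forcing $\const{\AA[t]}{\sigma}\supsetneq\const{\AA}{\sigma}$, contradicting that $t$ is a \sigmaSE-monomial; alternatively use Lemma~\ref{LemmaB} with $\deg(\sigma(g)-u\,g)=\deg(0)=-\infty<\deg(g)-1$ to get $\gamma\in\AA$ with $\sigma(\gamma)-\gamma=\beta$, again contradicting the \sigmaSE-property.

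\textbf{Main obstacle.} There is essentially no obstacle here beyond bookkeeping: the real content was already extracted into Theorem~\ref{Thm:SigmaTheoremPart2} (equivalently Lemma~\ref{LemmaB}), so this lemma is a short corollary. The only point requiring a moment's care is verifying that the hypothesis of Theorem~\ref{Thm:SigmaTheoremPart2} is met, i.e.\ that $H$ plays the role of the group $G$ there — but that is immediate from the assumption $\sconst{H}{\AA}{\sigma}\setminus\{0\}\leq\AA^*$. So I expect the proof to be two or three lines: trivial inclusion one way, and the other way by reducing $\sigma(g)=u\,g$ with $u\in H$ to $g\in\sconst{H}{\AA}{\sigma}$ via Theorem~\ref{Thm:SigmaTheoremPart2}.
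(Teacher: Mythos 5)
Your proof is correct and matches the paper's argument exactly: the paper also notes the inclusion $M(\vect{f},\AA)\subseteq M(\vect{f},\AA[t])$ is obvious and, for the converse, observes that a witness $g$ lies in $\sconst{H}{\AA[t]}{\sigma}\setminus\{0\}$ and applies Theorem~\ref{Thm:SigmaTheoremPart2} to conclude $g\in\sconst{H}{\AA}{\sigma}\subseteq\AA$. Your optional first-principles variants via Lemma~\ref{Lemma:SigmaLemma} or Lemma~\ref{LemmaB} are fine but unnecessary.
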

\begin{proof}
``$\subseteq$'': Let $\vect{m}=(m_1,\dots,m_n)\in
M(\vect{f},\AA[t])$ with $\vect{f}=(f_1,\dots,f_n)\in H^n$. Thus take
$g\in\AA[t]\setminus\{0\}$ with 
$\sigma(g)=f_1^{m_1}\dots f_n^{m_n}\,g$. Since
$g\in\sconst{H}{\AA[t]}{\sigma}\setminus\{0\}$, we have 
$g\in\sconst{H}{\AA}{\sigma}$ by
Theorem~\ref{Thm:SigmaTheoremPart2}. Hence $\vect{m}\in M(\vect {f},\AA)$. The inclusion $\supseteq$ is obvious. 
\end{proof}

\begin{lemma}\label{Lemma:MProblemReducePi}
Let $\dfield{\AA\lr{t}}{\sigma}$ be a \piE-extension of
$\dfield{\AA}{\sigma}$ and let $H\leq\AA^*$ with
$\sconst{H}{\AA}{\sigma}\setminus\{0\}\leq\AA^*$ and 
$\alpha:=\sigma(t)/t\in H$. Let
$\vect{f}=(f_1,\dots,f_n)\in(\dgroup{H}{\AA\lr{t}}{\AA})^n$ 
with 
$$f_i=h_i\,t^{e_i},\quad h_i\in H,\,e_i\in\ZZ.$$
Then
$M(\vect{f},\AA\lr{t})=M_1\cap M_2$
where
\begin{align*}
M_1&=\{(m_1,\dots,m_n)|\,(m_1,\dots,m_n,m_{n+1})\in
M((h_1,\dots,h_n,\tfrac{1}{\alpha}),\AA)\},\\
M_2&=\text{Ann}_{\ZZ}((e_1,\dots,e_n))=\{(m_1\dots,m_n)\in\ZZ^n|\,m_1\,
e_1+\dots+m_n\,e_n=0\}.
\end{align*}
\end{lemma}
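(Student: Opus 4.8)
The plan is to prove the two inclusions $M(\vect{f},\AA\lr{t})\subseteq M_1\cap M_2$ and $M_1\cap M_2\subseteq M(\vect{f},\AA\lr{t})$ separately, in both cases exploiting the already-established description of semi-constants in a \piE-extension, namely Proposition~\ref{Prop:PiPart2Weak} and Theorem~\ref{Thm:PiPart2Strong}: for $\tilde{H}=\dgroup{H}{\AA\lr{t}}{\AA}$ we have $\sconst{\tilde H}{\AA\lr{t}}{\sigma}=\{h\,t^m\mid h\in\sconst{H}{\AA}{\sigma},\ m\in\ZZ\}$ and this set (minus $0$) is a subgroup of $\AA\lr{t}^*$.

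First I would do the inclusion ``$\subseteq$''. Take $\vect{m}=(m_1,\dots,m_n)\in M(\vect{f},\AA\lr{t})$, so there is $g\in\AA\lr{t}\setminus\{0\}$ with $\sigma(g)=f_1^{m_1}\cdots f_n^{m_n}\,g$. Writing $f_i=h_i\,t^{e_i}$, the right-hand coefficient is $\big(h_1^{m_1}\cdots h_n^{m_n}\big)\,t^{\,m_1e_1+\dots+m_ne_n}\in\tilde H$, so $g\in\sconst{\tilde H}{\AA\lr{t}}{\sigma}\setminus\{0\}$. By the description of semi-constants recalled above, $g=h\,t^{m_{n+1}}$ for some $h\in\sconst{H}{\AA}{\sigma}\setminus\{0\}\leq\AA^*$ and some $m_{n+1}\in\ZZ$. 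Now I substitute this form of $g$ into the equation $\sigma(g)=f_1^{m_1}\cdots f_n^{m_n}\,g$, using $\sigma(t^{m_{n+1}})=\alpha^{m_{n+1}}t^{m_{n+1}}$, and compare the $t$-powers and the $\AA$-coefficients. Comparing exponents of $t$ gives $m_{n+1}=m_1e_1+\dots+m_ne_n+m_{n+1}$, hence $m_1e_1+\dots+m_ne_n=0$, i.e.\ $\vect m\in M_2$. Comparing the coefficients in $\AA$ gives $\sigma(h)=h_1^{m_1}\cdots h_n^{m_n}\,\alpha^{m_{n+1}}\,h$, which rewrites as $\sigma(h)=h_1^{m_1}\cdots h_n^{m_n}\,(1/\alpha)^{-m_{n+1}}\,h$ with $h\in\AA\setminus\{0\}$; therefore $(m_1,\dots,m_n,-m_{n+1})\in M((h_1,\dots,h_n,\tfrac1\alpha),\AA)$, so $\vect m\in M_1$ (here $-m_{n+1}$ is an admissible value of the last coordinate since $M_1$ is defined by existence of \emph{some} $(n+1)$st coordinate). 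Hence $\vect m\in M_1\cap M_2$.

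For the reverse inclusion ``$\supseteq$'', suppose $\vect m\in M_1\cap M_2$. From $\vect m\in M_1$ there is $\ell\in\ZZ$ and $h\in\AA\setminus\{0\}$ with $\sigma(h)=h_1^{m_1}\cdots h_n^{m_n}\,\alpha^{-\ell}\,h$. Set $g:=h\,t^{\ell}\in\AA\lr{t}\setminus\{0\}$. Then $\sigma(g)=\sigma(h)\,\alpha^{\ell}t^{\ell}=h_1^{m_1}\cdots h_n^{m_n}\,\alpha^{-\ell}\,h\,\alpha^{\ell}\,t^{\ell}=h_1^{m_1}\cdots h_n^{m_n}\,g$, and since $\vect m\in M_2$ we have $m_1e_1+\dots+m_ne_n=0$, so $t^{\,m_1e_1+\dots+m_ne_n}=1$ and thus $h_1^{m_1}\cdots h_n^{m_n}=f_1^{m_1}\cdots f_n^{m_n}$. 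Therefore $\sigma(g)=f_1^{m_1}\cdots f_n^{m_n}\,g$, i.e.\ $\vect m\in M(\vect f,\AA\lr{t})$.

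The only genuinely delicate point — the ``main obstacle'' — is the first step of the ``$\subseteq$'' direction, where one must know that an arbitrary nonzero $g\in\AA\lr{t}$ satisfying $\sigma(g)=u\,g$ with $u\in\tilde H$ is forced to be a \emph{monomial} $h\,t^{m}$; this is exactly where the hypothesis $\sconst{H}{\AA}{\sigma}\setminus\{0\}\leq\AA^*$ (so that the semi-constants of the extension are well controlled) enters, and it is supplied by Proposition~\ref{Prop:PiPart2Weak}/Theorem~\ref{Thm:PiPart2Strong}. Everything else is bookkeeping: matching the two coordinate families correctly (the sign on the last coordinate) and using that $M_1$, $M_2$ and $M(\vect f,\AA\lr{t})$ are all subgroups of the relevant $\ZZ^n$ so that the inclusions of sets suffice.
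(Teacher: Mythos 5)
Your proof is correct and follows essentially the same route as the paper's: use Theorem~\ref{Thm:PiPart2Strong} to force $g$ to be a monomial $h\,t^{m}$ with $h\in\AA^*$, compare $t$-exponents to get membership in $M_2$ and coefficients to get membership in $M_1$, and construct $g=h\,t^{\ell}$ directly for the converse. (A minor slip: the coefficient comparison actually gives $\sigma(h)=h_1^{m_1}\cdots h_n^{m_n}\,\alpha^{-m_{n+1}}\,h$, so the witnessing last coordinate is $m_{n+1}$ rather than $-m_{n+1}$; this is harmless since $M_1$ only requires existence of some last coordinate.)
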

\begin{proof}
``$\subseteq$'': Let $(m_1,\dots,m_n)\in M(\vect{f},\AA\lr{t})$. Hence we can take
$g\in\AA\lr{t}\setminus\{0\}$ with
$\sigma(g)=f_1^{m_1}\dots f_n^{m_n}\,g,$
i.e., $g\in\sconst{\tilde{H}}{\AA\lr{t}}{\sigma}\setminus\{0\}$ with
$\tilde{H}=\dgroup{H}{\AA\lr{t}}{\AA}$. Thus by Theorem~\ref{Thm:PiPart2Strong}
it follows that $g=\tilde{g}\,t^m$ with $m\in\ZZ$ and
$\tilde{g}\in\sconst{\tilde{H}}{\AA}{\sigma}\setminus\{0\}\leq\AA^*$. Hence
$$\sigma(\tilde{g})=f_1^{m_1}\dots f_n^{m_n}\alpha^{-m}\tilde{g}=
h_1^{m_1}\dots
h_n^{m_n}\,\alpha^{-m}\,\tilde{g}\,t^{m_1\,e_1+\dots+m_n\,e_n}.$$
Since $\tilde{g}\neq0$, we conclude that $\sigma(\tilde{g})\neq0$. By
coefficient comparison it follows then that $m_1\,e_1+\dots+m_n\,e_n=0$, i.e., 
$(m_1,\dots,m_n)\in M_2$.
Thus 
$\sigma(\tilde{g})=h_1^{m_1}\dots
h_n^{m_n}\alpha^{-m}\,\tilde{g}$ and consequently 
$(m_1,\dots,m_n,m)\in M((h_1,\dots,h_n,\tfrac1{\alpha}),\AA)$,
i.e.,
$(m_1,\dots,m_n)\in M_1$.\\
``$\supseteq$'': Let
$(m_1,\dots,m_n)\in M_1\cap M_2$. Thus we can take $\tilde{g}\in\AA\setminus\{0\}$ and $m\in\ZZ$ with 
$\sigma(\tilde{g})=h_1^{m_1}\dots h_n^{m_n}\alpha^{-m}\tilde{g}.$
Moreover, we have that $e_1\,m_1+\dots+e_n\,m_n=0$. Thus
$\sigma(\tilde{g}\,t^m)=(h_1\,t^{e_1})^{m_1}\dots(h_n\,t^{e_n})^{m_n}\,\tilde{g
}$
and therefore $(m_1,\dots,m_n)\in M(\vect{f},\AA\lr{t})$.
\end{proof}

\noindent Now we can deal with the underlying algorithm resp.\ proof of Theorem~\ref{Thm:ProblemMPiSi}.

\ExternalProof{
(Theorem~\ref{Thm:ProblemMPiSi})}{
Let $\dfield{\GG}{\sigma}$ be a difference ring and let $G\leq\GG^*$ such that $\sconst{G}{\GG}{\sigma}\setminus\{0\}\leq\GG^*$ holds. Suppose that Problem~PMT is solvable in $\dfield{\GG}{\sigma}$ for $G$. Now let $\dfield{\EE}{\sigma}$ be a $G$-simple \pisiSE-extension of $\dfield{\GG}{\sigma}$ as in the theorem with $\tilde{G}=\dgroup{G}{\EE}{\GG}$ and let $\vect{f}\in\tilde{G}^n$. 
By Corollary~\ref{Cor:pisiSCONST}.(1) it follows that $\sconst{\tilde{G}}{\EE}{\sigma}\setminus\{0\}\leq\EE^*$ and together with Lemma~\ref{Lemma:MZModule} it follows that
$M(\vect{f},\EE)=M(\vect{f},\sconst{\tilde{G}}{\EE}{\sigma})$
is a $\ZZ$-module. The calculation of a basis of $M(\vect{f},\EE)$ will be accomplished by recursion/induction. If $\EE=\AA$, nothing has to be shown. Otherwise, let
$\dfield{\AA}{\sigma}$ be a $G$-simple \pisiSE-extension of $\dfield{\GG}{\sigma}$ in which we know how one can solve Problem~PMT for $H=\dgroup{G}{\AA}{\GG}$, and let
$\EE=\AA\lr{t}$ where $t$ is a $H$-simple \pisiSE-monomial. We have to treat two cases.
First, suppose that $t$ is a \sigmaSE-monomial. Then it follows that
$\tilde{G}=\dgroup{G}{\EE}{\GG}=\dgroup{G}{\AA}{\GG}=H\leq\AA^*$
and thus $\vect{f}\in H^n$.
Hence we can activate Lemma~\ref{Lemma:MProblemRemoveSigmaExt}
and it follows that $M(\vect{f},\EE)=M(\vect{f},\AA)$. 
Thus by assumption we can compute a basis.
Second, suppose that $t$ is a $H$-simple \piE-monomial.
Then we can utilize Lemma~\ref{Lemma:MProblemReducePi}: We calculate a basis of $M_2$ by linear algebra. Furthermore, we compute a basis of $M((h_1,\dots,h_n,\tfrac{1}{\alpha}),\AA)$ by the induction assumption (by recursion). Hence we can derive a basis of $M_2$ and thus of $M_1\cap M_2=M(\vect{f},\AA\lr{t})$. This completes the proof.
}

\noindent Note that the reduction presented in Lemma~\ref{Lemma:MProblemReducePi} is accomplished by increasing the rank of $M_1$ by one. In general, the more \piE-monomials are involved, the higher the rank will be in the arising Problems~PMT of the recursions.

Looking closer at the reduction algorithm, we can extract the following shortcut, resp.\ a refined version of Theorem~\ref{Thm:RPSCharacterization}.(2).

\begin{corollary}\label{Cor:MShortCutSigma}
Let $\dfield{\AA}{\sigma}$ be a difference ring and let $G\leq\AA^*$ with $\sconst{G}{\AA}{\sigma}\setminus\{0\}\leq\AA^*$.
Let $\dfield{\HH}{\sigma}$ be a $G$-simple \piE-extension of
$\dfield{\AA}{\sigma}$ and let
$\dfield{\EE}{\sigma}$ be a \sigmaSE-extension of $\dfield{\HH}{\sigma}$. 
Then $\dgroup{G}{\EE}{\AA}=\dgroup{G}{\HH}{\AA}$ and the following holds.
\begin{enumerate}
\item $M(\vect{f},\EE)=M(\vect{f},\HH)$ for any
$\vect{f}\in(\dgroup{G}{\EE}{\AA})^n$.
\item Let $\alpha\in\dgroup{G}{\EE}{\AA}$. Then there is a \piE-extension
$\dfield{\EE\lr{t}}{\sigma}$
of $\dfield{\EE}{\sigma}$ with $\sigma(t)=\alpha\,t$ iff there is a
\piE-extension $\dfield{\HH\lr{t}}{\sigma}$ of $\dfield{\HH}{\sigma}$ with $\sigma(t)=\alpha\,t$.
\end{enumerate}
\end{corollary}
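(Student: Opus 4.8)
The plan is to reduce everything to the structural results on semi-constants already established in this section, in particular Corollary~\ref{Cor:pisiSCONST}.(1), together with Lemma~\ref{Lemma:MProblemRemoveSigmaExt} and Lemma~\ref{Lemma:MZModule}. First I would handle the identity $\dgroup{G}{\EE}{\AA}=\dgroup{G}{\HH}{\AA}$: by the very definition of the product-group in~\eqref{Def:dgroup}, the generators of $\dfield{\EE}{\sigma}$ over $\dfield{\AA}{\sigma}$ are the \piE-monomial(s) coming from $\dfield{\AA}{\sigma}\leq\dfield{\HH}{\sigma}$ together with the \sigmaSE-monomials coming from $\dfield{\HH}{\sigma}\leq\dfield{\EE}{\sigma}$, and the \sigmaSE-monomials contribute only exponent $0$ to elements of the product-group; hence the set of products is exactly $\dgroup{G}{\HH}{\AA}$. (If one prefers, this is also an instance of the first assertion of Lemma~\ref{Lemma:SimpleTowerIsSimple} applied to $\dfield{\AA}{\sigma}\leq\dfield{\HH}{\sigma}\leq\dfield{\EE}{\sigma}$, since the \sigmaSE-part contributes a trivial product-group extension.)

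For part~1, write $\dfield{\EE}{\sigma}$ as a tower of \sigmaSE-extensions $\dfield{\HH}{\sigma}=\dfield{\AA_0}{\sigma}\leq\dfield{\AA_1}{\sigma}\leq\dots\leq\dfield{\AA_\ell}{\sigma}=\dfield{\EE}{\sigma}$, each $\AA_{j+1}=\AA_j[s_{j+1}]$ with $s_{j+1}$ a \sigmaSE-monomial. Set $H:=\dgroup{G}{\HH}{\AA}$; by Corollary~\ref{Cor:SConstRExtension} is not needed here, but by Corollary~\ref{Cor:pisiSCONST}.(1) (applied to the $G$-simple \piE-extension $\dfield{\AA}{\sigma}\leq\dfield{\HH}{\sigma}$, or simply by Proposition~\ref{Prop:PiPart2Weak}/Theorem~\ref{Thm:PiPart2Strong}) we have $\sconst{H}{\HH}{\sigma}\setminus\{0\}\leq\HH^*$. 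Now one applies Lemma~\ref{Lemma:MProblemRemoveSigmaExt} successively: since each $\dfield{\AA_{j+1}}{\sigma}$ is a \sigmaSE-extension of $\dfield{\AA_j}{\sigma}$ and (by Corollary~\ref{Cor:SemiConstantsForSigmaExt}.(1), iterated) the hypothesis $\sconst{H}{\AA_j}{\sigma}\setminus\{0\}\leq\AA_j^*$ propagates up the tower, we get $M(\vect f,\AA_{j+1})=M(\vect f,\AA_j)$ for every $j$, provided $\vect f\in H^n$. Since $\dgroup{G}{\EE}{\AA}=\dgroup{G}{\HH}{\AA}=H$, the hypothesis $\vect f\in(\dgroup{G}{\EE}{\AA})^n$ is exactly $\vect f\in H^n$, and chaining the equalities yields $M(\vect f,\EE)=M(\vect f,\HH)$.

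For part~2, by Theorem~\ref{Thm:RPSCharacterization}.(2) (or more precisely its refinement Theorem~\ref{Thm:PiCharStrong}, valid because the relevant semi-constant sets are groups) the existence of a \piE-extension $\dfield{\EE\lr{t}}{\sigma}$ with $\sigma(t)=\alpha\,t$ is equivalent to the nonexistence of $g\in\EE\setminus\{0\}$ and $m\in\ZZ\setminus\{0\}$ with $\sigma(g)=\alpha^m g$, i.e.\ to $M((\alpha),\EE)=\{0\}$; likewise over $\HH$ the criterion is $M((\alpha),\HH)=\{0\}$. Since $\alpha\in\dgroup{G}{\EE}{\AA}=\dgroup{G}{\HH}{\AA}=H$, part~1 with $n=1$ gives $M((\alpha),\EE)=M((\alpha),\HH)$, so the two conditions coincide and part~2 follows.

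The only mild subtlety — and the step I would be most careful about — is making sure that the group hypothesis $\sconst{H}{\AA_j}{\sigma}\setminus\{0\}\leq\AA_j^*$ really is available at every level of the \sigmaSE-tower so that Lemma~\ref{Lemma:MProblemRemoveSigmaExt} applies; this is exactly what Corollary~\ref{Cor:SemiConstantsForSigmaExt}.(1) guarantees (with $G$ there taken to be $H$), so iterating it settles the matter. Everything else is bookkeeping with the definition of $\dgroup{G}{\cdot}{\cdot}$ and the already-proven reduction lemmas.
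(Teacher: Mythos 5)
Your proposal is correct and follows essentially the same route as the paper: the identity $\dgroup{G}{\EE}{\AA}=\dgroup{G}{\HH}{\AA}\leq\HH^*$, iterated application of Lemma~\ref{Lemma:MProblemRemoveSigmaExt} up the \sigmaSE-tower for part~1, and part~2 via part~1 together with Theorem~\ref{Thm:RPSCharacterization}.(2). The only difference is that you spell out explicitly (via Theorem~\ref{Thm:SigmaTheoremPart2}/Corollary~\ref{Cor:SemiConstantsForSigmaExt}.(1)) why the semi-constant group hypothesis persists at each level of the tower, a point the paper leaves implicit.
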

\begin{proof}
Note that $\dgroup{G}{\EE}{\AA}\leq\HH^*$. Hence by iterative application of
Lemma~\ref{Lemma:MProblemRemoveSigmaExt} part~1 is proven. Part 2 follows by part~1 and Theorem~\ref{Thm:RPSCharacterization}.(2). 
\end{proof}

If one restricts to the special case that $\dfield{\GG}{\sigma}$ is a \pisiSE-field with $G=\GG^*$, the presented reduction techniques boil down to the reduction presented in~\cite[Theorem~8]{Karr:81}. The major contribution here is that  Theorem~\ref{Thm:ProblemMPiSi} can be applied for any computable difference ring $\dfield{\GG}{\sigma}$ with the properties given in Theorem~\ref{Thm:ProblemMPiSi}.
Subsequently, we utilize this additional flexibility to tackle (nested) \rE-extensions. 

\subsection{A reduction strategy for \rE-extensions and thus for \rpisiSE-extensions}

First, we treat the special case of single-rooted and simple \rE-extensions.

\begin{lemma}\label{Lemma:SingleRootExtension}
Let $\dfield{\AA}{\sigma}$ be a difference ring and let $G\leq\AA^*$ with $\sconst{G}{\AA}{\sigma}\setminus\{0\}\leq\AA^*$.
Let $\dfield{\AA[x]}{\sigma}$ be an \rE-extension of $\dfield{\AA}{\sigma}$
with $\sigma(x)=\alpha\,x$ where $\alpha\in G$; let
$\vect{f}=(f_1,\dots,f_n)\in G^n$. Then
$M(\vect{f},\AA[x])=\{(m_1,\dots,m_n)|\,(m_1,\dots,m_{n+1})\in
M((f_1,\dots,f_n,\tfrac{1}{\alpha}),\AA).$
\end{lemma}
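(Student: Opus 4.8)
The plan is to mimic the structure used in Lemma~\ref{Lemma:MProblemReducePi}, but now exploiting the relation $x^\lambda=1$ (where $\lambda=\ord(x)$) rather than transcendence of $t$. Write $\tilde{G}=\dgroup{G}{\AA[x]}{\AA}=\{h\,x^m\mid h\in G,\ 0\le m<\lambda\}$; by Proposition~\ref{Prop:RPart2Weak} (applied with the hypothesis $\sconst{G}{\AA}{\sigma}\setminus\{0\}\leq\AA^*$, which is available after noting $\alpha\in G$ and $\lambda=\ord(\alpha)=\ord(x)$ by Theorem~\ref{Thm:RPSCharacterization}.(3)) we have $\sconst{G}{\AA[x]}{\sigma}\setminus\{0\}\leq\AA[x]^*$ and $\sconst{G}{\AA[x]}{\sigma}=\{h\,x^m\mid h\in\sconst{G}{\AA}{\sigma},\ 0\le m<\lambda\}$. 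Since $\vect f\in G^n\subseteq\tilde G^n$, Lemma~\ref{Lemma:MZModule} tells us $M(\vect f,\AA[x])=M(\vect f,\sconst{\tilde G}{\AA[x]}{\sigma})$, so any solution $g$ of $\sigma(g)=f_1^{m_1}\cdots f_n^{m_n}\,g$ may be taken of the form $g=\tilde g\,x^m$ with $\tilde g\in\sconst{G}{\AA}{\sigma}\setminus\{0\}\leq\AA^*$ and $0\le m<\lambda$.

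For the inclusion ``$\subseteq$'': given $(m_1,\dots,m_n)\in M(\vect f,\AA[x])$, write $g=\tilde g\,x^m$ as above and compute $\sigma(g)=\sigma(\tilde g)\,\alpha^m x^m = f_1^{m_1}\cdots f_n^{m_n}\,\tilde g\,x^m$. Because $x$ is a unit and $\tilde g\in\AA^*$, we may cancel $x^m$ and divide: $\sigma(\tilde g)=f_1^{m_1}\cdots f_n^{m_n}\,\alpha^{-m}\,\tilde g$, which says exactly that $(m_1,\dots,m_n,m)\in M((f_1,\dots,f_n,\tfrac1\alpha),\AA)$, hence $(m_1,\dots,m_n)$ lies in the right-hand set. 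For ``$\supseteq$'': if $(m_1,\dots,m_n,m)\in M((f_1,\dots,f_n,\tfrac1\alpha),\AA)$ with witness $\tilde g\in\AA\setminus\{0\}$, then $\sigma(\tilde g\,x^m)=\sigma(\tilde g)\,\alpha^m x^m=f_1^{m_1}\cdots f_n^{m_n}\,\tilde g\,x^m$, so $\tilde g\,x^m\in\AA[x]\setminus\{0\}$ witnesses $(m_1,\dots,m_n)\in M(\vect f,\AA[x])$. One should note here that $\tilde g\,x^m\neq0$ follows because $x$ is a unit in $\AA[x]$, so $\tilde g\,x^m=0$ would force $\tilde g=0$.

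Contrasting with the \piE-case, observe there is no analogue of the extra module $M_2=\mathrm{Ann}_\ZZ((e_1,\dots,e_n))$: since the $f_i$ lie in $G$ they carry no $x$-component (the ``$e_i$'' are all zero), so the degree-matching constraint is vacuous and the reduction is a clean ``add one coordinate $\tfrac1\alpha$'' step, just as in Lemma~\ref{Lemma:SingleRootExtension} where the image is described purely as a projection of $M((f_1,\dots,f_n,\tfrac1\alpha),\AA)$. The main subtlety to be careful about is legitimizing the normal form $g=\tilde g\,x^m$: this is precisely where Proposition~\ref{Prop:RPart2Weak} together with the standing hypothesis $\sconst{G}{\AA}{\sigma}\setminus\{0\}\leq\AA^*$ is needed, and one must not forget to invoke that a solution $g$, a priori a general element $\sum_{i=0}^{\lambda-1}g_i x^i$ of $\AA[x]$, actually collapses to a single monomial term because it is a semi-constant over $G$. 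Everything else is coefficient bookkeeping of the kind already carried out in Lemmas~\ref{Lemma:MProblemRemoveSigmaExt} and~\ref{Lemma:MProblemReducePi}.
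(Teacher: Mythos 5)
Your proof is correct and follows essentially the same route as the paper: any witness $g$ of $\sigma(g)=f_1^{m_1}\cdots f_n^{m_n}\,g$ is a semi-constant, Proposition~\ref{Prop:RPart2Weak} collapses it to the monomial form $g=\tilde g\,x^m$ with $\tilde g\in\AA\setminus\{0\}$, and cancelling the units $x^m$ and $\alpha^m$ gives the two inclusions, exactly as in the paper's argument. The only superfluous step is the appeal to Lemma~\ref{Lemma:MZModule} with $\tilde G=\dgroup{G}{\AA[x]}{\AA}$ (whose semi-constant group property would require Theorem~\ref{Thm:RPart2Strong} and reducedness of $\AA[x]$); it is not needed, since $f_1^{m_1}\cdots f_n^{m_n}\in G$ already places any witness in $\sconst{G}{\AA[x]}{\sigma}\setminus\{0\}$, which is all that Proposition~\ref{Prop:RPart2Weak} (for $G$) requires.
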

\begin{proof}
Let $(m_1,\dots,m_n)\in M(\vect{f},\AA[x])$. Hence there is a
$g\in\sconst{G}{\AA[x]}{\sigma}\setminus\{0\}$ with $\sigma(g)=f_1^{m_1}\dots
f_n^{m_n}\,g$. By Proposition~\ref{Prop:RPart2Weak} it follows that
$g=\tilde{g}\,x^m$ with $\tilde{g}\in\AA\setminus\{0\}$ and $m\in\NN$. Thus
\begin{equation}\label{Equ:TildegMRelation}
\sigma(\tilde{g})=f_1^{m_1}\dots f_n^{m_n}\,\alpha^{-m}\,\tilde{g} 
\end{equation}
and hence $(m_1,\dots,m_n,m)\in M((f_1,\dots,f_n,\tfrac1\alpha),\AA)$.
Conversely, if $(m_1,\dots,m_n,m)\in M((f_1,\dots,f_n,1/\alpha,\AA)$, there is
a $\tilde{g}\in\AA\setminus\{0\}$ with~\eqref{Equ:TildegMRelation}. Therefore we conclude that
$\sigma(\tilde{g}\,t^m)=f_1^{m_1}\dots
f_n^{m_n}\,\tilde{g}\,t^m$ which implies that $(m_1,\dots,m_n)\in M(\vect{f},\AA[x])$. 
\end{proof}

\noindent As a consequence we obtain the proof of our Theorem~\ref{Thm:AlgMainResultRestricted}.(1). 

\ExternalProof{(Theorem~\ref{Thm:AlgMainResultRestricted}.(1)\label{Proof:AlgMainResultRestricted1})}{
Since Problem~PMT is solvable in $\dfield{\GG}{\sigma}$ for $G$, it follows by Theorem~\ref{Thm:ProblemMPiSi} that Problem~PMT is solvable in $\dfield{\HH}{\sigma}$ for $\tilde{G}$ with $\HH=\GG\lr{t_1}\dots\lr{t_r}$ and that $\sconst{\tilde{G}}{\HH}{\sigma}\setminus\{0\}\leq\HH^*$. Thus by iterative applications of Lemma~\ref{Lemma:SingleRootExtension} and Proposition~\ref{Prop:RPart2Weak} we conclude that Problem~PMT is solvable in $\dfield{\bar{\HH}}{\sigma}$ for $\tilde{G}$ with $\bar{\HH}=\HH\lr{x_1}\dots\lr{x_u}$ and that $\sconst{\tilde{G}}{\bar{\HH}}{\sigma}\setminus\{0\}\leq\bar{\HH}^*$.  Finally, by applying again Theorem~\ref{Thm:ProblemMPiSi} it follows that Problem~PMT is solvable in $\dfield{\EE}{\sigma}$ for $\tilde{G}$. 
}

In order to tackle the more general case that the \rE-extensions are nested and that they might occur also in \piE-extensions (see the underlying algorithms for Theorem~\ref{Thm:AlgMainResultFull}.(2) in Proof~\ref{Proof:AlgMainResultFull2} below), we require additional properties on the
difference rings: they must be strong constant-stable; see Definition~\ref{Def:ConstantStable}. With this extra condition the following structural property of the semi-constants holds. They factor into two parts: a factor which depends only on the \rE-monomials with constant coefficients and a factor which is free of the \rE-monomials.

\begin{lemma}\label{Lemma:RExtStructureForRing}
Let $\dfield{\AA}{\sigma}$ be a difference ring which is 
constant-stable and let $G\leq\AA^*$ be closed under $\sigma$ where $\sconstF{G}{\AA}{\sigma^k}\setminus\{0\}\leq\AA^*$ for any $k>0$.
Let $\dfield{\EE}{\sigma}$ be a simple \rE-extension of
$\dfield{\AA}{\sigma}$ with $\EE=\AA[x_1]\dots[x_e]$ where we
have~\eqref{Equ:SimpleDef2} with $m_{i,j}\in\NN$ and $u_i\in G$ with $\per(u_i)>0$. Define
\begin{equation}\label{Equ:DefineRinMProblem}
r:=\begin{cases}
\lcm(\ford(u_1),\dots,\ford(u_e),\ford(x_1),\dots,\ford(x_e)) &\text{ if
}e>0\\
1&\text{ if }e=0.
   \end{cases}
\end{equation}
Let
$\tilde{G}=\dgroup{G}{\EE}{\AA}$ with
$\sconst{\tilde{G}}{\EE}{\sigma}\setminus\{0\}\leq\EE^*$. Then the following holds.\\ (1) $r>0$.
(2) For any $g\in\sconst{\tilde{G}}{\EE}{\sigma}\setminus\{0\}$ we have that 
\begin{equation}\label{Equ:SeperatexPartFPart}
g=\tilde{g}\,h
\end{equation}
with $\tilde{g}\in\sconstF{G}{\AA}{\sigma^r}\setminus\{0\}\leq\AA^*$ and
$h\in\constF{\KK[x_1,\dots,x_e]}{\sigma^r}^*$.\\ 
(3) If $\sigma(g)=v\,x_1^{m_1}\dots x_e^{m_e}\,g$ with $v\in G$, $m_i\in\NN$,
then 
$\sigma(\tilde{g})=\lambda\,v\,\tilde{g}$ with $\lambda\in\AA^*$, $\lambda^r=1$.
\end{lemma}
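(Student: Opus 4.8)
\textit{Plan.} The whole argument is organized around the automorphism $\sigma^r$, which (as the choice of $r$ is designed to guarantee) fixes every \rE-monomial. Part~(1) is the routine base case; parts~(2) and~(3) are obtained by passing to $\sigma^r$ and exploiting that $\EE$ decomposes as $\AA\otimes_{\KK}\KK[x_1,\dots,x_e]$ with $\sigma^r$ acting ``diagonally''.

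\textit{Part~(1) and the $\sigma^r$-setup.} Since $\dfield{\EE}{\sigma}$ is a simple \rE-extension, Proposition~\ref{Prop:DeterminOrderAndPer}.(1) gives $\ord(u_i)>0$, and with the hypothesis $\per(u_i)>0$ and Lemma~\ref{Lemma:BasicsOrdPer}.(3) this forces $\ford(u_i)>0$; likewise $\ord(x_i)=\lambda_i>0$ (each $x_i$ is an \rE-monomial) and $\per(x_i)>0$ by Proposition~\ref{Prop:DeterminOrderAndPer}.(2), so $\ford(x_i)>0$ again by Lemma~\ref{Lemma:BasicsOrdPer}.(3). Hence for $e>0$ the number $r$ is an $\lcm$ of positive integers, so $r>0$ (and $r=1$ if $e=0$). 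Next, $\per(x_i)\mid\ford(x_i)\mid r$ (Lemma~\ref{Lemma:BasicsOrdPer}.(3) and the definition of $r$), so $\sigma^r(x_i)=x_i$ for all $i$; since $\dfield{\AA}{\sigma}$ is constant-stable, $\constF{\AA}{\sigma^r}=\KK$, so $\sigma^r$ fixes $\KK$ pointwise and therefore fixes the whole subring $B:=\KK[x_1,\dots,x_e]$ of $\EE$; in particular $\constF{\KK[x_1,\dots,x_e]}{\sigma^r}^*=B^*$, the target for $h$. Using the presentation $\EE\cong\AA[y_1,\dots,y_e]/\lr{y_1^{\lambda_1}-1,\dots,y_e^{\lambda_e}-1}$ recalled around~\eqref{Equ:RExtIdeal} together with $\QQ\subseteq\KK\subseteq\AA$, one identifies $\EE=\AA\otimes_{\KK}B$ as $\KK$-algebras; since $\sigma^r$ is $\KK$-linear, maps $\AA$ into $\AA$ and fixes each $x_i$, this identification turns $\sigma^r$ into $\sigma^r|_{\AA}\otimes\mathrm{id}_B$. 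As $y_i^{\lambda_i}-1$ is separable in characteristic~$0$, $B$ is a finite-dimensional reduced $\KK$-algebra, hence a finite product $B\cong\prod_t L_t$ of finite field extensions of $\KK$; thus $\EE\cong\prod_t\AA_t$ with $\AA_t:=\AA\otimes_{\KK}L_t$, on which $\sigma^r$ (and $\sigma^{rN}$ for any $N$) acts as $\sigma^r|_{\AA}\otimes\mathrm{id}_{L_t}$, so $\constF{\AA_t}{\sigma^r}=L_t$ and $\constF{\EE}{\sigma^r}=B$.

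\textit{The decomposition (heart of the proof).} Let $g\in\sconst{\tilde G}{\EE}{\sigma}\setminus\{0\}$, so $g\in\EE^*$ and $\sigma(g)=w\,g$ for a unique $w=w_{\AA}\,x_1^{\kappa_1}\dots x_e^{\kappa_e}$ with $w_{\AA}\in G$. Since $G$ is closed under $\sigma$, so is $\tilde G$; hence by Lemma~\ref{Lemma:SigmaFacId}.(4), $\sigma^r(g)=\sigmaFac{w}{r}\,g$ with $\sigmaFac{w}{r}\in\tilde G$, say $\sigmaFac{w}{r}=W_1\,m$, $W_1\in G$, $m$ a monomial in the $x_i$ of finite order $N$. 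Using $\sigma^r(m)=m$ with Lemma~\ref{Lemma:SigmaFacId}.(1),(3) gives $\sigma^{rN}(g)=W\,g$ with $W:=\sigmaFac{W_1}{N,\sigma^r}\in G\le\AA^*$. Write $g=(g_t)\in\prod_t\AA_t^*$; expanding $g_t$ over a $\KK$-basis of $L_t$ with coefficients in $\AA$ and using that $\sigma^{rN}$ acts only on the $\AA$-factor, the relation $\sigma^{rN}(g_t)=W\,g_t$ forces $\sigma^{rN}(a)=W\,a$ for every coefficient $a$; since $g_t\neq0$, some coefficient $\tilde g$ is nonzero, whence $\tilde g\in\sconstF{G}{\AA}{\sigma^{rN}}\setminus\{0\}\le\AA^*$ by hypothesis. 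Then $\sigma^{rN}(g/\tilde g)=g/\tilde g$, i.e.\ $h:=g/\tilde g\in\constF{\EE}{\sigma^{rN}}=B$; being a unit of $\EE$ lying in the product of fields $B$, $h$ is a unit of $B$, so $g=\tilde g\,h$ with $h\in B^*$. To upgrade $\tilde g$, apply $\sigma^r$ to $g=\tilde g\,h$: from $\sigma^r(h)=h$ and $\sigma^r(g)=W_1\,m\,g$ one gets $\sigma^r(\tilde g)=W_1\,m\,\tilde g$ after cancelling $h$; since both sides divided by $\tilde g$ lie in $\AA$, the monomial $m$ is trivial, so $\sigma^r(\tilde g)=W_1\,\tilde g$ with $W_1\in G$, i.e.\ $\tilde g\in\sconstF{G}{\AA}{\sigma^r}\setminus\{0\}$, proving~(2). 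For~(3), let $\sigma(g)=v\,x_1^{m_1}\dots x_e^{m_e}\,g$ with $v\in G$, $m_i\in\NN$, and put $m_0:=x_1^{m_1}\dots x_e^{m_e}$; applying $\sigma$ to $g=\tilde g\,h$ gives $\sigma(\tilde g)=\lambda\,v\,\tilde g$ with $\lambda:=m_0\,h/\sigma(h)=v^{-1}\tilde g^{-1}\sigma(\tilde g)\in\AA^*$. Finally $\sigma^r(\lambda)=\sigma^r(m_0)\,h/\sigma(h)=m_0\,h/\sigma(h)=\lambda$ (as $\sigma^r(x_i)=x_i$, $\sigma^r(h)=h$, $\sigma^{r+1}(h)=\sigma(h)$), so $\lambda\in\constF{\AA}{\sigma^r}=\KK$; and then $\lambda^r=\sigmaFac{\lambda}{r}=\sigmaFac{m_0}{r}\cdot\big(h/\sigma^r(h)\big)=\big(\prod_i\sigmaFac{x_i}{r}^{m_i}\big)\cdot1=1$, using that $\ford(x_i)\mid r$ gives $\sigmaFac{x_i}{r}=1$ and that $\sigmaFac{(h/\sigma(h))}{r}$ telescopes to $h/\sigma^r(h)=1$.

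\textit{Expected main obstacle.} The delicate point is the descent inside the decomposition step: identifying $\EE$ with $\AA\otimes_{\KK}\KK[x_1,\dots,x_e]$, checking that $\sigma^r$ really is ``diagonal'' for this splitting, and then — via the \'etale decomposition $B\cong\prod_t L_t$ together with the hypothesis $\sconstF{G}{\AA}{\sigma^{rN}}\setminus\{0\}\le\AA^*$ — pulling out of $g$ a genuine unit $\tilde g\in\AA^*$ that inherits the semi-constant behaviour. Once that is in place, the remaining arguments are bookkeeping with the order, period, and rising-factorial identities of Section~\ref{Sec:Period} and Lemma~\ref{Lemma:SigmaFacId}.
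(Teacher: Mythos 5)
Your proof is correct, and it rests on the same backbone as the paper's own argument: pass to $\sigma^r$, which fixes every $x_i$ (via $\per(x_i)\mid\ford(x_i)\mid r$) and fixes $\KK$ by constant-stability, and then use the freeness of $\EE$ as an $\AA$-module to split off a factor $\tilde{g}\in\sconstF{G}{\AA}{\sigma^r}\setminus\{0\}\leq\AA^*$ and a unit $h$ with coefficients in $\KK$. The packaging, however, differs in three respects. First, the paper notes at once that $\sigmaFac{w}{r}=\sigmaFac{v}{r}\in G$, because $\ford(x_i)\mid r$ forces $\sigmaFac{(x_i)}{r}=1$; this makes your detour through the order $N$ of the monomial part of $\sigmaFac{w}{r}$ and through $\sigma^{rN}$ unnecessary, though harmless -- you recover the triviality of that monomial afterwards from $\sigma^r(\tilde{g})/\tilde{g}\in\AA$ together with the $\AA$-linear independence of the reduced monomials (a justification you leave implicit but which is exactly the freeness of $\EE$ over $\AA$). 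Second, where the paper simply compares coefficients of $g$ in the monomial basis and uses $\constF{\AA}{\sigma^r}=\KK$ to see that all coefficients are $\KK^*$-multiples of one of them, you route the argument through $\EE\cong\AA\otimes_{\KK}B\cong\prod_t\AA\otimes_{\KK}L_t$ with $B:=\KK[x_1,\dots,x_e]$; this \'etale decomposition is more machinery than is needed (a direct coefficient comparison over the monomial $\AA$-basis yields the same nonzero coefficient, and $h\in B\cap\EE^*$ is already a unit of $B$ because a non-zero-divisor of the finite-dimensional $\KK$-algebra $B$ is invertible), but every step of it is sound. Third, and most interestingly, your part~(3) is genuinely different: instead of the paper's coefficient-level identity $\sigma(\tilde{g})=v\,u_1^{-s_1}\cdots u_e^{-s_e}\,\tfrac{c_{\vect{s'}}}{c_{\vect{s}}}\,\tilde{g}$, you work globally with $\lambda=x_1^{m_1}\cdots x_e^{m_e}\,h/\sigma(h)$, observe $\sigma^r(\lambda)=\lambda$, hence $\lambda\in\constF{\AA}{\sigma^r}=\KK$, and get $\lambda^r=\sigmaFac{\lambda}{r}=1$ by telescoping together with $\sigmaFac{(x_i)}{r}=1$. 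This is cleaner than the paper's bookkeeping with the $u_i$ and the $c_{\vect{s}}$, and it even delivers the stronger conclusion $\lambda\in\KK^*$ under plain constant-stability, a refinement the paper only obtains in Corollary~\ref{Cor:RExtStructureForField} by invoking strong constant-stability.
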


\begin{proof}
Let $g\in\sconst{\tilde{G}}{\EE}{\sigma}\setminus\{0\}$, i.e.,
$\sigma(g)=v\,x_1^{m_1}\dots x_e^{m_e}\,g$ with $v\in G$ and $m_i\in\ZZ$. Let
$r$ be given as in~\eqref{Equ:DefineRinMProblem}. If $e=0$, i.e., $r=1$, the
lemma holds by taking $\tilde{g}:=g\in\AA^*$ and $h=1$. Otherwise, we may
suppose that $e>0$.\\
(1) Let $1\leq i\leq e$. By Proposition~\ref{Prop:DeterminOrderAndPer}.(1) it follows that $\ord(u_i)>0$. Together with the assumption that $\per(u_i)>0$ we have that $\ford(u_i)>0$ by Lemma~\ref{Lemma:BasicsOrdPer}.(3). Moreover,
by Proposition~\ref{Prop:DeterminOrderAndPer}.(2) it follows that
$\per(x_i)>0$. Again with $\ord(x_i)>0$ and $\per(x_i)>0$ it follows that $\ford(x_i)>0$ by Lemma~\ref{Lemma:BasicsOrdPer}.(3).
Therefore $r>0$.\\
(2) By the choice of $r$ it follows that for all $1\leq i\leq e$ we have 
\begin{equation}\label{Equ:rProperties}
\sigmaFac{(u_i)}{r}=1,\quad \sigmaFac{(x_i)}{r}=1\text{ and }\sigma^r(x_i)=x_i;
\end{equation}
the last equality follows by Lemma~\ref{Lemma:DRPeriod}.(3).
Moreover, by Lemma~\ref{Lemma:SigmaFacId} we conclude that
$$\sigma^r(g)=\sigmaFac{(v\,x_1^{m_1}\dots
x_e^{m_e})}{r}\,g=\sigmaFac{v}{r}\,\sigmaFac{(x_1)}{r}^{m_1}\dots\sigmaFac{(x_e)}{r}^{m_e}\,
g=\tilde {u}\,g$$ 
with $\tilde{u}:=\sigmaFac{v}{r}$. Since $G$ is closed under $\sigma$, we have
that $\tilde{u}\in G$.
Write
$g=\sum_{\vect{s}\in S}g_{\vect{s}}\vect{x}^{\vect{s}}$
where $S\subseteq\NN^e$ is finite, $g_{\vect{s}}\in\FF^*$ and for
$(s_1,\dots,s_e)\in S$ and $\vect{x}=(x_1,\dots,x_e)$ we use the multi-index
notation $\vect{x}^{\vect{s}}=x_1^{s_1}\dots x_e^{s_e}$. In particular, we
suppose that if $\vect{s},\vect{s'}\in S$ with
$\vect{x}^{\vect{s}}=\vect{x}^{\vect{s'}}$ then $\vect{s}=\vect{s'}$. 
Then by coefficient comparison w.r.t.\ $\vect{x}^{\vect{i}}$ and using~\eqref{Equ:rProperties} we obtain
$\sigma^r(g_{\vect{i}})=\tilde{u}\,g_{\vect{i}}$ for any $\vect{i}\in S$.
Note that $g_{\vect{i}}\in\sconstF{G}{\AA}{\sigma^r}\setminus\{0\}\leq\AA^*$.
Hence for any $\vect{s},\vect{r}\in S$ we have that
$\sigma^r(g_{\vect{s}}/g_{\vect{r}})=g_{\vect{s}}/g_{\vect{r}}$. Thus it
follows that $g_{\vect{s}}/g_{\vect{r}}\in(\constF{\AA}{\sigma^r})^*=\KK^*$,
i.e., for all
$s\in S$ we have that
$g_{\vect{s}}=c_{\vect{s}}\,\tilde{g}$
for some $c_{\vect{s}}\in\KK^*$ and $\tilde{g}\in\sconstF{G}{\AA}{\sigma^r}\setminus\{0\}\leq\AA^*$
with
\begin{equation}\label{Equ:RelationForgTilde}
\sigma^r(\tilde{g})=\tilde{u}\,\tilde{g}.
\end{equation}
Consequently, $g=\tilde{g}\,h$ with $h=\sum_{\vect{s}\in S}
c_{\vect{s}}\vect{x}^{\vect{s}}$. Since $g\in\AA^*$, 
$h\in\KK[x_1,\dots,x_e]^*$. Finally, with~\eqref{Equ:rProperties} we conclude that that $h\in\constF{\KK[x_1,\dots,x_e]}{\sigma^r}^*$.\\ 
(3) Taking $\vect{s}=(s_1,\dots,s_e)\in S$, it is easy to see that there is exactly one $\vect{s'}\in S$ with 
$$\sigma(c_{\vect{s}}\vect{x}^{\vect{s}}\tilde{g})=v\,x_1^{m_1}\dots
x_e^{m_e}c_{\vect{s'}}\vect{x}^{\vect{s'}}.$$
This means that on both
sides the same monomial $\vect{x}^{\vect{s'}+(m_1,\dots,m_e)}$ in reduced form
occurs. By coefficient
comparison this gives
$\sigma(\tilde{g})=v\,u_1^{-s_1}\dots
u_e^{-s_e}\,\tfrac{c_{\vect{s'}}}{c_{\vect{s}}}\tilde{g}.$
Thus with~\eqref{Equ:rProperties} and Lemma~\ref{Lemma:SigmaFacId} we get
$\sigma^r(\tilde{g})=\sigmaFac{v}{r}(\tfrac{c_{\vect{s'}}}{c_{\vect{s}}}
)^r\tilde{g}=\tilde{u}\,(\tfrac{c_{\vect{s'}}}{c_{\vect{s}}}
)^r\,\tilde{g}$. Hence with~\eqref{Equ:RelationForgTilde} we obtain 
$(\tfrac{c_{\vect{s'}}}{c_{\vect{s}}}
)^r=1$. Finally, with $\lambda:=u_1^{-s_1}\dots
u_e^{-s_e}\,\tfrac{c_{\vect{s'}}}{c_{\vect{s}}}$ we have that $\sigma(\tilde{g})=\lambda\,v\,\tilde{g}$ with $\lambda^r=1$ and
$\lambda\in\AA^*$.
\end{proof}

\noindent Specializing $\AA$ to a strong constant-stable difference field, the lemma reads as follows.

\begin{corollary}\label{Cor:RExtStructureForField}
Let $\dfield{\FF}{\sigma}$ be a difference field with $\KK=\const{\FF}{\sigma}$ which is strong
constant-stable.
Let $\dfield{\EE}{\sigma}$ be a simple \rE-extension of
$\dfield{\FF}{\sigma}$ with $\EE=\FF[x_1]\dots[x_e]$ such that~\eqref{Equ:SimpleDef2} holds with $m_{i,j}\in\NN$, $u_i\in\KK^*$, and
define~\eqref{Equ:DefineRinMProblem}.
Let
$\tilde{G}=\dgroup{(\FF^*)}{\EE}{\FF}$. 
Then: (1) $r>0$.\\
(2) For any $g\in\sconst{\tilde{G}}{\EE}{\sigma}\setminus\{0\}$ we have~\eqref{Equ:SeperatexPartFPart}
with $\tilde{g}\in\FF^*$ and $h\in\constF{\KK[x_1,\dots,x_e]}{\sigma^r}^*$.\\ 
(3) If $\sigma(g)=v\,x_1^{m_1}\dots x_e^{m_e}\,g$ with $v\in\FF^*$, $m_i\in\ZZ$,
then 
$\sigma(\tilde{g})=\lambda\,v\,\tilde{g}$ with $\lambda\in\KK^*$, $\lambda^r=1$.
\end{corollary}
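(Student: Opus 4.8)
The plan is to obtain Corollary~\ref{Cor:RExtStructureForField} as the field specialization of Lemma~\ref{Lemma:RExtStructureForRing}, so almost all the work lies in checking that the hypotheses of that lemma hold for the choice $\AA:=\FF$ and $G:=\FF^*$. First I would record the elementary facts that, since $\FF$ is a field, $\sigma$ restricts to a group automorphism of $\FF^*=\FF\setminus\{0\}$, hence $G=\FF^*$ is closed under $\sigma$; and that for every $k>0$ one has $\sconstF{\FF^*}{\FF}{\sigma^k}\setminus\{0\}=\FF^*$, because any $c\in\FF\setminus\{0\}$ satisfies $\sigma^k(c)=\tfrac{\sigma^k(c)}{c}\,c$ with $\tfrac{\sigma^k(c)}{c}\in\FF^*$. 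The difference field $\dfield{\FF}{\sigma}$ is constant-stable by hypothesis, each $u_i$ lies in $\KK^*\subseteq\FF^*=G$ with $\sigma(u_i)=u_i$ so that $\per(u_i)=1>0$, and by assumption the $m_{i,j}$ are in $\NN$; thus~\eqref{Equ:SimpleDef2} has exactly the shape required by the lemma. Finally, since $\dfield{\EE}{\sigma}$ is an \rE-extension of the field $\dfield{\FF}{\sigma}$, Corollary~\ref{Cor:SConstRExtension} gives that $\EE$ is reduced and $\sconst{\tilde{G}}{\EE}{\sigma}\setminus\{0\}\leq\EE^*$ for $\tilde{G}=\dgroup{(\FF^*)}{\EE}{\FF}$, supplying the last missing hypothesis.

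With all hypotheses verified I would simply invoke Lemma~\ref{Lemma:RExtStructureForRing}. Part~(1), $r>0$, is then immediate. For part~(2), the lemma delivers the factorization $g=\tilde{g}\,h$ of~\eqref{Equ:SeperatexPartFPart} with $\tilde{g}\in\sconstF{\FF^*}{\FF}{\sigma^r}\setminus\{0\}$ and $h\in\constF{\KK[x_1,\dots,x_e]}{\sigma^r}^*$; since $\FF$ is a field, $\sconstF{\FF^*}{\FF}{\sigma^r}\setminus\{0\}\subseteq\FF\setminus\{0\}=\FF^*$, so in fact $\tilde{g}\in\FF^*$, as claimed. For part~(3), after reducing each exponent $m_i$ modulo $\ord(x_i)$ to a nonnegative representative (which alters neither $v\,x_1^{m_1}\dots x_e^{m_e}$ nor the hypothesis, since $x_i^{\ord(x_i)}=1$), the lemma produces $\sigma(\tilde{g})=\lambda\,v\,\tilde{g}$ with $\lambda\in\FF^*$ and $\lambda^r=1$.

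The only point not already settled at the ring level — and the step I would flag as the (mild) crux — is upgrading $\lambda\in\FF^*$ to $\lambda\in\KK^*$ in part~(3). Here I would observe that $\lambda^r=1$ makes $\lambda$ a root of unity of $\FF$, and then invoke the extra hypothesis that $\dfield{\FF}{\sigma}$ is \emph{strong} constant-stable, which by Definition~\ref{Def:ConstantStable} forces every root of unity of $\FF$ to lie in $\KK=\const{\FF}{\sigma}$; hence $\lambda\in\KK^*$ with $\lambda^r=1$, completing part~(3). I do not anticipate any further obstacles, as everything else is a verbatim transcription of the ring-level statement to the field setting.
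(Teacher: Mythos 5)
Your proposal is correct and follows essentially the same route as the paper: specialize Lemma~\ref{Lemma:RExtStructureForRing} with $\AA=\FF$, $G=\FF^*$ (whose hypotheses are trivially satisfied in the field setting, with $\sconst{\tilde{G}}{\EE}{\sigma}\setminus\{0\}\leq\EE^*$ supplied by Corollary~\ref{Cor:SConstRExtension}), and then upgrade $\lambda\in\FF^*$ with $\lambda^r=1$ to $\lambda\in\KK^*$ via strong constant-stability. Your extra remark about normalizing the exponents $m_i\in\ZZ$ to nonnegative representatives using $x_i^{\ord(x_i)}=1$ is a small point the paper's proof passes over silently, but it does not change the argument.
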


\begin{proof}
Since $u_i\in\KK^*$ by Corollary~\ref{Cor:SimpleNestedRBasics}.(1), $\per(u_i)=1$. Define $G=\FF^*$ which is closed
under $\sigma$. In particular, $\sconstF{G}{\FF}{\sigma^k}\setminus\{0\}=\FF^*$ for any $k>0$.
In addition, 
$\sconst{\tilde{G}}{\EE}{\FF}\setminus\{0\}\leq\EE^*$ by Corollary~\ref{Cor:SConstRExtension}. Thus we can apply
Lemma~\ref{Lemma:RExtStructureForRing}. The corollary follows by observing that 
$\lambda\in\FF^*$ with $\lambda^r=1$. Then by our assumption it follows that $\lambda\in\KK^*$.
\end{proof}

\noindent With this result we get the following reduction tactic for simple \rE-extensions.

\begin{lemma}\label{Lemma:MProblemForRExtensions}
Let $\dfield{\FF}{\sigma}$ be a difference field with $\KK=\const{\FF}{\sigma}$
which is strong constant-stable.
Let $\dfield{\EE}{\sigma}$ be a simple \rE-extension of
$\dfield{\FF}{\sigma}$ with $\EE=\FF[x_1]\dots[x_e]$ where we
have~\eqref{Equ:SimpleDef2} with $m_{i,j}\in\NN$ and $u_i\in\KK^*$. Define
$r>0$ as given in~\eqref{Equ:DefineRinMProblem} and choose\footnote{In principal, we could also take one primitive $r$th root of unity $\alpha$. However, if $\alpha\notin\KK$, we have to extend the constant field. By efficiency reasons we prefer to stay in the original field. We remark that extending the constant field would not produce further relations.} a set 
$\{\alpha_1,\dots,\alpha_s\}\subseteq\KK^*$ of $r$-th roots of unity which generate multiplicatively all $r$-th roots of unity of $\KK$. Let
$G=\dgroup{(\FF^*)}{\EE}{\FF}$ and let $\vect{f}=(f_1,\dots,f_n)\in G^n$ with
$f_i=\tilde{f}_i\,h_i$ where $\tilde{f}_i\in\FF^*$ and
$h_i=x_1^{z_{i,1}}\dots x_e^{z_{i,e}}$ with $z_{i,j}\in\NN$. Then
\begin{equation}\label{Equ:GetMRExt}
M(\vect{f},\EE)=\{(m_1,\dots,m_n)|\,(m_1,\dots,m_{n+s})\in
M_1\cap M_2\}
\end{equation}
where 
\begin{align*}
 M_1&=M((\tilde{f}_1,\dots,\tilde{f}_n,\alpha_1,\dots,\alpha_s),\FF),\\
 M_2&=M((h_1,\dots,h_n,\tfrac1{\alpha_1},\dots,\tfrac1{\alpha_s}),\KK[x_1]\dots
[x_e]).
\end{align*}
\end{lemma}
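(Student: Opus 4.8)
The plan is to read off $M(\vect{f},\EE)$ directly from its definition, using the structural decomposition of semi-constants supplied by Corollary~\ref{Cor:RExtStructureForField}, whose hypotheses coincide with those of the present lemma. First I would fix $(m_1,\dots,m_n)\in M(\vect{f},\EE)$ together with a witness $g\in\EE\setminus\{0\}$ satisfying $\sigma(g)=f_1^{m_1}\cdots f_n^{m_n}\,g$. Splitting $f_1^{m_1}\cdots f_n^{m_n}=\tilde{v}\,w$ with $\tilde{v}:=\tilde{f}_1^{m_1}\cdots\tilde{f}_n^{m_n}\in\FF^*$ and $w:=h_1^{m_1}\cdots h_n^{m_n}$, the latter being a monomial $x_1^{M_1}\cdots x_e^{M_e}$ with $M_j=\sum_i m_iz_{i,j}\in\ZZ$, the relation becomes $\sigma(g)=\tilde{v}\,w\,g$. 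Since $f_1^{m_1}\cdots f_n^{m_n}\in G=\dgroup{(\FF^*)}{\EE}{\FF}\leq\EE^*$ we have $g\in\sconst{G}{\EE}{\sigma}\setminus\{0\}$, so Corollary~\ref{Cor:RExtStructureForField} gives $g=\tilde{g}\,h$ with $\tilde{g}\in\FF^*$ and $h\in\constF{\KK[x_1,\dots,x_e]}{\sigma^r}^*$, and its part~(3) gives $\sigma(\tilde{g})=\lambda\,\tilde{v}\,\tilde{g}$ with $\lambda\in\KK^*$ and $\lambda^r=1$.

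For the inclusion ``$\subseteq$'' I would then write $\lambda=\alpha_1^{a_1}\cdots\alpha_s^{a_s}$ for suitable $a_1,\dots,a_s\in\ZZ$, which is possible because $\lambda$ is an $r$-th root of unity of $\KK$ and $\{\alpha_1,\dots,\alpha_s\}$ generates all of them multiplicatively. Substituting, $\sigma(\tilde{g})=\tilde{f}_1^{m_1}\cdots\tilde{f}_n^{m_n}\,\alpha_1^{a_1}\cdots\alpha_s^{a_s}\,\tilde{g}$ with $\tilde{g}\in\FF\setminus\{0\}$, i.e.\ $(m_1,\dots,m_n,a_1,\dots,a_s)\in M_1$. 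Dividing the identity $\sigma(\tilde{g})\,\sigma(h)=\sigma(g)=\tilde{v}\,w\,\tilde{g}\,h$ by $\sigma(\tilde{g})=\lambda\,\tilde{v}\,\tilde{g}$ (legitimate as $\tilde{g}\in\FF^*\subseteq\EE^*$) yields $\sigma(h)=\lambda^{-1}\,w\,h=h_1^{m_1}\cdots h_n^{m_n}\,\alpha_1^{-a_1}\cdots\alpha_s^{-a_s}\,h$ with $h\in\KK[x_1,\dots,x_e]\setminus\{0\}$; here $\KK[x_1]\dots[x_e]$ is the \rE-extension of $\dfield{\KK}{\sigma}$ from Corollary~\ref{Cor:SimpleNestedRBasics}.(2), on which $\sigma$ restricts since $\sigma(x_i)=u_i\,x_1^{m_{i,1}}\cdots x_{i-1}^{m_{i,i-1}}\,x_i\in\KK[x_1,\dots,x_e]$. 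Hence $(m_1,\dots,m_n,a_1,\dots,a_s)\in M_2$, which proves ``$\subseteq$''.

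For the reverse inclusion ``$\supseteq$'' I would take $(m_1,\dots,m_n,a_1,\dots,a_s)\in M_1\cap M_2$, choose witnesses $\tilde{g}\in\FF\setminus\{0\}$ and $h\in\KK[x_1,\dots,x_e]\setminus\{0\}$ realizing membership in $M_1$ and $M_2$ respectively, and set $g:=\tilde{g}\,h$, which is nonzero because $\tilde{g}\in\FF^*\subseteq\EE^*$ is a unit. Multiplying the two defining relations, the factors $\alpha_j^{a_j}$ and $\alpha_j^{-a_j}$ cancel and one obtains $\sigma(g)=\sigma(\tilde{g})\,\sigma(h)=(\tilde{f}_1h_1)^{m_1}\cdots(\tilde{f}_nh_n)^{m_n}\,\tilde{g}\,h=f_1^{m_1}\cdots f_n^{m_n}\,g$, so $(m_1,\dots,m_n)\in M(\vect{f},\EE)$; projecting onto the first $n$ coordinates then gives exactly~\eqref{Equ:GetMRExt}.

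Every step is short once Corollary~\ref{Cor:RExtStructureForField} is in hand, so I do not expect a genuine obstacle; the only points requiring care are the consistent bookkeeping of signs in the root-of-unity exponents — so that the correction factor $\lambda$ that the $\FF$-part $\tilde{g}$ picks up is exactly compensated in the $\KK[x_1,\dots,x_e]$-part $h$ and therefore disappears from $M(\vect{f},\EE)$ — together with the routine verifications that $\sigma$ indeed restricts to an automorphism of $\KK[x_1,\dots,x_e]$ and that the relevant elements are units in $\EE$.
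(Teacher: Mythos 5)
Your proposal is correct and follows essentially the same route as the paper's proof: both directions rest on the decomposition $g=\tilde{g}\,h$ from Corollary~\ref{Cor:RExtStructureForField}, expressing the $r$-th root of unity $\lambda$ in terms of the generators $\alpha_1,\dots,\alpha_s$ to split the relation into the $M_1$- and $M_2$-conditions, and multiplying the two witnesses back together for the converse. The extra care you take (checking that $\sigma$ restricts to $\KK[x_1]\dots[x_e]$ and that $\tilde{g}\,h\neq0$) is consistent with, and implicitly used in, the paper's argument.
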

\begin{proof}
Let $(m_1,\dots,m_n)\in M(\vect{f},\EE)$, i.e., there is a
$g\in\sconst{G}{\EE}{\FF}\setminus\{0\}$ with
$\sigma(g)=f_1^{m_1}\dots f_n^{m_n}\,g$.
Hence by Corollary~\ref{Cor:RExtStructureForField} it follows that
$g=\tilde{g}\,h$ with
$\tilde{g}\in\FF^*$ and $h\in\KK[x_1]\dots[x_e]^*$. In particular,
$\sigma(\tilde{g})=\tilde{f}_1^{m_1}\dots\tilde{f}_n^{m_n}\,\lambda\,\tilde{g}$ for $\lambda\in\KK^*$ being an $r$th root of unity. Hence we can take $m_{n+1},\dots,m_{n+s}\in\NN$ such that $\lambda=\alpha_1^{m_{n+1}}
\dots\alpha_s^{m_{n+s}}$. Consequently,
\begin{equation}\label{equ:M1Cond}
\sigma(\tilde{g})=\tilde{f}_1^{m_1}\dots\tilde{f}_n^{m_n}\alpha_1^{m_{n+1}}
\dots\alpha_s^{m_{n+s}}\,\tilde{g},
\end{equation}
which yields 
\begin{equation}\label{equ:M2Cond}
\sigma(h)=h_1^{m_1}\dots
h_n^{m_n}\alpha_1^{-m_{n+1}}\dots\alpha_s^{-m_{n+s}}\,h.
\end{equation}
Then~\eqref{equ:M1Cond} and~\eqref{equ:M2Cond} imply $(m_1,\dots,m_{n+s})\in
M_1\cap M_2$. Conversely, let $(m_1,\dots,m_n)\in M_1\cap M_2$. I.e., there are $m_i\in\NN$,
$\tilde{g}\in\FF^*$ and $h\in\KK[x_1]\dots[x_e]^*$ s.t.~\eqref{equ:M1Cond} and~\eqref{equ:M2Cond} hold. Therefore
$\sigma(\tilde{g}\,h)=f_1^{m_1}\dots f_n^{m_n}\,\tilde{g}\,h$
which implies that $(m_1,\dots,m_n)\in M(\vect{f},\EE)$.
\end{proof}

The following remarks are in place. By Corollary~\ref{Cor:SConstRExtension} it follows that $\sconst{G}{\EE}{\FF}\setminus\{0\}\leq\EE^*$  
and thus $M(\vect{f},\EE)$ in Lemma~\ref{Lemma:MProblemForRExtensions}  has a $\ZZ$-basis with rank $\leq n$. In particular, we can compute such a basis as follows.
First note that both $M_1$ and $M_2$ given in Lemma~\ref{Lemma:MProblemForRExtensions} have $\ZZ$-bases with rank $\leq n+s$: for $M_1$ this follows since $\FF$ is a field. 
Moreover, if one takes $\HH=\KK[x_1]\dots[x_e]\leq\EE$ and $H=\dgroup{(\KK^*)}{\HH}{\KK}$, it follows by Corollary~\ref{Cor:SConstRExtension} that $\sconst{H}{\HH}{\sigma}\setminus\{0\}\leq\HH^*$ and thus a $\ZZ$-basis exists with rank $\leq n+s$. Summarizing, we can determine a $\ZZ$-basis of $M(\vect{f},\EE)$ by using~\eqref{Equ:GetMRExt} if bases of $M_1$ and $M_2$ are available. 

\begin{example}\label{Exp:MBasisForDegreeB}
Take the \pisiSE-field $\dfield{\KK(k)}{\sigma}$ over $\KK=\QQ(\iota)$ with $\sigma(k)=k+1$ and consider the \rE-extension $\dfield{\KK(k)[x]}{\sigma}$ of $\dfield{\KK(k)}{\sigma}$ with $\sigma(x)=\iota\,x$ and $\ord(x)=4$ from Example~\ref{Exp:QkxyIsRExt}. In order to obtain a degree bound in Example~\ref{Exp:ProductBound} below, we need a basis of $M=M(\vect{f},\KK(k)[x])$ with $\vect{f}=(k x,-\frac{x}{k+1})$. Here we will apply Lemma~\ref{Lemma:MProblemForRExtensions}. By Example~\ref{Exp:GetOrder}.(3)
we get $\ford(x)=8$. With $u_1=1$ we determine $r=8$ by~\eqref{Equ:DefineRinMProblem}.
We define $\tilde{f_1}=k$, $\tilde{f}_2=-1/(k+1)$ and $h_1=h_2=x$. All 8th roots of unity of $\KK$ are generated by $\alpha_1=\iota$. For the activation of the above lemma, we have to determine a basis of $M_1=M((\tilde{f}_1,\tilde{f}_2,\alpha_1),\KK(k))=M((k,\frac{-1}{k+1},\iota),\KK(k)$. Here we use, e.g., the algorithms worked out in~\cite{Karr:81} (this is the base case of our machinery, see Subsection~\ref{Subsec:GroundFieldAlg}) and obtain the basis
$\{(1, 1, 2), (0, 0, 4)\}$. Moreover, we compute the basis $\{(1, 1, 0), (0, 2, 0), (0, 0, 1)\}$ of $M_2=M((h_1,h_2,\alpha_1),\KK[x])=M((x,x,\iota),\KK[x])$, for details see Example~\ref{Example:MBasisInKx} below. Thus a basis of $M_1\cap M_2$ is $\{(1,1,2),(0,0,4)\}$ and we get the basis $\{(1,1)\}$ of $M$.
\end{example}

\noindent By assumption (i.e., the base case in our recursion) a basis of $M_1$ can be determined. The calculation of a $\ZZ$-basis of $M_2$ can be accomplished by using the following proposition.

\begin{proposition}\label{Prop:ComputeBasesOfRM}
Let $\dfield{\HH}{\sigma}$ with $\HH=\KK[x_1]\dots[x_e]$ be a simple
\rE-extension of $\dfield{\KK}{\sigma}$ with a computable constant field $\KK$ and given $o_i=\ord(x_i)$ for $1\leq i\leq e$. Define $G=\dgroup{(\KK^*)}{\HH}{\KK}$ and
let
$\vect{f}=(f_1,\dots,f_n)\in G^n$ with given $\lambda_i:=\ord(f_i)>0$ for
$1\leq i\leq n$. Then a basis of $M(\vect{f},\HH)$ can be computed.
\end{proposition}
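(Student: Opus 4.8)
The plan is to reduce the computation of $M(\vect{f},\HH)$ over the nested root‑of‑unity ring $\HH=\KK[x_1]\dots[x_e]$ to a purely combinatorial problem over $\ZZ$, which one then solves by lattice (Smith normal form / kernel) computations. The starting observation is that, by Proposition~\ref{Prop:RExtSConstSpecial} applied iteratively, $\sconst{G}{\HH}{\sigma}\setminus\{0\}\leq\HH^*$ and every element of $\sconst{G}{\HH}{\sigma}$ is of the form $c\,x_1^{a_1}\dots x_e^{a_e}$ with $c\in\sconst{\{1\}}{\KK}{\sigma}\setminus\{0\}=\KK^*$ (since $\dfield{\KK}{\sigma}$ is constant, $\sconst{\{1\}}{\KK}{\sigma}=\KK$) and $0\le a_i<o_i$. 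Hence a witness $g$ for $(m_1,\dots,m_n)\in M(\vect f,\HH)$ can, without loss of generality, be taken to be a single monomial $g=c\,\vect{x}^{\vect a}$ with $c\in\KK^*$. Writing $f_i=c_i\,x_1^{z_{i,1}}\dots x_e^{z_{i,e}}$ with $c_i\in\KK^*$ (the exponents $z_{i,j}$ may be taken in $\{0,\dots,o_j-1\}$ using $x_j^{o_j}=1$), the relation $\sigma(g)=f_1^{m_1}\dots f_n^{m_n}\,g$ becomes, after using $\sigma(x_j)=x_j$ for the $x_j$ (note: here each $x_j$ is a root‑of‑unity monomial over the \emph{constant} field $\KK$, so $\sigma(x_j)/x_j=u_j\in\KK^*$ by Corollary~\ref{Cor:SimpleNestedRBasics}.(1)), a pair of conditions: one in $\KK^*$ on the coefficient $c$, and one in the exponent lattice $\prod_j\ZZ/o_j\ZZ$ on $\vect a$.

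The next step is to make both conditions explicit. For the exponent part: the $\vect x$‑monomial of $\sigma(g)$ and of $f_1^{m_1}\dots f_n^{m_n}g$ must agree modulo the relations $x_j^{o_j}=1$, which—after incorporating the $u_j$'s, whose $x$‑exponents feed back recursively as in~\eqref{Equ:SimpleDef2}—yields a system of congruences $\sum_i m_i z_{i,j}\equiv 0\pmod{o_j}$ together with the correction terms coming from $\sigma(x_j)/x_j$; by the ordered (triangular) shape of a simple \rE‑extension this can be solved coordinate‑by‑coordinate from $j=1$ upward. For the coefficient part: once the exponent vector $\vect a$ of $g$ is determined (as an affine‑linear function of $\vect m$ modulo the $o_j$), the condition on $c$ reads $\sigma^{0}(c)/c=1$ times a product of the $c_i^{m_i}$ and of certain roots of unity $u_j^{\,*}$ arising from $\sigma(x_j)=u_j x_j$; since $\dfield{\KK}{\sigma}$ is constant this forces a multiplicative relation $c_1^{m_1}\dots c_n^{m_n}\cdot(\text{root of unity in }\KK)=1$, i.e. an instance of Problem~MT/PMT \emph{in the constant field} $\KK$. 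Because $\KK$ is computable and each $\ord(f_i)=\lambda_i$ is given (so the orders of the $c_i$ are controlled—indeed $\ord(c_i)\mid\lambda_i\,\ord$ of the $x$‑part, and $M((c_1,\dots,c_n),\KK)$ over a constant field is just $\{\vect m:\ c_1^{m_1}\dots c_n^{m_n}=1\}$, a lattice computable from the multiplicative relations among the given roots of unity $c_i$), this sublattice is computable. Intersecting it with the lattice cut out by the exponent congruences gives $M(\vect f,\HH)$.

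Concretely, I would phrase the algorithm as: (i) normalize each $f_i=c_i\vect x^{\vect z_i}$; (ii) compute, via the triangular structure of the simple \rE‑extension and Lemma~\ref{Lemma:DROrder}, the finitely many data $o_j=\ord(x_j)$ (given) and the feedback exponents from $\sigma(x_j)/x_j=u_j\vect x^{\cdots}$; (iii) form the lattice $L_{\mathrm{exp}}\subseteq\ZZ^n$ of those $\vect m$ for which the monomial equation is solvable in some $\vect a\in\prod\ZZ/o_j\ZZ$ (a kernel computation over $\ZZ$ with torsion, done by Smith normal form); (iv) for each such $\vect m$ the residual coefficient condition is a single element of $\KK^*$ that must be $1$; the set of $\vect m$ satisfying it is $L_{\mathrm{coef}}=\{\vect m:\ \prod_i c_i^{m_i}=\zeta^{-1}(\vect m)\}$ which, because all of $c_i,\zeta(\vect m)$ are roots of unity in the computable field $\KK$ of bounded order (bounds coming from $\lambda_i$ and $r$ as in~\eqref{Equ:DefineRinMProblem}), is an effectively computable sublattice of $L_{\mathrm{exp}}$; (v) output a $\ZZ$‑basis of $L_{\mathrm{exp}}\cap L_{\mathrm{coef}}=M(\vect f,\HH)$, of rank $\le n$ by Lemma~\ref{Lemma:MZModule} (applicable since $\sconst{G}{\HH}{\sigma}\setminus\{0\}\le\HH^*$ by Proposition~\ref{Prop:RExtSConstSpecial}).

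The main obstacle I anticipate is bookkeeping the feedback of the $x$‑exponents through the recursive definition $\sigma(x_j)/x_j=u_j x_1^{m_{j,1}}\dots x_{j-1}^{m_{j,j-1}}$: one must verify that, after expanding $\sigma(g)=\sigma(c)\prod_j(\sigma(x_j)/x_j)^{a_j}x_j^{a_j}$ and reducing modulo $\langle x_j^{o_j}-1\rangle$, the coefficient and exponent conditions genuinely \emph{decouple} into the two lattice problems above, rather than becoming entangled. This decoupling is exactly what Lemma~\ref{Lemma:RExtStructureForRing}/Corollary~\ref{Cor:RExtStructureForField} provide (the factorization $g=\tilde g\,h$ with $\tilde g\in\KK^*$ and $h\in\constF{\KK[x_1,\dots,x_e]}{\sigma^r}^*$), so the real content of the proof is to invoke that structural result and then observe that the coefficient problem is a root‑of‑unity relation problem in the computable field $\KK$ and the exponent problem is a finite‑abelian‑group linear system—both effectively solvable, and their common solution set is the desired $\ZZ$‑module.
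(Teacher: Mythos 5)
Your reduction stands or falls with the claim that a witness $g$ for $(m_1,\dots,m_n)\in M(\vect{f},\HH)$ may be taken to be a single monomial $c\,x_1^{a_1}\cdots x_e^{a_e}$ with $c\in\KK^*$, and that claim is false in exactly the setting of Proposition~\ref{Prop:ComputeBasesOfRM}. Proposition~\ref{Prop:RExtSConstSpecial} describes $\sconst{G}{\HH}{\sigma}$ only for a group $G$ of \emph{ground-ring} units; here $\vect{f}\in G^n$ with the product group $G=\dgroup{(\KK^*)}{\HH}{\KK}$, so $f_1^{m_1}\cdots f_n^{m_n}$ involves $x$-monomials, and semi-invariants with respect to that larger group need not be monomials. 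The paper's Example~\ref{Example:MBasisInKx} is a counterexample: $\KK=\QQ(\iota)$, $\HH=\KK[x]$ with $\sigma(x)=\iota\,x$, $\ord(x)=4$, $\vect{f}=(x,x,\iota)$. The vector $(2,0,0)$ lies in $M(\vect{f},\HH)$ with witness $g=x+\iota\,x^3$, but no monomial $c\,x^a$ satisfies $\sigma(g)=x^2g$, since comparing exponents would force $a\equiv a+2\pmod 4$. Your lattice $L_{\mathrm{exp}}$ would therefore be $\{\vect{m}:\,4\mid m_1+m_2\}$ and would miss $(2,0,0)$, $(0,2,0)$ and $(1,1,0)$, all of which belong to $M(\vect{f},\HH)$; the output $L_{\mathrm{exp}}\cap L_{\mathrm{coef}}$ is in general a \emph{proper} submodule, so the algorithm returns a wrong basis. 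Invoking Lemma~\ref{Lemma:RExtStructureForRing}/Corollary~\ref{Cor:RExtStructureForField} does not repair this: the factor $h$ there is only known to be a unit of $\KK[x_1,\dots,x_e]$ fixed by $\sigma^r$ (for instance $x+\iota\,x^3$ above), so deciding for which $\vect{m}$ such an $h$ exists is not a system of exponent congruences but a genuine linear-algebra problem over $\KK$ in all $o_1\cdots o_e$ coefficients of $h$.

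Your plan also never uses the hypothesis $\lambda_i=\ord(f_i)>0$ in the role it actually plays: since $f_i^{\lambda_i}=1$, each $\lambda_i\vect{b_i}$ lies in $M(\vect{f},\HH)$ (witness $g=1$), so it suffices to decide membership for the finitely many $\vect{m}$ with $0\leq m_i<\lambda_i$. For each such candidate, membership is decided by the \emph{full} Ansatz $g=\sum_{\vect{i}\in S}c_{\vect{i}}\,\vect{x}^{\vect{i}}$ with all coefficients unknown, which is a finite linear system over the computable field $\KK$; the finitely many members found, together with the vectors $\lambda_i\vect{b_i}$, generate $M(\vect{f},\HH)$ over $\ZZ$, and a basis is extracted from this finite generating set. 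That finite-box enumeration plus per-candidate linear algebra is the paper's proof; replacing your monomial Ansatz by the full one and adding the $\lambda_i$-box reduction is the missing repair.
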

\begin{proof}
Define the finite sets
$$S:=\{(n_1,\dots,n_e)\in\NN^e|\, 0\leq n_i<o_i\}\text{ and }\tilde{M}:=\{(m_1,\dots,m_n)\in\NN^n|\, 0\leq m_i<\lambda_i\}.$$
Then loop
through all vectors $\vect{m}=(m_1\dots,m_n)\in\tilde{M}$
and check if there is a $g\in\HH^*$ with
$\sigma(g)=f_1^{m_1}\dots f_n^{m_n}\,g.$ 
More precisely, we can make the Ansatz $g=\sum_{\vect{i}\in S}
c_{\vect{i}}\,\vect{x}^{\vect{i}}$ which leads to a linear system of equations in
the $c_{\vect{i}}$ with coefficients from $\KK$. Solving this system gives the
solution space\footnote{By arguments as in the proof of Lemma~\ref{Lemma:MZModule} it follows $\dim(L)\leq 1$.} $L$ and we can check if the considered $\vect{m}$ from
$\tilde{M}$
is contained in $M(\vect{f},\HH)$. In this way we can generate the
subset $M'=\tilde{M}\cap M(\vect{f},\HH)$. Denote by $\vect{b_i}\in\KK^n$ the $i$th
unit vector. We show that
\begin{equation}\label{Equ:MGeneration}
\mspan(M'\cup\{\lambda_1\,\vect{b_1},\dots,\lambda_n\,\vect{b_n}\})=M(\vect{f},\HH).
\end{equation}
Namely, since $M(\vect{f},\HH)$ is a $\ZZ$-module (see the remarks above Example~\ref{Exp:MBasisForDegreeB}) and since $\lambda_i\,\vect{b_i}\in M(\vect{f},\HH)$, the left hand side is contained in
the
right hand side. Conversely, suppose that $(m_1,\dots,m_n)\in M(\vect{f},\HH)$.
Then let $m'_i=m_i\mod\lambda_i$, i.e., $0\leq m'_i< \lambda_i$ with
$m_i=m'_i+z_i\,\lambda_i$ for some $z_i\in\ZZ$. Thus
$(m_1,\dots,m_n)=(m'_1,\dots,m'_n)+(\lambda_1\,z_1,\dots,\lambda_n\,z_n)$ where
$(m'_1,\dots,m'_n)\in\tilde{M}$ and
$(\lambda_1\,z_1,\dots,\lambda_n\,z_n)=z_1\,(\lambda_1\,\vect{b_1})+\dots+z_n\,
(\lambda_n\,\vect{b_n})$. Consequently,
$(m_1,\dots,m_n)$ is an element of the left hand side
of~\eqref{Equ:MGeneration}. Since the number of vectors of the span on the left hand side is finite, we can derive a $\ZZ$-basis of~\eqref{Equ:MGeneration}.
\end{proof}

\begin{remark}\label{Remark:MCalculation}
A basis of $M(\vect{f},\HH)$ can be obtained more
efficiently as follows. We start with the $\ZZ$-module which is given by the
basis $B=\{\lambda_1\,\vect{b_1},\dots,\lambda_n\,\vect{b_n}\}$ where $\vect{b_i}\in\KK^n$ is the
$i$th unit vector. Now go through all elements from $\tilde{M}$.
Take the first element $\vect{m}$ from $\tilde{M}$. If it is in $\mspan(B)$ (this can be easily checked), proceed to the next element. Otherwise, if it is an element from $M(\vect{f},\HH)$
(for the check see the proof of Proposition~\ref{Prop:ComputeBasesOfRM}), put
it in $B$ and transform the set again to a $\ZZ$-basis. More precisely, if we compose the rows $\vect{b_i}$ to a matrix, it should yield a matrix in Hermite normal form. In this way, the membership tests for $\mspan(B)$ can be carried out efficiently within the continuing calculation steps.
We proceed until all elements of $\tilde{M}$ are visited and update step by step $B$ as described above. By construction we have that our $\mspan(B)$ equals the left hand side of~\eqref{Equ:MGeneration} and thus equals $M(\vect{f},\HH)$. We
remark that $B$ consists always of $n$ linearly independent vectors. However,
the $\ZZ$-span is more and more refined.
\end{remark}

\begin{example}[Cont.\ Ex.~\ref{Exp:MBasisForDegreeB}]\label{Example:MBasisInKx}
Take the \rE-extension $\dfield{\KK[x]}{\sigma}$ of $\dfield{\KK}{\sigma}$ with $\KK=\QQ(\iota)$, $\sigma(x)=\iota\,x$ and $\ord(x)=4$. We calculate a basis of $M(\vect{f},\KK[x])$ with $\vect{f}=(x,x,\iota)$ as presented in Remark~\ref{Remark:MCalculation}. We start with
$\{(4, 0, 0), (0, 4, 0), (0, 0, 4)\}$
whose rows form a matrix in Hermite normal form.
Now we go through all elements of $\tilde{M}$, say in the order 
\small
$$\tilde{M}=\{(1,0,0),(2,0,0),(3,0,0),(0,1,0),(0,2,0),(0,3,0),(0,0,1),(0,0,2),(0,0,3),(1,1,0),\dots\}.$$
\normalsize
Since $(1,0,0)\notin \mspan(B)$, we check if there is a $g\in\KK[x]\setminus\{0\}$ with $\sigma(g)=x^1\,x^0\,\iota^0\,g$: this is not the case. We continue with $(2,0,0)$. Here we have that $(2,0,0)\notin\mspan(B)$. Now we check if there is a $g\in\KK[x]\setminus\{0\}$ with $\sigma(g)=x^2\,x^0\,\iota^0\,g$. Plugging in $g=g_0+g_1\,x+g_2\,x^2+g_3\,x^3$ into $\sigma(g)=x^2\,g$ gives the constraint $(g_0-g_2)x^0+\iota x (g_1+\iota
g_3)+x^2(-g_0-g_2)+x^3(-g_1-ig_3)=0$ which leads to the solution $g=x+\iota\,x^3$.  A basis of $\mspan(B\cup\{(2,0,0)\})$ is $\{(2,0,0),(0,4,0),(0,0,4)\}$. Thus we update $B$ to $B=\{(2,0,0),(0,4,0),(0,0,4)\}$. We have $(3,0,0)\notin\mspan(B)$, but there is no $g\in\KK[x]\setminus\{0\}$ with $\sigma(g)=x^3\,g$. Similarly to $(1,0,0)$, also $(0,1,0)$ does not change $B$, and similarly to $(2,0,0)$, $(0,2,0)$ leads to the updated basis $B=\{(2,0,0),(0,2,0),(0,0,4)\}$. $(0,3,0)$ does not change $B$. However, for $(0,0,1)\notin\mspan(B)$ we find $g=x$ with $\sigma(g)=x^0\,x^0\,\iota^1\,g$ which yields $B=\{(2,0,0),(0,2,0),(0,0,1)\}$. We have that $(0,0,2),(0,0,3)\in\mspan(B)$. Now we consider $(1,1,0)\notin\mspan(B)$. We find $g=x+\iota\,x^3$ with $\sigma(g)=x^2\,g$ (as already above). Hence we update $B$ to $B=\{(1,1,0),(0,2,0),(0,1,0)\}$ (where the rows form a matrix in Hermite normal form). As it turns out, no further element from $\tilde{M}$ changes $B$. Thus the found $B$ is a basis of $M(\vect{f},\KK[x])$.
\end{example}

\ExternalProof{(Theorem~\ref{Thm:AlgMainResultFull}.(2)\label{Proof:AlgMainResultFull2})}{
By Lemma~\ref{Lemma:ReorderSimpleRPISI} we can reorder the generators of
the \rpisiSE-extension such that $\dfield{\bar{\EE}}{\sigma}$ is an $\FF^*$-simple
\rE-extension of $\dfield{\FF}{\sigma}$ and
$\dfield{\EE}{\sigma}$ is a $G$-simple \pisiSE-extension of
$\dfield{\bar{\EE}}{\sigma}$ with $G=\dgroup{(\FF^*)}{\bar{\EE}}{\FF}$ . Let 
$\bar{\EE}=\FF[x_1]\dots[x_e]$ with $u_i$, $\alpha_i$ and $\vect{f}\in G^n$ with
$\tilde{f}_i$ and $h_i$ as given in
Lemma~\ref{Lemma:MProblemForRExtensions}.
By assumption we can compute a basis of $M_1$ as given in
Lemma~\ref{Lemma:MProblemForRExtensions}. Since Problem~O is solvable in $\KK^*$, we can compute $o_i=\ord(x_i)$ and $\lambda_i=\ord(u_i)$ by Corollary~\ref{Cor:SimpleNestedRBasics}.(4). Thus we can use
Proposition~\ref{Prop:ComputeBasesOfRM} to compute a basis of $M_2$ as posed in
Lemma~\ref{Lemma:MProblemForRExtensions}, and we get a basis of~\eqref{Equ:GetMRExt}. Summarizing, we can solve Problem~PMT in
$\dfield{\bar{\EE}}{\sigma}$ for $G$. In particular, $\sconst{G}{\bar{\EE}}{\sigma}\setminus\{0\}\leq\bar{\EE}^*$ by Corollary~\ref{Cor:SConstRExtension}.
Hence by Theorem~\ref{Thm:ProblemMPiSi}
we can solve Problem~PMT for $\dfield{\EE}{\sigma}$ in $\dgroup{G}{\EE}{\bar{\EE}}$.
Since $\dgroup{G}{\EE}{\bar{\EE}}=\dgroup{(\FF^*)}{\EE}{\FF}$ by Lemma~\ref{Lemma:SimpleTowerIsSimple}, the theorem is proven. 
}

\noindent To this end, we work out the following shortcut, resp.\ refined version of Theorem~\ref{Thm:RPSCharacterization}.(3).

\begin{corollary}\label{Cor:MShortCutR}
Let $\dfield{\FF}{\sigma}$ be a strong constant-stable difference field with
constant
field $\KK$, and let $G\leq\FF^*$ with $\sconst{G}{\FF}{\sigma}\setminus\{0\}\leq\FF^*$.
Let $\dfield{\HH}{\sigma}$ with $\HH=\FF[x_1]\dots[x_r]$ be a $G$-simple
\rE-extension of
$\dfield{\FF}{\sigma}$ and let
$\dfield{\EE}{\sigma}$ be a $\dgroup{G}{\HH}{\FF}$-simple \pisiSE-extension of
$\dfield{\HH}{\sigma}$. 
(1) If $f\in\dgroup{G}{\HH}{\FF}$ with $\ord(f)>0$, then $f\in\dgroup{(\KK^*\cap G)}{\HH}{\FF}$.\\
(2) $M(\vect{f},\EE)=M(\vect{f},\KK[x_1]\dots[x_r])$ for any
$\vect{f}=(f_1,\dots,f_n)\in (\dgroup{G}{\HH}{\FF})^n$ with $\ord(f_i)>0$.\\
(3) Let $\alpha\in
\dgroup{G}{\HH}{\FF}$ with $\ord(\alpha)>0$. Then there is an \rE-extension
$\dfield{\EE[t]}{\sigma}$
of $\dfield{\EE}{\sigma}$ with $\frac{\sigma(t)}{t}=\alpha$ iff there is an
\rE-extension $\dfield{\KK[x_1]\dots[x_r][t]}{\sigma}$ of
$\dfield{\KK[x_1]\dots[x_r]}{\sigma}$ with $\frac{\sigma(t)}{t}=\alpha$. 
\end{corollary}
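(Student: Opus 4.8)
The plan is to derive all three parts from the structural results already established for simple \rE-extensions, together with the reduction machinery for \pisiSE-extensions. I would first prove part~(1), then use it to obtain parts~(2) and~(3) almost immediately.

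\textit{Part~(1).} Let $f\in\dgroup{G}{\HH}{\FF}$ with $\ord(f)=m>0$, so $f=u\,x_1^{z_1}\dots x_r^{z_r}$ with $u\in G$ and $z_i\in\NN$. By Lemma~\ref{Lemma:DROrder} (applied with $\AA=\FF$), $\ord(f)>0$ forces $\ord(u)>0$, i.e.\ $u$ is a root of unity of $\FF$. Since $\dfield{\FF}{\sigma}$ is strong constant-stable, every root of unity of $\FF$ lies in $\KK$, hence $u\in\KK^*\cap G$. Therefore $f\in\dgroup{(\KK^*\cap G)}{\HH}{\FF}$, as claimed.

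\textit{Part~(2).} Let $\vect{f}=(f_1,\dots,f_n)\in(\dgroup{G}{\HH}{\FF})^n$ with each $\ord(f_i)>0$. By part~(1) we may write $f_i=u_i\,h_i$ with $u_i\in\KK^*$ a root of unity and $h_i=x_1^{z_{i,1}}\dots x_r^{z_{i,r}}$. The first step is to pass from $\dfield{\EE}{\sigma}$ down to $\dfield{\HH}{\sigma}$: since $\dfield{\EE}{\sigma}$ is a $\dgroup{G}{\HH}{\FF}$-simple \pisiSE-extension of $\dfield{\HH}{\sigma}$, Corollary~\ref{Cor:MShortCutSigma}.(1) (used after shuffling, via Lemma~\ref{Lemma:ShuffleOverField} or Lemma~\ref{Lemma:ReorderSimpleRPISI}, so that the \piE-part comes first and the \sigmaSE-part last) shows that appending \sigmaSE-monomials does not change $M(\vect{f},\cdot)$, and Theorem~\ref{Thm:ProblemMPiSi}'s underlying reduction (Lemmas~\ref{Lemma:MProblemRemoveSigmaExt} and~\ref{Lemma:MProblemReducePi}) handles the \piE-monomials; since each $f_i$ is free of the \piE- and \sigmaSE-monomials, one checks $M(\vect{f},\EE)=M(\vect{f},\HH)$. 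The second step is to pass from $\dfield{\HH}{\sigma}$ down to $\dfield{\KK[x_1]\dots[x_r]}{\sigma}$: here I would invoke Corollary~\ref{Cor:RExtStructureForField} with $\vect{f}$ as above (note $\tilde f_i=u_i\in\KK^*$, so the $\FF$-part of the factorization~\eqref{Equ:SeperatexPartFPart} contributes only an $r$-th root of unity from $\KK$, which can be absorbed), exactly as in the proof of Lemma~\ref{Lemma:MProblemForRExtensions}. Tracking the bookkeeping there with all $\tilde f_i\in\KK^*$ collapses $M_1$ to a statement purely about roots of unity in $\KK$, and the surviving content is $M_2=M((h_1,\dots,h_n,\tfrac1{\alpha_1},\dots,\tfrac1{\alpha_s}),\KK[x_1]\dots[x_r])$, which combines back to give $M(\vect{f},\HH)=M(\vect{f},\KK[x_1]\dots[x_r])$.

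\textit{Part~(3).} Let $\alpha\in\dgroup{G}{\HH}{\FF}$ with $\lambda:=\ord(\alpha)>0$. By Theorem~\ref{Thm:RPSCharacterization}.(3), an \rE-extension $\dfield{\EE[t]}{\sigma}$ of $\dfield{\EE}{\sigma}$ with $\sigma(t)=\alpha t$ exists iff there is no $g\in\EE\setminus\{0\}$ and $m\in\{1,\dots,\lambda-1\}$ with $\sigma(g)=\alpha^m g$; equivalently, iff $M((\alpha),\EE)$ contains no such exponent, i.e.\ iff $M((\alpha),\EE)=\lambda\ZZ$. Since $\ord(\alpha)>0$, part~(2) applies with $n=1$ and $\vect f=(\alpha)$, giving $M((\alpha),\EE)=M((\alpha),\KK[x_1]\dots[x_r])$, and the analogous characterization of Theorem~\ref{Thm:RPSCharacterization}.(3) over the base $\dfield{\KK[x_1]\dots[x_r]}{\sigma}$ finishes the equivalence.

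\textit{Main obstacle.} The delicate point is part~(2): one must be careful that the reduction of Lemma~\ref{Lemma:MProblemForRExtensions} was stated for $\vect f\in G=\dgroup{(\FF^*)}{\EE}{\FF}$ with the $x_i$-exponents in $\NN$, so I must first use part~(1) to guarantee that each $f_i$ really has the form $u_i h_i$ with $u_i\in\KK^*$ and $h_i$ a genuine monomial in the $x_i$ (no $\FF$-part beyond a constant root of unity) — it is precisely the hypothesis $\ord(f_i)>0$, combined with strong constant-stability, that secures this. The bookkeeping of matching $r$-th roots of unity $\alpha_1,\dots,\alpha_s$ (as in Lemma~\ref{Lemma:MProblemForRExtensions}) then goes through verbatim; the only thing to verify is that when all $\tilde f_i$ are themselves roots of unity in $\KK$, the system defining $M_1$ imposes no constraint beyond what is already encoded in $M_2$ after re-absorbing those roots of unity into the $\alpha_j$-slots, so that $M(\vect f,\EE)$ indeed coincides with $M(\vect f,\KK[x_1]\dots[x_r])$ rather than a proper sublattice.
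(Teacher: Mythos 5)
Your parts (1) and (3) are fine (part (1) via Lemma~\ref{Lemma:DROrder} is if anything cleaner than the paper's citation), and the second half of your part (2) — collapsing $M(\vect{f},\HH)$ to $M(\vect{f},\KK[x_1]\dots[x_r])$ via Corollary~\ref{Cor:RExtStructureForField} resp.\ the bookkeeping of Lemma~\ref{Lemma:MProblemForRExtensions} — can be made to work, provided you say explicitly that it is constant-stability of $\dfield{\FF}{\sigma}$ that turns the $M_1$-condition (a statement about a witness in $\FF^*$) into the identity $\tilde f_1^{m_1}\dots\tilde f_n^{m_n}\alpha_1^{m_{n+1}}\dots\alpha_s^{m_{n+s}}=1$ in $\KK^*$, which then transfers into the $M_2$-witness. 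The genuine gap is the first reduction in part (2), namely the claim $M(\vect{f},\EE)=M(\vect{f},\HH)$. Removing the \sigmaSE-monomials is indeed free (Lemma~\ref{Lemma:MProblemRemoveSigmaExt}, Corollary~\ref{Cor:MShortCutSigma}), but removing the \piE-monomials is \emph{not} a consequence of Lemma~\ref{Lemma:MProblemReducePi} together with the fact that the $f_i$ are free of them: with all $e_i=0$ that lemma only gives $M(\vect{f},\AA\lr{t})=M_1$ (since $M_2=\ZZ^n$), where $M_1$ allows an arbitrary extra exponent $m_{n+1}$ of $\tfrac1\alpha$, so $M$ can strictly grow when a \piE-monomial is adjoined even though no $f_i$ involves it. Concretely, $f_1=\sigma(t_1)/t_1\in\HH^*$ is free of the \piE-monomials, yet $1\in M((f_1),\EE)$ with witness $g=t_1$, while $M((f_1),\HH)=\{0\}$ by Theorem~\ref{Thm:RPSCharacterization}.(2). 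Your justification at this step never invokes the hypothesis $\ord(f_i)>0$, and that hypothesis is exactly what is needed.

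The missing idea — which is the heart of the paper's own proof of part (2) — is a root-of-unity argument: if $\sigma(g)=u\,g$ with $u=f_1^{m_1}\dots f_n^{m_n}$, then by part (1) $u$ is a root of unity, say $u^{\mu}=1$ with $\mu>0$, hence $\sigma(g^{\mu})=g^{\mu}$; writing $g$ as a unit times a monomial in the \piE-monomials (Theorem~\ref{Thm:rpisiSCONST}) and using reducedness (Corollary~\ref{Cor:SConstRExtension}, Lemma~\ref{Lemma:LiftReducedConnected}), $g^{\mu}$ would be a constant genuinely depending on a \piE-monomial if $g$ did, contradicting that the extension introduces no new constants; hence $g\in\HH$. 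Equivalently, in your formulation one must show that the witness exponent $m_{n+1}$ in $M_1$ vanishes: otherwise $\sigma(\tilde g^{\mu})=\alpha^{-\mu m_{n+1}}\tilde g^{\mu}$ with $\tilde g^{\mu}\neq0$ contradicts Theorem~\ref{Thm:RPSCharacterization}.(2) for the \piE-monomial $t$. With this supplement your route coincides in substance with the paper's; without it, the step "one checks $M(\vect{f},\EE)=M(\vect{f},\HH)$" is unjustified, and the reason you do give would prove a false statement.
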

\begin{proof}
(1) Let $f\in
\dgroup{G}{\HH}{\FF}$, i.e., 
$f=\alpha\,x_1^{m_{1}}\dots x_r^{m_{r}}$ where $\alpha\in G$ and $m_{i}\in\NN$.
With $\ord(f)>0$ and
Corollary~\ref{Cor:SimpleNestedRBasics}.(3) we have that $\ord(\alpha)>0$. Since $\dfield{\FF}{\sigma}$ is strong constant-stable, $\alpha\in\KK^*$. Thus 
$\alpha\in\KK^*\cap G$ and hence $f\in\dgroup{(\KK^*\cap G)}{\HH}{\FF}$.\\
(2) Let $\vect{f}\in
(\dgroup{G}{\HH}{\FF})^n$ be given as above. By part~1,
$f_i=\alpha_i\,x_1^{m_{i,1}}\dots x_1^{m_{i,r}}$ where the $\alpha_i\in\KK$ are
roots
of unity and $m_{i,j}\in\NN$.
By Lemma~\ref{Lemma:ShuffleOverField} we may suppose that 
$\EE=\HH\lr{t_1}\dots\lr{t_k}[s_1]\dots[s_e]$ where the $t_i$ are \piE-monomials and the $s_i$ are \sigmaSE-monomials. By
Corollary~\ref{Cor:MShortCutSigma} we have
that $M(\vect{f},\EE)=M(\vect{f},\HH\lr{t_1}\dots\lr{t_k})$. Now let $(m_1,\dots,m_n)\in M(\vect{f},\HH\lr{t_1}\dots\lr{t_k})$. Then there is a 
$g\in\HH\lr{t_1}\dots\lr{t_k}\setminus\{0\}$
with 
\begin{equation}\label{Equ:sg=ug}
\sigma(g)=u\,g
\end{equation}
for some $u=a\,x_1^{\mu_1}\dots x_r^{\mu_r}$ with $\mu_i\in\NN$ and with
$a$ being a root of unity from $\KK$. By Corollary~\ref{Cor:SimpleNestedRBasics}.(3)
we get $\mu:=\ord(u)>0$; in addition we have that $\mu'=\ford(u)>0$.
By Theorem~\ref{Thm:rpisiSCONST}
it follows that $g=q\,t_1^{\nu_1}\dots t_k^{\nu_k}$ with
$q\in\sconst{G}{\HH}{\sigma}\setminus\{0\}$ and $\nu_i\in\ZZ$. Since $u^{\mu}=1$, it follows with~\eqref{Equ:sg=ug} that
$\sigma(g^{\mu})=g^{\mu}$. Now suppose that $g$ depends on $t_m$ with $1\leq m\leq k$ being maximal. Then $g^{\mu}$ depends also on $t_m$ which contradicts to
$\const{\HH\lr{t_1}\dots\lr{t_k}}{\sigma}=\const{\HH}{\sigma}$. Consequently $g=q\in\sconst{G}{\HH}{\sigma}\setminus\{0\}$.
By
Corollary~\ref{Cor:RExtStructureForField} it
follows that $g=\tilde{g}\,h$ with $h\in\KK[x_1]\dots[x_r]^*$ and
$\tilde{g}\in\FF^*$
with $\sigma(h)=\lambda\,u\,h$ where $\lambda\in\KK^*$ is a root of unity. 
Recall that $\mu'=\ford(u)>0$ and hence $\mu'':=\lcm(\mu',\ord(\lambda))>0$. Since 
$\sigma^{\mu''}(h)=h$ and $\dfield{\FF}{\sigma}$ is constant-stable, it follows that $h\in\KK^*$. Therefore $g\in\KK[x_1]\dots[x_r]^*$.
Summarizing, 
$(m_1,\dots,m_n)\in M(\vect{f},\KK[x_1]\dots[x_r])$ and we conclude
that
$M(\vect{f},\EE)\subseteq M(\vect{f},\KK[x_1]\dots[x_r])$. The other direction is immediate.\\ 
(3) The third part follows
by parts 1 and 2 of the corollary and Theorem~\ref{Thm:PiCharStrong}.
\end{proof}

\section{The algorithmic machinery III: Problem~PFLDE}\label{Sec:PFLDE}

We aim at proving Theorems~\ref{Thm:AlgMainResultRestricted}.(2) and~\ref{Thm:AlgMainResultFull}.(3), i.e., providing recursive algorithms that reduce Problem~PFLDE from a given \rpisiSE-extension to its ground ring (resp.\ field).
If we are considering single-rooted \rpisiSE-extensions (Theorem~\ref{Thm:AlgMainResultRestricted}.(2)), we rely heavily on the fact that for a given difference ring $\dfield{\GG}{\sigma}$ with constant field $\KK$ and given group $G\leq\GG^*$ we have that $\sconstF{G}{\GG}{\sigma}\setminus\{0\}\leq\GG^*$. This property allows us to assume that for any $\vect{f}\in \GG^n$ and any $u\in G$ the $\KK$-vector space $V=V(u,\vect{f},\dfield{\GG}{\sigma})$ has a basis with dimension $\leq n+1$; see Lemma~\ref{Lemma:VBasis}. In particular, our reduction algorithm is based on the assumption that there are algorithms available that solve Problems~PFLDE and PMT in $\dfield{\GG}{\sigma}$ for $G$. For general simple \rpisiSE-extensions over a strong constant-stable difference field $\dfield{\GG}{\sigma}$ (Theorem~\ref{Thm:AlgMainResultFull}.(3)) we need stronger properties: all what we stated above should hold not only for $\dfield{\GG}{\sigma}$ but must hold for $\dfield{\GG}{\sigma^l}$ with $l\geq1$. For the currently explored difference fields $\dfield{\GG}{\sigma}$ with these properties we refer to Subsection~\ref{Subsec:GroundFieldAlg}.

\subsection{A reduction strategy for \pisiSE-extensions}\label{Subsec:PSReduction}

In this subsection we present a reduction method for \pisiSE-extensions which can be summarized with the following theorem.

\begin{theorem}\label{Thm:PSReduction}
Let $\dfield{\AA}{\sigma}$ be a computable difference ring and let $G\leq\AA^*$ with  $\sconst{G}{\AA}{\sigma}\setminus\{0\}\leq\AA^*$. Let $\dfield{\AA\lr{t}}{\sigma}$ be a $G$-simple \pisiSE-extension of $\dfield{\AA}{\sigma}$.
\begin{enumerate}
 \item If $t$ is a \sigmaSE-monomial and Problem~PFLDE is solvable in $\dfield{\AA}{\sigma}$ for $G$, then Problem~PFLDE is solvable in $\dfield{\AA\lr{t}}{\sigma}$ for $\dgroup{G}{\AA\lr{t}}{\AA}$. 
 \item If $t$ is a \piE-monomial and Problems PFLDE and PMT are solvable in $\dfield{\AA}{\sigma}$ for $G$, then Problem~PFLDE is solvable in $\dfield{\AA\lr{t}}{\sigma}$ for $\dgroup{G}{\AA\lr{t}}{\AA}$.
\end{enumerate}
\end{theorem}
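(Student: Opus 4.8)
The plan is to reduce Problem~PFLDE in $\dfield{\AA\lr{t}}{\sigma}$ to a bounded number of instances of Problem~PFLDE (and, in the $\Pi$-case, Problem~PMT) in the smaller difference ring $\dfield{\AA}{\sigma}$, following the classical ``degree bound + coefficient comparison'' strategy of Karr adapted to the ring setting. First I would fix $u\in\dgroup{G}{\AA\lr{t}}{\AA}$ and $\vect{f}=(f_1,\dots,f_n)\in\AA\lr{t}^n$ and recall from Lemma~\ref{Lemma:VBasis} that $V=V(u,\vect{f},\AA\lr{t})$ is a $\KK$-vector space of dimension $\le n+1$, so it suffices to compute finitely many solutions spanning it. The core observation is that if $(c_1,\dots,c_n,g)\in V$ with $g=\sum_{i}g_i\,t^i\in\AA\lr{t}$, then $\sigma(g)-u\,g$ is again a (Laurent) polynomial whose coefficients are governed, via Lemma~\ref{LemmaB} (for the $\sigmaSE$-case, where $\sigma(t)=t+\beta$) or Lemma~\ref{LemmaC} (for the $\piE$-case, where $\sigma(t)=\alpha\,t$), by linear difference equations over $\AA$ for the individual $g_i$. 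The work therefore splits into three parts: (a) producing an a~priori bound on $\deg(g)$ (and on $\ldeg(g)$ in the Laurent case), (b) peeling off the leading (and trailing) coefficients by coefficient comparison, which produces auxiliary first-order equations solvable over $\AA$, and (c) reducing the remaining ``interior'' problem to Problem~PFLDE over $\AA$ for the fixed $G$.

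For part~(a), the $\sigmaSE$-case is the easier one: given $u\in G$ and $\vect f$, suppose $g$ has degree $N\ge 1$ and look at the top coefficient in $\sigma(g)-u\,g=\sum c_i f_i$; if $N$ strictly exceeds $\max_i\deg(f_i)$ then $\deg(\sigma(g)-u\,g)<\deg(g)-1$ cannot be avoided unless Lemma~\ref{LemmaB} forces the existence of a $\gamma\in\AA$ with $\sigma(\gamma)-\gamma=\beta$, which is excluded because $\dfield{\AA[t]}{\sigma}$ is a $\sigmaSE$-extension (Lemma~\ref{Lemma:SigmaLemma}). Hence $\deg(g)\le\max(1,\max_i\deg(f_i))$ and one may make an Ansatz $g=\sum_{i=0}^{N}g_i t^i$ with undetermined $g_i\in\AA$; substituting and comparing coefficients of $t^N,t^{N-1},\dots$ yields a staircase of equations, the top one of the form $\sigma(g_N)-u\,g_N=(\text{top coeff.\ of }\vect f)\cdot\vect c$, i.e.\ again a Problem~PFLDE instance over $\AA$ for $G$ (with $g_N$ the unknown and the $c_i$ treated as the parameters). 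Solving it, back-substituting, and descending degree by degree — each step being a Problem~PFLDE instance over $\AA$ — produces after $N+1$ steps the full solution space $V$; the $\KK$-linearity of everything keeps the parameters $c_i$ as formal constants throughout. For the $\piE$-case, Lemma~\ref{LemmaC} shows that $\sigma(g)=u\,g+\sum c_i f_i$ decouples coefficient-wise into equations $\sigma(g_i)=u\,\alpha^{-i}g_i+(\text{coeff.\ of }t^i\text{ in }\sum c_i f_i)$, so the degree and low-degree bounds come from the following: the homogeneous equation $\sigma(g_i)=u\,\alpha^{-i}g_i$ has a nonzero solution in $\AA$ only for those $i$ with $u\,\alpha^{-i}\in\{\sigma(h)/h: h\in\AA^*\}$, and the set of such exponents is controlled by the $\ZZ$-module $M((u^{-1}\alpha),\AA)$ — here is where Problem~PMT over $\AA$ for $G$ enters (note $u,\alpha\in G$, so $u^{-1}\alpha\in G$ and Lemma~\ref{Lemma:MZModule} applies). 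Since $t$ is a $\piE$-monomial, this module has rank at most $1$, which bounds the range of $i$ where nontrivial homogeneous behavior can occur; together with the inhomogeneous parts being genuine Laurent polynomials this yields finite $a\le i\le b$ with $g\in\AA\ltr{t}_{a,b}$, and then coefficient comparison reduces each $g_i$-equation to a Problem~PFLDE instance over $\AA$ for $G$ (after absorbing the factor $\alpha^{-i}$, which stays in $G$).

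The main obstacle I anticipate is part~(a), the degree bound, made delicate by the ring setting: unlike over a field one cannot freely pass to $p/q$ in lowest terms, and one must argue directly with Laurent-polynomial coefficients and with the fact that $\gsconst{\AA\lr{t}}{\sigma}\setminus\{0\}\le\AA\lr{t}^*$ (available from Proposition~\ref{Prop:PiPart2Weak}/Theorem~\ref{Thm:PiPart2Strong} and Theorem~\ref{Thm:SigmaTheoremPart2} under the standing hypothesis $\sconst{G}{\AA}{\sigma}\setminus\{0\}\le\AA^*$). A second subtlety is bookkeeping the parameters: at each descent step the unknown constants $c_1,\dots,c_n$ must be carried as indeterminates so that the union of the per-step solution spaces assembles into a $\KK$-basis of $V$; this is routine linear algebra over $\KK$ but must be stated carefully. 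I would close by invoking the dimension bound from Lemma~\ref{Lemma:VBasis} to certify that the finitely many solutions so produced indeed span $V$, and by noting $\dgroup{G}{\AA\lr{t}}{\AA}\le\AA\lr{t}^*$ with $\sconst{\dgroup{G}{\AA\lr{t}}{\AA}}{\AA\lr{t}}{\sigma}\setminus\{0\}\le\AA\lr{t}^*$ (Corollary~\ref{Cor:pisiSCONST}.(1)), so the output is again of the form required to continue the recursion.
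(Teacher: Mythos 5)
Your overall plan is the same as the paper's (a degree/order bound followed by coefficient comparison and recursive degree reduction, with Lemma~\ref{LemmaB}/Lemma~\ref{LemmaC} supplying the structural input and Problem~PMT entering only for the \piE-case), but three concrete steps of your write-up would fail as stated. First, the \sigmaSE-degree bound is off by one: Lemma~\ref{LemmaB} only applies when $\deg(\sigma(g)-u\,g)<\deg(g)-1$, so the correct bound is $\deg(g)\leq\max\bigl(\max_i\deg(f_i)+1,\,0\bigr)$ (this is Lemma~\ref{Lemma:SigmaDegBound}), not $\max(1,\max_i\deg(f_i))$. With your bound the algorithm misses genuine solutions: take $\AA=\QQ$, $\sigma(t)=t+1$, $f_1=t$; then $g=\tfrac12 t(t-1)$ solves $\sigma(g)-g=t$ and has degree $2$, outside your search space. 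Second, you must solve Problem~PFLDE for every $u\in\dgroup{G}{\AA\lr{t}}{\AA}$, i.e.\ $u=v\,t^m$ with $m\in\ZZ$, but your \piE-argument silently assumes $u\in G$ (you write $u\,\alpha^{-i}\in\AA$ and ``$u,\alpha\in G$''); for $m\neq0$ the coefficient comparison does not decouple as you claim (the coefficient of $t^i$ involves $g_{i-m}$). This missing case is actually the easy one — the bounds come from Lemma~\ref{Lemma:SimpleBound} and the extremal coefficients give pure linear-algebra conditions over $\AA$, no recursive PFLDE call — but it is needed to keep the recursion going through towers of \piE-monomials.

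Third, and most substantively, in the \piE-case with $u\in G$ the exceptional exponent is not controlled by $M((u^{-1}\alpha),\AA)$. The condition at exponent $i$ is that $u\,\alpha^{-i}$ admits a nonzero semi-constant solution of $\sigma(\gamma)=u\,\alpha^{-i}\gamma$; detecting the (unique, by Theorem~\ref{Thm:RPSCharacterization}.(2)) such $\nu$ requires a $\ZZ$-basis of the rank-$\leq2$ module $M((\alpha,u),\AA)$ and a search for a lattice vector whose $u$-component equals $1$, as in Lemma~\ref{Lemma:PiWithMDegreeBound}. Your module only records when powers of the single element $u^{-1}\alpha$ are $\sigma$-coboundaries, i.e.\ lattice points of the form $(k,-k)$, and can therefore miss $\nu$ entirely: in Example~\ref{Exp:ProductBound} the basis of $M((\alpha,u),\KK(k)[x])$ is $\{(1,1)\}$, yielding $\nu=-1$ and hence the solution behind identity~\eqref{Equ:ProductId}, while no nonzero power of $u^{-1}\alpha$ need be a coboundary there. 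So the PMT instance you invoke is the wrong one, and with it the degree window $[a,b]$ can exclude the only nonzero solution. The remaining ingredients (Lemma~\ref{Lemma:VBasis} for finite dimension, Corollary~\ref{Cor:pisiSCONST}.(1) to keep the semi-constant hypothesis available for the recursion, and careful bookkeeping of the parameters $c_1,\dots,c_n$ during the descent) are in line with the paper's argument.
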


\noindent In the following let $\dfield{\AA\lr{t}}{\sigma}\geq\dfield{\AA}{\sigma}$
be a \pisiSE-extension as given in the theorem with $\sigma(t)=\alpha\,t+\beta$ where $\alpha\in G$ and $\beta=0$, or $\alpha=1$ and $\beta\in\AA$.
Furthermore, we define $\tilde{G}=\dgroup{G}{\AA\lr{t}}{\AA}$ and suppose that we are given a $u\in\tilde{G}$, i.e.,
\begin{equation}\label{Equ:PFLDEu}
u=v\,t^m,\quad\text{with }v\in G,\,m\in\ZZ,
\end{equation}
and an $\vect{f}=(f_1,\dots,f_n)\in\tilde{G}^n$.
By Theorem~\ref{Thm:PiPart2Strong} we have that $\sconst{\tilde{G}}{\AA\lr{t}}{\AA}\setminus\{0\}\leq\AA\lr{t}^*$ and hence by Lemma~\ref{Lemma:VBasis} a basis of $V(u,\vect{f},\AA\lr{t})$ with dimension $\leq n+1$ exists.\\
Subsequently, we will prove Theorem~\ref{Thm:PSReduction}, i.e., we will
work out a reduction strategy that provides a basis of $V(u,\vect{f},\AA\lr{t})$ under the assumption that one can solve Problem PFLDE in $\dfield{\AA}{\sigma}$ for $G$ if $t$ is a \sigmaSE-monomial, resp.\ Problems PMT and PFLDE in $\dfield{\AA}{\sigma}$ for $G$ if $t$ is a \piE-monomial.
The two main steps of this reduction  will be described in the following two Subsections~\ref{Subsection:DegreeBounds} and~\ref{Subsubsection:DegreeReduction}.

\subsubsection{Degree bounds}\label{Subsection:DegreeBounds}

The first essential step is to search for degree bounds: we will determine $a,b\in\ZZ$ such that
\begin{equation}\label{Equ:BoundV}
V(u,\vect{f},\AA\lr{t}_{a,b})=V(u,\vect{f},\AA\lr{t})
\end{equation}
holds; for the definition of the truncated set of (Laurent) polynomials see~\eqref{Equ:TruncatedRing}. For technical reasons we also require that the constraint
\begin{equation}\label{Equ:TechnicalCond}
\max(b,b+m)\geq\tilde{b}
\end{equation}
holds where $m$ and $\tilde{b}$ are given by~\eqref{Equ:PFLDEu} and \begin{equation}\label{Equ:abtilde}
\tilde{b}=\max(\deg(f_1)\dots,\deg(f_n)).
\end{equation}
The recovery of these bounds (see Lemmas~\ref{Lemma:SigmaDegBound} and~\ref{Lemma:PiWithMDegreeBound} below) is based on generalizations of ideas given in~\cite{Karr:81}; for further details and proofs in the setting of difference fields see also~\cite{Schneider:04b,Schneider:05b}.

If $t$ is a \sigmaSE-monomial, then $\AA\lr{t}=\AA[t]$ forms a polynomial ring, $\alpha=1$ and $\tilde{G}=G$; in particular we have $m=0$ in~\eqref{Equ:PFLDEu}. In this case, we can utilize the following lemma.

\begin{lemma}\label{Lemma:SigmaDegBound}
 Let $\dfield{\AA[t]}{\sigma}$ be a \sigmaSE-extension of
$\dfield{\AA}{\sigma}$ and let $G\leq\AA^*$ such that 
$\sconst{G}{\AA}{\sigma}\setminus\{0\}\leq\AA^*$ holds. Let $f\in\AA[t]$ and $u\in G$.
Then any
solution $g\in\AA[t]$ of
$\sigma(g)-u\,g=f$
is bounded by $\deg(g)\leq\max(\deg(f)+1,0)$.
\end{lemma}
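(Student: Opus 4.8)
The plan is to prove the bound by contradiction, letting the two structural lemmas on \sigmaSE-extensions, Lemma~\ref{LemmaB} and Lemma~\ref{Lemma:SigmaLemma}, do the real work. First I would unwind the hypotheses: since $\dfield{\AA[t]}{\sigma}$ is a \sigmaSE-extension of $\dfield{\AA}{\sigma}$, there is a $\beta\in\AA$ with $\sigma(t)=t+\beta$, and by definition $\const{\AA[t]}{\sigma}=\const{\AA}{\sigma}$. Fix a solution $g\in\AA[t]$ of $\sigma(g)-u\,g=f$ and set $d=\deg(g)\in\ZZ\cup\{-\infty\}$. If $d\le 0$ the asserted inequality $d\le\max(\deg(f)+1,0)$ holds trivially, so I may assume $d\ge 1$ and aim to show $d\le\deg(f)+1$.

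Suppose, for contradiction, that $d>\deg(f)+1$, i.e.\ $\deg(f)<d-1$. Since $\sigma(g)-u\,g=f$, this reads $\deg(\sigma(g)-u\,g)<\deg(g)-1$, which is exactly the inequality~\eqref{Equ:LemmaDegConst} required in Lemma~\ref{LemmaB} (whose remaining hypotheses, namely $u\in G$, $\sconst{G}{\AA}{\sigma}\setminus\{0\}\leq\AA^*$, and $\deg(g)\ge 1$, are all in force). Lemma~\ref{LemmaB} then yields a $\gamma\in\AA$ with $\sigma(\gamma)-\gamma=\beta$. Feeding this into the implication $(2)\Rightarrow(3)$ of Lemma~\ref{Lemma:SigmaLemma}, I obtain $\const{\AA[t]}{\sigma}\supsetneq\const{\AA}{\sigma}$, contradicting the \sigmaSE-extension property. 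Hence $d\le\deg(f)+1$, and together with the case $d\le 0$ this gives $\deg(g)\le\max(\deg(f)+1,0)$.

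I do not anticipate a genuine obstacle here; the argument is a direct application of Lemma~\ref{LemmaB}. The only delicate point is the bookkeeping of degenerate degrees — the cases $g=0$ or $f=0$, where $\deg$ takes the value $-\infty$ — which is precisely why the case $\deg(g)\le 0$ is peeled off at the outset, ensuring that whenever Lemma~\ref{LemmaB} and the strict degree inequality $\deg(\sigma(g)-u\,g)<\deg(g)-1$ are invoked one genuinely has $\deg(g)\ge 1$, so that all quantities are well defined in $\ZZ\cup\{-\infty\}$.
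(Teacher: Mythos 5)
Your proof is correct and follows essentially the same route as the paper: assume the degree bound fails, observe that $\deg(\sigma(g)-u\,g)=\deg(f)<\deg(g)-1$, invoke Lemma~\ref{LemmaB} to produce $\gamma\in\AA$ with $\sigma(\gamma)-\gamma=\beta$, and contradict the \sigmaSE-property. The only (harmless) difference is that you close the argument via the implication $(2)\Rightarrow(3)$ of Lemma~\ref{Lemma:SigmaLemma}, whereas the paper contradicts Theorem~\ref{Thm:RPSCharacterization}.(1); your careful treatment of the degenerate cases $\deg(g)\le 0$ and $f=0$ is also consistent with the paper's statement.
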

\begin{proof}
Suppose there is a $g\in\AA[t]$ with
$\deg(g)>\max(\deg(f)+1,0)$. Thus by Lemma~\ref{Lemma:SigmaLemma} there is
a $\gamma\in\AA$ with $\sigma(\gamma)-\gamma=\sigma(t)-t$ which contradicts to
Theorem~\ref{Thm:RPSCharacterization}.(1).
\end{proof}

\noindent Thus we can set $a=0$ and $b=\max(\tilde{b}+1,0)$ to guarantee that~\eqref{Equ:BoundV} and~\eqref{Equ:TechnicalCond} hold.

\begin{example}[Cont.\ Ex.~\ref{ExpQkxysS}]\label{Exp:TeleSigmaBound}
Consider the \sigmaSE-extension $\dfield{\AA[S]}{\sigma}$ of $\dfield{\AA}{\sigma}$ with $\AA=\QQ(k)[x][y][s]$ and $\sigma(S)=S+\frac{-x\,y}{k+1}$ from Example~\ref{ExpQkxysS}.(2). As stated in Example~\ref{ExpQkxysS}.(3), we want to determine a $g\in\AA[S]$ with $\sigma(g)-g=f$ where $f=y\,k^2\,s$, i.e., we want to find a basis of $V(1,\vect{f},\AA[S])$ with $\vect{f}=(k^2 s y)\in\AA[S]^1$. Using Lemma~\ref{Lemma:SigmaDegBound} it follows that $\deg(g)\leq 1$. Consequently, $V(1,\vect{f},\AA[S])=V(1,\vect{f},\AA[S]_0^{1})$. Using our methods below (see Example~\ref{Exp:TeleSigmaDegRed}) we get the basis $\{(1,g),(0,1)\}$ with $g$ as given in~\eqref{Equ:DFTeleSol}.
\end{example}

If $t$ is a \piE-monomial, then $\AA\lr{t}=\AA\ltr{t}$ is a ring of Laurent polynomials and $\beta=0$.\\  
First suppose that $u\notin\AA$, i.e., $m\in\ZZ\setminus\{0\}$ as given in~\eqref{Equ:PFLDEu}.\\ 
If $f_i=0$ for all $i$, it is easy to see that $V(u,\vect{f},\AA\lr{t})=V(u,\vect{f},\{0\})$, i.e., $a=0$ and $b=-1$ fulfil the properties~\eqref{Equ:BoundV} and~\eqref{Equ:TechnicalCond}.\\ 
Otherwise, if not all $f_i$ are 0,
we can use the following fact; the proof is left to the reader.

\begin{lemma}\label{Lemma:SimpleBound}
Let $\dfield{\AA\lr{t}}{\sigma}$ be a \piE-extension of $\dfield{\AA}{\sigma}$.
Let
$v\in\AA^*$, $m\in\ZZ\setminus\{0\}$, $f=\sum_{i=\lambda}^{\mu} f_i\,t^i\in\AA\lr{t}$ with $\lambda,\mu\in\ZZ$ and
$g=\sum_{i=\tilde{\lambda}}^{\tilde{\mu}}g_i t^i\in\AA\lr{t}$ with $\tilde{\lambda},\tilde{\mu}\in\ZZ$ and
$g_{\tilde{\lambda}}\neq0\neq g_{\tilde{\mu}}$ such that
$\sigma(g)-v\,t^m\,g=f$.
Then $\max(\lambda,\lambda-m)\leq\tilde{\lambda}$ and $\tilde{\mu}\leq\min(\mu,\mu-m)$.
\end{lemma}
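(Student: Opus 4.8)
The statement to prove is Lemma~\ref{Lemma:SimpleBound}: a degree/ldeg bound for solutions of $\sigma(g)-v\,t^m\,g=f$ in a \piE-extension when $m\neq0$. The plan is a straightforward coefficient-comparison argument driven by the fact that multiplication by $t^m$ shifts degrees by $m$, so the top and bottom terms on the two sides of the equation cannot cancel against each other.

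\textbf{Proof proposal.} Write $g=\sum_{i=\tilde\lambda}^{\tilde\mu}g_i\,t^i$ with $g_{\tilde\lambda}\neq0\neq g_{\tilde\mu}$ and $\alpha=\sigma(t)/t\in\AA^*$. First I would compute $\sigma(g)=\sum_i\sigma(g_i)\,\alpha^i\,t^i$, so $\deg(\sigma(g))=\tilde\mu$ and $\ldeg(\sigma(g))=\tilde\lambda$ by Lemma~\ref{Lemma:SigmaFacId}.(5) (each $\sigma(g_i)\alpha^i\in\AA^*$ whenever $g_i\neq0$). Likewise $v\,t^m\,g=\sum_i v\,g_i\,t^{i+m}$ has $\deg=\tilde\mu+m$ and $\ldeg=\tilde\lambda+m$, again because $v\,g_i\in\AA^*$ when $g_i\neq0$. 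Now consider the highest power of $t$ occurring in $\sigma(g)-v\,t^m\,g$. Since $m\neq0$, the leading exponents $\tilde\mu$ and $\tilde\mu+m$ of the two summands are distinct, so no cancellation at the top can occur: the coefficient of $t^{\max(\tilde\mu,\tilde\mu+m)}$ in $\sigma(g)-v\,t^m\,g$ equals either $\sigma(g_{\tilde\mu})\alpha^{\tilde\mu}$ or $-v\,g_{\tilde\mu}$, hence is nonzero. Therefore $\max(\tilde\mu,\tilde\mu+m)\le\deg(f)=\mu$, which rearranges to $\tilde\mu\le\min(\mu,\mu-m)$. Symmetrically, the lowest exponents $\tilde\lambda$ and $\tilde\lambda+m$ are distinct, so the coefficient of $t^{\min(\tilde\lambda,\tilde\lambda+m)}$ in $\sigma(g)-v\,t^m\,g$ is nonzero; hence $\min(\tilde\lambda,\tilde\lambda+m)\ge\ldeg(f)=\lambda$, i.e.\ $\tilde\lambda\ge\max(\lambda,\lambda-m)$. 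This gives both claimed inequalities.

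\textbf{Remarks on the steps.} The one subtlety worth spelling out is why the extreme coefficients are genuinely nonzero: this uses that $\AA\lr{t}=\AA\ltr{t}$ and that $\sigma$ acts coefficientwise (up to the factors $\alpha^i$), together with $\sigma(g_i),\alpha^i,v\in\AA^*$, so products of these with a nonzero $g_i$ stay nonzero; this is exactly the content of Lemma~\ref{Lemma:BasicsinDR}/Lemma~\ref{Lemma:SigmaFacId}.(5) and is the reason the $m\neq0$ hypothesis is essential (if $m=0$ the two leading terms live in the same degree and could cancel). I would also note the degenerate case $g=0$ is excluded implicitly by the normalization $g_{\tilde\lambda}\neq0\neq g_{\tilde\mu}$, and if $f=0$ the lemma is vacuous for such a $g$ since then $\sigma(g)=v\,t^m g$ forces $\tilde\mu=\tilde\mu+m$, impossible. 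I do not expect any real obstacle here — this is a routine boundary-term argument — so the "hard part" is merely being careful about signs of $m$ (both $m>0$ and $m<0$ are covered uniformly by writing $\max$ and $\min$) and recording that no cancellation occurs precisely because the relevant exponents differ by the nonzero integer $m$.
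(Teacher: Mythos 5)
Your argument is correct and is precisely the routine coefficient comparison the paper intends (it explicitly leaves the proof of Lemma~\ref{Lemma:SimpleBound} to the reader): since $m\neq0$, the extreme exponents $\tilde{\mu}$, $\tilde{\mu}+m$ (resp.\ $\tilde{\lambda}$, $\tilde{\lambda}+m$) of $\sigma(g)$ and $v\,t^m g$ differ, so the extreme coefficients cannot cancel and must appear among the coefficients of $f$, giving both bounds. One small slip worth noting: $\sigma(g_{\tilde{\mu}})\alpha^{\tilde{\mu}}$ need not lie in $\AA^*$ unless $g_{\tilde{\mu}}$ is a unit — all you need, and what you correctly use in your remarks, is that it is nonzero, because $\sigma$ is injective and $\alpha^{i}$, $v$ are units.
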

\noindent Namely, define
$$\tilde{a}=\min(\ldeg(f_1)\dots,\ldeg(f_n)).$$
Note that in this scenario we have that $\tilde{a},\tilde{b}\in\ZZ$; for the definition of $\tilde{b}$ see~\eqref{Equ:abtilde}.
Hence by setting $a=\tilde{a}$ and $b=\tilde{b}$, we can conclude with  Lemma~\ref{Lemma:SimpleBound} that~\eqref{Equ:BoundV} and~\eqref{Equ:TechnicalCond} hold.  

\noindent What remains to consider is the case $u\in G$ with $m=0$. Here we utilize

\begin{lemma}\label{Lemma:PiWithMDegreeBound}
Let $\dfield{\AA\lr{t}}{\sigma}$ be a \piE-extension of
$\dfield{\AA}{\sigma}$ with $G\leq\AA^*$ where
$\sconst{G}{\AA}{\sigma}\setminus\{0\}\leq\AA^*$ and
$\alpha=\sigma(t)/t\in G$. Let
$u\in
G$, $f=\sum_{i=\lambda}^{\mu} f_i\,t^i\in\AA\lr{t}$ and
$g=\sum_{i=\tilde{\lambda}}^{\tilde{\mu}}g_i t^i\in\AA\lr{t}$ with $g_{\tilde{\lambda}}\neq0\neq g_{\tilde{\mu}}$ and
\begin{equation}\label{Equ:PFLDEt0}
\sigma(g)-u\,g=f.
\end{equation}
If there is a $\nu\in\ZZ$ with
\begin{equation}\label{Equ:nuProperty}
\sigma(\gamma)=\alpha^{-\nu}\,u\,\gamma 
\end{equation}
for
some $\gamma\in\sconst{G}{\AA}{\sigma}\setminus\{0\}$, then $\nu$ is uniquely determined and
we have that $\min(\lambda,\nu)\leq\tilde{\lambda}$ and $\tilde{\mu}\leq\max(\mu,\nu)$. If there is not such a $\nu$, we have that $\lambda\leq\tilde{\lambda}$ and $\tilde{\mu}\leq
\mu$.
\end{lemma}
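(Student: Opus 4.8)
The plan is to do a coefficient comparison in~\eqref{Equ:PFLDEt0} at the two extremal powers of $t$ occurring in $g$, exactly as in the proof of Lemma~\ref{LemmaC}. Write $g=\sum_{i=\tilde{\lambda}}^{\tilde{\mu}}g_it^i$ with $g_{\tilde{\lambda}}\neq0\neq g_{\tilde{\mu}}$ and $f=\sum_{i=\lambda}^{\mu}f_it^i$. Since $\sigma(t^i)=\alpha^i t^i$, the coefficient of $t^i$ in $\sigma(g)-u\,g$ is $\sigma(g_i)\alpha^i-u\,g_i=(\alpha^i\sigma(g_i)-u\,g_i)$ for each $i$. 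First I would establish uniqueness of $\nu$: if $\gamma_1,\gamma_2\in\sconst{G}{\AA}{\sigma}\setminus\{0\}$ satisfy $\sigma(\gamma_j)=\alpha^{-\nu_j}u\,\gamma_j$, then $\gamma_1,\gamma_2\in\AA^*$ (using $\sconst{G}{\AA}{\sigma}\setminus\{0\}\leq\AA^*$), so $\sigma(\gamma_1/\gamma_2)=\alpha^{\nu_2-\nu_1}(\gamma_1/\gamma_2)$; if $\nu_1\neq\nu_2$ this contradicts Theorem~\ref{Thm:RPSCharacterization}.(2) (the \piE-extension property rules out a nonzero element and a nonzero integer power of $\alpha$ related this way), hence $\nu_1=\nu_2$.

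Next I would prove the lower bound $\min(\lambda,\nu)\leq\tilde{\lambda}$. Suppose toward a contradiction that $\tilde{\lambda}<\lambda$ and $\tilde{\lambda}<\nu$. Comparing coefficients of $t^{\tilde{\lambda}}$ in~\eqref{Equ:PFLDEt0}: since $\tilde{\lambda}<\lambda\leq\mu$ the right-hand side has zero coefficient at $t^{\tilde{\lambda}}$, and since $\tilde{\lambda}$ is minimal in $g$ there is no contribution from a lower term, so $\alpha^{\tilde{\lambda}}\sigma(g_{\tilde{\lambda}})-u\,g_{\tilde{\lambda}}=0$, i.e.\ $\sigma(g_{\tilde{\lambda}})=\alpha^{-\tilde{\lambda}}u\,g_{\tilde{\lambda}}$ with $g_{\tilde{\lambda}}\neq0$. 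As $g_{\tilde{\lambda}}\in\sconst{G}{\AA}{\sigma}\setminus\{0\}$ (note $\alpha^{-\tilde{\lambda}}u\in G$), this says exactly that $\nu=\tilde{\lambda}$ is a valid choice in~\eqref{Equ:nuProperty}; by the uniqueness just shown $\nu=\tilde{\lambda}$, contradicting $\tilde{\lambda}<\nu$. Hence $\tilde{\lambda}\geq\lambda$ or $\tilde{\lambda}\geq\nu$, i.e.\ $\tilde{\lambda}\geq\min(\lambda,\nu)$. The upper bound $\tilde{\mu}\leq\max(\mu,\nu)$ is entirely symmetric: compare coefficients of $t^{\tilde{\mu}}$; if $\tilde{\mu}>\mu$ and $\tilde{\mu}>\nu$ then $\alpha^{\tilde{\mu}}\sigma(g_{\tilde{\mu}})-u\,g_{\tilde{\mu}}=0$, forcing $\nu=\tilde{\mu}$ by uniqueness, a contradiction. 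Finally, for the last claim, if no $\nu$ as in~\eqref{Equ:nuProperty} exists, then the arguments above show directly that $\tilde{\lambda}<\lambda$ and $\tilde{\mu}>\mu$ are both impossible (each would manufacture such a $\nu$, namely $\tilde{\lambda}$ resp.\ $\tilde{\mu}$), so $\lambda\leq\tilde{\lambda}$ and $\tilde{\mu}\leq\mu$.

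I do not anticipate a serious obstacle here; the only mildly delicate point is making sure that in the extremal coefficient comparison the term $f_i$ really is zero (which needs $\tilde\lambda<\lambda$, resp.\ $\tilde\mu>\mu$, hence the hypotheses are set up so that the contradiction assumption supplies this) and that $\alpha^{-\tilde\lambda}u$ genuinely lies in $G$ so that the extracted relation witnesses membership in $\sconst{G}{\AA}{\sigma}$ — both are immediate since $\alpha,u\in G$ and $G$ is a group. The deduction that $g_{\tilde\lambda}$ (resp.\ $g_{\tilde\mu}$) is a unit, used implicitly for the uniqueness step, follows from the standing hypothesis $\sconst{G}{\AA}{\sigma}\setminus\{0\}\leq\AA^*$.
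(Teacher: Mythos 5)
Your proposal is correct and follows essentially the same route as the paper: uniqueness of $\nu$ via the quotient $\gamma_1/\gamma_2$ and Theorem~\ref{Thm:RPSCharacterization}.(2), then coefficient comparison in~\eqref{Equ:PFLDEt0} at an index outside the claimed range (you take the extremal indices $\tilde{\lambda},\tilde{\mu}$, the paper any $i$ with $g_i\neq0$) to manufacture a witness $\gamma=g_i$ forcing $\nu=i$, a contradiction, with the no-$\nu$ case handled by the same comparison. No gaps.
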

\begin{proof}
Suppose there is a $\nu\in\ZZ$ with~\eqref{Equ:nuProperty}
for some $\gamma\in\sconst{G}{\AA}{\sigma}\setminus\{0\}$. Take in addition,
$\tilde{\nu}\in\ZZ$ such that
$\sigma(\tilde{\gamma})=\alpha^{-\tilde{\nu}}\,u\tilde{\gamma}$ holds
for some $\tilde{\gamma}\in\sconst{G}{\AA}{\sigma}\setminus\{0\}$. Then
$\sigma(\gamma/\tilde{\gamma})=\alpha^{\tilde{\nu}-\nu}\,\gamma/\tilde{\gamma}
$. Since $t$ is a \piE-monomial
it follows by Theorem~\ref{Thm:RPSCharacterization}.(2) that $\nu=\tilde{\nu}$, i.e., $\nu$ is uniquely determined. 
Now suppose that there is an $i$ with $g_i\neq0$ where we have $i<\min(\lambda,\nu)$ or $i>\max(\mu,\nu)$. Then by
coefficient comparison in~\eqref{Equ:PFLDEt0} we get $\sigma(g_i)=u\,\alpha^{-i}\,g_i$
with $g_i\in\sconst{G}{\AA}{\sigma}\setminus\{0\}$. Consequently $\nu=i$, a contradiction.
Otherwise, suppose that there is not such a $\nu\in\ZZ$. Then by the same
arguments it follows that $\tilde{\lambda}<\lambda$ or $\tilde{\mu}>\mu$ is not possible, i.e.,
$\tilde{\lambda}\geq\lambda$ and $\tilde{\mu}\leq\mu$. This completes the proof.
\end{proof}

\noindent Therefore we derive the desired bounds as follows. First, solve Problem~PMT and compute a basis of 
$M((\alpha,u),\AA)$. Then given a basis, we can decide constructively if there is a $\nu\in\ZZ$ such that~\eqref{Equ:nuProperty} holds. If yes, take the uniquely determined $\nu$ and we can take
$a=\min(\tilde{a},\nu)$ and $b=\max(\tilde{b},\nu)$ to obtain~\eqref{Equ:BoundV} and~\eqref{Equ:TechnicalCond}. Otherwise, if there is not such a $\nu$, we can set $a=\min(\tilde{a},0)$ and $b=\max(\tilde{b},-1)$.

\begin{example}[Cont.\ Ex.~\ref{Exp:ProductExample}]\label{Exp:ProductBound}
Take the difference field $\dfield{\KK(k)[x]\lr{t}}{\sigma}$ with $\alpha=\sigma(t)/t=x\,k$ defined in Example~\ref{Exp:ProductExample}. In order to find the identity~\eqref{Equ:ProductId}, we need a basis of $V=V(u,(0),\KK(k)[x]\lr{t})$ with $u=\frac{-x}{k+1}\in G:=\dgroup{(\KK(k)^*)}{\KK(k)[x]\lr{t}}{\KK(k)}$; note that $G\leq\KK(k)[x]\lr{t}^*$ with $\sconst{G}{\KK(k)[x]\lr{t}}{\sigma}\setminus\{0\}\leq\KK(k)[x]\lr{t}^*$.
In this setting we apply Lemma~\ref{Lemma:PiWithMDegreeBound}. I.e., we compute a basis of $M((\alpha,u),\KK(k)[x])=M((k\,x,\frac{-x}{(k+1)}),\KK(k)[x])$. As worked out in Example~\ref{Exp:MBasisForDegreeB}, a basis is $\{(1,1)\}$. Thus we find $\nu=-1$ such that there is a $g\in\KK(k)[x]\setminus\{0\}$ with~\eqref{Lemma:PiWithMDegreeBound}. We conclude that $V=V(u,(0),\KK(k)[x]\lr{t}_{-1}^{-1})$. Using our methods below (see Example~\ref{Equ:ProductDegreeRed}) we arrive at the basis $(0,x(\iota+x^2)/k/t),(1,0)\}$ of $V$. 
\end{example}

\noindent Summarizing, we obtain bounds $a,b\in\ZZ$ such that~\eqref{Equ:BoundV} and~\eqref{Equ:TechnicalCond} hold. For \piE-monomials we rely on the extra assumption that Problem~PMT is solvable in $\dfield{\AA}{\sigma}$ for $G$. 

\subsubsection{Degree reduction}\label{Subsubsection:DegreeReduction}

The following degree reduction has been introduced in~\cite{Karr:81} in the setting of difference fields. Subsequently, we present the basic ideas in the setting of difference rings; further technical details can be found in~\cite[Thm.~3.2.2]{Schneider:01} and~\cite{Schneider:05a,Schneider:14}.

\noindent We want to determine all $c_1,\dots,c_n\in\KK=\const{\AA}{\sigma}$ and $g_i\in\AA$ in $g=\sum_{i=a}^bg_i\,t_i$ such that the following parameterized equation holds:
\begin{equation}\label{Equ:PFLDEAlg}
\sigma(g)-u\,g=c_1\,f_1+\dots+c_n\,f_n.
\end{equation}
If $b<a$, we are in the base case: $g=0$ and a basis of $V(u,\vect{f},\AA\lr{t})=V(u,\vect{f},\{0\})$ can be determined by linear algebra.\\ 
Otherwise, we continue as follows.
Due to~\eqref{Equ:TechnicalCond}, it follows that $\lambda:=\max(b,b+m)$ is the highest possible exponent in~\eqref{Equ:PFLDEAlg}. Let $\tilde{f}_i$ be the coefficient of the term $t^{\lambda}$ in $f_i$.
Then by coefficient comparison w.r.t.\ $t^{\lambda}$ in~\eqref{Equ:PFLDEAlg} we get the following constraints:
\begin{description}
\item[$\textnormal{If }m>0$\textnormal{,}]
\begin{equation}\label{Equ:PFLDESub1}
-v\,g_b=c_1\,\tilde{f_1}+\dots+c_n\,\tilde{f}_n;
\end{equation}
\item[$\textnormal{if }m=0$\textnormal{,}]
\begin{equation}\label{Equ:PFLDESub2}
\alpha^b\,\sigma(g_b)-v\,g_b=c_1\,\tilde{f_1}+\dots+c_n\,\tilde{f}_n;
\end{equation}
\item[$\textnormal{if }m<0$\textnormal{,}]
\begin{equation}\label{Equ:PFLDESub3}
\alpha^b\,\sigma(g_b)=c_1\,\tilde{f_1}+\dots+c_n\,\tilde{f}_n.
\end{equation}
\end{description}
For the cases $m>0$ and $m<0$ one can easily determine a basis of the $\KK$-vector spaces $\{(c_1,\dots,c_n,g_m)|\,\eqref{Equ:PFLDESub1}\text{ holds}\}$ and $\{(c_1,\dots,c_n,g_m)|\,\eqref{Equ:PFLDESub3}\text{ holds}\}$ by linear algebra. Moreover, if $m=0$, equation~\eqref{Equ:PFLDESub2} can be written in the form
$\sigma(g_b)-v\,\alpha^{-b}\,g_b=c_1\,\tilde{f_1}\alpha^{-b}+\dots+c_n\,\tilde{f}_n\alpha^{-b}$
where $v\,\alpha^{-b}\in G$ and $\tilde{f}_i\alpha^{-b}\in\AA$. Thus a basis of 
\begin{equation}\label{Equ:IncrementalSolSpace}
V(v\,\alpha^{-b},(\tilde{f_1}\,\alpha^{-b},\dots,\tilde{f}_n\,\alpha^{-b}),\AA)
\end{equation}
can be determined under our assumption that one can solve Problem~PFLDE in $\dfield{\AA}{\sigma}$ for $G$. 
Now we plug in this partial solution (i.e., the possible leading coefficient $g_b$ 
with the corresponding linear combinations of the $f_i$), and end up at a new first-
order parameterized difference equation where the highest possible coefficient is $\lambda-1$. In other words, we reduced the problem by \textit{degree reduction}.
We continue to search for the next highest coefficient $g_{b-1}$. Hence we proceed recursively by updating $\lambda\to\lambda-1$ and $b\to b-1$ and determine a basis of 
the reduced problem (with highest degree $\lambda-1$). Finally, given a basis of this solution space and given the basis of~\eqref{Equ:IncrementalSolSpace}, one can determine a basis of $V(u,\vect{f},\AA\lr{t}_{a,b})$; for further technical details we refer to~\cite[Thm~12]{Karr:81} or~\cite[Section~3.1]{Schneider:14}.\\ 
Summarizing, solving various instances of Problem~PFLDE with the degree reductions $b\to b-1\to\dots\to a-1$ leads to the base case and we eventually produce a basis of $V(u,\vect{f},\AA\lr{t})$. This concludes the proof of Theorem~\ref{Thm:PSReduction}.

\begin{example}[Cont.~Ex.~\ref{Exp:ProductBound}]\label{Equ:ProductDegreeRed}
We know that $g=g_{-1}t^{-1}$. Plugging in $g$ into $\sigma(g)+\frac{x}{k+1}=0$ yields 
$\sigma(g_{-1})+\frac{x^2\,k}{k+1}\,g_{-1}=0.$
Therefore we look for a basis of $V(\frac{-x^2\,k}{k+1},(0),\KK(k)[x])$. By using the algorithms presented in Subsection~\ref{Subsec:PFLDERExtension} we get the basis $\{(1,x(\iota+x^2)/k),(0,1)\}$. This finally gives the basis $(0,x(\iota+x^2)/k/t),(1,0)\}$ of $V(\frac{-x}{k+1},(0),\KK(k)[x]\lr{t})$.
\end{example}

\begin{example}[Cont.~Ex.~\ref{Exp:TeleSigmaBound}]\label{Exp:TeleSigmaDegRed}
We want to find a basis of $V=V(1,\vect{f},\AA[S]_0^{1})$ with $\AA=\QQ(k)[x][y][s]$ and $\vect{f}=(y\,k^2\,s)$. Hence we make the Ansatz $(c_1,g_0+g_1\,S)\in V$ with the indeterminates $c_1\in\QQ$ and $g_0,g_1\in\AA$ such that 
\begin{equation}\label{Exp:DegRed1}
\sigma(g_0+g_1\,S)-(g_0+g_1\,S)=c_1\,y\,k^2\,s
\end{equation}
holds. Doing coefficient comparison w.r.t.\ $S^1$ yields the constraint
$\sigma(g_1)-g_1=c_1\,0$; compare~\eqref{Equ:PFLDESub2}. Thus we get all solutions by determining a basis of 
$V(1,\vect{\tilde{f}},\QQ(k)[x][s])$ with $\vect{\tilde{f}}=(0)\in\AA^1$. In this particular instance, the $\QQ$-basis $\{(1,0),(0,1)\}$ is immediate utilizing the fact that the constants are precisely $\QQ$. Summarizing, the solutions are $(c_1, g_1)\in\QQ^2$. 
Consequently, our Ansatz can be refined with $(c_1,g_0+c_2\,S)\in V$ where $c_1\in\QQ$, $c_2(=g_1)\in\QQ$ and $g_0\in\AA$ such that $\sigma(g_0+c_2\,S)-(g_0+c_2\,S)=c_1\,k^2 s y$ holds. Bringing the $c_2\,S$ part to the right hand side yields the new equation\footnote{
Note that we reduced the problem to find a polynomial solution of~\eqref{Exp:DegRed1} with maximal degree $1$ to a polynomial solution of~\eqref{Exp:DegRed1} with maximal degree $0$. This degree reduction has been achieved by introducing an extra parameter $c_2$. In general, the more \sigmaSE-monomials are involved, the more parameters will be introduced within the proposed degree reduction.}
\begin{equation}\label{Exp:DegRed2}
\sigma(g_0)-g_0=c_1\,y\,k^2\,s-c_2\,h
\end{equation}
with $h=\sigma(S)-S=\frac{-x\,y}{k+1}\in\AA$. In other words, we need a basis of $V(1,\vect{h},\AA)$ with $\vect{h}=(y\,k^2\,s,\frac{x\,y}{k+1})\in\AA^2$.
Now we apply again the reduction method, but this time in the smaller ring $\AA$ without the \sigmaSE-monomial $S$. We skip all the details, but refer to a particular subproblem that we will consider in Example~\ref{Exp:TeleSigmaSubProb}.
Finally, we get the basis
$$\{(0 , 0 , 1),(1,\tfrac{1}{2},
   \big(\tfrac{1}{4} (1-2
   k)-\tfrac{1}{4}x\big) y+s
   \big(\tfrac{1}{2} (k-1)
   (k+1) x-\tfrac{1}{2} (k-2)
   k\big) y\}$$
of $V(1,\vect{h},\AA)$.
Thus we can reconstruct the basis 
$\{(1,g),(0,1)\}$ of $V$ with $g$ given in~\eqref{Equ:DFTeleSol}. 
\end{example}

\noindent Note that the reduction of Theorem~\ref{Thm:PSReduction} simplifies to Karr's field version given in~\cite{Karr:81} if one specializes $\AA$ to a field and sets $G=\AA^*=\AA\setminus\{0\}$. 
However, the presented version works not only for a field, but for any difference ring $\dfield{\AA}{\sigma}$ as specified in Theorem~\ref{Thm:PSReduction}.
Subsequently, we will exploit this enhancement in order to treat (nested) \rE-extensions.

\subsection{A reduction strategy for \rE-extensions and thus for \rpisiSE-extensions}\label{Subsec:PFLDERExtension}

In order to treat simple and single-rooted \rpisiSE-extensions (Theorem~\ref{Thm:AlgMainResultRestricted}.(2)), we utilize the following proposition.

\begin{proposition}\label{Prop:SpecialRReduction}
Let $\dfield{\AA}{\sigma}$ be a computable difference ring with $G\leq\AA^*$ and $\sconst{G}{\AA}{\sigma}\setminus\{0\}\leq\AA^*$.
Let $\dfield{\AA[t]}{\sigma}$ be an \rE-extension of $\dfield{\AA}{\sigma}$ of given order $d$ with $\frac{\sigma(t)}{t}\in G$.\\ 
Then Problem~PFLDE is solvable in $\dfield{\AA[t]}{\sigma}$ for $G$ if it is solvable in $\dfield{\AA}{\sigma}$ for $G$
\end{proposition}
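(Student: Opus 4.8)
The plan is to exploit that, as an $\AA$-module, $\AA[t]$ is free with basis $1,t,\dots,t^{d-1}$ (this is the explicit description of the \rE-extension from the construction in Lemma~\ref{Lemma:RExistence}, where $d=\ord(t)$), so that a parameterized first-order equation over $\AA[t]$ splits into $d$ parameterized first-order equations over $\AA$ that share the same constants $c_1,\dots,c_n$. First I would check that the target problem is well posed: $\alpha:=\sigma(t)/t\in G\leq\AA^*\leq\AA[t]^*$, the extension being an \rE-extension forces $\ord(\alpha)=\ord(t)=d>1$ by Theorem~\ref{Thm:RPSCharacterization}.(3), and hence by Proposition~\ref{Prop:RPart2Weak} we have $\sconst{G}{\AA[t]}{\sigma}\setminus\{0\}\leq\AA[t]^*$ with $\const{\AA[t]}{\sigma}=\const{\AA}{\sigma}=\KK$; thus Lemma~\ref{Lemma:VBasis} guarantees that for any $u\in G$ and $\vect{f}\in\AA[t]^n$ the space $V(u,\vect{f},\AA[t])$ is a $\KK$-space of dimension $\le n+1$, and our task is to produce a basis.

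Next I would carry out the coefficient comparison. Writing $f_j=\sum_{i=0}^{d-1}f_{j,i}\,t^i$ with $f_{j,i}\in\AA$ and an unknown $g=\sum_{i=0}^{d-1}g_i\,t^i$ with $g_i\in\AA$, and using $\sigma(t^i)=\alpha^i t^i$, the equation $\sigma(g)-u\,g=c_1f_1+\dots+c_nf_n$ becomes, upon equating coefficients of $t^i$ for each $i$ (legitimate since $1,t,\dots,t^{d-1}$ is a free $\AA$-basis of $\AA[t]$) and dividing by $\alpha^i\in\AA^*$,
\[
\sigma(g_i)-(u\,\alpha^{-i})\,g_i=c_1\,(f_{1,i}\alpha^{-i})+\dots+c_n\,(f_{n,i}\alpha^{-i}).
\]
Since $G$ is a group containing $\alpha$, we have $u\,\alpha^{-i}\in G$, and $f_{j,i}\alpha^{-i}\in\AA$, so each of these $d$ equations is an instance of Problem~PFLDE in $\dfield{\AA}{\sigma}$ for $G$; by hypothesis I can solve them, obtaining a $\KK$-basis $B_i$ of $V_i:=V\big(u\alpha^{-i},(f_{1,i}\alpha^{-i},\dots,f_{n,i}\alpha^{-i}),\AA\big)$ for $i=0,\dots,d-1$.

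Finally I would recombine. The computation above shows that the $\KK$-linear map sending $(\vect{c},\sum_i g_i t^i)$ to $\big((\vect{c},g_0),\dots,(\vect{c},g_{d-1})\big)$ is a well-defined isomorphism of $V(u,\vect{f},\AA[t])$ onto the subspace $D$ of $V_0\oplus\dots\oplus V_{d-1}$ consisting of those tuples whose $\KK^n$-components all coincide (the inverse is the coefficient comparison read backwards). Now $D$ is the kernel of the explicit $\KK$-linear map $V_0\oplus\dots\oplus V_{d-1}\to(\KK^n)^{d-1}$, $(\vect{v}_0,\dots,\vect{v}_{d-1})\mapsto(\pi\vect{v}_0-\pi\vect{v}_1,\dots,\pi\vect{v}_0-\pi\vect{v}_{d-1})$, where $\pi$ is the projection onto the first $n$ coordinates; written in the coordinates supplied by the bases $B_i$ this is a matrix over $\KK$, so a basis of its kernel is computable, and transporting it through the isomorphism yields the desired $\KK$-basis of $V(u,\vect{f},\AA[t])$. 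The argument is elementary precisely because the defining relation $t^d=1$ makes $\AA[t]$ a finite free $\AA$-module; the only steps needing a little care — and the closest thing to a genuine obstacle — are verifying that the auxiliary coefficients $u\alpha^{-i}$ remain in $G$ (immediate, as noted) and the bookkeeping that keeps the shared parameters $c_1,\dots,c_n$ consistent across the $d$ subproblems, which is routine linear algebra over $\KK$.
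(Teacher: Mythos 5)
Your proof is correct, but it organizes the reduction differently from the paper. The paper proves the proposition by invoking its general degree-reduction machinery from Subsection~\ref{Subsubsection:DegreeReduction}: with the bounds $a=0$, $b=d-1$ it compares coefficients only at the top power $t^{b}$ (always landing in case~\eqref{Equ:PFLDESub2}, since $m=0$ in~\eqref{Equ:PFLDEu}), solves the resulting instance~\eqref{Equ:IncrementalSolSpace} of Problem~PFLDE in $\dfield{\AA}{\sigma}$ for $G$, plugs the parameterized partial solution back in, and recurses on the problem with top degree $b-1$, assembling a basis of $V(u,\vect{f},\AA[t])$ at the end. You instead exploit that for an \rE-monomial with $u\in G\leq\AA^*$ no mixing of powers of $t$ occurs at all, so the equation splits in one stroke into the $d$ independent instances $V_i=V\big(u\,\alpha^{-i},(f_{1,i}\alpha^{-i},\dots,f_{n,i}\alpha^{-i}),\AA\big)$ sharing the constants $c_1,\dots,c_n$, which you then glue by a single linear-algebra step (the kernel of the map matching the $\KK^n$-components). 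Both routes require exactly $d$ calls to Problem~PFLDE in $\dfield{\AA}{\sigma}$ for $G$, and your well-posedness checks (Proposition~\ref{Prop:RPart2Weak} together with Lemma~\ref{Lemma:VBasis}, $u\,\alpha^{-i}\in G$, and uniqueness of the coefficient representation, which the paper also uses) are the right ones. Your parallel variant is essentially the ``one stroke'' tactic the paper itself employs in Proposition~\ref{Prop:LiftForNestedRExt}, where the spaces~\eqref{Equ:Vi} are combined as in~\eqref{Equ:CombineVi}; the paper's sequential substitution has the practical advantage, pointed out in the remark following that proposition, that plugging in partial solutions can shrink the subsequent subproblems and allow early shortcuts, but mathematically the two arguments are equivalent here.
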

\begin{proof}
The proof follows by an adapted degree reduction presented in the proof of Theorem~\ref{Thm:PSReduction}; see Subsection~\ref{Subsubsection:DegreeReduction}. Let $u\in G$ and $\vect{f}=(f_1,\dots,f_n)\in\AA[t]^n$. 
By definition, it follows that a solution $g\in\AA[t]$ and $c_1,\dots,c_n\in\KK=\const{\AA}{\sigma}$ of~\eqref{Equ:PFLDEAlg} is of the form $g=\sum_{i=a}^b g_i t^i$ with $a:=0$ and $b:=d-1$. Thus the bounds are immediate (under the assumption that $d$ has been determined; see Section~\ref{Sec:Period}). Since $\sum_{i=0}^{d-1} h_i\,t^i=\sum_{i=0}^{d-1} \bar{h}_i\,t^i$ iff $h_i=\bar{h}_i$, we can activate the degree reduction as outlined in Subsection~\ref{Subsubsection:DegreeReduction}. Namely, by coefficient comparison of the highest term we always enter in the case~\eqref{Equ:PFLDESub2} (note that $m=0$ in~\eqref{Equ:PFLDEu}). By assumption we can solve Problem PFLDE in $\dfield{\AA}{\sigma}$ for $G$ and thus we can determine a basis of~\eqref{Equ:IncrementalSolSpace}. By recursion we finally obtain a basis of $V(u,\vect{f},\AA[t])$. 
\end{proof}

\ExternalProof{(Theorem~\ref{Thm:AlgMainResultRestricted}.(2)\label{Proof:AlgMainResultRestricted2})}{
Since Problem~PFLDE is solvable in $\dfield{\GG}{\sigma}$ for $G$, it follows by iterative applications of Theorem~\ref{Thm:PSReduction} and Corollary~\ref{Cor:pisiSCONST}.(1) that Problem~PFLDE is solvable in $\dfield{\HH}{\sigma}$ for $\tilde{G}$ with $\HH=\GG\lr{t_1}\dots\lr{t_r}$ and that $\sconst{\tilde{G}}{\HH}{\sigma}\setminus\{0\}\leq\HH^*$.
Thus by iterative applications of Propositions~\ref{Prop:SpecialRReduction} and~\ref{Prop:RPart2Weak} 
we conclude that Problem~PFLDE is solvable in $\dfield{\bar{\HH}}{\sigma}$ for $\tilde{G}$ with $\bar{\HH}=\HH\lr{x_1}\dots\lr{x_u}$ and that $\sconst{\tilde{G}}{\bar{\HH}}{\sigma}\setminus\{0\}\leq\bar{\HH}^*$. Finally, by applying iteratively Theorem~\ref{Thm:PSReduction} and Corollary~\ref{Cor:pisiSCONST}.(1) it follows that Problem~PFLDE is solvable in $\dfield{\EE}{\sigma}$ for $\tilde{G}$. 
Note that in Proposition~\ref{Prop:SpecialRReduction} we have to know the values $\ord(x_i)=\ord(\alpha_i)$ with $\alpha_i\in G$  (either as input or by computing them first by solving instances of Problem~O in $G$).
}

\noindent Finally, we present the underlying reduction method for simple \rpisiSE-extensions (Theorem~\ref{Thm:AlgMainResultFull}.(3)) which is based on the following lemma and proposition.

\begin{lemma}\label{Lemma:PFLDE1ShiftTonShift}
 Let $\dfield{\AA}{\sigma}$ be a difference ring, $f\in\AA$, $u\in\AA^*$ and
$\lambda\in\NN\setminus\{0\}$. Then $\sigma(g)-u\,g=f$
implies that
\begin{equation}\label{Equ:nShiftTelescoping}
\sigma^{\lambda}(g)-\sigmaFac{u}{\lambda}\,g=\sum_{j=0}^{\lambda-1}\frac{\sigmaFac{u}{\lambda}}{
\sigmaFac{u}{j+1}}\,\sigma^j(f).
\end{equation}
\end{lemma}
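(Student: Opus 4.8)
The plan is to prove the identity $\sigma^\lambda(g) - \sigmaFac{u}{\lambda}\,g = \sum_{j=0}^{\lambda-1}\frac{\sigmaFac{u}{\lambda}}{\sigmaFac{u}{j+1}}\,\sigma^j(f)$ by induction on $\lambda$, which is the natural approach since the left-hand side telescopes when we iterate the relation $\sigma(g) - u\,g = f$, i.e.\ $\sigma(g) = u\,g + f$.

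\textbf{Base case.} For $\lambda = 1$ the right-hand side is the single term $j=0$, namely $\frac{\sigmaFac{u}{1}}{\sigmaFac{u}{1}}\,\sigma^0(f) = f$ (using $\sigmaFac{u}{1} = u$ from the definition of the rising factorial and $\sigmaFac{u}{0} = 1$), and the left-hand side is $\sigma(g) - u\,g$, so the claim reduces to the hypothesis $\sigma(g) - u\,g = f$.

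\textbf{Induction step.} Assume \eqref{Equ:nShiftTelescoping} holds for some $\lambda \geq 1$. First I would apply $\sigma$ to both sides of the induction hypothesis, obtaining
\[
\sigma^{\lambda+1}(g) - \sigma(\sigmaFac{u}{\lambda})\,\sigma(g) = \sum_{j=0}^{\lambda-1}\sigma\!\Big(\tfrac{\sigmaFac{u}{\lambda}}{\sigmaFac{u}{j+1}}\Big)\,\sigma^{j+1}(f).
\]
Then I would substitute $\sigma(g) = u\,g + f$ into the left-hand side and regroup, so that the term $\sigma(\sigmaFac{u}{\lambda})\,u\,g = \sigmaFac{u}{\lambda+1}\,g$ (here I use Lemma~\ref{Lemma:SigmaFacId}.(2) with $n = \lambda$, $m = 1$, which gives $\sigmaFac{u}{\lambda+1} = \sigma^\lambda(\sigmaFac{u}{1})\,\sigmaFac{u}{\lambda}$; alternatively, directly from the definition $\sigmaFac{u}{\lambda+1} = u\,\sigma(u)\cdots\sigma^\lambda(u) = \sigma(u\,\sigma(u)\cdots\sigma^{\lambda-1}(u))\cdot u = \sigma(\sigmaFac{u}{\lambda})\cdot u$). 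This yields
\[
\sigma^{\lambda+1}(g) - \sigmaFac{u}{\lambda+1}\,g = \sigma(\sigmaFac{u}{\lambda})\,f + \sum_{j=0}^{\lambda-1}\sigma\!\Big(\tfrac{\sigmaFac{u}{\lambda}}{\sigmaFac{u}{j+1}}\Big)\,\sigma^{j+1}(f).
\]
It remains to check that the right-hand side equals $\sum_{j=0}^{\lambda}\frac{\sigmaFac{u}{\lambda+1}}{\sigmaFac{u}{j+1}}\,\sigma^j(f)$. The main technical point is to verify that $\sigma\big(\frac{\sigmaFac{u}{\lambda}}{\sigmaFac{u}{j+1}}\big) = \frac{\sigmaFac{u}{\lambda+1}}{\sigmaFac{u}{j+2}}$ for $0 \leq j \leq \lambda-1$; this follows because $\sigma(\sigmaFac{u}{\lambda}) = \sigmaFac{u}{\lambda+1}/u$ and $\sigma(\sigmaFac{u}{j+1}) = \sigmaFac{u}{j+2}/u$ (both again from the recursion $\sigmaFac{u}{k+1} = u\cdot\sigma(\sigmaFac{u}{k})$, rearranged), so the factor $u$ cancels in the quotient. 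After re-indexing the sum via $i = j+1$, the sum $\sum_{j=0}^{\lambda-1}\sigma\big(\frac{\sigmaFac{u}{\lambda}}{\sigmaFac{u}{j+1}}\big)\,\sigma^{j+1}(f)$ becomes $\sum_{i=1}^{\lambda}\frac{\sigmaFac{u}{\lambda+1}}{\sigmaFac{u}{i+1}}\,\sigma^i(f)$, and the leftover term $\sigma(\sigmaFac{u}{\lambda})\,f = \frac{\sigmaFac{u}{\lambda+1}}{u}\,f = \frac{\sigmaFac{u}{\lambda+1}}{\sigmaFac{u}{1}}\,\sigma^0(f)$ supplies the missing $i=0$ summand. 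This completes the induction.

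\textbf{Expected obstacle.} There is no deep obstacle here; the statement is essentially a bookkeeping identity. The only place requiring care is keeping the rising-factorial manipulations consistent — in particular the identity $\sigma(\sigmaFac{u}{k}) = \sigmaFac{u}{k+1}/u$ and its use in both the $g$-coefficient and inside the sum — and making sure the re-indexing of the summation lines up the index $i=0$ term with the isolated $\sigma(\sigmaFac{u}{\lambda})\,f$ contribution. All of these are routine consequences of the definition of $\sigmaFac{\cdot}{\cdot}$ and Lemma~\ref{Lemma:SigmaFacId}, so the proof can be kept short.
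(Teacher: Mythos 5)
Your proof is correct, and it reaches the identity by a slightly different route than the paper: you induct on $\lambda$, applying $\sigma$ to the already-established identity and substituting $\sigma(g)=u\,g+f$, whereas the paper argues in one stroke — it applies $\sigma^j$ to $\sigma(g)-u\,g=f$, multiplies the $j$-th copy by $\sigmaFac{u}{\lambda}/\sigmaFac{u}{j+1}$ so that the left-hand sides become consecutive differences $\frac{\sigmaFac{u}{\lambda}}{\sigmaFac{u}{j+1}}\sigma^{j+1}(g)-\frac{\sigmaFac{u}{\lambda}}{\sigmaFac{u}{j}}\sigma^j(g)$, and sums over $j=0,\dots,\lambda-1$ so the sum telescopes. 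In substance the two arguments are the same iteration of the first-order relation; your induction just unrolls the telescoping one step at a time, at the cost of a bit more re-indexing bookkeeping, while the paper's normalization makes the cancellation visible immediately. Both use (implicitly or explicitly) that $u\in\AA^*$ forces $\sigmaFac{u}{k}\in\AA^*$ (Lemma~\ref{Lemma:SigmaFacId}.(5)), so all quotients are legitimate. One small slip in your write-up: the citation of Lemma~\ref{Lemma:SigmaFacId}.(2) with $n=\lambda$, $m=1$ yields $\sigmaFac{u}{\lambda+1}=\sigma^{\lambda}(u)\,\sigmaFac{u}{\lambda}$, which is not the splitting you need; the correct instance is $n=1$, $m=\lambda$, giving $\sigmaFac{u}{\lambda+1}=\sigma(\sigmaFac{u}{\lambda})\,u$. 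Since you also verify this identity directly from the definition of the rising factorial, the argument is unaffected — just fix the indices in the citation.
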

\begin{proof}
From $\sigma(g)-u\,g=f$ we get
$\sigma^{j+1}(g)-\sigma^j(u)\sigma^j(g)=\sigma^j(f)$ for all $j\in\NN$. Multiplying it with $\sigmaFac{u}{\lambda}/\sigmaFac{u}{j+1}$ yields 
$\frac{\sigmaFac{u}{\lambda}}{\sigmaFac{u}{j+1}}\sigma^{j+1}(g)-\frac{\sigmaFac{u}
{\lambda}}{\sigmaFac{u}{j}}\sigma^j(g)=\frac{\sigmaFac{u}{\lambda}}{\sigmaFac{u}{j+1}}
\sigma^j(f).$
Summing this equation over $j$ from $0$ to $\lambda-1$
produces~\eqref{Equ:nShiftTelescoping}.
\end{proof}

\begin{proposition}\label{Prop:LiftForNestedRExt}
Let $\dfield{\AA}{\sigma}$ be a constant-stable and computable difference ring with constant field $\KK$. Let $G\leq\AA^*$ be closed
under $\sigma$ with $\sconstF{G}{\AA}{\sigma^l}\setminus\{0\}\leq\AA^*$ for all $l>0$.
Let $\dfield{\EE}{\sigma}$ with $\EE=\AA\lr{x_1}\dots\lr{x_r}$ be a $G$-simple
\rE-extension of $\dfield{\AA}{\sigma}$ where $\ord(x_i)>0$ and
$\per(x_i)>0$ for $1\leq i\leq r$
are given and where
$\sconst{(\dgroup{G}{\EE}{\AA})}{\EE}{\sigma}\setminus\{0\}\leq\EE^*$.\\
If Problem~PFLDE is solvable in $\dfield{\GG}{\sigma^l}$ for $G$ for all $l>0$, it is solvable in
$\dfield{\EE}{\sigma}$ for
$\dgroup{G}{\EE}{\AA}$.
\end{proposition}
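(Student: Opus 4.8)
The plan is to reduce Problem~PFLDE in $\dfield{\EE}{\sigma}$ for $\tilde G=\dgroup{G}{\EE}{\AA}$ by peeling off one $\rE$-monomial $x_r$ at a time, so by induction it suffices to treat a single $\rE$-extension $\dfield{\AA[x]}{\sigma}$ of $\dfield{\AA}{\sigma}$ with $\sigma(x)=\alpha\,x$, $\alpha=\sigma(x)/x\in\dgroup{G}{\AA[x_1]\dots\lr{x_{r-1}}}{\AA}$, $d:=\ord(x)=\ord(\alpha)>0$ and $p:=\per(x)>0$ both known. Given $u\in\tilde G$ and $\vect f\in\EE^n$, Proposition~\ref{Prop:RPart2Weak} (and its iterated form Proposition~\ref{Prop:RExtSConstSpecial}) tells us that any $g\in\sconst{\tilde G}{\EE}{\sigma}\setminus\{0\}$ factors as $g=\tilde g\,x_1^{n_1}\dots x_r^{n_r}$ with $\tilde g$ from the base; more to the point, in the degree-reduction machinery of Subsection~\ref{Subsubsection:DegreeReduction} we may write the unknown $g=\sum_{i=0}^{d-1}g_i\,x^i\in\AA[x]$ and $u=v\,x^m$ with $v\in\dgroup{G}{\AA[x_1]\dots\lr{x_{r-1}}}{\AA}$ and $0\le m<d$ (reducing $m$ mod $d$ using $x^d=1$).

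First I would handle the degree bounds: since $x^d=1$, the ``polynomial'' in $x$ has only the finitely many exponents $0,\dots,d-1$, so no analogue of Lemmas~\ref{Lemma:SigmaDegBound}--\ref{Lemma:PiWithMDegreeBound} is needed — the bounds $a=0$, $b=d-1$ are immediate once $d$ is known. Then I would run the degree reduction exactly as in Subsection~\ref{Subsubsection:DegreeReduction}: doing coefficient comparison in $\sigma(g)-u\,g=c_1f_1+\dots+c_nf_n$ with respect to the top power of $x$, noting that because $x^d=1$ the arithmetic of exponents is modular, so each coefficient equation has the shape of~\eqref{Equ:PFLDESub2} after absorbing powers of $\alpha$, i.e. it reads $\sigma(g_i)-v\,\alpha^{-i}\,g_i=(\text{linear combination of shifted/rescaled }f_j)$ in $\AA$. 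Here is where the hypothesis on $\sigma^l$ enters. The coefficient $v\,\alpha^{-i}$ need not lie in $G$ — $\alpha$ depends on $x_1,\dots,x_{r-1}$ — so I cannot directly invoke solvability of PFLDE in $\dfield{\AA}{\sigma}$ for $G$. Instead I would use Lemma~\ref{Lemma:PFLDE1ShiftTonShift}: raising to the power $\lambda:=\per(\alpha')$ (or a suitable multiple, using $\ford(\alpha')$) converts $\sigma(g_i)-u'\,g_i=f'$ into $\sigma^\lambda(g_i)-\sigmaFac{(u')}{\lambda}\,g_i=\sum_{j=0}^{\lambda-1}\frac{\sigmaFac{(u')}{\lambda}}{\sigmaFac{(u')}{j+1}}\sigma^j(f')$, and by the choice of $\lambda$ — using that $\sigmaFac{\alpha}{\lambda}=1$ when $\lambda$ is a multiple of $\ford(\alpha)$ — the new coefficient $\sigmaFac{(u')}{\lambda}$ lies in $G$ (since $G$ is closed under $\sigma$ and $v\in G$ in the base case). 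Thus the transformed equation is a genuine instance of Problem~PFLDE in $\dfield{\AA}{\sigma^\lambda}$ for $G$, which we are assuming solvable. Conversely, the solution space of the $\sigma$-equation sits inside that of the $\sigma^\lambda$-equation, cut out by the extra linear constraint $\sigma(g_i)-u'g_i=f'$; intersecting with this $\KK$-linear condition (a finite linear-algebra step over $\KK=\const{\AA}{\sigma}=\const{\AA}{\sigma^\lambda}$, the equality by constant-stability) recovers the $\sigma$-solutions. I would wrap this ``solve in $\sigma^\lambda$, then intersect with a $\KK$-linear constraint'' routine into a lemma and call it at each level of the recursion.

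With the base-level equation~\eqref{Equ:IncrementalSolSpace} thus solvable, the degree reduction proceeds just as in the $\pisiSE$ case: plug the top coefficient back in, lower the top exponent by one, recurse down to exponent $0$, and reassemble a $\KK$-basis of $V(u,\vect f,\AA[x])$ from the bases of the incremental spaces; finiteness of the dimension ($\le n+1$) is guaranteed by Lemma~\ref{Lemma:VBasis} since $\sconst{\dgroup{G}{\AA[x]}{\AA}}{\AA[x]}{\sigma}\setminus\{0\}\le\AA[x]^*$ by Theorem~\ref{Thm:RPart2Strong} (the reducedness needed there is available via Corollary~\ref{Cor:SConstRExtension} in the field-ground setting, and in general from the hypothesis $\sconst{\tilde G}{\EE}{\sigma}\setminus\{0\}\le\EE^*$ that we carry along). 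Iterating over $x_r,x_{r-1},\dots,x_1$ — at each stage the relevant semi-constant group condition is preserved by Proposition~\ref{Prop:RPart2Weak} — gives solvability of PFLDE in $\dfield{\EE}{\sigma}$ for $\tilde G$. The main obstacle, and the only genuinely new point over Karr's field argument, is precisely the passage to $\sigma^\lambda$: one must verify carefully that $\lambda$ can be chosen (namely as $\ford$ of the relevant unit, which is positive and computable by Lemma~\ref{Lemma:BasicsOrdPer} together with the computed $\ord(x_i),\per(x_i)$) so that the shifted coefficient lands back in $G$, that only finitely many such $\sigma^l$ ever arise, and that the final intersection with the $\sigma$-constraint is correctly a $\KK$-subspace; everything else is a routine transcription of the degree-reduction scheme to the modular ``exponents in $\ZZ/d\ZZ$'' situation.
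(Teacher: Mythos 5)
There is a genuine gap, and it sits exactly at the point that distinguishes this proposition from Proposition~\ref{Prop:SpecialRReduction}. In your peeling-off step you write $u=v\,x^m$ with $0\le m<d$ and claim that coefficient comparison produces decoupled equations of the shape~\eqref{Equ:PFLDESub2}, i.e.\ $\sigma(g_i)-v\,\alpha^{-i}g_i=(\dots)$. This is only true for $m=0$. For $m\neq0$ the relation $x^d=1$ makes the exponents cyclic: the coefficient of $x^k$ in $\sigma(g)-u\,g$ is $\sigma(g_k)\,\alpha^k-v\,g_{(k-m)\bmod d}$, so every coefficient equation couples two \emph{distinct} unknowns, and there is no top degree at which the second term drops out (which is what rescues the Laurent cases~\eqref{Equ:PFLDESub1} and~\eqref{Equ:PFLDESub3}). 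Hence the degree-reduction scheme of Subsection~\ref{Subsubsection:DegreeReduction} does not transcribe to the modular setting; this is precisely why Proposition~\ref{Prop:SpecialRReduction} restricts to $u\in G$ (so $m=0$), whereas here $u$ ranges over the full product group $\dgroup{G}{\EE}{\AA}$ and may contain genuine powers of the $x_i$.

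Second, your use of Lemma~\ref{Lemma:PFLDE1ShiftTonShift} is applied at the wrong level. In the inductive step the coefficient equations live over the intermediate ring $\AA'=\AA\lr{x_1}\dots\lr{x_{r-1}}$ (both the unknown $g_i$ and $\alpha$ may depend on $x_1,\dots,x_{r-1}$), so after raising to $\sigma^{\lambda}$ you need Problem~PFLDE in $\dfield{\AA'}{\sigma^{\lambda}}$ for a suitable group. This is neither covered by the hypothesis, which only concerns $\dfield{\AA}{\sigma^l}$, nor well-posed in the paper's framework: $\constF{\AA'}{\sigma^{\lambda}}$ need not be the field $\KK$ (e.g.\ $\sigma^{\lambda}$ fixes $x_j$ whenever $\per(x_j)\mid\lambda$), and the $x_j$ are then no longer \rE-monomials with respect to $\sigma^{\lambda}$, so an induction that tries to carry ``$\sigma^l$-solvability'' up the tower does not close. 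The paper's proof avoids both problems by a single \emph{global} application of Lemma~\ref{Lemma:PFLDE1ShiftTonShift}: with $\alpha=x_1^{m_1}\cdots x_r^{m_r}$ and $\lambda=\lcm(\ford(\alpha),\per(x_1),\dots,\per(x_r))$ the entire $x$-part of $u$ collapses, $w=\sigmaFac{v}{\lambda}\in G$, and all $x_i$ become $\sigma^{\lambda}$-invariant; multi-index coefficient comparison then decouples the lifted equation directly into instances of Problem~PFLDE in $\dfield{\AA}{\sigma^{\lambda}}$ for $G$ (with constant field $\KK$ by constant-stability), and the original $\sigma$-equation is recovered at the end by one linear-algebra intersection inside the lifted solution space --- the one ingredient of the paper's argument you did anticipate. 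To repair your proof you would have to give up the one-monomial-at-a-time recursion (or explicitly uncouple the cyclic system and control the growing constants), which in effect reproduces the paper's one-stroke reduction.
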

\begin{proof}
Let $\KK=\const{\AA}{\sigma}$, let $\EE=\AA\lr{x_1}\dots\lr{x_r}$, let
$\vect{f}=(f_1,\dots,f_n)\in\EE^n$ and let
$u=v\,x_1^{m_1}\dots x_r^{m_r}\in\dgroup{G}{\EE}{\AA}$ with $v\in G$ and
$m_i\in\NN$. We will present a reduction method to obtain a basis of
$V(u,\vect{f},\EE)$. Set $\alpha:=x_1^{m_1}\dots x_r^{m_r}$. Then by Lemma~\ref{Lemma:DROrder} it follows that $\ord(\alpha)>0$ can be computed by the given values of $\ord(x_i)$ with $1\leq i\leq r$. Hence we can activate Lemma~\ref{Lemma:DRPeriod}.(4) and can compute $\ford(\alpha)>0$. 
Now take 
\begin{equation}\label{Equ:DefineLambdaForPFLDE}
\lambda=\lcm(\ford(\alpha),\per(x_1),\dots,\per(x_r)).
\end{equation}
Thus we have that\footnote{By a mild modification of the proof it suffices to take a  $\lambda$ such that $\sigmaFac{\alpha}{\lambda}\in\const{\AA}{\sigma}$ holds.} $\sigmaFac{\alpha}{\lambda}=1\text{ and }\sigma^{\lambda}(x_i)=x_i$
for all $1\leq i\leq r$. Finally, define
\begin{equation}\label{Equ:definew}
w:=\sigmaFac{u}{\lambda}=\sigmaFac{(\alpha\,v)}{\lambda}=\sigmaFac{v}{\lambda}\in G.
\end{equation}
Now let $(c_1,\dots,c_n,g)\in V(u,\vect{f},\EE)$, i.e., we have that~\eqref{Equ:PFLDEAlg}. Thus
Lemma~\ref{Lemma:PFLDE1ShiftTonShift} 
yields
\begin{equation}\label{Equ:ModPFLDE}
\sigma^{\lambda}(g)-w\,g=c_1\,\tilde{f}_1+\dots+c_n\,\tilde{f}_n
\end{equation}
with 
\begin{equation}\label{Equ:tildefCalc}
\tilde{f}_i=\sum_{j=0}^{\lambda-1}\frac{\sigmaFac{u}{{\lambda}}}{\sigmaFac{u}{j+1}}
\sigma^j(f_i).
\end{equation}
Hence $V(u,\vect{f},\EE)$ is a subset of 
\begin{equation}\label{equ:tildeV}
\tilde{V}=\{(c_1,\dots,c_n,g)\in\KK^n\times\EE|\,\text{\eqref{Equ:ModPFLDE} holds}\}.
\end{equation}
Note that $\tilde{V}$ is a $\KK$-subspace of $\KK^n\times\EE$. Thus $V(u,\vect{f},\EE)$ is a subspace of $\tilde{V}$ over $\KK$.    
First, we show that $\tilde{V}$ has a finite basis and show how one can compute it. For this task define
$S:=\{(n_1,\dots,n_r)\in\NN^r|\, 0\leq n_i<\ord(x_i)\}.$
Write $g=\sum_{\vect{i}\in S}g_{\vect{i}}\vect{x}^{\vect{i}}$ and
$\tilde{f}_j=\sum_{\vect{i}\in
S}\tilde{f}_{j,\vect{i}}\,\vect{x}^{\vect{i}}$
in multi-index notation. Since
$\sigma^{\lambda}(x_i)=x_i$, it follows by coefficient comparison that for $\vect{i}\in S$ we
have that
$$\sigma^{\lambda}(g_{\vect{i}})-w\,g_{\vect{i}}=c_1\,\tilde{f}_{1,\vect{i}}+\dots+c_n\,
\tilde{f}_{n,\vect{i}}.$$
By assumption, $\sconstF{G}{\AA}{\sigma^{\lambda}}\setminus\{0\}\leq\AA^*$. In particular,
since $\dfield{\AA}{\sigma}$ is constant-stable, we have that $\constF{\AA}{\sigma^{\lambda}}=\KK$. Thus with our $w\in G$ and 
$\vect{\tilde{f}}_\vect{i}=(\tilde{f}_{1,\vect{i}},\dots,\tilde{f}_{n,\vect{i}}
)\in\AA^n$ we can solve
Problem~PFLDE in $\dfield{\AA}{\sigma^{\lambda}}$ with constant field $\KK$. Hence we get for all $\vect{i}\in S$ the bases for
\begin{equation}\label{Equ:Vi}
V_{\vect{i}}=V(w,\vect{\tilde{f}}_\vect{i},\dfield{\AA}{\sigma^{\lambda}}).
\end{equation}
Note that by construction it follows that $\tilde{V}$ from~\eqref{equ:tildeV} is given by
\begin{equation}\label{Equ:CombineVi}
\tilde{V}
=\{(c_1,\dots,c_n,\sum_{\vect{i}\in S} g_{\vect{i}}\,\vect{x}^{\vect{i}})|\,
(c_1,\dots,c_n,g_{\vect{i}})\in
V_{\vect{i}}\}.
\end{equation}
Thus by linear algebra we get a basis of~\eqref{Equ:CombineVi}, say
$\vect{b_1},\dots,\vect{b_s}\in\KK^n\times\EE$. 
Recall that $V(u,\vect{f},\dfield{\EE}{\sigma})$ is a $\KK$-subspace of~\eqref{Equ:CombineVi}. 
To this end, we make the Ansatz
$(c_1,\dots,c_n,g)=d_1\,\vect{b_1}+\dots+d_s\,\vect{b_s}$ for indeterminates
$d_1,\dots,d_s\in\KK$ and plug in the generic solution 
into~\eqref{Equ:PFLDEAlg}. This yields another linear system with
unknowns $(d_1,\dots,d_s)$. Solving this system enables one to derive a basis of  $V(u,\vect{f},\EE)$.
\end{proof}

\begin{example}[Cont.\ Ex.~\ref{Exp:TeleSigmaDegRed}]\label{Exp:TeleSigmaSubProb}
In order to compute a basis of $V(1,\vect{h},\AA)$ in Ex.~\ref{Exp:TeleSigmaDegRed}, the recursive reduction enters in the following subproblem.
We are given the \rE-extension $\dfield{\QQ(k)[x]}{\sigma}$ of $\dfield{\QQ(k)}{\sigma}$ with $\sigma(x)=-x$ and need a basis of $V(x,\vect{f},\QQ(k)[x])$ with  
$\vect{f}=(f_1,f_2,f_3)=(\frac{(k^2-1)x}{2 k}+\frac{-k^2-2 k}{2 k},-\frac{x}{k},0)$.
By Example~\ref{Exp:GetOrder}.(2) we get $\per(x)=2$ and $\ford(x)=4$.
Hence using~\eqref{Equ:DefineLambdaForPFLDE} we determine 
$\lambda=\lcm(\ford(x),\per(x))=4$.
Using~\eqref{Equ:tildefCalc} with $u=x$ yields
$(\tilde{f}_1,\tilde{f}_2,\tilde{f}_3)=(-\frac{2 k^2+4 k+1}{k (k+2)}
-\frac{x}{(k+1) (k+3)}
,-\frac{2 x}{(k+1) (k+3)}
-\frac{2}{k (k+2)}
,0)$. Next we write the entries in multi-index notation. Namely, with $S=\{(0),(1)\}\subseteq\NN^1$ we get
\begin{align*}
\vect{\tilde{f}}_{(0)}&=(\tilde{f}_{1,(0)},\tilde{f}_{2,(0)},\tilde{f}_{3,(0)})=(-\frac{2 k^2+4 k+1}{k (k+2)},-\frac{2}{k (k+2)},0)\\
\vect{\tilde{f}}_{(1)}&=(\tilde{f}_{1,(1)},\tilde{f}_{2,(1)},\tilde{f}_{3,(1)})=(-\frac{1}{(k+1) (k+3)},-\frac{2}{(k+1)(k+3)},0)
\end{align*}
with $\vect{f}=\sum_{(m)\in S}\vect{\tilde{f}}_{(m)}x^{m}=\vect{\tilde{f}}_{(0)}+\vect{\tilde{f}}_{(1)}\,x$. Following~\eqref{Equ:definew} we get $w=1$ and we have to compute bases of the~\eqref{Equ:Vi} with $\vect{i}\in S$. Here we obtain the basis $\{(0 , 0 , 1 , 0), (1 , -\frac{1}{2} , 0 , -k/2)\}$ of $V_{(0)}=V(1,\vect{\tilde{f}}_{(0)},\dfield{\QQ(k)}{\sigma^4})$ and the basis $\{(-1 , \frac{1}{2} , 0 , 0), (0 , 0 , 0 , 1), (0 , 0 , 1 , 0)\}$ of $V_{(1)}=V(1,\vect{\tilde{f}}_{(0)},\dfield{\QQ(k)}{\sigma^4})$.
Therefore a basis of 
\begin{align*}
\tilde{V}=&\{(c_1,c_2,c_3,g)\in\QQ^3\times\QQ(k)[x]|\sigma^4(g)-g=c_1\,\tilde{f_1}+c_2\,\tilde{f_2}+c_3\,\tilde{f_3}\}\\
=&\{(c_1,c_2,c_3,\sum_{(i)\in \{(0),(1)\}} g_{\vect{i}}\,x^i)|\,
(c_1,c_2,c_3,g_{(i)})\in
V_{(i)}\}
\end{align*}
can be read off: $\{(1,-1/2,0,-k/2),(0,0,1,0),(0,0,0,1)\}$. Since $V(x,\vect{f},\QQ(k)[x])$ is a $\QQ$-subspace of $\tilde{V}$, we plug in
$(c_1,c_2,c_3,g)=d_1 (1,-1/2,0,-\frac{k}2)+d_2(0,0,1,0)+d_3 (0,0,0,x)+d_4(0,0,0,1)$ with unknowns $d_1,d_2,d_3,d_4\in\QQ$ into~\eqref{Equ:PFLDEAlg}. Together with our given $f_i$ and $u$ we get the linear constraint $\frac{1}{2} (d_1-2 d_3+2 d_4)+x (-d_3-d_4)=0$ or equivalently the linear constraints $-d_3-d_4=0$ and $\frac{1}{2} (d_1
-2 d_3+2 d_4)$. This yields 
$d_3 =\frac{d_1}{4}$ and $d_4=-\frac{d_1}{4}$. Thus we obtain the generic solution
$d_1 (1,-\frac{1}{2},0,-\frac{k}{2}
+\frac{x}{4}
-\frac{1}{4}) + d_2(0,0,1,0)$ of $V(x,\vect{f},\QQ(k)[x])$, i.e., the basis $\{(1,-\frac{1}{2},0,-\frac{k}{2}
+\frac{x}{4}
-\frac{1}{4}), (0,0,1,0)\}$ of $V(x,\vect{f},\QQ(k)[x])$. 
\end{example}

\begin{remark}
(1) In the underlying algorithm of Proposition~\ref{Prop:LiftForNestedRExt} we construct for all $\vect{i}\in S$ the solution spaces given in~\eqref{Equ:Vi} and combine them in one stroke as proposed in~\eqref{Equ:CombineVi}.
This approach is interesting if one wants to perform calculations in parallel. Another approach is to apply similar tactics as given in Subsection~\ref{Subsec:PSReduction}: compute a basis of one of the~\eqref{Equ:Vi}, plug in the found solutions and continue with an updated Ansatz in terms of the remaining monomials. In this way, one usually shortens step by step the length of the vectors $\vect{\tilde{f}_i}$ in~\eqref{Equ:Vi} and ends up very soon at a trivial situation (shortcut). \\
(2) A different approach is to consider an \rE-extension $\dfield{\FF[x]}{\sigma}$ of $\dfield{\FF}{\sigma}$ of order $d$ as a holonomic expression~\cite{Zeilberger:90a,Chyzak:00,Koutschan:13} over a difference field. Then as worked out in~\cite{Schneider:05d,Erocal:11}, a solution $g=\sum_{i=0}^{d-1}g_i\,x^ i$ and $c_i\in\const{\FF}{\sigma}$ of~\eqref{Equ:ParaTeleDR} leads to a coupled system of first-order difference equations in terms of the $g_i$ that can be uncoupled explicitly. More precisely, there is an explicitly given formula that constitutes a higher-order parameterized linear difference equation in $g_{d-1}$ and the parameters $c_i$. Solving this difference equation in terms of $g_{d-1}$ and the $c_i$ delivers automatically the remaining coefficients $g_i$, i.e., the solution $g$ of~\eqref{Equ:ParaTeleDR}. Here one usually has to solve a general higher-order linear difference equation.
For further details on the holonomic Ansatz
in the context of algebraic ring extensions (also on handling such objects in the basis of idempotent elements~\cite{Singer:97,Singer:08}) we refer to~\cite{Erocal:11}. The advantage of the reduction technique proposed in Proposition~\ref{Prop:LiftForNestedRExt} is that it can be applied in one stroke for nested \rE-extensions. In particular, Problem~PFLDE can be always reduced again to Problem~PFLDE by possibly switching to $\dfield{\FF}{\sigma^k}$ for some $k>1$. In this way, general higher-order linear difference equations can be avoided. \end{remark}

Combining all algorithmic parts of this article we obtain the following result.

\begin{theorem}\label{Thm:PFLDEStrongConstantStable}
Let $\dfield{\EE}{\sigma}$ with $\EE=\FF\lr{t_1}\dots\lr{t_e}$ be a simple
\rpisiSE-extension of a constant-stable and computable field $\dfield{\FF}{\sigma}$. Suppose that for all \rE-monomials the periods are positive, and the orders and periods of the \rE-monomials are given explicitly. Then Problem~PFLDE in $\dfield{\EE}{\sigma}$ for
$\dgroup{(\FF^*)}{\EE}{\FF}$ is solvable if one of the
following holds.
\begin{enumerate}
\item All $t_i$ are \rE\sigmaSE-monomials and 
PFLDE is solvable in
$\dfield{\FF}{\sigma^k}$ for $\FF^*$ for all $k>0$.
\item Problem~PMT is solvable in
$\dfield{\FF}{\sigma}$ for $\FF^*$ and Problem~PFLDE is solvable
in
$\dfield{\FF}{\sigma^k}$ for $\FF^*$ for all $k>0$.
\end{enumerate}
\end{theorem}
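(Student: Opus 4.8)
The plan is to decompose the simple \rpisiSE-extension $\dfield{\EE}{\sigma}$ into its \rE-, \piE- and \sigmaSE-layers, reduce Problem~PFLDE layer by layer down to the \rE-block, and then collapse the \rE-block onto $\dfield{\FF}{\sigma^k}$ using the order/period machinery. First I would reorder the generators: since $\FF$ is a field, Lemma~\ref{Lemma:ShuffleOverField} lets us rewrite $\EE=\FF\lr{x_1}\dots\lr{x_r}\lr{p_1}\dots\lr{p_k}\lr{s_1}\dots\lr{s_v}$ with the $x_i$ being \rE-monomials, the $p_i$ \piE-monomials and the $s_i$ \sigmaSE-monomials; Lemma~\ref{Lemma:SimpleTowerIsSimple} guarantees this reordered tower is again simple, each layer being $\dgroup{(\FF^*)}{\cdot}{\FF}$-simple over the one below, and reordering changes neither $\EE$ nor $\sigma$. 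Because $\FF$ is a field, $\sconst{\FF^*}{\FF}{\sigma^l}\setminus\{0\}=\FF^*$ for all $l>0$; by Corollary~\ref{Cor:SConstRExtension} the semi-constant group property $\sconst{G}{\HH}{\sigma}\setminus\{0\}\leq\HH^*$ then holds for the \rE-block $\HH=\FF\lr{x_1}\dots\lr{x_r}$ with $G=\dgroup{(\FF^*)}{\HH}{\FF}$, and by Corollary~\ref{Cor:pisiSCONST}.(1) it propagates up through the \piE- and \sigmaSE-layers, so at every intermediate ring of the tower the standing hypotheses of Theorem~\ref{Thm:PSReduction} and of the PMT-reductions are met.

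Next comes the core reduction at the \rE-block. I would invoke Proposition~\ref{Prop:LiftForNestedRExt} with $\AA=\FF$ and $G=\FF^*$: its hypotheses — $\FF$ constant-stable and computable, $G$ closed under $\sigma$, the group property for all $\sigma^l$, the \emph{given} positive orders and periods of the $x_i$, and $\sconst{(\dgroup{G}{\HH}{\FF})}{\HH}{\sigma}\setminus\{0\}\leq\HH^*$ from Corollary~\ref{Cor:SConstRExtension} — are exactly what the theorem assumes. Its conclusion is that Problem~PFLDE is solvable in $\dfield{\HH}{\sigma}$ for $\dgroup{(\FF^*)}{\HH}{\FF}$ as soon as it is solvable in $\dfield{\FF}{\sigma^l}$ for $\FF^*$ for all $l>0$, which is part of the hypothesis in both cases~(1) and~(2). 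Note this is the only place where an automorphism $\sigma^k$ with $k>1$ is needed.

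Then I would climb the remaining tower by iterated use of Theorem~\ref{Thm:PSReduction}: each \sigmaSE-monomial $s_i$ is removed by part~1, which needs only PFLDE-solvability in the ring immediately below (available inductively), and in case~(2) each \piE-monomial $p_i$ is removed by part~2, which additionally needs PMT-solvability in the ring below. That PMT-input is supplied inductively: for the \piE/\sigmaSE-layers by Theorem~\ref{Thm:ProblemMPiSi}, bottoming out at PMT-solvability of $\dfield{\HH}{\sigma}$ for $\dgroup{(\FF^*)}{\HH}{\FF}$, which in turn follows from the PMT-reduction over \rE-extensions (Theorem~\ref{Thm:AlgMainResultFull}.(2), i.e.\ Lemma~\ref{Lemma:MProblemForRExtensions} together with Proposition~\ref{Prop:ComputeBasesOfRM}) using the given orders of the $x_i$ and the assumed solvability of PMT in $\dfield{\FF}{\sigma}$ for $\FF^*$. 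In case~(1) there are no \piE-monomials, so only Theorem~\ref{Thm:PSReduction}.(1) is used and no PMT input is required, matching the weaker hypothesis. Assembling: PFLDE is solvable in $\dfield{\FF\lr{x_1}\dots\lr{x_r}\lr{p_1}\dots\lr{p_j}}{\sigma}$ for the associated product group by induction on $j$, hence in $\dfield{\EE}{\sigma}$ for $\dgroup{(\FF^*)}{\EE}{\FF}$ after inserting all the $s_i$, which proves the theorem.

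The technical heart of Theorem~\ref{Thm:PSReduction} — degree bounds via Lemmas~\ref{Lemma:SigmaDegBound}, \ref{Lemma:SimpleBound} and~\ref{Lemma:PiWithMDegreeBound}, followed by the degree reduction that peels off the top coefficient and spawns one strictly smaller PFLDE instance (plus extra parameters for \sigmaSE-monomials) — is already available and is only cited here. The step I expect to be the main obstacle is the bookkeeping: at each level of the recursion one must certify that the semi-constant group property, the $G$-simplicity, computability, and (for the \rE-base case) constant-stability of the relevant $\sigma^k$-rings are all transported correctly, that the product groups composed along the tower agree (Lemma~\ref{Lemma:SimpleTowerIsSimple}), and — crucially — that the recursion over the \piE-layers never appeals to PFLDE or PMT for any $\sigma^k$ with $k>1$, so that $\sigma^k$ enters only at the \rE-block through Proposition~\ref{Prop:LiftForNestedRExt}.
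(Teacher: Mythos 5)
Your proposal follows the paper's proof essentially verbatim: reorder the generators so that the \rE-block sits at the bottom, apply Proposition~\ref{Prop:LiftForNestedRExt} there with $\AA=\FF$ and $G=\FF^*$ (the only place where $\sigma^k$ with $k>1$ enters), and then climb the \piE/\sigmaSE-layers by iterating Theorem~\ref{Thm:PSReduction}, supplying the needed PMT input in case~(2) via the \rE-block reduction (Lemma~\ref{Lemma:MProblemForRExtensions} with Proposition~\ref{Prop:ComputeBasesOfRM}) and Theorem~\ref{Thm:ProblemMPiSi}, exactly as the paper does. The only nit is a citation: that the reordered tower is again simple with the \rE-monomials below the \pisiSE-monomials is the content of Lemma~\ref{Lemma:ReorderSimpleRPISI} (which the paper invokes), not of Lemma~\ref{Lemma:SimpleTowerIsSimple}; with that substitution your argument coincides with the paper's.
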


\begin{proof}
Let $H=\dgroup{(\FF^*)}{\EE}{\FF}$. Recall that by Theorem~\ref{Thm:sconstIsGroupField} we have that $\sconst{H}{\EE}{\sigma}\setminus\{0\}\leq\EE^*$, i.e., Problem~PFLDE is applicable in $\dfield{\EE}{\sigma}$ for $H$.
By Lemma~\ref{Lemma:ReorderSimpleRPISI} we can reorder the generators of
the \rpisiSE-extension such that $\dfield{\bar{\EE}}{\sigma}$ is an $\FF^*$-simple
\rE-extension of $\dfield{\FF}{\sigma}$ and
$\dfield{\EE}{\sigma}$ is a $G$-simple \pisiSE-extension of
$\dfield{\bar{\EE}}{\sigma}$ with $G=\dgroup{(\FF^*)}{\bar{\EE}}{\FF}$. Note that the multiplicative group $\FF^*$ is closed under $\sigma$, $\sconstF{(\FF^*)}{\FF}{\sigma^l}=\FF^*$ for all $l>0$ and 
$\sconst{(\FF^*)}{\bar{\EE}}{\sigma}\setminus\{0\}\leq\bar{\EE}^*$ by Corollary~\ref{Cor:SConstRExtension}. Thus we can apply Proposition~\ref{Prop:LiftForNestedRExt}. Hence Problem~PFLDE is solvable in $\dfield{\bar{\EE}}{\sigma}$ for $G$. If we are in case (1), i.e., no \piE-monomials occur, we can apply iteratively Theorem~\ref{Thm:PSReduction} and obtain an algorithm to solve Problem PFLDE in $\dfield{\EE}{\sigma}$ for $\dgroup{G}{\EE}{\bar{\EE}}=H$. If we are in case (2), i.e., \piE-monomials may occur, we exploit in addition our assumptions together with Theorem~\ref{Thm:AlgMainResultFull}.(2). This shows that we can solve Problem PMT in $\dfield{\EE}{\sigma}$ for $H$ (and in any sub-difference ring by truncating the tower of extensions). Again the iterative application of Theorem~\ref{Thm:PSReduction} shows that Problem~PFLDE is solvable in $\dfield{\EE}{\sigma}$ for $\dgroup{G}{\EE}{\bar{\EE}}=H$.
\end{proof}

\ExternalProof{(Theorem~\ref{Thm:AlgMainResultFull}.(3)\label{Proof:AlgMainResultFull3})}{
Let $\dfield{\EE}{\sigma}$ be a simple \rpisiSE-extension of $\dfield{\FF}{\sigma}$ where $\dfield{\FF}{\sigma}$ is computable and strong constant-stable. Then by Corollary~\ref{Cor:SimpleNestedRBasics} (parts 3 and 4) the periods and orders of all \rE-monomials are positive and can be computed. Thus Theorem~\ref{Thm:PFLDEStrongConstantStable}.(2) is applicable which completes the proof.}

\noindent We remark that in Theorem~\ref{Thm:AlgMainResultFull}.(3) one can drop the condition that Problem~PMT is solvable in $\dfield{\FF}{\sigma}$ for $\FF^*$ if in the \rpisiSE-extension no \piE-monomials occur, i.e., one applies part one and not part two of Theorem~\ref{Thm:PFLDEStrongConstantStable}.

\section{Conclusion}\label{Sec:Conclusion}

We provided important building blocks that extend the well established difference field theory to a difference ring theory. In this setting one can handle in addition objects such as~\eqref{Equ:Equ:AlgebraicObjects}. We elaborated algorithms for the (multiplicative) telescoping problem (Problems T and MT) and the (multiplicative) parameterized telescoping problem (Problems PT and PMT). In particular, Problem PT enables one to apply Zeilberger's creative telescoping paradigm in the rather general class of simple \rpisiSE-extensions. In order to derive these algorithms we showed that certain semi-constants (resp.\ semi-invariants) of the difference rings under consideration form a multiplicative group.

Currently, the underlying engine of Theorem~\ref{Thm:AlgMainResultRestricted} with the ground field machinery of Subsection~\ref{Subsec:GroundFieldAlg} is fully implemented within the summation package\footnote{The \texttt{Sigma} package can be downloaded from \texttt{www.risc.jku.at/research/combinat/software/Sigma/}} \texttt{Sigma}. In this way one can treat big classes of indefinite nested sums and products involving algebraic objects like $(-1)^k$. In particular, one can treat d'Alembertian solutions of linear recurrences as worked out in Subsection~\ref{Subsec:dAlembert}. We emphasize that these algorithms are enhanced by the refinements described in~\cite{Schneider:04a,Schneider:05c,Schneider:07d,Schneider:08c,Petkov:10,Schneider:14} in order to find sum representations with certain optimality criteria, like optimal nesting depth.\\
The machinery to handle nested \rE-extensions (see Theorem~\ref{Thm:AlgMainResultFull}) is not incorporated in \texttt{Sigma} yet. First, further investigations will be necessary so that the new algorithms can be merged with the difference field enhancements of \texttt{Sigma}. 

Another challenging task is to push forward the difference ring theory and the underlying algorithms in order to relax the requirements in Theorems~\ref{Thm:AlgMainResultRestricted} and~\ref{Thm:AlgMainResultFull} that the \rpisiSE-extensions are simple and/or that the ground difference ring is strong constant-stable. In this regard, we refer to the comments given in Example~\ref{Exp:NotSimpleExt}. 

In any case, the currently developed toolbox widens the class of indefinite nested sums and products in the setting of difference rings. We are looking forward to see new kinds of applications that can be attacked with this machinery.

\section*{Acknowledgement}

\noindent I would like to thank Michael Singer and the anonymous referee for their helpful comments and suggestions to improve the presentation of this article. 

\section*{Appendix: A short index}
\begin{multicols}{2}
\begin{theindex}

  \item $M(\vect{f},\AA)$, 10
  \item $V(u,\vect{f},\AA)$, 11
  \item $\AA\lr{t}$, 7
  \item $\AA\lr{t}_{a,b}$, 17
  \item $\deg$, 17
  \item $\dfield{\AA}{\sigma}\leq\dfield{\tilde{A}}{\tilde{\sigma}}$, 5
  \item $\dgroup{G}{\AA}{\GG}$, 12
  \item $\ford(f)$, 27
  \item $\gsconstF{\AA}{\sigma}$, $\gsconst{\AA}{\sigma}$, 10
  \item $\langle S\rangle$, 4
  \item $\ldeg$, 17
  \item $\ord(f)$, 6
  \item $\per(f)$, 27
  \item $\sconstF{G}{\AA}{\sigma}$, $\sconst{G}{\AA}{\sigma}$, 10
  \item $\sigmaFac{f}{k,\sigma}$, $\sigmaFac{f}{k}$, 17
  \item \pisiSE-field, 7

  \indexspace

  \item constant field/ring, 4

  \indexspace

  \item difference
    \subitem ring/field, 4
    \subitem ring/field extension, 5

  \indexspace

  \item extension
    \subitem (nested) \piE,\sigmaSE,\rE,\rE\piE,\\ \rE\sigmaSE,\pisiSE,\rpisiSE, 
		7
    \subitem \piE, 5
    \subitem \rE, 7
    \subitem \sigmaSE, 5
    \subitem algebraic, 6
    \subitem simple, $G$-simple, 12
    \subitem single-rooted, 13
    \subitem unimonomial, 5

  \indexspace

  \item function
    \subitem degree, 17
    \subitem factorial order, 27
    \subitem order, 6
    \subitem period, 27
    \subitem rising factorial, 17

  \indexspace

  \item monomial
    \subitem \piE,\sigmaSE,\rE,\rE\piE,\rE\sigmaSE,\pisiSE,\rpisiSE, 7
    \subitem simple, $G$-simple, 12

  \indexspace

  \item Problem
    \subitem FPLDE, 11
    \subitem MT, 8
    \subitem O, 8
    \subitem PMT, 11
    \subitem PT, 4
    \subitem T, 4
  \item product group, 12

  \indexspace

  \item ring
    \subitem (strong) constant-stable, 14
    \subitem connected, 16
    \subitem constant-stable, 14
    \subitem reduced, 16

  \indexspace

  \item semi-constant, 10

\end{theindex}

\end{multicols}

\bibliographystyle{plain}

\begin{thebibliography}{10}

\bibitem{Physics3}
J.~Ablinger, A.~Behring, J.~Bl{\"u}mlein, A.~De Freitas, A.~Hasselhuhn, A.~von
  Manteuffel, M.~Round, C.~Schneider, and F.~Wissbrock.
\newblock The 3-loop non-singlet heavy flavor contributions and anomalous
  dimensions for the structure function {$F_2(x,Q^2)$} and transversity.
\newblock {\em Nucl. Phys. B}, 886:733--823, 2014.
\newblock arXiv:1406.4654 [hep-ph].

\bibitem{Physics2}
J.~Ablinger, J.~Bl{\"u}mlein, A.~De Freitas~A. Hasselhuhn, A.~von Manteuffel,
  M.~Round, C.~Schneider, and F.~Wissbrock.
\newblock The transition matrix element {$A_{gq}(N)$} of the variable flavor
  number scheme at {$O(\alpha_s^3)$}.
\newblock {\em Nucl. Phys. B}, 882:263--288, 2014.
\newblock arXiv:1402.0359 [hep-ph].

\bibitem{ABS:11}
J.~Ablinger, J.~Bl\"umlein, and C.~Schneider.
\newblock Harmonic sums and polylogarithms generated by cyclotomic polynomials.
\newblock {\em J. Math. Phys.}, 52(10):1--52, 2011.
\newblock [arXiv:1007.0375 [hep-ph]].

\bibitem{ABS:13}
J.~Ablinger, J.~Bl\"umlein, and C.~Schneider.
\newblock Analytic and algorithmic aspects of generalized harmonic sums and
  polylogarithms.
\newblock {\em J. Math. Phys.}, 54(8):1--74, 2013.
\newblock arXiv:1302.0378 [math-ph].

\bibitem{Abramov:71}
S.~A. Abramov.
\newblock On the summation of rational functions.
\newblock {\em Zh. vychisl. mat. Fiz.}, 11:1071--1074, 1971.

\bibitem{ABPS:14}
S.~A. Abramov, M.~Bronstein, M.~Petkov\v{s}ek, and C.~Schneider.
\newblock {\em In preparation}, 2015.

\bibitem{Abramov:94}
S.~A. Abramov and M.~Petkov{\v s}ek.
\newblock D'{A}lembertian solutions of linear differential and difference
  equations.
\newblock In J.~von~zur Gathen, editor, {\em Proc. ISSAC'94}, pages 169--174.
  ACM Press, 1994.

\bibitem{Petkov:10}
S.~A. Abramov and M.~Petkov{\v{s}}ek.
\newblock Polynomial ring automorphisms, rational {$(w,\sigma)$}-canonical
  forms, and the assignment problem.
\newblock {\em J. Symbolic Comput.}, 45(6):684--708, 2010.

\bibitem{Bauer:99}
A.~Bauer and M.~Petkov{\v{s}}ek.
\newblock Multibasic and mixed hypergeometric {Gosper}-type algorithms.
\newblock {\em J.~Symbolic Comput.}, 28(4--5):711--736, 1999.

\bibitem{BeckerWeispfenning:93}
Thomas Becker and Volker Weispfenning.
\newblock {\em Gr\"obner bases}, volume 141 of {\em Graduate Texts in
  Mathematics}.
\newblock Springer-Verlag, New York, 1993.
\newblock A computational approach to commutative algebra, In cooperation with
  Heinz Kredel.

\bibitem{Physics1}
J.~Bl\"umlein, A.~Hasselhuhn, S.~Klein, and C.~Schneider.
\newblock The {$O(\alpha_s^3 n_f T_F^2 C_{A,F})$} contributions to the gluonic
  massive operator matrix elements.
\newblock {\em Nucl. Phys. B}, 866:196--211, 2013.
\newblock [arXiv:1205.4184 [hep-ph]].

\bibitem{BKSF:12}
J.~Bl\"umlein, S.~Klein, C.~Schneider, and F.~Stan.
\newblock A symbolic summation approach to {F}eynman integral calculus.
\newblock {\em J. Symbolic Comput.}, 47:1267--1289, 2012.

\bibitem{Bron:97}
M.~Bronstein.
\newblock {\em Symbolic Integration {I}, {T}ranscendental functions}.
\newblock Springer, Berlin-Heidelberg, 1997.

\bibitem{Bron:00}
M.~Bronstein.
\newblock On solutions of linear ordinary difference equations in their
  coefficient field.
\newblock {\em J.~Symbolic Comput.}, 29(6):841--877, 2000.

\bibitem{Chyzak:00}
F.~Chyzak.
\newblock An extension of {Z}eilberger's fast algorithm to general holonomic
  functions.
\newblock {\em Discrete Math.}, 217:115--134, 2000.

\bibitem{Cohn:65}
R.~M. Cohn.
\newblock {\em Difference Algebra}.
\newblock Interscience Publishers, John Wiley \& Sons, 1965.

\bibitem{Erocal:11}
B:~Er{\"o}cal.
\newblock {\em Algebraic extensions for summation in finite terms}.
\newblock PhD thesis, RISC, Johannes Kepler University, Linz, February 2011.

\bibitem{Gosper:78}
R.~W. Gosper.
\newblock Decision procedures for indefinite hypergeometric summation.
\newblock {\em Proc. Nat. Acad. Sci. U.S.A.}, 75:40--42, 1978.

\bibitem{Singer:08}
C.~Hardouin and M.F. Singer.
\newblock Differential {G}alois theory of linear difference equations.
\newblock {\em Math. Ann.}, 342(2):333--377, 2008.

\bibitem{Singer:99}
P.~A. Hendriks and M.~F. Singer.
\newblock Solving difference equations in finite terms.
\newblock {\em J.~Symbolic Comput.}, 27(3):239--259, 1999.

\bibitem{vanHoeij:99}
{M.~van} Hoeij.
\newblock Finite singularities and hypergeometric solutions of linear
  recurrence equations.
\newblock {\em J.~Pure Appl. Algebra}, 139(1-3):109--131, 1999.

\bibitem{HK:12}
P.~Horn, W.~Koepf, and T.~Sprenger.
\newblock {$m$}-fold hypergeometric solutions of linear recurrence equations
  revisited.
\newblock {\em Math. Comput. Sci.}, 6(1):61--77, 2012.

\bibitem{Karpilovsky:83}
G.~Karpilovsky.
\newblock On finite generation of unit groups of commutative group rings.
\newblock {\em Arch. Math. (Basel)}, 40(6):503--508, 1983.

\bibitem{Karr:81}
M.~Karr.
\newblock Summation in finite terms.
\newblock {\em J.~ACM}, 28:305--350, 1981.

\bibitem{Karr:85}
M.~Karr.
\newblock Theory of summation in finite terms.
\newblock {\em J.~Symbolic Comput.}, 1:303--315, 1985.

\bibitem{KP:11}
M.~Kauers and P.~Paule.
\newblock {\em The concrete tetrahedron}.
\newblock Texts and Monographs in Symbolic Computation. SpringerWienNewYork,
  Vienna, 2011.
\newblock Symbolic sums, recurrence equations, generating functions, asymptotic
  estimates.

\bibitem{Schneider:06e}
M.~Kauers and C.~Schneider.
\newblock Application of unspecified sequences in symbolic summation.
\newblock In J.G. Dumas, editor, {\em Proc. ISSAC'06.}, pages 177--183. ACM
  Press, 2006.

\bibitem{Schneider:06d}
M.~Kauers and C.~Schneider.
\newblock Indefinite summation with unspecified summands.
\newblock {\em Discrete Math.}, 306(17):2021--2140, 2006.

\bibitem{Schneider:07f}
M.~Kauers and C.~Schneider.
\newblock Symbolic summation with radical expressions.
\newblock In C.W. Brown, editor, {\em {Proc. ISSAC'07}}, pages 219--226, 2007.

\bibitem{Koutschan:13}
C.~Koutschan.
\newblock Creative telescoping for holonomic functions.
\newblock In C.~Schneider and J.~Bl\"umlein, editors, {\em {Computer Algebra in
  Quantum Field Theory: Integration, Summation and Special Functions}}, Texts
  and Monographs in Symbolic Computation, pages 171--194. Springer, 2013.
\newblock arXiv:1307.4554 [cs.SC].

\bibitem{Levin:08}
A.~Levin.
\newblock {\em Difference algebra}, volume~8 of {\em Algebra and Applications}.
\newblock Springer, New York, 2008.

\bibitem{Neher:09}
Erhard Neher.
\newblock Invertible and nilpotent elements in the group algebra of a unique
  product group.
\newblock {\em Acta Appl. Math.}, 108(1):135--139, 2009.

\bibitem{Schneider:09a}
R.~Osburn and C.~Schneider.
\newblock Gaussian hypergeometric series and extensions of supercongruences.
\newblock {\em Math. Comp.}, 78(265):275--292, 2009.

\bibitem{Paule:95}
P.~Paule.
\newblock Greatest factorial factorization and symbolic summation.
\newblock {\em J.~Symbolic Comput.}, 20(3):235--268, 1995.

\bibitem{PauleRiese:97}
P.~Paule and A.~Riese.
\newblock A {M}athematica q-analogue of {Z}eil\-ber\-ger's algorithm based on
  an algebraically motivated aproach to $q$-hypergeometric telescoping.
\newblock In M.~Ismail and M.~Rahman, editors, {\em Special Functions, q-Series
  and Related Topics}, volume~14, pages 179--210. AMS, 1997.

\bibitem{Petkov:92}
M.~Petkov{\v s}ek.
\newblock Hypergeometric solutions of linear recurrences with polynomial
  coefficients.
\newblock {\em J.~Symbolic Comput.}, 14(2-3):243--264, 1992.

\bibitem{AequalB}
M.~Petkov{\v s}ek, H.~S. Wilf, and D.~Zeilberger.
\newblock {\em $A=B$}.
\newblock A. K. Peters, Wellesley, MA, 1996.

\bibitem{Petkov:2013}
M.~Petkov{\v s}ek and H.~Zakraj{\v s}ek.
\newblock Solving linear recurrence equations with polynomial coefficients.
\newblock In C.~Schneider and J.~Bl\"umlein, editors, {\em Computer Algebra in
  Quantum Field Theory: Integration, Summation and Special Functions}, Texts
  and Monographs in Symbolic Computation, pages 259--284. Springer, 2013.

\bibitem{PSW:11}
H.~Prodinger, C.~Schneider, and S.~Wagner.
\newblock {Unfair permutations}.
\newblock {\em Europ. J. Comb.}, 32:1282--1298, 2011.

\bibitem{Risch:69}
R.~Risch.
\newblock The problem of integration in finite terms.
\newblock {\em Trans. Amer. Math. Soc.}, 139:167--189, 1969.

\bibitem{Schneider:00}
C.~Schneider.
\newblock {An Implementation of {K}arr's Summation Algorithm in Mathematica}.
\newblock {\em Sem. Lothar. Combin.}, S43b:1--10, 2000.

\bibitem{Schneider:01}
C.~Schneider.
\newblock Symbolic summation in difference fields.
\newblock Technical Report 01-17, RISC-Linz, J.~Kepler University, November
  2001.
\newblock PhD Thesis.

\bibitem{Schneider:04b}
C.~Schneider.
\newblock A collection of denominator bounds to solve parameterized linear
  difference equations in ${\Pi}{\Sigma}$-extensions.
\newblock {\em An. Univ. Timi\c soara Ser. Mat.-Inform.}, 42(2):163--179, 2004.
\newblock Extended version of Proc. SYNASC'04.

\bibitem{Schneider:04c}
C.~Schneider.
\newblock The summation package {S}igma: Underlying principles and a rhombus
  tiling application.
\newblock {\em Discrete Math. Theor. Comput. Sci.}, 6:365--386, 2004.

\bibitem{Schneider:04a}
C.~Schneider.
\newblock Symbolic summation with single-nested sum extensions.
\newblock In J.~Gutierrez, editor, {\em Proc. ISSAC'04}, pages 282--289. ACM
  Press, 2004.

\bibitem{Schneider:05b}
C.~Schneider.
\newblock Degree bounds to find polynomial solutions of parameterized linear
  difference equations in ${\Pi}{\Sigma}$-fields.
\newblock {\em Appl. Algebra Engrg. Comm. Comput.}, 16(1):1--32, 2005.

\bibitem{Schneider:05d}
C.~Schneider.
\newblock A new {S}igma approach to multi-summation.
\newblock {\em Adv. in Appl. Math.}, 34(4):740--767, 2005.

\bibitem{Schneider:05c}
C.~Schneider.
\newblock Product representations in ${\Pi}{\Sigma}$-fields.
\newblock {\em Ann. Comb.}, 9(1):75--99, 2005.

\bibitem{Schneider:05a}
C.~Schneider.
\newblock Solving parameterized linear difference equations in terms of
  indefinite nested sums and products.
\newblock {\em J. Differ. Equations Appl.}, 11(9):799--821, 2005.

\bibitem{Schneider:07b}
C.~Schneider.
\newblock {A}p\'{e}ry's double sum is plain sailing indeed.
\newblock {\em Electron. J.~Combin.}, 14, 2007.

\bibitem{Schneider:07d}
C.~Schneider.
\newblock Simplifying sums in {$\Pi\Sigma$}-extensions.
\newblock {\em J. Algebra Appl.}, 6(3):415--441, 2007.

\bibitem{Schneider:07a}
C.~Schneider.
\newblock Symbolic summation assists combinatorics.
\newblock {\em S\'em.~Lothar. Combin.}, 56:1--36, 2007.
\newblock Article B56b.

\bibitem{Schneider:08c}
C.~Schneider.
\newblock A refined difference field theory for symbolic summation.
\newblock {\em J. Symbolic Comput.}, 43(9):611--644, 2008.
\newblock [arXiv:0808.2543v1].

\bibitem{Schneider:10c}
C.~Schneider.
\newblock Parameterized telescoping proves algebraic independence of sums.
\newblock {\em Ann. Comb.}, 14(4):533--552, 2010.
\newblock [arXiv:0808.2596].

\bibitem{Schneider:10a}
C.~Schneider.
\newblock Structural theorems for symbolic summation.
\newblock {\em Appl. Algebra Engrg. Comm. Comput.}, 21(1):1--32, 2010.

\bibitem{Schneider:10b}
C.~Schneider.
\newblock A symbolic summation approach to find optimal nested sum
  representations.
\newblock In A.~Carey, D.~Ellwood, S.~Paycha, and S.~Rosenberg, editors, {\em
  {Motives, Quantum Field Theory, and Pseudodifferential Operators}}, volume~12
  of {\em Clay Mathematics Proceedings}, pages 285--308. Amer. Math. Soc, 2010.
\newblock arXiv:0808.2543.

\bibitem{Schneider:13a}
C.~Schneider.
\newblock Simplifying multiple sums in difference fields.
\newblock In C.~Schneider and J.~Bl\"umlein, editors, {\em {Computer Algebra in
  Quantum Field Theory: Integration, Summation and Special Functions}}, Texts
  and Monographs in Symbolic Computation, pages 325--360. Springer, 2013.
\newblock arXiv:1304.4134 [cs.SC].

\bibitem{Schneider:13b}
C.~Schneider.
\newblock Modern summation methods for loop integrals in quantum field theory:
  The packages {S}igma, {EvaluateMultiSums} and {SumProduction}.
\newblock In {\em {Proc. ACAT 2013}}, volume 523 of {\em J. Phys.: Conf. Ser.},
  pages 1--17, 2014.
\newblock arXiv:1310.0160 [cs.SC].

\bibitem{Schneider:14}
C.~Schneider.
\newblock Fast algorithms for refined parameterized telescoping in difference
  fields.
\newblock In M.~Weimann J.~Guitierrez, J.~Schicho, editor, {\em Computer Algebra and Polynomials, Applications of Algebra and Number Theory}, Lecture Notes in Computer Science (LNCS), pages 157-191. Springer, 2015.
\newblock arXiv:1307.7887 [cs.SC].


\bibitem{Singer:97}
M.~van~der Put and M.F. Singer.
\newblock {\em Galois theory of difference equations}, volume 1666 of {\em
  Lecture Notes in Mathematics}.
\newblock Springer-Verlag, Berlin, 1997.

\bibitem{Zeilberger:90a}
D.~Zeilberger.
\newblock A holonomic systems approach to special functions identities.
\newblock {\em J.~Comput. Appl. Math.}, 32:321--368, 1990.

\bibitem{Zeilberger:91}
D.~Zeilberger.
\newblock The method of creative telescoping.
\newblock {\em J.~Symbolic Comput.}, 11:195--204, 1991.

\end{thebibliography}

\end{document}